\documentclass{article}

\pagestyle{myheadings}

\usepackage{amssymb}
\usepackage{amsmath}
\usepackage{amsfonts}
\usepackage{amsbsy}
\usepackage{amsthm}
\usepackage[parfill]{parskip}
\usepackage{graphicx}
\usepackage{epsfig}

\oddsidemargin=14pt  
\textwidth=424pt 
\textheight=646pt 
\topmargin-10pt

\numberwithin{equation}{section}

\newtheorem{theorem}{Theorem}[section]
\newtheorem{lemma}[theorem]{Lemma}
\newtheorem{proposition}[theorem]{Proposition}
\newtheorem{corollary}[theorem]{Corollary}

\theoremstyle{definition}

\newtheorem{problem}{Problem}

\theoremstyle{remark}
\newtheorem{remark}{Remark}[section]
\newcommand{\Real}{\mathop{\mathrm{Re}}}
\newcommand{\Imag}{\mathop{\mathrm{Im}}}
\newcommand{\Arg}{\mathop{\mathrm{Arg}}}
\newcommand{\Tr}{\mathop{\mathrm{Tr}}}

\newcommand{\cut}{\mathop{(\mathrm{cut})}}

\newcommand{\emphsl}[1]{{\sl #1}}

\newcommand{\rmi}{\mathrm {i}}
\newcommand{\rmd}{\mathrm {d}}
\newcommand{\rmf}{\mathrm {f}}

\newcommand{\C}{\mathbb{C}}
\newcommand{\R}{\mathbb{R}}
\newcommand{\N}{\mathbb{N}}
\newcommand{\I}{\mathbb{I}}
\newcommand{\Z}{\mathbb{Z}}

\newcommand{\cC}{\mathcal{C}}
\newcommand{\cD}{\mathcal{D}}
\newcommand{\cE}{\mathcal{E}}
\newcommand{\cJ}{\mathcal{J}}
\newcommand{\cK}{\mathcal{K}}
\newcommand{\cL}{\mathcal{L}}
\newcommand{\cV}{\mathcal{V}}
\newcommand{\cN}{\mathcal{N}}

\newcommand{\bxi}{{\boldsymbol{\xi}}}
\newcommand{\bbeta}{{\boldsymbol{\eta}}}

\newcommand{\bx}{\boldsymbol{x}}
\newcommand{\by}{\boldsymbol{y}}
\newcommand{\bz}{\boldsymbol{z}}
\newcommand{\bR}{\boldsymbol{R}}

\newcommand{\stronglimit}{\mathop{\mathrm{s\!-\!lim}}}

\newcommand{\beq}{\begin{equation}}
\newcommand{\eeq}{\end{equation}}

\newcommand{\sds}{\strut\displaystyle}
\newcommand{\z}{\zeta}

\newcommand{\we}{w^{(\varepsilon)}}
\newcommand{\wea}{{\we\!,\alpha}}

\newcommand{\sku}{\vspace*{0.1cm}}
\newcommand{\skd}{\vspace*{0.2cm}}
\newcommand{\skt}{\vspace*{0.3cm}}

\markright{Nonlocal potentials and CAM theory}

\begin{document}

\title{\bf\sc Nonlocal Potentials and \\ Complex Angular Momentum Theory}

\author{
\textbf{Jacques Bros} \\
\normalsize CEA, Service de Physique Th\'eorique, \\
CE-Saclay, F-91191 Gif-sur-Yvette Cedex, France \\
\texttt{jacques.bros@cea.fr}
\and
\\ \textbf{Enrico De Micheli} \\
\normalsize Consiglio Nazionale delle Ricerche \\
\normalsize Via De Marini, 6 - 16149 Genova, Italy \\
\texttt{demicheli@ge.cnr.it}
\and
\\ \textbf{Giovanni Alberto Viano} \\
\normalsize Dipartimento di Fisica - Universit\`a di Genova \\
\normalsize Istituto Nazionale di Fisica Nucleare - Sezione di Genova \\
\normalsize Via Dodecaneso, 33 - 16146 Genova, Italy \\
\texttt{viano@ge.infn.it}}

\date{\today}

\maketitle

\begin{abstract}
The purpose of this paper is to establish meromorphy properties of the 
\emphsl{partial scattering amplitude} $T(\lambda,k)$ associated with
physically relevant classes $\cN_\wea^{\,\gamma}$ of nonlocal potentials 
in corresponding domains $D_{\gamma,\alpha}^{(\delta)}$ of the space $\C^2$ 
of the complex angular momentum $\lambda$ and of the complex momentum $k$ 
(namely, the square root of the energy). The general expression of $T$ as 
a quotient $\Theta(\lambda,k)/\sigma(\lambda,k)$ of two holomorphic functions 
in $D_{\gamma,\alpha}^{(\delta)}$ is obtained by using the Fredholm--Smithies 
theory for complex $k$, at first for $\lambda=\ell$ integer, and in a second 
step for $\lambda$ complex ($\Real\lambda>-1/2$). Finally, we justify the 
``Watson resummation" of the partial wave amplitudes in an angular sector of 
the $\lambda$--plane in terms of the various components of the polar manifold 
of $T$ with equation $\sigma(\lambda,k)=0$. While integrating the basic Regge 
notion of interpolation of \emphsl{resonances} in the upper half--plane of $\lambda$,
this unified representation of the singularities of $T$ also provides an 
attractive possible description of \emphsl{antiresonances} in the lower 
half--plane of $\lambda$. Such a possibility, which is forbidden  in the usual 
theory of local potentials, represents an enriching alternative to the standard 
Breit--Wigner hard--sphere picture of antiresonances.
\end{abstract}

\section{Introduction}
\label{se:introduction}

In the standard Breit--Wigner theory of scattering the notions 
of ``time delay" and ``time advance", corresponding respectively 
to the increasing or decreasing character of a given phase--shift 
as a function of the energy (see, e.g., \cite[pp. 110 and 111]{Nussenzveig})
are not described in a symmetric way: while the former is described 
by a pole singularity of the scattering amplitude, the latter relies 
on the model of scattering by an impenetrable sphere. A question
then arises: can the time advance still be evaluated in terms of 
hard--sphere scattering in collisions between composite particles, 
when Pauli exclusion principle comes into play? We recall, indeed, 
that when two composite particles collide, the fermionic character 
of the components emerges and the antisymmetrization of the whole 
system generates repulsive exchange--forces, which can produce 
antiresonances, and are thus responsible for the occurrence of 
time advance.

\vskip 0.2cm

From a more formal viewpoint,
if antiresonances can be defined in a way which parallels 
the description of resonances, namely as bumps in the 
energy--dependance plot of the cross--section associated with
the \emphsl{downward} (instead of \emphsl{upward}) passage through 
$\frac{\pi}{2}$ of a given phase--shift, their status has remained
rather obscure in terms of the analyticity properties in energy
of the scattering functions, in comparison with the success of 
the standard \emphsl{pole--resonance} conceptual correspondance.
The reason for this discrepancy lies in the fact
that the Breit--Wigner poles associated with resonances 
necessarily occur in the ``second--sheet" of the energy 
variable $E$, namely in the lower
half--plane of the usual momentum variable $k= \sqrt E$, 
near the positive real axis; this is
a situation which readily accounts for the upward passage of
the phase--shift through $\frac{\pi}{2}$. 
But then the natural candidacy of
poles in the upper half--plane  for accounting the downward passage  
characterizing the antiresonances 
turns out to be strictly forbidden
in any formalism of scattering theory
(as well in  nonrelativistic potential theory as in relativistic 
quantum field theory) as a constraint imposed by the causality principle.
However one also knows that 
for any given phase--shift
the antiresonances are present together with the resonances.  
In fact, the physically observed occurrence of resonance--antiresonance pairs 
finds its firm theoretical basis 
in the Levinson theorem (see \cite[p. 206]{Nussenzveig})
which prescribes the equality of each phase--shift at zero and infinite energies 
in the absence of bound states.

\vskip 0.2cm

Considering the present sum of knowledge that is available in 
the general theory of quantum scattering processes, it
seems to us that 
there remains a problem of global understanding of the 
organization of sequences of
resonances and associated antiresonances for successive values
of the angular momentum $\ell$ and of the corresponding relationships between
the energy variable $E$ (or $k$) and $\ell$.
Let us recall the historical situation from both theoretical and experimental aspects.
At first,
the standard theory of scattering does not attempt to group
resonances in families: each resonance is simply described by a fixed pole singularity
in the energy variable. As for antiresonances, they are individually 
parametrized in a very rough way and from the outset by the 
hard--sphere picture.
However, phenomenological data clearly show that
the resonances often appear in ordered sequences, such
as rotational bands.
Typical examples can be observed in $\alpha$--$\alpha$,
$\alpha$--$^{40}$Ca, $^{12}$C--$^{12}$C, $^{28}$Si--$^{28}$Si,
and other heavy--ion collisions
(see \cite{Cindro} and references quoted therein).
In these examples, the resonances are ordered in
rotational bands of levels
whose energy spectrum can be fitted by an expression of the
form $E_\ell=A+B\ell(\ell+1)$, where $\ell$ is
the angular momentum of the level, and $A$ and $B$
are constants. Furthermore, the widths of the resonances
increase as a function of the energy.

\vskip 0.2cm

Since 1960, a wide opening on the previous problem
was given by Regge's basic 
formalism of the holomorphic interpolation of scattering 
partial waves in the complex angular momentum 
(CAM) variable $\lambda$ (see \cite{DeAlfaro}).
In this formalism, the concept of pole reached his 
full fecondity through the association of
sequences of resonances with \emphsl{polar manifolds}
$\lambda= \lambda(k)$ in the complex space $\C^2$ of the
variables $(\lambda,k)$.
Each such polar manifold gives rise to a ``trajectory"
on which a sequence of complex energy values $k_\ell$
with $\Imag k_\ell <0$ satisfying the equation 
$\lambda(k_\ell)= \ell$ corresponds to a sequence of resonances.
In view of the complex geometrical framework, a new possibility 
offered by that description
was the study of the trajectories  
$\lambda= \lambda(k)$ for $k$ real and positive, 
realized as curves in the complex upper half--plane of $\lambda$.
Coming back to the general problem that we have raised above,
it now seems interesting to inquire whether this enlarged framework 
of complex geometry in the CAM and energy variables
might give some new insight on the description of antiresonances 
and also on their global
organization. In other words, are there some polar manifolds 
$\lambda= \lambda(k)$ which have  
something to do with the description of antiresonances?

A tentative answer to this question was proposed by various authors 
(see \cite{Newton4}) by considering in particular the case
of scattering by Yukawa--type potentials.
In such cases each pole trajectory in the  upper half--plane of
$\lambda$ behaves as follows. After having traveled forward 
(i.e., with $\rmd\Real\lambda/\rmd k>0$) 
and described thereby a sequence of resonances at $\Real \lambda (k) =\ell$,     
it starts going up and turning backwards, and then passes again 
but \emphsl{in decreasing order} (and with larger values of $\Imag \lambda$)
through points whose real parts 
$\Real \lambda =\ell$ had been previously visited.
It was then proposed that this second part of such trajectories 
be associated with antiresonances.
However it turns out that such a global ordering 
is inconsistent with most of the experimental data,
which exhibit the alternance of resonances 
and antiresonances for successive values of $\ell$
when one follows the increase of the energy variable.  
From a formal viewpoint, 
a scenario which would appear to be much more appropriate
would be the suitable coupling of a \emphsl{pure resonance trajectory} 
in the upper half--plane with a \emphsl{pure antiresonance trajectory} in the 
lower half--plane $\Imag\lambda <0$.
However, this latter possibility has been excluded 
from the whole theory of local potentials by a theorem proved by Regge
(see \cite{DeAlfaro}).

\vskip 0.2cm

It is at this point of our considerations that we wish to advocate  
for the necessity of enlarging the framework of Schr\"odinger theory 
so as to include the occurrence of nonlocal potentials
$V(\bR,\bR')$ and thereby invalidate the application 
of Regge's no--go theorem. As a matter of fact,
whenever the antiresonances are generated by exchange
forces due to Pauli's exclusion principle, the
many--body structure of the colliding particles
is involved and leads one to introduce nonlocal potentials.
This class of potentials has been indeed
considered long time ago, mainly in connection
with the theory of nuclear matter \cite{Bethe,Wildermuth1}.
The procedures commonly used for treating the
many--body dynamics, like the resonating group method \cite{Tang},
the complex generator coordinate technique \cite{LeMere},
the cluster coordinate method \cite{Wildermuth2}, all lead to
an extension of the standard Schr\"odinger equation,
which now becomes an integro--differential
equation of the same form as that obtained in the study of the 
nonlocal potentials.
In this paper, we intend to study the singularities of the 
partial scattering amplitude for appropriate classes of nonlocal potentials,
both in the complex momentum $k$--plane and in the complex angular
momentum $\lambda$--plane. Concerning the possible location of these singularities,
one of the results that we have obtained is a
modified extension of the previous Regge theorem
to the case of nonlocal potentials, which now opens the way to 
singularities in the lower half--plane of $\lambda$ (at $k$ real and positive).
For local potentials
(notably, Yukawian potentials) all the
singularities of the partial scattering function, such as those
which manifest themselves as resonances, must be located at $ k>0$
in the region $\Imag\lambda>0$. 
Here we have considered as an example a class of
nonlocal potentials $V(\lambda;R,R')=V_*(R,R'){\tilde F}(\lambda)$,
where ${\tilde F}(\lambda)$ is a function holomorphic, bounded, and of Hermitian--type
in the half--plane 
$\C^+_{-\frac{1}{2}}\doteq\{\lambda\in\C\,:\,\Real\lambda>-\frac{1}{2}\}$.
For such a class of potentials, we denote by $\cL_j$ ($j\in\Z$) the
set of lines in $\C^+_{-\frac{1}{2}}$ where $\Imag {\tilde F}(\lambda)=0$.
All the points of a curve $\cL_j$ with $j>0$ (resp., $j<0$) belong to
the region $\Imag\lambda>0$ (resp., $\Imag\lambda<0$), $\cL_0$
being along the real positive axis. We prove that no singular
pairs $(\lambda,k)$ can occur with $\lambda$ in any line $\cL_j$
in the lower half--plane
($j\leqslant 0$), and with $\Imag k>0$ and $\Real k>0$. But there 
is a possible occurrence of singular manifolds containing
branches in the region $\Imag k\geqslant 0$, $\Real k\geqslant 0$
and $\Real\lambda>0$, $\Imag\lambda<0$, always located in strips
of the fourth quadrant of the $\lambda$--plane, well separated 
from one another by the set of lines $\cL_j$ ($j\leqslant 0$). These
strips therefore set the ground for 
possible antiresonance trajectories.
Although it is still premature to conclude on the existence of the latter
before some numerical exploration be performed, we think that this promising result 
may suggest how to fill a gap between phenomenological and theoretical analysis
of scattering data. Indeed, 
in refs. \cite{DeMicheli1,DeMicheli2} two of us have performed
an extensive phenomenological analysis, fitting the scattering data
of $\alpha$--$\alpha$ and $\pi^+$--p elastic scattering by using two pole
trajectories in the CAM--plane.
The resonances have been described by poles in the first quadrant,
the antiresonances by poles in the fourth quadrant.

\vskip 0.2cm

As a further possible motivation to the investigation of the 
theory of nonlocal potentials,
we remind the reader that in a parallel presentation
of the two--particle scattering theory in Schr\"odinger wave--mechanics
and in Quantum Field Theory (QFT), it is a suitable Bethe--Salpeter--type
kernel which plays the role of the potential \cite{Bros1},
provided the latter be understood as a generalized potential
of nonlocal--type (and also energy--dependent). 
Recently, the CAM formalism has been introduced
in QFT \cite{Bros2}, and it has been used to obtain a corresponding  
CAM--diagonalization of the Bethe--Salpeter equation \cite{Bros3}. Therefore,
also from this viewpoint, exploring the CAM--singularities generated 
by appropriate classes of nonlocal potentials deserves some interest.
One can even suggest that the possible occurrence of antiresonances in a 
QFT with baryon and meson fields, associated with composite particles
(with respect to the elementary level of quark--structure),
could be described in the Bethe--Salpeter framework, as a natural relativistic QFT
extension of the phenomena permitted by nonlocal potential theory. 

As far as we know, the connection between nonlocal
potentials and complex angular
momentum theory is totally missing in the literature.
Let us quote what is
written in the classical textbook by De Alfaro and
Regge \cite{DeAlfaro} on this question:
``Our philosophy in regard to these potentials
(i.e., nonlocal potentials)
is the following: maybe they are there,
but if they are there we do not know what to do with them.
They will not be discussed in this book anymore.''
Even in excellent texts on scattering theory,
like the one by Reed and Simon \cite{Reed}, this problem
has not been considered at all.
Long time ago, one of the authors,
in collaboration with others, treated the scattering
theory for large classes of nonlocal
potentials in a series of papers \cite{Bertero},
where, however, the analysis in the CAM--plane
was not considered.
The present paper can be regarded as a
continuation and completion of these
works, in particular of \cite[IV]{Bertero} (which in Section \ref{se:outline} is referred to as I). 
So, before considering the interesting problem
of describing resonances and antiresonances, which will be outlined   
in the last section of this paper, one needs to settle on a firm basis the
whole theory of scattering functions with respect to the complex variables 
$(\lambda,k)$ in the framework of appropriate classes of nonlocal potentials.

The paper is organized as follows:
in Section \ref{se:outline} we recall the spectral properties
associated with the Schr\"odinger two--particle
operators for a large class of potentials
$U$, which include a local part called $V_0$
and a nonlocal part called $V$. All
these properties concern bound states and
scattering solutions, and are treated in the complex
half--plane $\Imag k\geqslant 0$.
In Section \ref{se:rotationally}
we study a class $\cN_{w,\alpha}$ of rotationally invariant nonlocal potentials $V$,
which is characterized by a positive parameter
$\alpha$; for such a class the previous treatment
is extended to the half--plane $\Imag k\geqslant-\alpha$, so as to include
the analysis of resonances.
In this section we use Smithies' theory \cite{Smithies} of
Fredholm--type integral equations for studying
various properties of the resolvent when $k$ belongs to the half--plane
$\{k\in\C:\Imag k\geqslant-\alpha\}$
and the angular momentum $\ell$ is a non--negative integer.
The Smithies formalism produces modified Fredholm formulae,
whose advantage is to fit rigorously  with the convenient framework
of Hilbert--Schmidt--type kernels.
It is therefore in this formalism that we
investigate in detail bound states,
spurious bound states (or bound states embedded in the continuum), 
anti--bound states, resonances, scattering solutions, and partial scattering amplitudes
in the strip $\Omega_\alpha=\{k\in\C:|\Imag k|<\alpha\}$.
In Section \ref{se:interpolation} we study the interpolation
of the so--called \emphsl{partial potentials} (i.e., the coefficients
of the Fourier--Legendre expansion of the potentials)
in the plane of the complexified angular momentum variable $\lambda$.
The potentials which admit this interpolation and satisfy an exponential 
decrease of the form $e^{-\gamma\Real\lambda}$ ($\gamma>0$) 
in $\C^+_{-\frac{1}{2}}$
will be called ``Carlsonian potentials 
with CAM--interpolation $V(\lambda;R,R')$
and rate of decrease $\gamma$'', in view of the fact that 
the uniqueness of the interpolation is guaranteed by Carlson's theorem.
The class of such potentials which belong to $\cN_{w,\alpha}$
will be denoted $\cN_{w,\alpha}^{\,\gamma}$. 
Then, by using the bounds on the complex angular momentum
Green function (i.e., the extension of the Green function
from integral non--negative values of the angular momentum $\ell$
to complex values $\lambda$) one can extend the Fredholm--Smithies formalism 
of the previous section in terms of 
vector--valued and  operator--valued holomorphic functions of the two
complex variables $\lambda$ and $k$. Correspondingly, one then
studies the properties of the partial scattering amplitude
$T(\lambda,k)$ as a meromorphic function of  
$\lambda$ and $k$ in appropriate
domains $D_{\gamma,\alpha}^{(\delta)}$ of the space $\C^2$.
This sets the basis for analyzing the notions of resonances and of antiresonances
in terms of the polar singularities of $T$ in either part $\Imag\lambda \gtrless 0$ of
the domain $D_{\gamma,\alpha}^{(\delta)}$.
The first part of Section \ref{se:AG.5} is devoted to the 
Watson resummation of the partial wave amplitudes in an angular sector of the
complex $\lambda$--plane for all positive values of $k$ (and also in some
domain of the complex $k$--plane). In the second part of this section,
the notion of a dominant pole in the Watson representation of the scattering amplitude
is introduced and a symmetric analysis of resonances and antiresonances is proposed
in this framework.
Finally, in order to
make simpler the presentation of our results,
we have reserved two appendices for
mathematical ingredients to be used in the text:
while Appendix \ref{appendix:a} is devoted to
the derivation of bounds on the complex angular momentum
Green function and on the Bessel and Hankel functions,
Appendix \ref{appendix:b} deals with continuity and holomorphy properties of
vector--valued and operator--valued functions.

\section{A review of spectral properties for a class of Schr\"odinger two--particle operators
including local and nonlocal potentials}
\label{se:outline}

We consider Schr\"odinger operators of the following form:
\beq
(H\psi)(\bx)=-\Delta \psi(\bx)+(U \psi)(\bx)=
-\Delta \psi(\bx)+V_0(\bx)\psi(\bx)+\int V(\bx,\by)\psi(\by)\,\rmd\by,
\label{f:1}
\eeq
where $\bx=(x_1,x_2,x_3)$ is a three--dimensional real vector,
$|\bx|^2=\sum_{k=1}^3 x_k^2$, $\Delta=\sum_{k=1}^3
\partial^2/\partial x_k^2$ is the Laplace operator, and
integration is on the whole $\R^3$ space.
The integro--differential operator $H$ represents,
in the center of mass system, the energy operator of two
particles interacting through a local plus a nonlocal potential.
Our assumptions on the potential functions $V_0$ and $V$ are the following:
\begin{itemize}
\item[(a)] $V_0$ and $V$ are real--valued; $V$ is symmetric: i.e., $V(\bx,\by)=V(\by,\bx)$.
\item[(b)] $\sds 
A_p=\left[\int(1+|\bx|)^p \ |V_0(\bx)|^p\,\rmd\bx\right]^{1/p} < +\infty \qquad (p=1,2)$.
\item[(c)] $\sds 
B_p=\left[\int(1+|\bx|)^p\left(\int|V(\bx,\by)|\,\rmd\by\right)^p\,\rmd\bx\right]^{1/p}<+\infty \qquad (p=1,2)$.
\end{itemize}
Note that condition (a) guarantees that the operator $H$
is a time--reversal invariant and formally Hermitian operator.
We then have (see I):
\sku
\begin{proposition}
\label{propo:1}
If conditions {\rm (a)}, {\rm (b)} and {\rm (c)} hold, then the operator $H$ defined by \eqref{f:1} is self--adjoint
in $L^2$ with domain $\cD_H=W^{2,2}$.
\end{proposition}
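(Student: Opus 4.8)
The plan is to realise $H$ as a small perturbation of the free operator $H_0=-\Delta$ and then to invoke the Kato--Rellich theorem. Recall first that $H_0$, taken with domain $W^{2,2}(\R^3)$, is self--adjoint: this is classical and follows immediately from the Fourier transform, which conjugates $H_0$ to multiplication by $|\bxi|^2$ on $L^2(\R^3)$. It therefore suffices to prove that the perturbation $U=V_0+V$ --- the sum of the multiplication operator by $V_0$ and of the integral operator $(V\psi)(\bx)=\int V(\bx,\by)\,\psi(\by)\,\rmd\by$ --- is, on $W^{2,2}(\R^3)$, symmetric and $H_0$--bounded with relative bound $0$; Kato--Rellich then gives that $H=H_0+U$ is self--adjoint on the \emph{same} domain $\cD_H=W^{2,2}(\R^3)$ (and bounded from below).

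Symmetry is a direct consequence of assumption (a). Since $V_0$ is real--valued, multiplication by $V_0$ is symmetric on its natural domain, which contains $W^{2,2}(\R^3)$ in view of the estimate below; and since the kernel $V(\bx,\by)$ is real and satisfies $V(\bx,\by)=V(\by,\bx)$, the associated integral operator is symmetric as well, the exchange of the two integrations being legitimate because of the integrability imposed by condition (c).

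The core of the proof is the relative--boundedness estimate, for which only the $p=2$ parts of (b) and (c) are needed. From $(1+|\bx|)^2\geqslant 1$ one gets $\|V_0\|_{L^2}\leqslant A_2<\infty$ and $\big\|\int|V(\cdot,\by)|\,\rmd\by\big\|_{L^2}\leqslant B_2<\infty$. For $\psi\in W^{2,2}(\R^3)$ the pointwise bounds $|V_0(\bx)\psi(\bx)|\leqslant|V_0(\bx)|\,\|\psi\|_\infty$ and $\big|\int V(\bx,\by)\psi(\by)\,\rmd\by\big|\leqslant\big(\int|V(\bx,\by)|\,\rmd\by\big)\,\|\psi\|_\infty$ give
\[
\|V_0\psi\|_{L^2}\leqslant A_2\,\|\psi\|_\infty,\qquad\qquad\|V\psi\|_{L^2}\leqslant B_2\,\|\psi\|_\infty .
\]
One then invokes the Sobolev embedding $W^{2,2}(\R^3)\hookrightarrow L^\infty(\R^3)$ (valid since $2\cdot 2>3$) in its scaled form: for every $\varepsilon>0$ there is $C_\varepsilon>0$ such that $\|\psi\|_\infty\leqslant\varepsilon\,\|\Delta\psi\|_{L^2}+C_\varepsilon\,\|\psi\|_{L^2}$, obtained from the unscaled inequality by the dilation $\psi(\bx)\mapsto\psi(\rho\bx)$ and optimisation over $\rho>0$. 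Combining, $\|U\psi\|_{L^2}\leqslant(A_2+B_2)\big(\varepsilon\,\|H_0\psi\|_{L^2}+C_\varepsilon\,\|\psi\|_{L^2}\big)$ for all $\varepsilon>0$; hence $U$ is $H_0$--bounded with relative bound $0$. (An attempt to show instead that $V$ is bounded on $L^2$ directly by Schur's test is less convenient here, since (c) controls $\int|V(\bx,\by)|\,\rmd\by$ only in $L^1\cap L^2$ of $\bx$, not in $L^\infty$.)

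Since $0<1$, Kato--Rellich applies and yields the self--adjointness of $H$ on $W^{2,2}(\R^3)$, which is the assertion. I do not anticipate a genuine obstacle: the only two points needing mild care are the Fubini justification of the symmetry of the nonlocal term and the scaled Sobolev inequality, both entirely routine. I note also that the full strength of (b) and (c) --- in particular the weighted $p=1$ norms $A_1$ and $B_1$ --- plays no role for self--adjointness; those stronger norms are precisely what will be used later to control the dependence on $k$ and the Fredholm--Smithies estimates for the resolvent.
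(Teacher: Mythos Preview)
The paper does not actually give its own proof of this proposition: it merely records the statement and refers the reader to the earlier work~I (reference \cite[IV]{Bertero}) for the argument. Your Kato--Rellich proof is correct and is precisely the standard route to such a result; the estimates $\|V_0\psi\|_{L^2}\leqslant A_2\|\psi\|_\infty$, $\|V\psi\|_{L^2}\leqslant B_2\|\psi\|_\infty$ together with the scaled Sobolev embedding $W^{2,2}(\R^3)\hookrightarrow L^\infty(\R^3)$ give relative bound $0$ exactly as you say, and the Fubini justification for the symmetry of the nonlocal part is indeed covered by the $p=1$ case of~(c). Your closing remark is also accurate: only the unweighted $L^2$ bounds (hence only $A_2$, $B_2$) enter the self--adjointness argument, while the weights and the $p=1$ norms are reserved for the resolvent analysis that follows.
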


In this section $L^2$ and $W^{2,2}$ will indicate $L^2(\R^3)$ and $W^{2,2}(\R^3)$, respectively;
moreover, we recall that $W^{2,2}$ is the Sobolev space of all the square integrable functions which
have square integrable distributional derivatives up to the second order.

\vskip 0.2cm

We now consider the following problems.

\skd

\begin{problem}[Bound state problem]
\label{problem:1}
Suppose that the potential functions $V_0$ and $V$ satisfy conditions (a), (b), and (c),
then we study the solutions $\psi$ with
\beq
\psi\in W^{2,2}\quad \mathrm{and} \quad \|\psi\|_{L^2}=1,
\label{f:2}
\eeq
of the Schr\"odinger equation
$(H\psi)(\bx) = k^2 \psi(\bx)$, namely:
\beq
\Delta\psi(\bx)+k^2\psi(\bx)=(U\psi)(\bx)=V_0(\bx)\psi(\bx)+\int V(\bx,\by)\,\psi(\by)\,\rmd\by,
\label{f:3}
\eeq
where $k$ is a complex number with $\Imag k\geqslant 0$.
\end{problem}

The following theorem can then be proved.

\begin{theorem}
\label{theorem:1}
Let $B$ be the set of values of $k$ $(\Imag k \geqslant 0)$ such that a non trivial solution of Problem
\ref{problem:1} exists; then:
\begin{itemize}
\item[\rm (i)] for any $k\in B$ the number of linearly independent solutions of Problem \ref{problem:1}
(i.e., the multiplicity of $k$) is finite;
\item[\rm (ii)] $B$ is contained in the union of the real axis and the positive imaginary axis;
\item[\rm (iii)] $B$ is bounded;
\item[\rm (iv)] $B$ is countable with no limit points except $k=0$;
\item[\rm (v)] if a real value of $k$ belongs to $B$, then $(-k)\in B$ and Problem \ref{problem:1}
has precisely the same solutions associated with $k$ and $-k$.
\end{itemize}
\end{theorem}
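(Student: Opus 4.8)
The plan is to recast Problem~\ref{problem:1} as a homogeneous Fredholm equation $\phi=T(k)\phi$ governed by a Hilbert--Schmidt operator $T(k)$ that depends holomorphically on $k$, and then to read off (i)--(v) from self--adjointness, compactness and the analytic Fredholm theorem; the only genuinely delicate point will be the behaviour of $T(k)$ on the real $k$--axis. Parts (ii) and (v) are immediate. By Proposition~\ref{propo:1} the operator $H$ is self--adjoint, so any $k$ for which $k^2$ is an eigenvalue satisfies $k^2\in\R$; writing $k=a+\rmi b$ with $b=\Imag k\geqslant 0$ we get $2ab=\Imag(k^2)=0$, whence $b=0$ (so $k\in\R$) or $a=0$ (so $k\in\rmi\R$, with $b\geqslant 0$), which is (ii). Since \eqref{f:3} involves $k$ only through $k^2$, a function $\psi$ solves Problem~\ref{problem:1} for a real $k$ precisely when it solves it for $-k$, and $\Imag(-k)=0\geqslant 0$; this is (v).

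For the remaining statements, following I, one rewrites \eqref{f:3} for $k\neq 0$ as $\psi=-R_0(k)\,U\psi$, where $R_0(k)$ denotes the resolvent of $-\Delta$, i.e. convolution with the free Green function $e^{\rmi k|\bx|}/(4\pi|\bx|)$ (interpreted, for real $k$, as the boundary value). Conjugating by the weight $w(\bx)=1+|\bx|$, one sets $\phi=w^{-1}\psi$ and checks that $\psi$ solves Problem~\ref{problem:1} if and only if $\phi=T(k)\phi$, where $\big(T(k)\phi\big)(\bx)=-w(\bx)^{-1}\big(R_0(k)\,U(w\phi)\big)(\bx)$; here $W^{2,2}(\R^3)\hookrightarrow L^\infty$ is used to guarantee that $U\psi$ lies in the weighted space employed, and $\phi\mapsto w\phi$ is then a linear bijection of $\ker\!\big(I-T(k)\big)$ onto the solution space of Problem~\ref{problem:1}. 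A direct estimate --- bounding the kernel of $R_0(k)$ pointwise by $1/(4\pi|\bx-\by|)$ and using that $\int w(\by)^{-2}|\bx-\by|^{-2}\,\rmd\by$ is bounded uniformly in $\bx\in\R^3$ --- shows, via hypotheses (b) and (c) with $p=2$, that $\|T(k)\|_{\mathrm{HS}}^2\lesssim A_2^2+B_2^2$ uniformly for $\Imag k\geqslant 0$; moreover $T(k)$ is holomorphic in $\{\Imag k>0\}$ and continuous up to $\{\Imag k\geqslant 0\}\setminus\{0\}$ (the finer continuity at real $k\neq 0$ resting on the Green--function bounds of Appendix~\ref{appendix:a}). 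Compactness of $T(k)$ immediately gives (i): $\ker\!\big(I-T(k)\big)$ is finite--dimensional.

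Part (iii) follows from a high--energy estimate: using the bounds of Appendix~\ref{appendix:a} one shows $\|T(k)\|\to 0$ as $|k|\to\infty$ in $\{\Imag k\geqslant 0\}$, so $I-T(k)$ is invertible --- hence $k\notin B$ --- for $|k|$ large, and $B$ is bounded. For (iv), apply the analytic Fredholm theorem to the holomorphic family $k\mapsto I-T(k)$ on the connected open set $\{\Imag k>0\}$, on which it is invertible at every $k$ of sufficiently large modulus: the set of $k$ in $\{\Imag k>0\}$ where $I-T(k)$ fails to be invertible is discrete. By (ii), $B\subset\R\cup\rmi\R_+$, and the punctured positive imaginary axis $\rmi(0,+\infty)$ already lies inside $\{\Imag k>0\}$; hence $B$ has no accumulation point off the real axis, and, being bounded, $B$ is at most countable with its only possible limit points at $k=0$ or in $\R\setminus\{0\}$.

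The remaining obstacle --- which I expect to be the main difficulty --- is to exclude accumulation of $B$ on $\R\setminus\{0\}$: there $R_0(k)$ lies on the continuous spectrum and $T(k)$ cannot be continued across the real axis, since the Green function grows exponentially for $\Imag k<0$. This is precisely where the weight exponent $1$ in (b) and (c), strictly above the limiting--absorption threshold $1/2$, enters: it furnishes Agmon--Kato--type bounds on the boundary resolvent and, through them, control of the eigenvalues of $T(k)$ near $1$ for $k$ ranging over a compact subset of $\R\setminus\{0\}$ --- equivalently, a one--sided analytic continuation across $\R\setminus\{0\}$ of the modified (Smithies) Fredholm determinant $\det\!\big(I-T(k)\big)$, whose zeros are therefore isolated. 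Hence $B$ has no limit point in $\R\setminus\{0\}$, so that $k=0$ is its only possible limit point; this, together with the previous paragraph, establishes (iv). Everything outside this last step is a routine application of Fredholm theory once the operator $T(k)$ and its estimates are in hand.
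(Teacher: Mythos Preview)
The paper does not supply a proof here; it simply cites reference~I (Bertero--Talenti--Viano, \emph{Nuovo Cimento} \textbf{62A} (1969)). Your sketch is essentially a reconstruction of that argument, matching the operator--theoretic framework the paper sets up in the paragraphs surrounding Theorem~\ref{theorem:3}: reduce Problem~\ref{problem:1} to a homogeneous Fredholm equation on a weighted $L^2$--space, read off (ii) and (v) from self--adjointness, (i) from compactness, (iii) from a high--energy operator--norm estimate, and (iv) from the analytic Fredholm alternative together with a limiting--absorption argument on the real axis. The paper works with $L(k)=-U\,R_0(k)$ on $X^2$ (unknown $f=U\psi$) and invokes Riesz--Schauder rather than Smithies, whereas you use the conjugate $T(k)=-w^{-1}R_0(k)\,U\,w$ on $L^2$ (unknown $\phi=w^{-1}\psi$); the two formulations are equivalent, and your identification of the real--axis accumulation question as the only substantive step is exactly right.

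One small remark: your uniform Hilbert--Schmidt bound $\|T(k)\|_{\mathrm{HS}}^2\lesssim A_2^2+B_2^2$ (which is correct---the integral $\int(1+|\by|)^{-2}|\bx-\by|^{-2}\,\rmd\by$ is $O((1+|\bx|)^{-1})$, hence bounded) does not by itself give the decay $\|T(k)\|\to 0$ needed for (iii); that is a separate high--energy resolvent estimate, not a pointwise kernel bound, so ``using the bounds of Appendix~\ref{appendix:a}'' is a placeholder rather than an argument. The estimate is standard, however, and once it is in hand your outline is complete.
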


\begin{proof}
See I (proof of Theorem 2.1).
\end{proof}

We introduce the index $n$ ($n=1,2,3,\ldots$) to label the imaginary and the positive real values
of $k$ which belong to $B$, and we understand to count any such value of $k$ as many times as its multiplicity.
Then the numbers $E_n=k_n^2$ ($n=1,2,3,\ldots$) are the eigenvalues of the operator $H=-\Delta+U$;
with each eigenvalue $E_n$ we can associate one and only one
solution $\psi_n(\bx)$ of Problem \ref{problem:1},
i.e., an eigenfunction of $H$. Since the operator $H$ is self--adjoint
(see Proposition \ref{propo:1}), the functions $\psi_n(\bx)$ can be regarded as forming an orthonormal system.

\skt

\begin{problem}[Scattering problem]
\label{problem:2}
We study solutions $\Psi_\bxi(\bx)$ of the Schr\"odinger equation
$(H\Psi_\bxi)(\bx) = |\bxi|^2 \Psi_\bxi(\bx)$, which are of the following form:
\beq
\Psi_\bxi(\bx)=e^{\rmi\langle\bxi,\bx\rangle}+\Phi(\bxi,\bx),
\label{f:4}
\eeq
where $\bxi=(\xi_1,\xi_2,\xi_3)$ is a given
three--dimensional vector, $\langle\bxi,\bx\rangle=\sum_{j=1}^3\xi_j x_j$
and $\Phi(\bxi,\bx)$ satisfies the following properties:
\begin{itemize}
\addtolength{\itemsep}{0.1cm}
\item[(i)] $\Phi(\bxi,\cdot) \in W^{2,2}_\mathrm{loc}$;
\item[(ii)] $\sds \Phi(\bxi,\bx) = O\left(\frac{1}{|\bx|}\right) \qquad (|\bx|\to+\infty)$;
\item[(iii)] $\sds \int_{|\bx|=r}\left|\frac{\partial\Phi}{\partial |\bx|}(\bxi,\bx)-
\rmi |\bxi|\Phi(\bxi,\bx)\right|^2\mu(\rmd\bx) \xrightarrow[r\to +\infty]{} 0$,
$\mu$ denoting the Lebesgue measure on the sphere;
\item[(iv)]$\sds \Delta\Phi(\bxi,\bx)+|\bxi|^2\Phi(\bxi,\bx)=
V_0(\bx)\Phi(\bxi,\bx)+\int V(\bx,\by)\Phi(\bxi,\by)\,\rmd\by+V_0(\bx)e^{\rmi\langle\bxi,\bx\rangle}$ \\
\null\hspace{0.5cm} $\sds+\int V(\bx,\by)e^{\rmi\langle\bxi,\by\rangle}\,\rmd\by$, \quad
(Schr\"odinger equation written in terms of $\Phi$).
\end{itemize}
\end{problem}

Note that $\Phi\in W^{2,2}_\mathrm{loc}$
means that $f\,\Phi\in W^{2,2}$ for any $f\in C_0^\infty(\R^3)$. Besides,
we remark that the third condition involves the notion of trace in the sense of Sobolev of the function $\Phi$ and of
its first derivatives. To this purpose, let us note that if $u \in W^{2,2}$ and $\Sigma$ is a smooth surface,
then the trace of $u$ on $\Sigma$ can be defined as the restriction of $u$ on $\Sigma$. This definition is meaningful
because $u$ is a continuous function, but in the case of the first derivatives, the previous definition
breaks down. However, the linear operator relating every $\Phi\in C^\infty_0(\R^3)$ to
$\partial\Phi/\partial x_k$, restricted on the sphere of radius $r$, is a continuous mapping
of a subspace of $W^{2,2}$ into the space of the functions which are square integrable on the sphere of radius
$r$. Since $C^\infty_0(\R^3)$ is dense in $W^{2,2}$, this operator can be extended in a unique way to the
whole $W^{2,2}$ (see the Appendix of I). Furthermore, any function $\Phi\in W^{2,2}_\mathrm{loc}$ is
continuous (Lemma A.3 of I), so that from the second
condition of Problem \ref{problem:2} it follows that $\Psi_\bxi$
is a bounded and continuous function of $\bx$ in $\R^3$. Then the following theorem holds.

\begin{theorem}
\label{theorem:2}
For any $\bxi\in\R^3$, $\bxi\not = 0$, a solution of Problem \ref{problem:2} exists. The solution is unique if and only if
$|\bxi|\not\in B$ (the set defined in Theorem \ref{theorem:1}). Any solution of Problem \ref{problem:2} has the
following asymptotic behavior:
\begin{align}
\Phi(\bxi,\bx) &= \frac{e^{\rmi|\bxi||\bx|}}{|\bx|}F\left(|\bxi|\frac{\bx}{|\bx|},\bxi\right)+o\left(\frac{1}{|\bx|}\right)
\qquad (|\bx|\to +\infty), \nonumber \\
F(\bx,\bxi) &= -\frac{1}{4\pi}\int e^{-\rmi\langle\bx,\by\rangle}\left(U\Psi_\bxi\right)(\by)\,\rmd\by. \nonumber
\end{align}
The quantity $F\left(|\bxi|\frac{\bx}{|\bx|},\bxi\right)$, the so--called scattering amplitude,
is uniquely defined for any $\bxi\in\R^3$.
\end{theorem}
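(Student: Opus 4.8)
The plan is to recast Problem~\ref{problem:2} as a Lippmann--Schwinger integral equation and to run the Fredholm alternative inside the functional--analytic framework of I. Write $k=|\bxi|$ and let $G_k(\bx)=\frac{e^{\rmi k|\bx|}}{4\pi|\bx|}$ be the outgoing free Green function. By the uniqueness of outgoing solutions of the inhomogeneous Helmholtz equation, a $\Psi_\bxi$ of the form \eqref{f:4} satisfies (ii)--(iv) if and only if $\Phi(\bxi,\cdot)=-G_k*(U\Psi_\bxi)$, i.e.\ $\Psi_\bxi=e^{\rmi\langle\bxi,\cdot\rangle}-G_k*(U\Psi_\bxi)$; setting $\varphi:=U\Psi_\bxi$ this becomes
\beq
(I+K_k)\,\varphi = U\,e^{\rmi\langle\bxi,\cdot\rangle},\qquad K_k\varphi:=U\bigl(G_k*\varphi\bigr).
\eeq
The merit of this reduction is that, by hypotheses (b)--(c), the right--hand side belongs to the weighted space $L^2\!\bigl((1+|\bx|)^{-2}\,\rmd\bx\bigr)$ used in I, while the kernel estimates furnished by the finiteness of $A_2$ and $B_2$ (established in I; cf.\ also the analogous bounds of Appendix~\ref{appendix:a}) make $K_k$ Hilbert--Schmidt, hence compact, on that space; property (i) of the resulting $\Phi$ then follows from interior elliptic regularity for $\Delta$.

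First I would establish the uniqueness statement. By the Fredholm alternative the solution is unique if and only if $\ker(I+K_k)=\{0\}$. If $\varphi_0\ne 0$ lies in this kernel, then $\Psi_0:=-G_k*\varphi_0$ satisfies $\varphi_0=U\Psi_0$ (so $\Psi_0\ne 0$) and $\Delta\Psi_0+k^2\Psi_0=U\Psi_0$ with the radiation condition; the flux identity obtained by applying Green's formula on large balls to $\Psi_0,\overline{\Psi_0}$ — which uses the self--adjointness of $H$ (condition (a)) and the radiation condition — forces the far--field amplitude of $\Psi_0$ to vanish, whence, by the refined decay and the trace/regularity estimates of I, $\Psi_0\in W^{2,2}$, so $\Psi_0$ solves Problem~\ref{problem:1} at $|\bxi|$. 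Conversely, any solution $\psi_n$ of Problem~\ref{problem:1} at $|\bxi|$ has, by the same flux argument, vanishing far--field amplitude, hence satisfies $\psi_n=-G_k*(U\psi_n)$, so that $U\psi_n$ is a nonzero element of $\ker(I+K_k)$. Thus $\ker(I+K_k)\ne\{0\}$ precisely when $|\bxi|\in B$, which is the criterion claimed, and compactness of $K_k$ makes this kernel finite dimensional.

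Next, for existence, I would check that the solvability condition of the Fredholm alternative is automatically met, even when $|\bxi|\in B$. With respect to the symmetric bilinear pairing $\int fg\,\rmd\bx$, and using the symmetry of $U$ together with the evenness of $G_k$, the transpose of $K_k$ is $\psi\mapsto G_k*(U\psi)$, so the cokernel of $I+K_k$ consists of the solutions $\psi$ of $\psi=-G_k*(U\psi)$, i.e.\ of the bound states $\psi_n$ with $k_n=|\bxi|$. For each such $\psi_n$ the symmetry of $U$, Green's identity, and $(\Delta+|\bxi|^2)e^{\rmi\langle\bxi,\cdot\rangle}=0$ give
\beq
\int\bigl(Ue^{\rmi\langle\bxi,\cdot\rangle}\bigr)(\by)\,\psi_n(\by)\,\rmd\by
=\int e^{\rmi\langle\bxi,\by\rangle}(U\psi_n)(\by)\,\rmd\by
=\int\bigl[(\Delta+|\bxi|^2)e^{\rmi\langle\bxi,\by\rangle}\bigr]\psi_n(\by)\,\rmd\by=0,
\eeq
the boundary terms vanishing because $\psi_n\in W^{2,2}$. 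Hence $Ue^{\rmi\langle\bxi,\cdot\rangle}$ is orthogonal to the cokernel, and a solution $\varphi$, and with it a solution $\Psi_\bxi$ of Problem~\ref{problem:2}, exists for every $\bxi\ne 0$.

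Finally, inserting the far--field expansion $G_k(\bx-\by)=\frac{e^{\rmi k|\bx|}}{4\pi|\bx|}\,e^{-\rmi k\langle\bx/|\bx|,\by\rangle}(1+o(1))$ into $\Phi=-G_k*(U\Psi_\bxi)$ and dominating the remainder by the weighted $L^1$ norm of $U\Psi_\bxi$ (finite by (b)--(c) with $p=1$) gives the stated asymptotics, with $F(\bx,\bxi)=-\frac{1}{4\pi}\int e^{-\rmi\langle\bx,\by\rangle}(U\Psi_\bxi)(\by)\,\rmd\by$. The amplitude is the same for any two solutions even when $|\bxi|\in B$, since their difference is a bound state $\psi_n$ with $k_n=|\bxi|$ and, for every unit vector $\bbeta$, the Green's identity computation above (now applied to the free solution $e^{-\rmi|\bxi|\langle\bbeta,\cdot\rangle}$) yields $\int e^{-\rmi|\bxi|\langle\bbeta,\by\rangle}(U\psi_n)(\by)\,\rmd\by=0$, so $\psi_n$ does not contribute to the coefficient of $e^{\rmi|\bxi||\bx|}/|\bx|$. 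I expect the two genuinely delicate ingredients to be the Hilbert--Schmidt bound on $K_k$ with exactly the weights $(1+|\bx|)^p$ dictated by (b)--(c), and the flux argument identifying the outgoing homogeneous solutions with $W^{2,2}$ bound states (this is the point where nonlocality matters: for local potentials the same flux identity together with unique continuation would force $\Psi_0\equiv 0$, whereas nonlocality permits genuine solutions embedded in the continuum); both are supplied by I.
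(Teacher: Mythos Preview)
Your approach is exactly the one sketched in the paper (and carried out in the reference I): recast Problem~\ref{problem:2} as the Lippmann--Schwinger equation $v=v_0+L(k)v$ with $v=U\Psi_\bxi$, run the Fredholm alternative on the weighted space, identify the kernel with the bound states of Theorem~\ref{theorem:1} via the flux/Rellich argument, and verify the solvability condition by the Green's--identity computation $\int (Ue^{\rmi\langle\bxi,\cdot\rangle})\psi_n=0$; the uniqueness of the amplitude follows from the same identity. Two small corrections: the weighted space used in I is $X^2=L^2\bigl((1+|\bx|)^{2}\,\rmd\bx\bigr)$, not $(1+|\bx|)^{-2}$ (you need the decaying weight so that $v_0=Ue^{\rmi\langle\bxi,\cdot\rangle}\in X^2$ by conditions (b)--(c) with $p=2$, and so that $L(k)$ is compact there); and the paper phrases the cokernel condition via the Hilbert--space adjoint $L^\dagger(k)$ in $X^2$, whose eigenfunctions are $f_j^*=(1+|\bx|)^{-2}\psi_j$, rather than via the transpose for the unweighted pairing --- but unpacking $\langle f_j^*,v_0\rangle_{X^2}$ gives precisely your integral $\int\psi_j\,(Ue^{\rmi\langle\bxi,\cdot\rangle})$, so the content is identical.
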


\begin{proof}
See I (proof of Theorem 2.2).
\end{proof}

Besides, let $\Omega^R$ be the set of vectors $\bxi\in\R^3$ such that $|\bxi|\not\in B$, $\bxi\neq 0$;
then the functions $\Omega^R\ni\bxi\mapsto\Phi(\bxi,\bx)$ are equicontinuous, i.e.:
\beq
\sup_{\bx\in\R^3}\left|\Phi(\bxi,\bx)-\Phi(\bbeta,\bx)\right|\longrightarrow 0,
\eeq
where $\bxi\in\Omega^R$ is fixed, and $|\bxi-\bbeta|\to 0$.

We can rapidly sketch the approach to Problems \ref{problem:1} and \ref{problem:2}.
One first introduces the following operator $L(k)$:
\beq
L(k)f(\bx)=-\int\! V_0(\bx)\frac{e^{\rmi k|\bx-\by|}}{4\pi|\bx-\by|}f(\by)\,\rmd\by
-\int\!\left(\int \!V(\bx,\bz)\frac{e^{\rmi k|\bz-\by|}}{4\pi|\bz-\by|}\,d\bz\right)\,f(\by)\,\rmd\by,
\label{f:7}
\eeq
acting on the Hilbert space $X^2$ defined by
\beq
X^2=\left\{f\in L^2\,:\,\|f\|_{X^2} \doteq \left(\int(1+|\bx|)^2|f(\bx)|^2\,\rmd\bx\right)^{1/2}<+\infty\right\}. \nonumber
\label{f:8}
\eeq
Note that the function ${e^{\rmi k|\bx-\by|}}/{4\pi|\bx-\by|}$ (in definition \eqref{f:7} of the operator
$L(k)$), is the Green function associated with the classical radiation problem. Next, one
applies the Riesz--Schauder theory \cite{Yosida} to the resolvent $R(k)=[1-L(k)]^{-1}$, and then the following
alternative holds:
\begin{itemize}
\addtolength{\itemsep}{0.1cm}
\item[(a)] either $R(k)=[1-L(k)]^{-1}$
exists and is a bounded operator on $X^2$;
\item[(b)] or the space spanned by the eigenfunctions of $L(k)$ has dimension $n\geqslant 1$.
\end{itemize}
In case (a) the integral equation
\beq
v(\bxi,\cdot)=v_0(\bxi,\cdot)+L(|\bxi|)\,v(\bxi,\cdot),
\label{f:9}
\eeq
where
\beq
v_0(\bxi,\bx)=V_0(\bx)\,e^{\rmi\langle\bxi,\bx\rangle}+\int V(\bx,\by)\,e^{\rmi\langle\bxi,\by\rangle}\,\rmd\by,\nonumber
\label{f:10}
\eeq
has a unique solution in $X^2$, i.e.:
\beq
v(\bxi,\cdot)= R(|\bxi|) \, v_0(\bxi,\cdot).\nonumber
\label{f:11}
\eeq
In case (b), Eq. \eqref{f:9} has a solution $v(\bxi,\cdot)\in X^2$ if and only if $v_0(\bxi,\cdot)$
is orthogonal in $X^2$ to $n$ linearly independent eigenfunctions of the adjoint operator $L^\dagger(|\bxi|)$.
For this purpose the following theorem can be proved.

\begin{theorem}
\label{theorem:3}
{\rm (i)} If the function $f$ is a solution of the problem
\beq
f \in X^2, \quad f = L(|k|)f, \quad f \not\equiv 0 \qquad (k\neq 0,\, \Imag k \geqslant 0),
\label{f:12}
\eeq
then the function
\beq
\psi(\bx)=-\int\frac{e^{\rmi k|\bx-\by|}}{4\pi|\bx-\by|}\,f(\by)\,\rmd\by \nonumber
\label{f:13}
\eeq
is a solution of the problem
\beq
\psi \in W^{2,2}, \quad \Delta\psi+k^2\psi=U\psi \qquad (\psi\not\equiv 0).
\label{f:14}
\eeq
{\rm (ii)} Conversely, if $\psi(\bx)$ is a solution of problem \eqref{f:14} with $\Imag k\geqslant 0$, then the function
\beq
f(\bx)=\Delta\psi(\bx)+k^2\psi(\bx)\nonumber
\label{f:15}
\eeq
is a solution of problem \eqref{f:12}.\\
{\rm (iii)} If $\psi(\bx)$ is a solution of problem \eqref{f:14} with $\Imag k=0$, then the function
\beq
{f^*}(\bx)=(1+|\bx|)^{-2}\psi(\bx)\nonumber
\label{f:16}
\eeq
is an eigenfunction in $X^2$ of the adjoint operator $L^\dagger(k)$.
\end{theorem}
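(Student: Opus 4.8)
I would prove the three assertions by the standard route that identifies solutions of a Lippmann--Schwinger equation with those of the reduced Schr\"odinger equation, now carried through in the nonlocal setting. Write $g_k(\bx)=e^{\rmi k|\bx|}/(4\pi|\bx|)$, so that $(\Delta+k^2)g_k=-\delta$ and, by the definition \eqref{f:7} of $L(k)$, $L(k)h=-U(g_k*h)$ for $h\in X^2$. For (i), let $\psi$ be as in \eqref{f:13}, i.e.\ $\psi=-g_k*f$. Then $(\Delta+k^2)\psi=f$ in the sense of distributions, while $f=L(k)f=-U(g_k*f)=U\psi$; hence $\Delta\psi+k^2\psi=U\psi$, and $\psi\not\equiv0$ since $\psi\equiv0$ would force $f=U\psi\equiv0$. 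There remains $\psi\in W^{2,2}$, whose only nontrivial ingredient is $\psi\in L^2$. For $\Imag k>0$ this is immediate ($g_k\in L^1\cap L^2(\R^3)$, $f\in X^2\subset L^2$, Young's inequality). For real $k\neq0$ the outgoing function $\psi=-g_k*f$ has a priori only $O(|\bx|^{-1})$ decay and lies in $L^2$ precisely when its far--field amplitude $A(\cdot)$ vanishes; and since $f$ solves the \emph{homogeneous} equation, $\psi$ is an outgoing solution of $\Delta\psi+k^2\psi=U\psi$, so Green's identity on $\{|\bx|<R\}$ gives in the limit $R\to\infty$ the relation $2\rmi k\int_{|\bx|=1}|A|^2\,\mu(\rmd\bx)=\int_{\R^3}\bigl(\overline{\psi}\,U\psi-\psi\,\overline{U\psi}\bigr)$, which vanishes because $U$ is formally self--adjoint (condition (a)), the integral converging by (b) and (c); hence $A\equiv0$ and $\psi\in L^2$. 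Once $\psi\in L^2$ it is continuous and bounded (Lemma A.3 of I), so (b), (c) with $p=2$ give $U\psi\in L^2$, whence $\Delta\psi=U\psi-k^2\psi\in L^2$ and $\psi\in W^{2,2}$.

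\textbf{Part (ii).} Conversely, given $\psi\in W^{2,2}$, $\psi\not\equiv0$, solving $\Delta\psi+k^2\psi=U\psi$ with $\Imag k\geqslant0$, put $f:=\Delta\psi+k^2\psi=U\psi$. Then $f\in X^2$: $\psi\in L^\infty$ by the Sobolev embedding $W^{2,2}(\R^3)\hookrightarrow L^\infty$, so $\|(1+|\bx|)V_0\psi\|_{L^2}\leqslant A_2\|\psi\|_\infty$ and the nonlocal part of $U\psi$ is bounded in $X^2$ by $B_2\|\psi\|_\infty$. It suffices to show $\psi=-g_k*f$, for then $L(k)f=-U(g_k*f)=U\psi=f$. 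On the Fourier side $\widehat{g_k*f}=\widehat{g_k}\,\widehat{f}=\widehat{f}/(|\xi|^2-k^2-\rmi0)$, while the equation for $\psi$ reads $\widehat{f}=(k^2-|\xi|^2)\widehat{\psi}$; hence $\widehat{g_k*f}=-(|\xi|^2-k^2)\widehat{\psi}/(|\xi|^2-k^2-\rmi0)=-\widehat{\psi}$, the surface term dropping because, for real $k$, $\widehat{\psi}\in L^2$ forces the continuous function $\widehat{f}$ (note $f=U\psi\in L^1$ by (b), (c)) to vanish on $\{|\xi|=k\}$, and for $\Imag k>0$ there is no such sphere. Thus $\psi=-g_k*f$ and $f=L(k)f$.

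\textbf{Part (iii).} Let now $k$ be real, let $\psi$ solve \eqref{f:14}, set $f:=U\psi$ as above and $f^{*}:=(1+|\bx|)^{-2}\psi$; then $\|f^{*}\|_{X^2}^2=\int(1+|\bx|)^{-2}|\psi|^2\,\rmd\bx\leqslant\|\psi\|_{L^2}^2<+\infty$, so $0\not\equiv f^{*}\in X^2$. Computing the adjoint of $L(k)$ for the inner product of $X^2$ --- using (a), i.e.\ $V_0,V$ real and $V$ symmetric, together with $k$ real, so that in passing to the adjoint $g_k$ is replaced by $\overline{g_k}$ and the kernel transposed --- one gets
\[
L^{\dagger}(k)f^{*}(\by)=-(1+|\by|)^{-2}\int\frac{e^{-\rmi k|\by-\bz|}}{4\pi|\by-\bz|}\,(U\psi)(\bz)\,\rmd\bz=-(1+|\by|)^{-2}\bigl(\overline{g_k}*f\bigr)(\by).
\]
But the Fourier computation of Part (ii), run with $\widehat{\overline{g_k}}=1/(|\xi|^2-k^2+\rmi0)$ in place of $\widehat{g_k}$, gives likewise $\overline{g_k}*f=-\psi$; hence $L^{\dagger}(k)f^{*}=(1+|\by|)^{-2}\psi=f^{*}$, i.e.\ $f^{*}$ is an eigenfunction of $L^{\dagger}(k)$ in $X^2$.

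\textbf{Main obstacle.} The formal equivalence between \eqref{f:12} and \eqref{f:14} and the Fourier identities are routine; the work concentrates in two places. First, the membership $\psi\in L^2$ in Part (i) for \emph{real} $k$, which hinges on the vanishing of the far--field amplitude and hence on the flux identity above --- here one must justify the convergence of $\int_{\R^3}(\overline{\psi}\,U\psi-\psi\,\overline{U\psi})$ for a solution $\psi$ with only $O(|\bx|^{-1})$ decay, which is exactly where conditions (b), (c) enter. Second, the boundedness of the operators $g_k*\,\cdot\,$, multiplication by $V_0$, and the $V$--integral operator between the relevant weighted spaces with the stated norms, which is the technical heart of the nonlocal adaptation; this is already developed and quantified in I, and I would simply invoke those estimates. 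The first point is the one I expect to require the most care.
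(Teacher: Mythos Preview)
The paper does not give its own proof of this theorem: its entire argument is ``See I (proof of Theorem 4.2)'', i.e.\ a citation of the earlier paper \cite[IV]{Bertero}. There is therefore no in--text proof to compare against; your proposal is a self--contained reconstruction of what that reference presumably contains.

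Your approach is the standard one and is correct in outline. The identification $L(k)h=-U(g_k*h)$, the derivation of \eqref{f:14} from \eqref{f:12} via $(\Delta+k^2)g_k=-\delta$, the Fourier inversion $\psi=-g_k*f$ in part~(ii), and the adjoint computation in part~(iii) are all sound. Two technical points deserve a brief comment. In part~(ii) you claim $\widehat f$ is continuous because $f=U\psi\in L^1$; this is justified since $\psi\in W^{2,2}(\R^3)\subset L^\infty$ and conditions~(b),~(c) with $p=1$ give $\int|V_0|<\infty$ and $\int\!\int|V(\bx,\by)|\,\rmd\by\,\rmd\bx<\infty$, so indeed $U\psi\in L^1$. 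The vanishing of $\widehat f$ on $\{|\xi|=|k|\}$ then follows exactly as you say. In part~(i), the flux argument for real $k$ requires knowing a priori that $\psi=-g_k*f$ is continuous and $O(|\bx|^{-1})$ at infinity; this uses $f\in X^2$ (not merely $f\in L^2$) together with the mapping properties of $g_k*\cdot$ established in~I, which you correctly flag as the estimates to be invoked. Your identification of the two main obstacles --- convergence of the flux integral for real $k$, and the weighted boundedness of the constituent operators --- matches what one expects the cited proof to address.

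One small remark: the statement as printed has $f=L(|k|)f$, which taken literally is inconsistent with the kernel $e^{\rmi k|\bx-\by|}$ in the definition of $\psi$ when $\Imag k>0$. You have (correctly, in spirit) read it as $f=L(k)f$; this is almost certainly the intended meaning, consistent with how $L(k)$ is used elsewhere in the section.
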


\begin{proof}
See I (proof of Theorem 4.2).
\end{proof}

Let now $\psi_1,\ldots,\psi_n$ be 
$n$ linearly independent solutions of problem \eqref{f:12}; by
statement (iii) of the theorem above it follows that the functions
$f^*_j(\bx)=(1+|\bx|)^{-2}\psi_j(\bx)$ 
are linearly independent eigenfunctions of the adjoint operator
$L^\dagger(|\bxi|)$. 
Since the following equality holds (see I, Eq. (6.5)):
\beq
\left\langle f^*_j,v_0(\bxi,\cdot)\right\rangle_{X^2} = 0,\nonumber
\label{f:17}
\eeq
where $\langle\cdot,\cdot\rangle_{X^2}$ denotes the scalar product in $X^2$,
then a solution to Eq. \eqref{f:9} in $X^2$ exists but it is no longer unique. Nevertheless,
the scattering amplitude $F\left(|\bxi|\frac{\bx}{|\bx|},\bxi\right)$ is unique (see I).

Once the solutions of the time--independent Schr\"odinger equation have been obtained one can write,
following a procedure which goes back to T. Ikebe\footnote{It must be mentioned that there was a subtle error
in the original paper by Ikebe \cite{Ikebe1}, which was subsequently corrected by B. Simon \cite{Simon};
the interested reader is referred to Simon's monograph \cite{Simon}.} \cite{Ikebe1,Ikebe2},
an expansion formula for an arbitrary function $f\in L^2$ in terms of the eigenfunctions $\psi_n(\bx)$
and of the scattering solutions $\Psi_\bxi(\bx)$ (see I, Theorem 2.3).
By the use of this expansion one obtains a spectral representation of the operator $H$:
\beq
\left(Hu(\cdot,t)\right)(\bx) =
-\Delta u(\bx,t)+V_0(\bx)u(\bx,t)+\int V(\bx,\by)u(\by,t)\,\rmd\by=
\rmi\,\frac{\partial u(\bx,t)}{\partial t},\nonumber
\label{f:18}
\eeq
and of functions of $H$. In particular, a representation of the evolution operator
$\exp(-\rmi tH)$ is obtained, and then the solution of the time--dependent Schr\"odinger equation can be studied.
In the case of the nonlocal potentials being considered, we face an additional problem:
the possible existence of positive
energy eigenvalues and, correspondingly, the non--uniqueness of the scattering solutions.
Nevertheless the scattering amplitude exists and
is unique for any $\bxi\in\R^3$ (see Theorem \ref{theorem:2}) even
if the scattering solution is not unique. This allows us to define uniquely the scattering operator
$S=W_+^\dagger W_-$, where $W_\pm=\,\,\sds\stronglimit_{t\to\pm\infty}\,\exp(\rmi tH)\exp(\rmi t\Delta)$
($\stronglimit\equiv$ strong limit), and prove the unitarity of $S$ in a very general setting (see I).

%\end{document}

\section{Rotationally invariant nonlocal potentials;
analyticity in the $\boldsymbol{k}$--plane of the resolvents
and of the partial scattering amplitudes at fixed angular momentum} 
\label{se:rotationally}

Hereafter we shall be concerned 
with a class of \emphsl{nonlocal} potentials,
which represents a natural generalization
of the central character of the local interaction.
These potentials, denoted by $V(\bR,\bR')$,
are assumed to depend only on the lengths $R,R'$  of the vectors
$\bR$ and $\bR'$, and on the
angle $\eta$ between them, or equivalently,
on the shape and dimension of the triangle $(O,\bR,\bR')$, but not
on its orientation. 
We then rewrite Eq. \eqref{f:3} in the following form:
\beq
(H\psi)(\bR)=-\Delta\psi(\bR)+g\int_{\R^3}V(\bR,\bR')\,\psi(\bR')\,\rmd\bR'
= k^2 \, \psi(\bR).
\label{f:19}
\eeq
This equation can be seen to represent
the two--body Schr\"odinger equation in its reduced form with respect
to the coordinate $\bR$ of the relative motion between the two interacting
particles; accordingly, 
$\psi(\bR)$ represents the relative motion wavefunction, and $g$ is the
coupling constant of the interacting particles. The Planck constant
$\hbar$ and the reduced mass $\mu$ do not
appear in Eq. \eqref{f:19} corresponding to a simple choice of units ($\hbar=2 \mu=1$).
In view of the assumptions on $V(\bR,\bR')$, we can write the
following formal expansion:
\beq
V(\bR,\bR') \doteq
V(R,R';\cos\eta)
=\frac{1}{4\pi RR'}\sum_{\ell=0}^\infty(2\ell+1)\,V_\ell(R,R')\,P_\ell(\cos\eta),
\label{f:20}
\eeq
where $\cos\eta=(\bR\cdot\bR')/(RR')$,
and the $P_\ell(\cdot)$ are the Legendre polynomials.
The Fourier--Legendre coefficients $V_\ell(R,R')$ of $V(\bR,\bR')$ are given by:
\beq
V_\ell(R,R')=2\pi RR'\int_{-1}^{+1}V(R,R';\cos\eta)\,P_\ell(\cos\eta)\,\rmd(\cos\eta).
\label{f:21}
\eeq
Next, from the current conservation law it follows that $V(\bR,\bR')$ is a real and symmetric function:
$V(\bR,\bR')=\overline{V(\bR,\bR')}=V(\bR',\bR)$.
We can thus conclude that, \emphsl{provided the coupling constant $g$ is restricted to real
values, the Hamiltonian $H$ is a formally Hermitian and rotationally invariant operator.}
The relative motion wavefunction $\psi(\bR)$ can now be expanded in the form:
\beq
\psi(\bR)=\frac{1}{R}\sum_{\ell=0}^\infty \psi_\ell(R)\, P_\ell(\cos\theta),
\label{f:22}
\eeq
where $\ell$ is now the relative angular momentum between the interacting particles.

Representing the unit vectors
$(\bR/R)$ and $(\bR'/R')$ respectively by
the angles $(\theta,\varphi)$ and $(\theta',\varphi')$, we have:
$\cos\eta=\cos\theta\cos\theta'+\sin\theta\sin\theta'\cos(\varphi-\varphi')$.
Then, using the following addition formula for the Legendre polynomials:
\beq
\int_0^\pi\int_0^{2\pi}P_s(\cos\eta)P_\ell(\cos\theta')\sin\theta'\,\rmd\theta'\rmd\varphi'=
\frac{4\pi}{2\ell+1}P_\ell(\cos\theta)\,\delta_{s\ell},
\label{f:23}
\eeq
one readily obtains from formulae \eqref{f:19}--\eqref{f:23} the 
following {\bf nonlocal Schr\"odinger--type
integro--differential equation at fixed angular momentum}:
\beq
D_{\ell,k}\psi_\ell(R)\doteq 
\psi_\ell^{''}(R)+k^2\psi_\ell(R)-\frac{\ell(\ell+1)}{R^2}\psi_\ell(R)=
g\int_0^{+\infty}\!\!V_\ell(R,R')\,\psi_\ell(R')\,\rmd R',
\label{f:24}
\eeq
where $k^2=E$ is the relative kinetic energy of 
the two particles in the center of mass system,
and (for all integer $\ell$):
\beq
V_\ell (R,R')
= \overline{V_\ell(R,R')}
= V_\ell(R',R).
\label{f:24'}
\eeq
We study two types of solutions of 
the Schr\"odinger--type equation  \eqref{f:24}:

\vskip 0.2cm

\noindent
{\bf (S--a) Bound state solutions}, which satisfy the following conditions:
\begin{itemize}
\addtolength{\itemsep}{0.1cm}
\item[(i)] $\psi_\ell'(R)$ is absolutely continuous;
\item[(ii)] $\psi_\ell(0)=0$;
\item[(iii)] $\sds\int_0^{+\infty}|\psi_\ell(R)|^2\,\rmd R <+\infty$.
\end{itemize}

\vskip 0.2cm

\noindent
{\bf (S--b) Scattering solutions}, denoted by $\Psi_\ell(k;R)$, 
which satisfy the following conditions:
\begin{itemize}
\addtolength{\itemsep}{0.1cm}
\item[(i$^\prime$)] $\Psi_\ell'(k;R)$ is absolutely continuous;
\item[(ii$^\prime$)]
$\Psi_\ell(k;R)$ can be written as
\beq
\Psi_\ell(k;R)=kR j_\ell(kR)+\Phi_\ell(k;R),
\label{f:25}
\eeq
where $j_\ell(\cdot)$ denotes the spherical Bessel function
(note that $R j_\ell(kR)$ is a solution of the differential equation 
$D_{\ell,k} [Rj_\ell(kR)]=0$ which 
vanishes at $R=0$),  
and $\Phi_\ell(k;R)$ satisfies the following conditions:
\beq
\null\qquad\Phi_\ell(k;0)=0,
\quad
\lim_{R\to+\infty}
\left[\frac{\rmd}{\rmd R}\Phi_\ell(k;R)-\rmi k \Phi_\ell(k;R)\right]=0 
\quad (k\in\R^+).
\label{f:26}
\eeq
The second equality in \eqref{f:26} is the so--called \emphsl{Sommerfeld radiation condition}.
\end{itemize}

\noindent
It will be shown below in Theorem \ref{theorem:7'} 
that
(in accordance with the study given in refs. \cite[I,II,III]{Bertero})
the research of solutions of type (S-b)
of the \emphsl{Schr\"odinger--type}  
equation \eqref{f:24} reduces to the problem of solving a 
{\bf Lippmann--Schwinger--type linear integral equation},
namely the following \emphsl{inhomogeneous} Fredholm equation:

\begin{subequations}
\label{f:27}
\begin{align}
& v_\ell(k,g;R) = v_{\ell,0}(k;R)+g\int_0^{+\infty}L_\ell(k;R,R')v_\ell(k,g;R')\, \rmd R',
\label{f:27a} \\  
\intertext{where:}
& v_{\ell,0}(k;R) = \int_0^{+\infty} V_\ell(R,R')\, kR'j_\ell(kR')\,\rmd R', \label{f:27b} \\
& L_\ell(k;R,R') = \int_0^{+\infty} V_\ell(R,R'')\,G_\ell(k;R'',R')\,\rmd R''. \label{f:27c}
\end{align}
\noindent 
In the latter, $G_\ell(k;R,R')$ satisfies the ``Green function" distributional identity 
\beq
D_{\ell,k}(R)\  G_\ell(k;R,R')= \delta(R-R'),
\label{f:27c'}
\eeq
and is explicitly given by the following formula:
\beq
G_\ell(k;R,R') = 
G_\ell(k;R',R) = 
-\rmi kRR'j_\ell\left[k\min(R,R')\right]\,h_\ell^{(1)}\!\left[k\max(R,R')\right], 
\label{f:27d}
\eeq
\end{subequations}
where $h_\ell^{(1)}(\cdot)$ denotes the spherical Hankel function.

\noindent
Similarly , as it will be shown in Theorem \ref{theorem:7}, the solutions of
type (S-a) of Eq. \eqref{f:24} are associated with 
solutions of the corresponding 
\emphsl{homogeneous} Fredholm equation 
(obtained by replacing $v_{\ell,0}$ by 0 in Eq. \eqref{f:27a}).

\vskip 0.1cm

\noindent
{\bf Symmetry properties in the complex $\boldsymbol{k}$--plane.}
\, In view of the parity and complex conjugation properties satisfied by the
spherical Bessel and Hankel functions, namely
$\overline{j_\ell(z)} =  j_\ell(\overline z)
= (-1)^\ell j_\ell(-\overline z)$,
$\overline{h_\ell^{(1)}(z)} =(-1)^\ell  h_\ell^{(1)}(-\overline z)$
(see \cite[formulae (9.1.35), (9.1.39), (9.1.40)]{Abramowitz}),  the following symmetry
properties readily follow from the 
reality condition \eqref{f:24'} on the potentials $V_\ell$ and 
from Eqs. \eqref{f:27b}, \eqref{f:27c}, \eqref{f:27d}:
\begin{align}
& \overline{G_\ell(k;R,R')} = G_\ell(-\overline k;R,R'), \label{f:symd} \\
& \overline{L_\ell(k;R,R')} = L_\ell(-\overline k;R,R'), \label{f:symc} \\
& \overline{v_{\ell,0}(k;R)}= v_{\ell,0}(\overline k;R). \label{f:symb}
\end{align}

\noindent
The present section is organised as follows. After having defined 
an appropriate class of nonlocal potentials
together with the corresponding Hilbert--space framework,
we give a complete study of the Fredholm--resolvent integral equation
at fixed angular momentum $\ell$,
associated with Eqs. \eqref{f:27}; this is done in Subsection \ref{subse:properties-L}.
Then the general meromorphy properties of this resolvent
with respect to the complex momentum variable $k$ are presented in
Subsection \ref{subse:mero}, where we also outline the algebraic correspondence
between the pole structure of the latter and bound--state--type 
solutions of the nonlocal Schr\"odinger equation.  
Complete results concerning the relationship between the 
integral equation formalism and the Schr\"odinger--type formalism
are then given in Subsection \ref{subse:corresp}, including the introduction
and analyticity properties in $k$ of the partial scattering amplitudes
$T_\ell(k;g)$. Finally, a short Subsection \ref{subse:partial-wave} is devoted to  
the partial wave expansion of the total scattering amplitude 
$F(k,\cos\theta;g)$ and to its general analyticity properties in $k$
and $\cos\theta$.

\subsection{Classes $\boldsymbol{{\cN}_{w,\alpha}}$
of nonlocal potentials. Properties of the functions $\boldsymbol{v_{\ell,0}(k,\cdot)}$, 
of the operators $\boldsymbol{L_\ell(k)}$, and of the resolvents
$\boldsymbol{R_\ell(k;g)}$ in the $\boldsymbol{k}$--plane}
\label{subse:properties-L}

\noindent
{\bf Definitions.} In what follows, all the functions 
of the real positive variable $R$ are considered
as defined \emphsl{for almost every (a.e.)} $R$ with respect to an 
appropriate measure on $\R^+$.
For each positive number $\alpha$, we introduce:

\noindent 
(1) The Hilbert space 
\beq
X_{w,\alpha} = \left\{x(R)\,:\,\|x\|_{w,\alpha} \doteq
\left[\int_0^{+\infty} {w}(R)\, e^{2\alpha R}
|x(R)|^2\,\rmd R\right]^{1/2} < +\infty\right\}.
\label{f:28} 
\eeq 
In \eqref{f:28} $w$ denotes a given
continuous and strictly positive \emphsl{weight--function} on the interval
$[0,+\infty)$. For any bounded operator $A$ on $X_{w,\alpha}$, the corresponding
norm will be denoted by $\|A\|_{X_{w,\alpha}}$, or simply
$\left\|A \right\|$. A subspace of bounded operators equipped
with an appropriate Hilbert--Schmidt (HS) norm
$\left\|A\right\|_{\mathrm{HS}_{w,\alpha}}$, or simply
$\left\|A\right\|_\mathrm{HS}$, (such that 
$\left\|A\right\|_\mathrm{HS} \geqslant \left\|A \right\|$), will be introduced below.

\vskip 0.3cm 

\noindent 
(2) The class $\cN_{w,\alpha}$ of \emphsl{rotationally invariant nonlocal potentials}
$V(\bR,\bR')$ (defined for a.e. $\bR,\bR'$), which satisfy the conditions 
$V(\bR,\bR')=\overline{V(\bR,\bR')}=V(\bR',\bR)$ (or equivalently conditions \eqref{f:24'}), 
together with the following condition: 
\beq 
{\bf C}(V) \doteq \left[\int_{\R^3} w(R) \, e^{2\alpha R} \,\rmd\bR
\int_{\R^3} w(R') \, e^{2\alpha R'} \, V^2(\bR,\bR') \, \rmd\bR'\right]^{1/2} < +\infty. 
\label{f:29} 
\eeq
In view of Parseval's equality, we also have: 
\beq
{\bf C}(V)=\left[\int_0^{+\infty}\!\!\! w(R)\,e^{2\alpha R}\,\rmd R
\int_0^{+\infty}\!\!\! w(R')\,e^{2\alpha R'}
\sum_{\ell=0}^{+\infty}(2\ell+1)\,V_\ell^2(R,R')\,\rmd R'\right]^{1/2}\!\!,
\label{f:29'} 
\eeq 
so that the partial potentials $V_\ell(R,R')$ satisfy 
(for all $\ell\geqslant 0$) the condition 
\beq
C(V_\ell)\doteq\left[\int_0^{+\infty} w(R)\,e^{2\alpha R}\,\rmd R
\int_0^{+\infty} w(R')\, e^{2\alpha R'}
V_\ell^2(R,R')\,\rmd R'\right]^{1/2} < +\infty, 
\label{f:30} 
\eeq
or, in terms of the function (defined for a.e. $R$)
\beq
V_\ell^{(w)}(R)\doteq 
\left(\int_0^{+\infty} w(R')\,e^{2\alpha R'} \, V_\ell^2(R,R')\,\rmd R'\right)^{1/2},
\label{f:30'}
\eeq
which belongs to $X_{w,\alpha}$, 
\beq 
C(V_\ell) = \left\|V_\ell^{(w)}\right\|_{w,\alpha}
\leqslant\frac{{\bf C}(V)}{\sqrt{2\ell+1}}.  
\label{f:31} 
\eeq 
Our aim is to consider the integral equation \eqref{f:27a} as a
linear equation in $X_{{w},\alpha}$ depending on the complex
parameters $k$ and $g$ and on the integer $\ell$ ($\ell \geqslant 0$), which
we rewrite in operator form as follows: 
\beq
\left[\I-gL_\ell(k)\right] v_\ell(k,g;\cdot)=
v_{\ell,0}(k;\cdot). 
\label{f:32} 
\eeq 
In the latter, $\I$ denotes the identity operator in $X_{{w},\alpha}$,
$v_{\ell,0}(k;\cdot)$ is the function defined by 
\eqref{f:27b}, and $L_\ell(k)$ denotes the integral operator with
kernel $L_\ell(k; R,R')$ (see \eqref{f:27c}). In the
complex plane of $k$, we consider the strip
$\Omega_\alpha\doteq\{k\in\C \,:\, |\Imag k| < \alpha\}$ 
and the half--plane 
$\Pi_\alpha\doteq\{k\in\C \,:\, \Imag k > -\alpha\}$.
$\overline{\Pi}_\alpha$ and $\overline{\Omega}_\alpha$ will
denote the closures of $\Pi_\alpha$ and $\Omega_\alpha$, respectively. 
We shall then specify classes of nonlocal
potentials $\cN_{{w},\alpha}$, with appropriate conditions on
the weight--function $w$ in such a way that the following
properties can be established:

\begin{itemize}
 \item[(i)] the functions $v_{\ell,0}(k;\cdot)$  belong to
$X_{w,\alpha}$ for all $k\in \overline{\Omega}_\alpha$, ($\ell=0,1,2,\ldots$);\\[-5pt]
\item[(ii)] the operators $L_\ell(k)$ are compact operators of
Hilbert--Schmidt--type in $X_{w,\alpha}$ for all
$k\in\overline{\Pi}_\alpha$, ($\ell=0,1,2,\ldots$).
\end{itemize}
In fact, for all such classes of potentials, the kernel
$L_{\ell}(k;R,R')$ will be majorized by an appropriate kernel
of rank one, and this will then allow us to apply
Smithies' refined version of the Fredholm
theory \cite{Smithies} for describing and discussing
the solutions of Eq. \eqref{f:32}.

Note that a similar study, which was more
based on the results of the Riesz--Schauder theory \cite{Yosida},
had been performed for the particular case $\ell=0$ (s--wave)
and a slightly different class of potentials in \cite[I]{Bertero}.

\subsubsection{Properties of the vector--valued functions
$\boldsymbol{k}\boldsymbol{\mapsto}
\boldsymbol{v}_{\boldsymbol{\ell},\boldsymbol{0}}
\boldsymbol{(}\boldsymbol{k};\boldsymbol{\cdot}\boldsymbol{)}$}
\label{subsubse:v-ell-0}  %3.1.1

We shall rely on the fact that the spherical Bessel functions 
$j_\ell(z)$ are entire functions for all integers $\ell$ ($\ell \geqslant 0$),
which satisfy bounds of the form \eqref{a:0}, valid
for all integers $\ell$ ($\ell \geqslant 0$) and for all $k\in \C$ (see \cite{Newton2}). 
For $k\in \overline\Omega_\alpha$ we shall use the norm of the function $kRj_\ell(kR)$ 
in the dual space $X^*_{w,\alpha}$ of $X_{w,\alpha}$, namely
\beq
\left\|{k\cdot} \,\, j_\ell(k\cdot)\right\|^*_{w,\alpha}
\doteq\left(\int_0^{+\infty}
\frac{|kR j_\ell(kR)|^2}{w(R)\,e^{2\alpha R}}\,\rmd R\right)^{1/2}.
\label{f:j0}
\eeq
In fact, in view of \eqref{a:0}, we have for all $k\in \overline\Omega_\alpha$ 
and $\ell\geqslant 0$:
\beq
\frac{1}{c_\ell} \ \left\|{k\cdot} \,\, j_\ell(k\cdot)\right\|^*_{w,\alpha}
\leqslant
\left[\int_0^\infty  \left(\frac{|k|R}{1+|k|R}\right)^2 \frac{\rmd R}{w(R)}\right]^{1/2} 
\leqslant A_w(|k|),
\label{f:j1}
\eeq
where the last inequality expresses a requirement on the 
weight--function $w$, namely the existence of
a positive and non--decreasing function 
$|k| \mapsto A_w (|k|)$ to be defined on $\R^+$.
We shall then prove the following lemma.

\begin{lemma}
\label{lemma:v} 
For every potential $V$ in a class
$\cN_{{w},\alpha}$ such that $w$ satisfies a condition of the
type \eqref{f:j1}, the corresponding functions 
$k\mapsto v_{\ell,0}(k;\cdot)$ (formally defined in \eqref{f:27b}) are 
well--defined for all integers $\ell$, $\ell\geqslant 0$,
as functions on $\overline {\Omega}_\alpha$ with
values in $X_{w,\alpha}$; for each $\ell$  the corresponding
norm $\|v_{\ell,0}(k;\cdot)\|_{w,\alpha}$ admits the following 
bound for $k$ varying in $\overline {\Omega}_\alpha$: 
\beq
\|v_{\ell,0}(k;\cdot)\|_{w,\alpha}
\leqslant
\frac{{\bf C}(V)}{(2\ell + 1)^{1/2}} 
\left\|{k\cdot} \,\, j_\ell(k\cdot)\right\|^*_{w,\alpha}
\leqslant c_\ell \frac{{\bf C}(V)}{(2\ell + 1)^{1/2}} \, A_{w}(|k|).
\label{f:v3}
\eeq
Moreover, this vector--valued function is 
continuous in $\overline{\Omega}_\alpha$ and 
holomorphic in $\Omega_\alpha$.
\end{lemma}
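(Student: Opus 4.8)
The plan is to get the bound \eqref{f:v3} from a single Cauchy--Schwarz estimate in the integration variable $R'$, and then to obtain the continuity on $\overline\Omega_\alpha$ and the holomorphy on $\Omega_\alpha$ from the local uniformity of that estimate together with the pointwise analyticity of the integrand, invoking the vector--valued criteria of Appendix \ref{appendix:b}.

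First I would fix $k\in\overline\Omega_\alpha$ and, for a.e.\ $R$, rewrite the formal integrand in \eqref{f:27b} as the product of the two functions $V_\ell(R,R')\,\bigl(w(R')e^{2\alpha R'}\bigr)^{1/2}$ and $kR'j_\ell(kR')\,\bigl(w(R')e^{2\alpha R'}\bigr)^{-1/2}$. Cauchy--Schwarz in $R'$ then gives, for a.e.\ $R$,
\[
|v_{\ell,0}(k;R)|\leqslant
\left(\int_0^{+\infty}\!\! w(R')e^{2\alpha R'}V_\ell^2(R,R')\,\rmd R'\right)^{1/2}
\left(\int_0^{+\infty}\!\!\frac{|kR'j_\ell(kR')|^2}{w(R')e^{2\alpha R'}}\,\rmd R'\right)^{1/2},
\]
that is, $|v_{\ell,0}(k;R)|\leqslant V_\ell^{(w)}(R)\,\bigl\|k\cdot\,j_\ell(k\cdot)\bigr\|^*_{w,\alpha}$ by the definitions \eqref{f:30'} and \eqref{f:j0}. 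The finiteness of the second factor on $\overline\Omega_\alpha$ is exactly what \eqref{f:j1} provides, via the Bessel bound \eqref{a:0} and $|\Imag k|\leqslant\alpha$: it is $\leqslant c_\ell A_w(|k|)<+\infty$. Since $V_\ell^{(w)}\in X_{w,\alpha}$ by the standing hypothesis on $V\in\cN_{w,\alpha}$, the pointwise bound already shows $v_{\ell,0}(k;\cdot)\in X_{w,\alpha}$; squaring, multiplying by $w(R)e^{2\alpha R}$ and integrating in $R$ yields
\[
\|v_{\ell,0}(k;\cdot)\|_{w,\alpha}\leqslant
\bigl\|V_\ell^{(w)}\bigr\|_{w,\alpha}\,\bigl\|k\cdot\,j_\ell(k\cdot)\bigr\|^*_{w,\alpha}
=C(V_\ell)\,\bigl\|k\cdot\,j_\ell(k\cdot)\bigr\|^*_{w,\alpha},
\]
and \eqref{f:v3} follows upon inserting the majorizations \eqref{f:31} and \eqref{f:j1}.

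For the regularity in $k$, I would note that for every fixed pair $R,R'$ the map $k\mapsto kR'j_\ell(kR')$ is entire, since $j_\ell$ is entire; hence for every $x\in X_{w,\alpha}$ the scalar function $k\mapsto\langle x,\,k\cdot\,j_\ell(k\cdot)\rangle$ is holomorphic on $\C$. Combined with the local bound furnished by \eqref{f:j1} (which makes the $X^*_{w,\alpha}$--valued function $k\mapsto k\cdot\,j_\ell(k\cdot)$ locally bounded, and, via dominated convergence using \eqref{a:0}, strongly continuous up to the boundary of $\Omega_\alpha$), the criterion of Appendix \ref{appendix:b} gives that $k\mapsto k\cdot\,j_\ell(k\cdot)$ is a continuous $X^*_{w,\alpha}$--valued function on $\overline\Omega_\alpha$ and a holomorphic one on $\Omega_\alpha$. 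Since, for a.e.\ $R$, $v_{\ell,0}(k;R)$ is the pairing of the fixed element $V_\ell(R,\cdot)\in X_{w,\alpha}$ with $k\cdot\,j_\ell(k\cdot)$, and the estimate above shows precisely that this ``contraction against $V_\ell$'' is a continuous linear map from $X^*_{w,\alpha}$ into $X_{w,\alpha}$, the $X_{w,\alpha}$--valued function $k\mapsto v_{\ell,0}(k;\cdot)$ inherits continuity on $\overline\Omega_\alpha$ and holomorphy on $\Omega_\alpha$. Equivalently, and more by hand, one may use Morera: for any closed contour $\Gamma\subset\Omega_\alpha$, Fubini (justified by the same Cauchy--Schwarz bound with $|k|$ replaced by $\sup_{\Gamma}|k|$) gives $\oint_\Gamma v_{\ell,0}(k;R)\,\rmd k=\int_0^{+\infty}V_\ell(R,R')\bigl(\oint_\Gamma kR'j_\ell(kR')\,\rmd k\bigr)\rmd R'=0$, and continuity plus the $X_{w,\alpha}$--valued version of Morera's theorem from Appendix \ref{appendix:b} then closes the argument.

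I expect the only genuine subtlety to lie in this last step: passing from the pointwise (in $R$) analyticity together with the scalar estimates to honest $X_{w,\alpha}$--valued continuity on the closed strip and holomorphy on the open one. Everything there hinges on the uniform--on--compacta character of the bound \eqref{f:v3}, so once the vector--valued machinery of Appendix \ref{appendix:b} is invoked the argument is routine; the inequality \eqref{f:v3} itself is immediate.
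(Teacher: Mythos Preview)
Your proof is correct and, for the bound \eqref{f:v3}, identical to the paper's: both apply Cauchy--Schwarz in $R'$ to get $|v_{\ell,0}(k;R)|\leqslant V_\ell^{(w)}(R)\,\|k\cdot\,j_\ell(k\cdot)\|^*_{w,\alpha}$ and then take the $X_{w,\alpha}$--norm, invoking \eqref{f:31} and \eqref{f:j1}.

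For the regularity statement the two arguments are organized a little differently. The paper applies Lemma~\ref{lemma:B8} directly to $v_{\ell,0}(k;R)$: on each bounded set $\Omega_\alpha^{(K)}=\{k\in\Omega_\alpha:|k|<K\}$ the pointwise bound gives a $k$--independent majorant $c_\ell A_w(K)V_\ell^{(w)}(R)\in X_{w,\alpha}$, and Lemma~\ref{lemma:B9} supplies the pointwise--in--$R$ holomorphy of the integral \eqref{f:27b}; Lemma~\ref{lemma:B8} then upgrades this to strong $X_{w,\alpha}$--valued holomorphy. You instead factor $v_{\ell,0}(k;\cdot)=T\bigl(k\cdot\,j_\ell(k\cdot)\bigr)$ through the bounded operator $T:X^*_{w,\alpha}\to X_{w,\alpha}$, $T\varphi(R)=\int V_\ell(R,R')\varphi(R')\,\rmd R'$, and establish the $X^*_{w,\alpha}$--valued holomorphy of the Bessel factor via the weak criterion (or Morera plus Fubini). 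Both routes sit squarely inside the Appendix~\ref{appendix:b} framework and use the same uniform--on--compacta estimate; your factorized version is in fact exactly the mechanism the paper uses later in the proof of Theorem~\ref{theorem:8}(i.b), so there is no genuine divergence of ideas---only of packaging.
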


\begin{proof}
Starting from Eq. \eqref{f:27b}, using the Schwarz inequality and taking into account
Eqs. \eqref{f:30'}, \eqref{f:j0} and condition \eqref{f:j1},
we can write for a.e. $R$:
\beq
|v_{\ell,0}(k;R)|=\left|\int_0^{+\infty}V_\ell(R,R')kR' j_\ell(kR')\,\rmd R'\right|
\leqslant
V_\ell^{(w)}(R) \,\, \left\|{k\cdot} \,\, j_\ell(k\cdot)\right\|^*_{w,\alpha}.
\label{f:v2}
\eeq
It then follows from \eqref{f:31} that the function 
$v_{\ell,0}(k;\cdot)$ belongs to ${X_{{w},\alpha}}$ and 
the bounds \eqref{f:v3} readily follow from 
\eqref{f:31}, \eqref{f:v2}, and \eqref{f:j1} for  
$k\in\overline{\Omega}_\alpha$.

\vskip 0.1cm

\noindent
\emph{Proof of the last statement:}
For $k$ varying in any bounded domain of the form
$\Omega_\alpha^{(K)}\doteq\{k\in\Omega_\alpha\,:\, |k| < K\}$ 
(or in its closure $\overline{\Omega}_\alpha^{(K)}$),
Eq. \eqref{f:v2} yields the following bound 
$|v_{\ell,0}(k;R)|\leqslant c_\ell A_{w}(K) V_\ell^{(w)}(R)$. 
One concludes that $v_{\ell,0}(k;R)$ belongs to a class 
${\cC}(D,\mu,p)$ (see Lemma \ref{lemma:B8}), with 
$D=\Omega_\alpha^{(K)}$ (or $\overline{\Omega}_\alpha^{(K)}$),
$\mu(R)= w(R)\,e^{2\alpha R}$, and $p=2$.
One moreover checks that the function
$v_{\ell,0}(k;R)$ is continuous in $\overline{\Omega}_\alpha^{(K)}$,
holomorphic in $\Omega_\alpha^{(K)}$ for a.e. $R$ in view of
Lemma \ref{lemma:B9}, since the integrand of \eqref{f:27b},
holomorphic with respect to $k$ in $\C$, can be uniformly bounded 
in $\overline{\Omega}_\alpha^{(K)}$
(in view of \eqref{a:0}) by the integrable function:
\beq
R' \longmapsto c_\ell V_\ell(R,R') w^{1/2}(R')\,e^{\alpha R'} \,
\left[\left(\frac{KR'}{1+KR'}\right) w^{-1/2}(R')\right] \nonumber
\eeq
(the integrability property of the latter being obtained by the Schwarz
inequality in view of \eqref{f:30'}) and \eqref{f:j1}). 
Lemma \ref{lemma:B8} is thus applicable and allows one
to state that the vector--valued function
$k\mapsto v_{\ell,0}(k;\cdot)\in X_{w,\alpha}$ is
continuous in $\overline{\Omega}_\alpha^{(K)}$,
holomorphic in $\Omega_\alpha^{(K)}$.
Since the argument is valid for any value of $K$, the
last statement of the lemma is thus established.
\end{proof}

\noindent
{\bf Choice of the weight--function $\boldsymbol{w}$.}
\ Our choice of relevant weight--functions ${w}$
satisfying a condition of the type \eqref{f:j1} will obey the
following criteria:

\vskip 0.1cm 

(a) ${w}(R)$ should be chosen as small as possible
near $R=0$ in order to include in the class $\cN_{w,\alpha}$ 
potentials $V(R,R')$, whose behaviour is as much singular as
possible near $R=0$ and $R'=0$.

Note that the behaviour of $w(R)$ for $R$ tending to infinity
is not so relevant, since the dominant behaviour of $V(R,R')$
at large $R$ and $R'$ is in fact dictated by the factors
$e^{\alpha R}$ and $e^{\alpha R'}$, which make the classes
$\cN_{w,\alpha}$ look like nonlocal versions of Yukawa--type
potentials. The convergence of the integral \eqref{f:j1} at
$R\to \infty$ will only serve to ensure the validity of the
boundedness of $\|v_{\ell,0}(k;\cdot)\|_{w,\alpha}$ for
$k$ lying on the boundary of the strip $\Omega_\alpha$.

\vskip 0.1cm 

(b) the behaviour of the majorant $A_{w}(|k|)$ may
keep some flexibility, according to whether one is interested
in improving the behaviour of
$\|v_{\ell,0}(k;\cdot)\|_{w,\alpha}$ at $k\to 0$ or at
$k\to \infty$.

\vskip 0.2cm 

In view of these considerations, we can propose
the following three specifications of the weight--function
$w$, which will correspond respectively to the following
convenient majorizations of the integrand of \eqref{f:j1}:

\begin{itemize}
\item[(i)] $\frac{|k|R}{1+|k|R} < 1$ leads one to a choice $w=w_0$ with
$A_{w_0}^2(|k|) = \int_0^{+\infty}\frac{\rmd R}{w_0(R)}$. 
An appropriate choice is:
$w_0 = w_0^{(\varepsilon)} \doteq R^{1-\varepsilon}\,(1+R)^{2\varepsilon}$, so that
\begin{equation}
A_{w_0}^2 = A_\varepsilon^2\doteq
\int_0^{+\infty} \frac{\rmd R}{R^{1-\varepsilon}(1+R)^{2\varepsilon}}.
\label{f:e0}
\end{equation}
\item[(ii)] $\frac{|k|R}{1+|k|R} \leqslant \frac{\sqrt{|k|R}}{2}$
leads one to a choice $w=w_1$ with 
$A_{w_1}^2(|k|)=\frac{|k|}{4} \int_0^{+\infty} \frac{R\,\rmd R}{w_1(R)}$.
An appropriate choice is:
$w_1 = w_1^{(\varepsilon)} \doteq R^{2-\varepsilon}(1+R)^{2\varepsilon}$, so that one has:
$A_{w_1}(|k|)=\frac{\sqrt{|k|}}{2} A_\varepsilon$,
(with $A_\varepsilon$ given by formula \eqref{f:e0}). \\[-4pt]
\item[(iii)] $\frac{|k|R}{1+|k|R} \leqslant |k|R$ \,
leads one to a choice $w=w_2$ with 
$A_{w_2}^2(|k|) =|k|^2 \int_0^{+\infty}\!\! \frac{R^2\,\rmd R}{w_2(R)}$. 
An appropriate choice is: 
$w_2 = w_2^{(\varepsilon)} \doteq R^{3-\varepsilon}(1+R)^{2\varepsilon}$, so that one has:
$A_{w_2}(|k|)=|k|\, A_\varepsilon$.
\end{itemize}

\noindent
As a consequence of the previous analysis
and of majorization \eqref{f:v3}, we can now give the following

\vskip 0.5cm

\noindent
{\bf Complement to Lemma $\boldsymbol{\ref{lemma:v}}$.}
\emph{The following bounds hold for each vector--valued function}
$k \mapsto v_{\ell,0}(k;\cdot)$:
\begin{itemize}
\it
\item[\rm (i)] for $w = {w}_0^{(\varepsilon)}=R^{1-\varepsilon}\,(1+R)^{2\varepsilon}$, \qquad
$\|v_{\ell,0}(k;\cdot)\|_{w,\alpha}\leqslant c_\ell\, C(V_\ell)\, A_{\varepsilon}$; \\[-5pt]
\item[\rm (ii)] for $w = w_1^{(\varepsilon)}= R^{2-\varepsilon} \, (1+R)^{2\varepsilon}$, \qquad
$\|v_{\ell,0}(k;\cdot)\|_{w,\alpha}\leqslant \frac{\sqrt{|k|}}{2}\, c_\ell\, C(V_\ell) \, A_{\varepsilon}$; \\[-5pt]
\item[\rm (iii)] for $w = w_2^{(\varepsilon)}=R^{3-\varepsilon}\,(1+R)^{2\varepsilon}$, \qquad 
$\|v_{\ell,0}(k;\cdot)\|_{w,\alpha}\leqslant |k|\, c_\ell\, C(V_\ell)\, A_{\varepsilon}$.
\end{itemize}

\subsubsection{Properties of the operator--valued functions
$\boldsymbol{k}\, \boldsymbol{\mapsto}\, 
\boldsymbol{L}_{\boldsymbol{\ell}}\boldsymbol{(}\boldsymbol{k}\boldsymbol{)}$}
\label{subsubse:L-ell} %3.1.2

In Appendix \ref{appendix:a} we have derived  bounds
on the angular--momentum Green functions
$G_\ell(k;R,R')$ which imply global majorizations of the following
form for $k$ varying in $ \overline{\Pi}_\alpha$:
\begin{equation}
|G_\ell(k;R,R')|\leqslant h_M(\ell,|k|) \ e^{\alpha(R+R')}\  M(R)M(R'),
\label{f:G}
\end{equation}
with the following three specifications: \\[-5pt]

\begin{itemize}
\item[(i)] $M(R) = \sqrt{R}$, \quad $h_M(\ell,|k|)= \frac{1}{2}\sqrt{\frac{\pi}{2\ell+1}}$ 
\quad (implied by \eqref{a:5});
\item[(ii)] $M(R) = 1$, \quad $h_M(\ell,|k|)= \left(\frac{1+\ell\pi}{|k|}\right)$ \quad (implied by \eqref{a:12});
\item[(iii)] $M(R) =\sqrt{1+R}$,\quad 
$h_M(\ell,|k|)=\min\left(\frac{1+\ell\pi}{|k|}, \frac{1}{2}\sqrt{\frac{\pi}{2\ell+1}}\right)$ 
\quad (implied by \eqref{a:13}).
\end{itemize}

\vskip 0.3cm

\noindent 
We then introduce a condition of the following type on the weight--function $w$:
\begin{equation}
B(M,w) \doteq \left[\int_0^\infty  M^2(R) \frac{\rmd R}{w(R)}\right]^{1/2} < \infty,
\label{f:mu2}
\end{equation}
which allows one to prove

\skd

\begin{theorem}
\label{theorem:4} 
For every potential $V$ in a class
$\cN_{{w},\alpha}$ such that ${w}$ satisfies a condition of the
type \eqref{f:mu2}, the corresponding kernels $L_\ell(k;R,R')$
(formally defined in \eqref{f:27c}) are well--defined as compact
operators  $L_\ell(k)$ of Hilbert--Schmidt--type acting in the
Hilbert space $X_{{w},\alpha}$ for all non--negative integral
value of $\ell$ and for all $k\in\overline{\Pi}_\alpha \setminus \{0\}$. 
More precisely, $|L_\ell(k;R,R')|$ is bounded (for each $\ell$ and $k$)
by a kernel of rank one,
and the corresponding Hilbert--Schmidt norm
$\left\|L_\ell(k)\right\|_{\rm HS}$ of $L_\ell (k)$ 
in a Hilbert space called $\widehat{X}_{w,\alpha}$ admits the
following majorization for $k\in\overline{\Pi}_\alpha \setminus \{0\}$:
\begin{equation}
\left\|L_\ell(k)\right\|_{\rm HS} \leqslant B^2(M,{w})\
\frac{{\bf C}(V)}{\sqrt{2\ell +1}} \ h_M(\ell,|k|).
\label{f:th4}
\end{equation}
Moreover, for each $\ell$,
the {\rm HS}--operator--valued function $k \mapsto L_\ell(k)$, taking its values in
$\widehat{X}_{w,\alpha}$, is continuous in $\overline{\Pi}_\alpha \setminus \{0\}$ 
and holomorphic in ${\Pi}_\alpha \setminus \{0\}$. 
\end{theorem}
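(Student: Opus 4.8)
The plan is to establish Theorem \ref{theorem:4} by reducing the boundedness and Hilbert--Schmidt properties of the operator $L_\ell(k)$, whose kernel is the convolution-type integral \eqref{f:27c} of $V_\ell$ against the Green function $G_\ell$, to the two ingredients already at our disposal: the rank-one majorization \eqref{f:G} of $G_\ell(k;R,R')$ from Appendix \ref{appendix:a}, and the integrability condition \eqref{f:29'}--\eqref{f:30'} on the partial potentials $V_\ell$. First I would insert the bound \eqref{f:G} into \eqref{f:27c} and apply the Schwarz inequality in the variable $R''$ using the measure $w(R'')\,e^{2\alpha R''}\,\rmd R''$: writing $G_\ell(k;R'',R')\leqslant h_M(\ell,|k|)\,e^{\alpha(R''+R')}M(R'')M(R')$, the inner integral $\int_0^{+\infty}V_\ell(R,R'')\,G_\ell(k;R'',R')\,\rmd R''$ is majorized, after splitting $e^{\alpha R''}=w^{1/2}(R'')e^{\alpha R''}\cdot w^{-1/2}(R'')$ and $M(R'')$ correspondingly, by the product
\beq
h_M(\ell,|k|)\,e^{\alpha R'}\,M(R')\left(\int_0^{+\infty}w(R'')e^{2\alpha R''}V_\ell^2(R,R'')\,\rmd R''\right)^{1/2}
\left(\int_0^{+\infty}\frac{M^2(R'')}{w(R'')}\,\rmd R''\right)^{1/2}. \nonumber
\eeq
The first factor in parentheses is precisely $V_\ell^{(w)}(R)$ of \eqref{f:30'}, and the second is $B(M,w)$ of \eqref{f:mu2}. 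This yields the pointwise rank-one majorization
\beq
|L_\ell(k;R,R')|\leqslant h_M(\ell,|k|)\,B(M,w)\,V_\ell^{(w)}(R)\,\bigl[e^{\alpha R'}M(R')\bigr], \nonumber
\eeq
which is the "kernel of rank one" asserted in the statement.

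Next I would compute the Hilbert--Schmidt norm in the appropriate Hilbert space. Since $X_{w,\alpha}$ carries the weight $w(R)e^{2\alpha R}$, the natural HS norm of an operator with kernel $K(R,R')$ is $\bigl(\int\!\int |K(R,R')|^2\,w(R)e^{2\alpha R}\,[w(R')e^{2\alpha R'}]^{-1}\,\rmd R\,\rmd R'\bigr)^{1/2}$; this is the space I would call $\widehat{X}_{w,\alpha}$, and one checks as usual that $\|A\|_{\mathrm{HS}_{w,\alpha}}\geqslant\|A\|_{X_{w,\alpha}}$. Substituting the rank-one majorant, the double integral factorizes: the $R$-integral gives $\int_0^{+\infty}w(R)e^{2\alpha R}\bigl(V_\ell^{(w)}(R)\bigr)^2\,\rmd R=\|V_\ell^{(w)}\|_{w,\alpha}^2=C(V_\ell)^2$ by \eqref{f:30'}--\eqref{f:31}, while the $R'$-integral gives $\int_0^{+\infty}e^{2\alpha R'}M^2(R')\,[w(R')e^{2\alpha R'}]^{-1}\,\rmd R'=\int_0^{+\infty}M^2(R')/w(R')\,\rmd R'=B^2(M,w)$. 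Multiplying the three factors $h_M^2$, $B^2$, $C(V_\ell)^2B^2$ and taking square roots gives $\|L_\ell(k)\|_{\mathrm{HS}}\leqslant h_M(\ell,|k|)\,B^2(M,w)\,C(V_\ell)$, and invoking \eqref{f:31} to replace $C(V_\ell)$ by ${\bf C}(V)/\sqrt{2\ell+1}$ yields exactly \eqref{f:th4}. Finiteness of this bound for $k\in\overline{\Pi}_\alpha\setminus\{0\}$ is guaranteed by \eqref{f:mu2} and by the explicit forms of $h_M(\ell,|k|)$ in cases (i)--(iii), all of which are finite away from $k=0$; hence $L_\ell(k)$ is a genuine HS (in particular compact) operator on $X_{w,\alpha}$.

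Finally, for the continuity and holomorphy in $k$, I would argue exactly as in the last part of the proof of Lemma \ref{lemma:v}, now at the operator-valued level via the machinery of Appendix \ref{appendix:b}: on any bounded subdomain $\Pi_\alpha^{(K)}=\{k\in\Pi_\alpha\setminus\{0\}:|k|<K\}$ (or its closure bounded away from $0$), the integrand $V_\ell(R,R'')G_\ell(k;R'',R')$ defining the kernel is, for almost every $(R,R')$, holomorphic in $k$ on $\Pi_\alpha\setminus\{0\}$ and continuous on its closure — this uses the known analyticity of $G_\ell$ in $k$, i.e. the holomorphy in $\Pi_\alpha\setminus\{0\}$ of $j_\ell[k\min(R,R')]\,h_\ell^{(1)}[k\max(R,R')]$ — and is dominated, uniformly for $k$ in the compact region, by an integrable majorant built from \eqref{f:G} with $h_M$ replaced by its supremum over $|k|\leqslant K$. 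Lemma \ref{lemma:B9} then gives continuity and holomorphy of the kernel pointwise, and Lemma \ref{lemma:B8} (applied with $D=\Pi_\alpha^{(K)}$, the product measure $w(R)e^{2\alpha R}\,[w(R')e^{2\alpha R'}]^{-1}$, and $p=2$, so that the relevant space is $\widehat{X}_{w,\alpha}$) upgrades this to continuity of $k\mapsto L_\ell(k)$ in $\overline{\Pi}_\alpha^{(K)}$ and holomorphy in $\Pi_\alpha^{(K)}$; since $K$ is arbitrary, the full statement on $\overline{\Pi}_\alpha\setminus\{0\}$ follows. The main obstacle, and the place where care is genuinely needed, is the bookkeeping of the two different weights — the one defining $X_{w,\alpha}$ versus the exponential and $M$-weights appearing in \eqref{f:G} — so that the Schwarz splitting is done consistently and the HS norm is computed with respect to the correct space $\widehat{X}_{w,\alpha}$; once that is set up cleanly, every estimate is a direct application of Schwarz and of the definitions \eqref{f:29'}--\eqref{f:31}, \eqref{f:mu2}.
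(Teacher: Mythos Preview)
Your proof is correct and follows essentially the same route as the paper: the rank-one pointwise bound \eqref{f:42} via Schwarz in the $R''$-variable, the factorized HS-norm computation yielding \eqref{f:th4}, and the holomorphy via dominated-convergence arguments from Appendix \ref{appendix:b}. The only cosmetic differences are that the paper packages your final step (Lemma \ref{lemma:B9} followed by the two-variable version of Lemma \ref{lemma:B8}) into a single application of Lemma \ref{lemma:B10}, and that the paper localizes in $k$ via $\Pi_\alpha^{(K)}=\{|k|>K\}$ rather than bounded compacts away from $0$ --- both reach the same conclusion on $\Pi_\alpha\setminus\{0\}$.
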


\begin{proof}
Assuming that $L_\ell(k)$ exists as an operator in
$X_{{w},\alpha}$, let $L^\dagger_\ell(k)$ be its adjoint, given
by the standard definition $\langle L^\dagger_\ell(k)x,y
\rangle_{w,\alpha}= \left\langle x, L_\ell(k)y
\right\rangle_{w,\alpha}$, ($x,y\in X_{{w},\alpha}$),
where $\langle\cdot,\cdot\rangle_{w,\alpha}$ denotes the
scalar product in $X_{{w},\alpha}$. $L^\dagger_\ell(k)$ is the
integral operator with kernel: 
\beq
L^\dagger_\ell(k;R,R')=\frac{{w}(R')}{{w}(R)} \, e^{2\alpha(R'-R)} \,
\overline{L_\ell(k;R',R)}. \nonumber 
\label{f:39} 
\eeq
Therefore, the Hilbert--Schmidt norm of $L_\ell(k)$
in $\widehat{X}_{w,\alpha}$, whose 
finiteness has to be proven, is given by the following
double--integral:
\beq
\left\|L_\ell(k)\right\|^2_{\rm HS}
\doteq \Tr[L^\dagger_\ell(k)L_\ell(k)] = 
\!\! \int_0^{+\infty} \!\! \frac{e^{-2\alpha R}}{{w}(R)}\,\rmd R
\int_0^{+\infty} \!\!\!\! {w}(R')e^{2\alpha R'} |L_\ell(k;R',R)|^2 \,\rmd R'.
\label{f:40}
\eeq
Let us first show that the integral on the r.h.s. of \eqref{f:27c}
is absolutely convergent, and therefore defines $L_\ell(k;R,R')$
for a.e. $R$. In view of \eqref{f:G}, this integral is bounded 
in modulus by
\beq  
h_M(\ell,|k|) \left[\int_0^{+\infty}|V_\ell(R,R'')|\ e^{\alpha R''} M(R'')\,\rmd R''\right]
e^{\alpha R'} M(R'),
\label{f:41}
\eeq
and thereby, in view of the Schwarz inequality and of \eqref{f:30'} and \eqref{f:mu2},
one obtains for a.e. $R$ (since $V_\ell^{(w)} \in X_{w,\alpha}$)
the following majorization by a kernel of rank one:
\beq 
|L_\ell(k;R,R')|\leqslant  
h_M(\ell,|k|) \ B(M,w) \, V_\ell^{(w)}(R)\ e^{\alpha R'} M(R').
\label{f:42}
\eeq
In view of \eqref{f:42}, we now obtain the 
following majorization for the expression \eqref{f:40} of 
$\|L_\ell(k)\|_{\rm HS}^2$:
\beq
\|L_\ell(k)\|_{\rm HS}^2 \leqslant h_M^2(\ell,|k|) \ B^2(M,{w})\
\!\!\!\int_0^{+\infty} \frac{e^{-2\alpha R}}{w(R)}
\left[e^{2\alpha R} M^2(R)\right]\,\rmd R
\!\!\int_0^{+\infty} \!\!\! w(R') \, e^{2\alpha R'}
\!\left|V_\ell^{(w)}(R')\right|^2\ \rmd R',
\label{f:HSmaj}
\eeq
which yields (in view of \eqref{f:mu2}):
\beq
\left\|L_\ell(k)\right\|_{\rm HS}^2 \leqslant
B^4(M,{w}) \ h_M^2(\ell,|k|)\ \|V_\ell^{(w)}\|_{w,\alpha}^2,
\label{f:th4sq}
\eeq
and therefore
(in view of \eqref{f:31}) the majorization \eqref{f:th4}.

\vskip 0.1cm

\noindent 
\emph{Proof of the last statement:} 
We note that the space $X_{{w},\alpha}$, defined in \eqref{f:28},
is a Hilbert space of the type $X_\mu$ introduced in Appendix
\ref{appendix:b} (see formula \eqref{b:5}), with 
$\mu(R) = {w}(R)\,e^{2\alpha R}$. 
For each $k\in \overline{\Pi}_\alpha \setminus \{0\}$,
$L_\ell(k)$ is an element of the
corresponding space  $\widehat{X}_\mu = \widehat{X}_{w,\alpha}$ 
introduced in \eqref{b:8}, with the coincidence of notations 
$\|L_\ell(k)\|_{\rm HS}=\|L_\ell(k)\|_{(\mu)}$. It can be
checked that for $k$ varying in any given domain
$\Pi_\alpha^{(K)} \doteq \{k\in \Pi_\alpha\,:\, |k| > K\}$,
and for each $\ell$, the function $(k;R,R') \mapsto L_\ell(k;R,R')$ 
satisfies all the assumptions of Lemma
\ref{lemma:B10} with (in view of \eqref{f:27c} and \eqref{f:G}): $\z=k$,
$F_1(k;R,R')=G_1(R,R')=|V_\ell(R,R')|$,
$F_2(k;R,R')=G_\ell(k;R,R')$, and 
$G_2(R,R')= h_M(\ell,K) e^{\alpha(R+R')} M(R) M(R')$.          
The fact that the majorizing kernel  
\beq
G(R,R')= h_M(\ell,K)
\left[\int_0^{+\infty}|V_\ell(R,R'')|\ e^{\alpha R''} M(R'')\,\rmd R''\right] \,
e^{\alpha R'} M(R') 
\label{e:11}
\eeq
coincides with the expression \eqref{f:41}, taken for $|k|=K$, implies that
$\|G\|_{\rm HS}=
\|G\|_{(\mu)}$ is finite in view of the previous 
HS--norm majorization that yielded the r.h.s. of \eqref{f:HSmaj}.
Since $F_1$ and $F_2$ are continuous (resp., holomorphic) with respect to
$k$ in $\overline{\Pi}_\alpha$ (resp., $\Pi_\alpha$), 
Lemma \ref{lemma:B10} implies that the HS--operator--valued function
$k \mapsto L_\ell(k)$ is continuous (resp., holomorphic) in 
$\overline{\Pi}_\alpha^{(K)}$ (resp., $\Pi_\alpha^{(K)}$), and therefore in 
$\overline{\Pi}_\alpha\setminus \{0\}$ (resp., $\Pi_\alpha\setminus \{0\}$), 
since the argument is valid for any $K>0$.
\end{proof}

\vskip 0.1cm

\noindent 
{\bf Complement to the choice of the weight--function $\boldsymbol{w}$.}
For the three given specifications of $M(R)$ in
majorization \eqref{f:G}, one can always obtain the
equality $B(M,{w})= A_{\varepsilon}$, with $A_{\varepsilon}$
given by Eq. \eqref{f:e0}, provided one chooses respectively
the weight--functions ${w}_1^{ (\varepsilon)}$, 
${w}_0^{(\varepsilon)}$ (see the complement to Lemma \ref{lemma:v}), 
and the weight--function 
\beq
{w}^{(\varepsilon)}(R) \doteq 
R^{1-\varepsilon} (1+R)^{1+2\varepsilon}, 
\label{wf:2} 
\eeq
which is such that
${w}^{(\varepsilon)} \geqslant \max ({w}_0^{(\varepsilon)},{w}_1^{(\varepsilon)})$. 
In view of this remark, the majorization \eqref{f:th4} can thus be specified as follows:

\vskip 0.2cm 

\noindent 
{\bf Complement to Theorem $\boldsymbol{\ref{theorem:4}}$.} 
\emph{The following bounds hold for each operator--valued function} $k \mapsto L_{\ell}(k)$:
\begin{itemize}
\item[(i)] {\em for} 
${w} = {w}_0^{(\varepsilon)}= R^{1-\varepsilon}\,(1+R)^{2\varepsilon}$,\quad
$\left\|L_\ell(k)\right\|_{\rm HS} \leqslant
\frac{1+\ell\pi}{|k|}\ \frac{{\bf C}(V)}{\sqrt{2\ell +1}} \, A_{\varepsilon}^2$\,; \\[-3pt]
\item[(ii)] {\em for} 
${w} = {w}_1^{(\varepsilon)}=R^{2-\varepsilon}\,(1+R)^{2\varepsilon}$,\quad
$\left\|L_\ell(k)\right\|_{\rm HS} \leqslant 
\frac{1}{2} \sqrt{\frac{\pi}{2\ell+1}} \, \frac{{\bf C}(V)}{\sqrt{2\ell +1}} \, A_{\varepsilon}^2$\,; \\[-3pt]
\item[(iii)] {\em for}
${w} = {w}^{(\varepsilon)}= R^{1-\varepsilon}\,(1+R)^{1+2\varepsilon}$,
\beq
\left\|L_\ell(k)\right\|_{\rm HS} \leqslant
\min\left(\frac{1+\ell\pi}{|k|}, \frac{1}{2}\sqrt{\frac{\pi}{2\ell+1}} \right) \, 
\frac{{\bf C}(V)}{\sqrt{2\ell +1}} \ A_{\varepsilon}^2.
\label{wf:1}
\eeq
\end{itemize}

\vskip 0.2cm

\noindent
It will appear in the following that the choice
$w={w}^{(\varepsilon)}$ (see Eq. \eqref{wf:2}),
for any positive $\varepsilon$, allows one to obtain the most
interesting results, in view of the following 
corollary of the ``Complements to Lemma \ref{lemma:v} and to Theorem \ref{theorem:4}".

\vskip 0.2cm 

\noindent
{\bf Corollary $\boldsymbol{\ref{lemma:v}}$--$\boldsymbol{\ref{theorem:4}}$.}
{\em
For any weight--function ${w}^{(\varepsilon)}(R)$, one has:

\vskip 0.2cm 

{\rm (a)} \, $\|v_{\ell,0}(k;\cdot)\|_{\wea} \leqslant
\min\left(1, \frac{\sqrt {|k|}}{2}\right)\, c_\ell \ C(V_\ell) \, A_{\varepsilon}$, \ 
for $k\in {\overline\Omega}_\alpha$;

\vskip 0.2cm

{\rm (b)} \, for each $\ell$, the {\rm HS}--operator--valued function $k \mapsto L_\ell(k)$,
taking its values in $\widehat{X}_\wea$,
is also holomorphic at the origin and therefore in the whole half--plane ${\Pi}_\alpha $.
Moreover, in view of \eqref{wf:1}, the function 
$\left\|\widehat L\right\|_{\rm HS}(r) \doteq \sup_{\max (|k|, \ell) \geqslant r} \left\|L_\ell(k)\right\|_{\rm HS}$
is uniformly bounded for all $r\geqslant 0$, and tends to zero when $r$ tends to $+\infty$.
}

\subsubsection{Smithies' formalism for the resolvents
$\boldsymbol{R}_{\boldsymbol{\ell}}\boldsymbol{(}\boldsymbol{k}\boldsymbol{;}\boldsymbol{g}\boldsymbol{)}$}
\label{subsubse:smithies}

Let us introduce the resolvent associated with equation \eqref{f:32}, i.e.,
\beq
R_\ell(k;g)=\left[\I-gL_\ell(k)\right]^{-1}.
\label{f:45}
\eeq
In this formalism, $g$ is treated as a general complex parameter, keeping in mind
that each ``physical" theory is obtained by fixing $g$ at a real value interpreted
as a coupling constant.
The fact that $L_\ell(k)$ is a Hilbert--Schmidt operator on the Hilbert space $X_{w,\alpha}$
allows us to use Smithies' formulae and bounds \cite{Smithies}, which all make sense
in terms of Hilbert--Schmidt kernels. According to Theorem 5.6 of Ref. \cite{Smithies}, 
we have (with the identification of notations $g\leftrightarrow\lambda$, $L_\ell\leftrightarrow K$,
$N_\ell \leftrightarrow H$, $\sigma_\ell\leftrightarrow\delta$,
$R_\ell\leftrightarrow\frac{\Delta}{\delta}$): 
\newcommand{\block}{
\framebox[1.1in][c]{
$
\begin{array}[b]{ccccc}
& & ~ & & \\
& & ~ & & \\
& & (Q_\ell)_n(k) & & \\
& & ~ & & \\
& & ~ & &
\end{array}
$
}
}
\newcommand{\roo}{
$
\begin{array}[b]{c}
L_\ell(k) \\
L_\ell^2(k) \\
\cdots \\
\cdots \\
L_\ell^n(k)
\end{array}
$
}
\begin{align}
& R_\ell(k;g)=\I +g\,\frac{N_\ell(k;g)}{\sigma_\ell(k;g)} \qquad (\ell=0,1,2,\ldots), \label{f:46} \\
\intertext{where:}
{\boldsymbol{\rm (i)}} \qquad &\sigma_\ell(k;g)=\sum_{n=0}^{+\infty}(\sigma_\ell)_n(k) \ g^n, \label{f:47} \\
& (\sigma_\ell)_0(k)=1, \qquad (\sigma_\ell)_n(k)=\frac{(-1)^n}{n!}(Q_\ell)_n(k)\qquad (n\geqslant 1), \label{f:48} \\[+8pt] 
(Q_\ell)_n(k) =& \left|
{\setlength\arraycolsep{4pt}
\begin{array}{ccccccc}
0 & n-1 & 0 & \cdots & 0 & 0 & 0 \\
(\rho_\ell)_2(k) & 0 & n-2 & \cdots & 0 & 0 & 0     \\
(\rho_\ell)_3(k) & (\rho_\ell)_2(k) & 0 & \cdots & 0 & 0 & 0 \\
\cdots & \cdots & \cdots & \cdots & \cdots & \cdots & \cdots \\
(\rho_\ell)_{n-1}(k) & (\rho_\ell)_{n-2}(k)
& (\rho_\ell)_{n-3}(k) & \cdots & (\rho_\ell)_2(k) & 0 & 1 \\
(\rho_\ell)_n(k) & (\rho_\ell)_{n-1}(k) & (\rho_\ell)_{n-2}(k) & \cdots & (\rho_\ell)_3(k) & (\rho_\ell)_2(k)&0
\end{array}
}
\right|, \label{f:49} \\
\intertext{with:}
& (\rho_\ell)_n(k)=\Tr\left[L_\ell^n(k)\right] \qquad (n\geqslant 2). \label{f:50}
\intertext{(Note that $(Q_{\ell})_1(k) = (\sigma_{\ell})_1(k) =0$).}
{\boldsymbol{\rm (ii)}} \qquad & N_\ell(k;g)=\sum_{n=0}^{+\infty} (N_\ell)_n(k) \ g^n, 
\label{f:51}
\intertext{with:}
& (N_\ell)_n(k)=L_\ell(k)\ (\Delta_\ell)_n(k)
=(\Delta_\ell)_n(k)\  L_\ell(k),
\label{f:52}
\end{align}
\beq
(\Delta_\ell)_0 = \I, \quad
(\Delta_\ell)_n(k) = \frac{(-1)^n}{n!} \left|
\begin{array}{cccccc}
\I & n & \cdots & 0 & \cdots & 0 \\
\roo & \multicolumn{5}{c}{\block}
\end{array}
\right| \quad (n\geqslant 1).
\label{f:53}
\eeq
Note that there holds for all $n\geqslant 1$ the
following recurrence relation 
between the bounded operators $(\Delta_\ell)_n(k)$ 
and $(\Delta_\ell)_{n-1}(k)$ (see formula (5.4.1) of \cite{Smithies}): 
\beq
(\Delta_\ell)_n(k)=  
(\sigma_\ell)_n(k)\,\I + 
L_\ell(k) (\Delta_\ell)_{n-1}(k) 
=(\sigma_\ell)_n(k)\,\I + 
(\Delta_\ell)_{n-1}(k)\, L_\ell(k). 
\label{f:53'}
\eeq
The latter directly yields the following identity
in the sense of power series
of $g$: 
\beq
\Delta_\ell(k;g)\doteq  
\sum_{n=0}^{+\infty} (\Delta_\ell)_n(k) \ g^n = 
\sigma_\ell(k;g) \, \I + g\, \Delta_\ell (k;g)\, L_\ell(k), 
\label{f:53"} 
\eeq
which then yields:
\beq
\Delta_\ell (k;g) [\I -g\, L_\ell(k)]= 
\sigma_\ell(k;g) \, \I, 
\quad \mathrm{i.e.,} \quad
\Delta_\ell(k;g)=\sigma_\ell(k;g) \, R_\ell(k;g),
\label{f:53a}
\eeq
and also, in view of \eqref{f:52}:
\beq
\Delta_\ell(k;g)  
=\sigma_\ell(k;g) \, \I + g N_\ell (k;g). 
\label{f:53b}
\eeq
By putting Eqs. \eqref{f:53a} and \eqref{f:53b} together, 
one concludes that Eq. \eqref{f:46} is then
satisfied in the sense of formal series
by the functionals $\sigma_\ell(k;g)$ and  
$N_\ell(k;g)$, defined as functionals of $L_\ell(k)$ by
Eqs. \eqref{f:47} to \eqref{f:53}.

\vskip 0.2cm

\noindent
Since for each $\ell$ the function $k \mapsto L_{\ell}(k)$ is
holomorphic in $\Pi_{\alpha}$ and continuous in $\overline{\Pi}_{\alpha}$ 
as a function taking its values in $\widehat{X}_{w,\alpha}$ 
(see Theorem \ref{theorem:4} and Corollary 
\ref{lemma:v}--\ref{theorem:4}), it follows 
from Lemma \ref{lemma:B6} that the same property holds for all 
the corresponding power 
functions $k \mapsto L_{\ell}^n(k)$.
Moreover one also has:

\vskip 0.2cm

\noindent
(i) \, In view of Lemma \ref{lemma:B3} (applied to $K[\z] = L_\ell(k)$ and 
$K'_t[\z] = L_\ell(k)$), the function $k \mapsto \Tr[L_{\ell}^2(k)]$,
and similarly all functions 
$k \mapsto (\rho_{\ell})_n(k)= \Tr[L_{\ell}^{n-1}(k)\  L_\ell(k)] $
(see Eq. \eqref{f:50}) are holomorphic in $\Pi_{\alpha}$
and continuous in $\overline{\Pi}_{\alpha}$. 
Since (in view of \eqref{f:48}, \eqref{f:49})
$(Q_{\ell})_n(k)$ and $(\sigma_{\ell})_n(k)$
are polynomials of the variables $(\rho_{\ell})_p(k)$
($p\leqslant n$), all these functions are also holomorphic in
$\Pi_{\alpha}$ and continuous in $\overline{\Pi}_{\alpha}$. 

\vskip 0.2cm

\noindent
(ii) \, In view of \eqref{f:52} and \eqref{f:53},
the kernels $(N_{\ell})_n(k)$
are polynomials of $L_{\ell}(k)$ of the form
$(N_{\ell})_n(k) = \sum_{j=1}^{n+1} a_j(k) L_{\ell}^j(k)$,
with complex coefficients $a_j(k)$, which are themselves polynomials of the
variables $(\rho_{\ell})_p(k)$.
Then, in view of Lemma \ref{lemma:B7}, each function 
$k \mapsto (N_{\ell})_n(k)$ is (like $k \mapsto L_{\ell}(k)$) holomorphic in
$\Pi_{\alpha}$ and continuous in $\overline{\Pi}_{\alpha}$
as a function taking its values in $\widehat{X}_{w,\alpha}$. 

\vskip 0.2cm

From Smithies' theory it follows that,
for any $k$ in $\overline{\Pi}_\alpha$, and any non--negative
integral value of $\ell$, $\sigma_\ell(k;g)$
is an entire function of $g$, and $g \mapsto N_\ell(k;g)$ is an
entire Hilbert--Schmidt operator--valued function
(the series \eqref{f:51} being convergent in the norm of $\widehat{X}_{w,\alpha}$).
More precisely, we can prove the following theorems.

\skd

\begin{theorem}
\label{theorem:5}
For every potential $V$ in a class 
$\cN_{\wea}$, the functions $\sigma_\ell(k;g)$ satisfy the 
following properties, for all integers $\ell\geqslant 0$:  
\begin{itemize}
\addtolength{\itemsep}{0.15cm}
\item[\rm (a)]  
$\sigma_\ell(k;g)$ is defined and uniformly bounded in modulus  by
a function $\Phi_{\ell} (|g|)$ in $\overline{\Pi}_{\alpha}\times\C$; 
it is continuous in $\overline{\Pi}_\alpha\times\C$
and holomorphic in ${\Pi}_\alpha\times\C$;
\item[\rm (b)] $\overline{\sigma_\ell(k;g)}=\sigma_\ell(-\overline{k};\overline{g})$;
\item[\rm (c)] for any fixed value of $g$, there holds:
$\sup_{\max (|k|, \ell) \geqslant r}|\sigma_\ell(k;g)-1|
\xrightarrow[r\to +\infty]{} 0$.
\end{itemize}
\end{theorem}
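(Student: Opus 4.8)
The plan is to obtain all three assertions directly from Smithies' explicit formulae for $\sigma_\ell(k;g)$, combined with the Hilbert--Schmidt bounds on $L_\ell(k)$ established in Theorem \ref{theorem:4} and Corollary \ref{lemma:v}--\ref{theorem:4}, together with the symmetry property \eqref{f:symc} of the kernels. The key quantitative input is that, by Corollary \ref{lemma:v}--\ref{theorem:4}(b), the quantity $\|\widehat L\|_{\rm HS}(r) = \sup_{\max(|k|,\ell)\geqslant r}\|L_\ell(k)\|_{\rm HS}$ is a bounded, non-increasing function of $r\geqslant 0$ which tends to $0$ as $r\to+\infty$; in particular there is a finite constant $\Lambda \doteq \|\widehat L\|_{\rm HS}(0) < +\infty$ majorizing $\|L_\ell(k)\|_{\rm HS}$ uniformly in $\ell$ and in $k\in\overline\Pi_\alpha$.

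For (a), I would first recall Smithies' bound on the coefficients $(\sigma_\ell)_n(k)$: since $(\rho_\ell)_n(k)=\Tr[L_\ell^n(k)]$ satisfies $|(\rho_\ell)_n(k)|\leqslant \|L_\ell(k)\|_{\rm HS}^n$ for $n\geqslant 2$, Smithies' estimate on the Fredholm-type determinant (the Carleman-type inequality, cf. Theorem 5.3 and the bounds of Chapter 5 in \cite{Smithies}) gives $|(\sigma_\ell)_n(k)|\leqslant \omega_n \|L_\ell(k)\|_{\rm HS}^n$ with numerical constants $\omega_n$ satisfying $\sum_n \omega_n t^n = \Phi(t)$ an entire function of $t$ (of order $2$). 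Hence from \eqref{f:47},
\beq
|\sigma_\ell(k;g)| \leqslant \sum_{n=0}^{+\infty}\omega_n \|L_\ell(k)\|_{\rm HS}^n\,|g|^n
\leqslant \sum_{n=0}^{+\infty}\omega_n \Lambda^n |g|^n \doteq \Phi_\ell(|g|),
\eeq
which is finite for every $g\in\C$ and independent of $\ell$ and $k$; this proves the uniform bound and the fact that $g\mapsto\sigma_\ell(k;g)$ is entire. The continuity in $\overline\Pi_\alpha\times\C$ and holomorphy in $\Pi_\alpha\times\C$ then follow from the normal convergence of the series \eqref{f:47}: each coefficient $k\mapsto(\sigma_\ell)_n(k)$ is continuous on $\overline\Pi_\alpha$ and holomorphic on $\Pi_\alpha$ (this was already established in point (i) preceding the theorem, using Lemmas \ref{lemma:B3} and \ref{lemma:B6}), and on any compact subset $|g|\leqslant \rho$ the series is dominated by the convergent numerical series $\sum_n \omega_n\Lambda^n\rho^n$, so a Weierstrass-type argument gives joint continuity and separate holomorphy, hence joint holomorphy by Hartogs (or directly by a two-variable Morera/Cauchy argument).

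For (b), I would use the symmetry \eqref{f:symc}: $\overline{L_\ell(k;R,R')}=L_\ell(-\overline k;R,R')$. Because the weight $w(R)e^{2\alpha R}$ is real, this entails that complex conjugation of the kernel of $L_\ell(k)$ returns the kernel of $L_\ell(-\overline k)$; consequently $\Tr[L_\ell^n(k)]$ and $\Tr[L_\ell^n(-\overline k)]$ are complex conjugates, i.e. $\overline{(\rho_\ell)_n(k)}=(\rho_\ell)_n(-\overline k)$. Since each $(\sigma_\ell)_n(k)$ is a polynomial \emph{with real coefficients} in the variables $(\rho_\ell)_p(k)$ (see \eqref{f:48}, \eqref{f:49}), we get $\overline{(\sigma_\ell)_n(k)}=(\sigma_\ell)_n(-\overline k)$; summing the series \eqref{f:47} with $g$ replaced by $\overline g$ and conjugating yields $\overline{\sigma_\ell(k;g)}=\sigma_\ell(-\overline k;\overline g)$.

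For (c), fix $g$. From \eqref{f:47} and $(\sigma_\ell)_0=(\sigma_\ell)_1=0$... rather $(\sigma_\ell)_0(k)=1$, $(\sigma_\ell)_1(k)=0$, so
\beq
|\sigma_\ell(k;g)-1| \leqslant \sum_{n=2}^{+\infty}\omega_n\|L_\ell(k)\|_{\rm HS}^n|g|^n.
\eeq
Given $\max(|k|,\ell)\geqslant r$, we have $\|L_\ell(k)\|_{\rm HS}\leqslant \|\widehat L\|_{\rm HS}(r)\doteq\tau(r)$, and since $\tau(r)\to 0$ as $r\to+\infty$, for $r$ large enough $\tau(r)|g|\leqslant \tfrac12$, whence the tail is bounded by $\tau(r)^2|g|^2\sum_{n\geqslant 2}\omega_n 2^{2-n}\cdot$ (a fixed constant), i.e. by $C_g\,\tau(r)^2 \to 0$. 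This gives the uniform convergence to $1$ claimed in (c).

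The main obstacle is purely a bookkeeping one: pinning down the correct Smithies majorant $\omega_n$ for the determinant coefficients $(\sigma_\ell)_n(k)$ in terms of the Hilbert--Schmidt norm (rather than the trace norm), and verifying that the resulting power series $\Phi(t)=\sum\omega_n t^n$ is genuinely entire, so that $\Phi_\ell(|g|)$ in (a) is finite for \emph{all} $g\in\C$; everything else is a routine propagation of the holomorphy/continuity of the coefficients through a normally convergent series and an elementary use of the symmetry \eqref{f:symc}. I expect the paper simply quotes the relevant inequality from \cite[Ch.~5]{Smithies} rather than re-deriving it.
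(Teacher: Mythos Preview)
Your proposal is correct and follows essentially the same route as the paper. The paper makes explicit the Smithies constant you left generic, quoting $|(\sigma_\ell)_n(k)|\leqslant (e/n)^{n/2}\|L_\ell(k)\|_{\rm HS}^n$ from \cite[Lemma~5.4]{Smithies}, so that the entire majorant is $\Phi(z)=\sum_{n\geqslant 1}(e/n)^{n/2}z^n$; for joint continuity/holomorphy it invokes its own Lemma~\ref{lemma:B0} on normally convergent series of vector-valued functions rather than Hartogs, and for (c) it simply uses $\Phi(0)=0$ and continuity of $\Phi$ rather than your explicit tail estimate---but these are cosmetic differences, and your anticipation that the paper just quotes the Smithies inequality is exactly right.
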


\begin{proof}
By combining the basic inequalities of Smithies' theory
(see \cite[Lemma 5.4]{Smithies}) with the uniform bounds \eqref{wf:1} 
on $\|L_{\ell}(k)\|_{\rm HS}$,
one obtains the following majorizations, valid for all integers $n\geqslant 1$,\ 
$\ell \geqslant 0$ and for all $k$ in $\overline{\Pi}_{\alpha}$:
\beq
|(\sigma_\ell)_n(k)| \leqslant
\left(\frac{e}{n}\right)^{{n/2}}\left\|L_\ell(k)\right\|_{\rm HS}^n
\leqslant
\left(\frac{e}{n}\right)^{{n/2}}
\left[A_{\varepsilon}^2\,{\bf C}(V)\,h(\ell,|k|)\right]^n (2\ell+1)^{-{n/2}},
\label{f:54}
\eeq
where
\beq
h(\ell,|k|) \doteq
\min\left(\frac{\ell\pi+1}{|k|},\,\frac{1}{2}\sqrt{\frac{\pi}{2\ell+1}}\,\right).
\label{e:12}
\eeq
In view of \eqref{f:54}, the series \eqref{f:47}
defining $\sigma_\ell(k;g)-1$ is dominated for all $k \in \overline{\Pi}_{\alpha}$
by a convergent
series with positive terms. 
It is convenient to associate with this series the entire function
\beq
\Phi(z)\doteq
\sum_{n=1}^{\infty} \left(\frac{e}{n}\right)^{n/2}\ z^n,
\label{f:Phi}
\eeq
which is a positive and increasing function 
of $z$ for $z>0$, such that $\Phi(0) =0$.
From \eqref{f:54}, one then concludes that
$\sigma_\ell(k;g)$ is for
all $k \in \overline{\Pi}_{\alpha}$ an entire function of $g$,
which satisfies the following uniform majorization:
\begin{equation}
|\sigma_\ell(k;g)-1|\ \leqslant \  \Phi\left(|g| \ \|L_{\ell}(k)\|_{\rm HS}\right)\ 
\leqslant \
\Phi\left(|g|\, A_{\varepsilon}^2\, \frac{{\bf C}(V)}{\sqrt{2\ell+1}}\, h(\ell,|k|)\right).
\label{f:59'}
\end{equation}
Moreover, in view of the holomorphy (resp., continuity) property of the functions
$(\sigma_\ell)_n(k)$ in $\Pi_{\alpha}$ (resp., $\overline{\Pi}_\alpha$),
Lemma \ref{lemma:B0} can be applied to the sequence of functions
$\{(k,g) \mapsto (\sigma_\ell)_n(k) \, g^n; n\in \N\}$;
it follows that the sum of the series \eqref{f:47}
defines $\sigma_\ell(k;g)$ as a holomorphic function of $(k,g)$ in
${\Pi}_{\alpha}\times \C $, which is moreover continuous in
$\overline{\Pi}_\alpha\times \C$. 

\vskip 0.2cm

The symmetry relation \eqref{f:symc} implies analogous relations for the
quantities $(\rho_\ell)_n(k)$, $(Q_\ell)_n(k)$, $(\sigma_\ell)_n(k)$,
and therefore property (b).

\vskip 0.2cm

In view of the behaviour of $h(\ell,|k|)$
(given by \eqref{e:12}), one checks that the quantity
$\widehat{h}(r,g)\doteq \sup_{\max (|k|, \ell) \geqslant r}
\left(|g|\,A_{\varepsilon}^2\,\frac{{\bf C}(V)}{\sqrt{2\ell+1}}\, h(\ell,|k|)\right)$ 
is finite and tends to zero for $r$ tending to infinity (for each fixed $g$). 
Correspondingly, one has:
\beq
\lim_{r\to +\infty}  \ \sup_{\max (|k|, \ell) \geqslant r}
\Phi\left(|g|\,A_{\varepsilon}^2\,\frac{{\bf C}(V)}{\sqrt{2\ell+1}}\, h(\ell,|k|)\right) 
=\lim_{r\to +\infty} \Phi (\widehat{h}(r,g))= \Phi(0)=0.
\label{f:59"}
\eeq
Property (c) is then readily implied by inequality \eqref{f:59'}.
\end{proof}

\begin{theorem}
\label{theorem:6}
For every potential $V$ in a class $\cN_{\wea}$,
the operators $N_\ell(k;g)$ exist as Hilbert--Schmidt operators
acting on ${X}_{\wea}$, 
for all integers $\ell\geqslant 0$ and for all $(k,g)$
in $\overline{\Pi}_{\alpha}\times\C$;
in this set, the function
$(k,g) \mapsto \|N_{\ell}(k;g)\|_{\rm HS}$  
is uniformly bounded in $k$ by a function $\Psi_{\ell} (|g|)$.
Moreover, the following properties hold:
\begin{itemize}
\addtolength{\itemsep}{0.15cm}
\item[\rm (a)] 
The {\rm HS}--operator--valued function $(k,g) \mapsto N_{\ell}(k;g)$,
taking its values in $\widehat{X}_{\wea}$, 
is continuous in $\overline{\Pi}_\alpha \times \C$
and holomorphic in ${\Pi}_\alpha \times \C$;
\item[\rm (b)] $\overline{N_\ell(k;g;R,R')}=N_\ell(-\overline{k};\overline{g};R,R')$;
\item[\rm (c)]
$\sup_{\max (|k|, \ell) \geqslant r} \left\|N_\ell(k;g)\right\|_{\rm HS}
\xrightarrow[r\to +\infty]{} 0$.
\end{itemize}
\end{theorem}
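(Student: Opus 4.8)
The plan is to mimic, almost verbatim, the proof of Theorem~\ref{theorem:5}, replacing the scalar Smithies coefficients $(\sigma_\ell)_n(k)$ by the operator coefficients $(N_\ell)_n(k)$ of \eqref{f:51}--\eqref{f:53}, and the estimate \eqref{f:54} by the companion Smithies bound on $\|(N_\ell)_n(k)\|_{\rm HS}$.

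First I would record the relevant inequality from Smithies' theory (the operator analogue of \cite[Lemma~5.4]{Smithies}, controlling the Fredholm minor, i.e. the operators $(\Delta_\ell)_n(k)$ and hence $(N_\ell)_n(k)=L_\ell(k)(\Delta_\ell)_n(k)$): there is an absolute sequence $\{b_n\}_{n\geqslant 0}$ with $b_0=1$ (since $(N_\ell)_0(k)=L_\ell(k)$) and with $\Psi(z)\doteq\sum_{n\geqslant 0}b_n z^{\,n+1}$ entire, positive and increasing, $\Psi(0)=0$, such that $\|(N_\ell)_n(k)\|_{\rm HS}\leqslant b_n\,\|L_\ell(k)\|_{\rm HS}^{\,n+1}$ for all $k\in\overline{\Pi}_\alpha$, all integers $\ell\geqslant 0$ and all $n\geqslant 0$. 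Feeding in the uniform bound \eqref{wf:1} on $\|L_\ell(k)\|_{\rm HS}$ exactly as in \eqref{f:54}, the series \eqref{f:51} converges in the norm of $\widehat{X}_{\wea}$, uniformly for $k\in\overline{\Pi}_\alpha$; this establishes that $N_\ell(k;g)$ exists as a Hilbert--Schmidt operator for every $(k,g)\in\overline{\Pi}_\alpha\times\C$ and yields the majorization $\|N_\ell(k;g)\|_{\rm HS}\leqslant\Psi\big(|g|\,\|L_\ell(k)\|_{\rm HS}\big)\leqslant\Psi\big(|g|\,A_\varepsilon^2\,{\bf C}(V)\,(2\ell+1)^{-1/2}\,h(\ell,|k|)\big)=:\Psi_\ell(|g|)$, with $h$ the function \eqref{e:12}.

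For part (a) I would invoke the fact -- already established in item (ii) preceding the theorem via Lemma~\ref{lemma:B7} -- that each $k\mapsto(N_\ell)_n(k)$ is holomorphic in $\Pi_\alpha$ and continuous in $\overline{\Pi}_\alpha$ as a $\widehat{X}_{\wea}$--valued function. Hence the terms $(k,g)\mapsto(N_\ell)_n(k)\,g^n$ are continuous on $\overline{\Pi}_\alpha\times\C$ and holomorphic on $\Pi_\alpha\times\C$, and the series \eqref{f:51} converges normally on compact subsets by the estimate just obtained; Lemma~\ref{lemma:B0} then gives the continuity on $\overline{\Pi}_\alpha\times\C$ and the holomorphy on $\Pi_\alpha\times\C$ of $N_\ell(k;g)$. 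Part (b) follows by propagating the symmetry relation \eqref{f:symc}: it is inherited by the iterated kernels $L_\ell^n(k;R,R')$ (the kernel composition involves no complex weights), hence by the traces $(\rho_\ell)_n(k)$ of \eqref{f:50} and by the polynomial expressions $(\sigma_\ell)_n(k)$, $(\Delta_\ell)_n(k;R,R')$ and $(N_\ell)_n(k;R,R')$; taking the complex conjugate of the power series \eqref{f:51} in $g$ then turns $g$ into $\overline g$ and $k$ into $-\overline k$, which is precisely relation (b). Part (c) is obtained exactly as the analogous statement in Theorem~\ref{theorem:5}: since $\Psi(0)=0$ and, by Corollary~\ref{lemma:v}--\ref{theorem:4}(b), $\sup_{\max(|k|,\ell)\geqslant r}\|L_\ell(k)\|_{\rm HS}\to 0$ as $r\to+\infty$, the bound $\|N_\ell(k;g)\|_{\rm HS}\leqslant\Psi\big(|g|\,\|L_\ell(k)\|_{\rm HS}\big)$ together with the limit \eqref{f:59"} gives $\sup_{\max(|k|,\ell)\geqslant r}\|N_\ell(k;g)\|_{\rm HS}\to 0$.

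The \emph{only} real obstacle is the first step: making the Smithies estimate on $\|(N_\ell)_n(k)\|_{\rm HS}$ precise and checking that the resulting majorant $\Psi$ is entire in $g$, vanishes at the origin, and is expressed purely in terms of $\|L_\ell(k)\|_{\rm HS}$ -- once it is in that form, uniformity in $\ell$ comes for free from \eqref{wf:1} and all of (a)--(c) are a transcription of the reasoning already carried out for $\sigma_\ell$.
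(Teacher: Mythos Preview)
Your approach is essentially identical to the paper's: the same Smithies estimate on $\|(N_\ell)_n(k)\|_{\rm HS}$ (the paper makes it explicit as $e^{(n+1)/2}/n^{n/2}$), the same appeal to Lemma~\ref{lemma:B0} for (a), the same propagation of the symmetry \eqref{f:symc} for (b), and the same use of $\Psi(0)=0$ together with the decay of $\|L_\ell(k)\|_{\rm HS}$ for (c). One bookkeeping slip: summing $\sum_{n\geqslant 0}b_n\|L_\ell(k)\|_{\rm HS}^{n+1}|g|^n$ gives $\frac{1}{|g|}\Psi(|g|\,\|L_\ell(k)\|_{\rm HS})$, not $\Psi(|g|\,\|L_\ell(k)\|_{\rm HS})$; the paper records this $1/|g|$ factor in \eqref{f:Psi'}, and it is harmless for (c) since the leading term is still $\|L_\ell(k)\|_{\rm HS}$.
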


\begin{proof}
In view of Smithies' theory, there hold the following inequalities
(see  Lemmas 2.6 and 5.4 and the proof of Theorem 5.6 of 
\cite{Smithies}) for all integers $n \geqslant 1$:
\beq
\|(N_\ell)_n(k)\|_{\rm HS}\!\ 
\leqslant\!\ \|(\Delta_\ell)_n(k)\| \ \|L_\ell(k)\|_{\rm HS}\!
\ \leqslant\!\ \frac{e^{(n+1)/2}}{n^{n/2}} \ \|L_\ell(k)\|_{\rm HS}^{n+1}.
\label{f:60'}
\eeq
In view of the latter, the series \eqref{f:51} is
dominated term--by--term in the HS--norm by a convergent series; the sum 
of the latter is therefore well--defined as a HS--operator
$N_\ell(k;g)$ for all values of $(k,g)$ in
$\overline{\Pi}_{\alpha}\times\C$. The entire function
\beq 
\Psi(z) \doteq z\  [1+e^{1/2}\,\Phi(z)],
\label{f:Psi}
\eeq
(with $\Phi$ given by Eq. \eqref{f:Phi}), 
is like $\Phi$ an increasing function of $z$, for $z\geqslant 0$. 
It follows from \eqref{f:51}, \eqref{f:60'} and
from the bound \eqref{wf:1} on $\|L_{\ell}(k)\|_{\rm HS}$
(used as in the proof of Theorem \ref{theorem:5}) 
that the norm of $N_\ell(k;g)$ in
$\widehat{X}_{\wea}$ satisfies the bound:
\beq
\|N_{\ell}(k;g)\|_{\rm HS}\ \leqslant \
\frac{1}{|g|}\Psi(|g| \ \|L_{\ell}(k)\|_{\rm HS})
\leqslant
\frac{1}{|g|} \ \Psi\left(|g|\,A_{\varepsilon}^2\,\frac{{\bf C}(V)}{\sqrt{2\ell+1}}\, h(\ell,|k|)\right).
\label{f:Psi'}
\eeq
Moreover, since the functions
$(k,g) \mapsto (N_\ell)_n(k)$, taking their values in  
$\widehat{X}_{\wea}$ are
continuous in $\overline{\Pi}_{\alpha}\times \C$ and
holomorphic in $\Pi_{\alpha}\times \C$,
Lemma \ref{lemma:B0} can be applied
to the sequence of functions
$\{(k,g) \mapsto (N_\ell)_n(k)\ g^n;\, n\in\N\}$;
it follows that for each $g\in \C$ the sum of
the series \eqref{f:51} defines $(k,g)\mapsto N_\ell(k;g)$
as a function satisfying property (a).

Property (b) follows from the symmetry relation \eqref{f:symc}, through
all the analogous symmetry relations satisfied by the quantities
$(Q_\ell)_n(k)$, $(\Delta_\ell)_n(k)$, $(N_\ell)_n(k)$.

By using the fact that $\lim_{z\to 0}\, \Psi(z) \, =\, 0$,
and taking into account expression \eqref{e:12} of
$h(\ell,|k|)$, one obtains property (c) as a by--product of 
inequality \eqref{f:Psi'}.
\end{proof}

\subsection{Meromorphy properties of the resolvent
and their physical interpretation}
\label{subse:mero}

It is convenient to rewrite Eq. \eqref{f:45} in terms of the 
\emphsl{Fredholm resolvent kernel} or
\emphsl{truncated\footnote{We use here the same terminology as in 
relativistic quantum field theory, in which the truncated four--point function
plays the same role as the truncated resolvent in the present nonrelativistic framework.} 
resolvent}
$R_\ell^{(\mathrm{tr})}(k;g) \doteq \frac{1}{g}\,[R_\ell(k;g)-\I]$,
as the following \emphsl{Fredholm resolvent equation}:
\beq
R_{\ell}^{(\mathrm{tr})}(k;g) = \, L_\ell(k) + g L_\ell(k)\, R_\ell^{(\mathrm{tr})}(k;g),
\label{f:454}
\eeq
whose solution is given in view of \eqref{f:46} by 
\beq
R_\ell^{(\mathrm{tr})}(k;g)=\,
\frac{N_\ell(k;g)}{\sigma_\ell(k;g)}.
\label{f:46'}
\eeq
We now give an analysis of the meromorphy properties of
the operator--valued function $(k,g)\mapsto R_\ell^{(\mathrm{tr})}(k;g)$,
which follow from Theorems \ref{theorem:5} and \ref{theorem:6}
and formula \eqref{f:46'}.

\subsubsection{Meromorphy in $\boldsymbol{(k,g)}$ and meromorphy 
in $\boldsymbol{k}$ at each fixed $\boldsymbol{g}$} %3.2.1
\label{subsubse:meromorphy}

A singularity (more precisely, a pole) of
the function $(k,g) \mapsto R_\ell^{(\mathrm{tr})}(k;g)$
is generated by a zero of the modified Fredholm
determinant $\sigma_\ell(k;g)$, namely a connected component in
$\Pi_\alpha\times\C$ of the complex manifold with
equation $\sigma_\ell(k;g)=0$.
An essential property of this manifold to be checked at first
is the fact that it cannot contain components of the form $g-g_0=0$.
In fact, this would imply $\sigma_\ell(k;g_0)=0$ for all $k$, 
which (for $|k| \to \infty$) would contradict
property (c) of Theorem \ref{theorem:5}.
So, for each fixed value of $g\in \C$, 
the corresponding restriction of the function
$\sigma_\ell(k;g)$ is a non--zero holomorphic function of $k$
in $\Pi_\alpha$. Then, in view of \eqref{f:46'} and of
Theorem \ref{theorem:6}, we can conclude that
$R_\ell^{(\mathrm{tr})}(k;g)$ is defined for each $\ell$ ($\ell=0,1,2,\ldots$)
and for each $g \in \C$ as a \emphsl{meromorphic} HS--operator--valued
function of $k$ in $\Pi_\alpha$, which takes its values in 
$\widehat{X}_{\wea}$.
At fixed $g$, all the possible poles of the function
$k\to R_\ell^{(tr)}(k;g)$ can thus be generically
defined as solutions $k=k^{(j)}(\ell,g)$ of the implicit
equation $\sigma_\ell(k;g)=0$ (at points where 
$\partial\sigma_\ell /\partial k \neq 0$,
considering the generic case). The complement of this discrete set,
namely the set of all points $k$ in $\Pi_\alpha$
(resp., ${\overline\Pi}_\alpha$)
such that $\sigma_\ell(k,g)\neq 0$ 
will be denoted by $\Pi_{\alpha,\ell}(g)$
(resp., ${\overline \Pi}_{\alpha,\ell}(g)$).
We therefore have (in view of Theorems \ref{theorem:5} and \ref{theorem:6}):

\skd

\begin{theorem}
\label{theorem:6'}
For every potential $V$ in a class $\cN_{\wea}$,
the function $(k,g) \mapsto R_\ell^{(\mathrm{tr})}(k;g)$ 
is meromorphic in $\Pi_\alpha \times \C$ as a 
{\rm HS}--operator--valued function. More precisely,
the operators $R_\ell^{(\mathrm{tr})}(k;g)$ exist
for all $(k,g)$ such that
$k\in \overline{\Pi}_{\alpha,\ell}(g),\ g\in\C$,
as Hilbert--Schmidt operators
acting on ${X}_{\wea}$,
and for any fixed $g\in \C$, the {\rm HS}--operator--valued function
$k \mapsto R_\ell^{(\mathrm{tr})}(k;g)$,
taking its values in $\widehat{X}_{\wea}$,
is holomorphic in $\Pi_{\alpha,\ell}(g)$.
\end{theorem}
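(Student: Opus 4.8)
The plan is to read everything off the explicit quotient representation \eqref{f:46'}, $R_\ell^{(\mathrm{tr})}(k;g)=N_\ell(k;g)/\sigma_\ell(k;g)$, whose validity is itself supplied by Smithies' theory (cf.\ the formal--series identities \eqref{f:53a}--\eqref{f:53b} together with the convergence established in Theorems \ref{theorem:5} and \ref{theorem:6}), and then to feed into it the holomorphy control of numerator and denominator already obtained in those two theorems. From Theorem \ref{theorem:6}, the numerator $(k,g)\mapsto N_\ell(k;g)$ is a $\widehat{X}_{\wea}$--valued function, holomorphic on $\Pi_\alpha\times\C$ and continuous on $\overline{\Pi}_\alpha\times\C$, with $\|N_\ell(k;g)\|_{\rm HS}$ locally bounded; from Theorem \ref{theorem:5}, the denominator $(k,g)\mapsto\sigma_\ell(k;g)$ is a scalar function with the same holomorphy and continuity. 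The decisive additional fact is that $\sigma_\ell$ does not vanish identically: since $(\sigma_\ell)_0\equiv1$ we have $\sigma_\ell(k;0)=1$, hence $\sigma_\ell\not\equiv0$ on the connected set $\Pi_\alpha\times\C$, so $\{\sigma_\ell(k;g)=0\}$ is a proper analytic hypersurface of $\Pi_\alpha\times\C$ and on its open dense complement the $\widehat{X}_{\wea}$--valued function $N_\ell\cdot(1/\sigma_\ell)$ is holomorphic. That is precisely the assertion that $R_\ell^{(\mathrm{tr})}$ is a meromorphic HS--operator--valued function of $(k,g)$ in $\Pi_\alpha\times\C$, with polar set contained in $\{\sigma_\ell=0\}$.

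For the refined statement at fixed $g$ I would argue as in the paragraph preceding the theorem. The section $k\mapsto\sigma_\ell(k;g)$ cannot vanish identically on $\Pi_\alpha$, because property (c) of Theorem \ref{theorem:5} gives $\sigma_\ell(k;g)\to1$ as $|k|\to\infty$; since $\Pi_\alpha$ is connected, the identity theorem shows that the zero set, whose complement in $\Pi_\alpha$ (resp., in $\overline{\Pi}_\alpha$) is by definition $\Pi_{\alpha,\ell}(g)$ (resp., $\overline{\Pi}_{\alpha,\ell}(g)$), is discrete with no accumulation point inside $\Pi_\alpha$. On $\overline{\Pi}_{\alpha,\ell}(g)$ the scalar $\sigma_\ell(\cdot;g)$ is continuous and nowhere zero while $N_\ell(\cdot;g)$ is a continuous $\widehat{X}_{\wea}$--valued function, so for every such $k$ the quotient $R_\ell^{(\mathrm{tr})}(k;g)=N_\ell(k;g)/\sigma_\ell(k;g)$ is a well--defined Hilbert--Schmidt operator on $X_{\wea}$, depending continuously on $k$; restricted to the open set $\Pi_{\alpha,\ell}(g)$ it is holomorphic, being the product of the holomorphic $\widehat{X}_{\wea}$--valued function $N_\ell(\cdot;g)$ with the holomorphic scalar $1/\sigma_\ell(\cdot;g)$ (the holomorphy of such a product being a standard vector--valued fact, of the type collected in Appendix \ref{appendix:b}).

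Finally, for completeness I would verify that the object just constructed genuinely is the truncated resolvent, i.e., that it solves the Fredholm resolvent equation \eqref{f:454}: the recurrence \eqref{f:53'} yields $N_\ell(k;g)=\sigma_\ell(k;g)\,L_\ell(k)+g\,L_\ell(k)\,N_\ell(k;g)$ for each $k\in\overline{\Pi}_\alpha$ (as an identity of entire functions of $g$), and dividing by $\sigma_\ell(k;g)$ for $k\in\overline{\Pi}_{\alpha,\ell}(g)$ gives $R_\ell^{(\mathrm{tr})}=L_\ell+g\,L_\ell\,R_\ell^{(\mathrm{tr})}$, which is \eqref{f:454}. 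I do not expect a genuine obstacle here: no compactness or Fredholm--alternative step remains to be performed, all of it having been absorbed into the construction of $\sigma_\ell$ and $N_\ell$. The one point that must be pinned down is the very meaning of ``meromorphic HS--operator--valued function of two complex variables'', for which the appropriate definition is ``locally a product $h(z)\,u(z)$ with $h$ a scalar holomorphic function, $u$ holomorphic with values in $\widehat{X}_{\wea}$, and $h\not\equiv0$ on connected components''; with that definition the theorem is a direct corollary of Theorems \ref{theorem:5} and \ref{theorem:6} together with the normalization $\sigma_\ell(k;0)=1$.
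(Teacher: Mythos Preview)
Your proposal is correct and follows essentially the same route as the paper: the theorem is stated immediately after the discussion in Subsection~\ref{subsubse:meromorphy} as a direct consequence of Theorems~\ref{theorem:5} and~\ref{theorem:6} together with the quotient formula~\eqref{f:46'}. The only minor difference is that, to exclude identical vanishing of $\sigma_\ell$ on $\Pi_\alpha\times\C$, the paper argues that no component of the zero set can be of the form $g=g_0$ (since property~(c) of Theorem~\ref{theorem:5} gives $\sigma_\ell(k;g_0)\to1$ as $|k|\to\infty$), whereas you invoke the normalization $\sigma_\ell(k;0)=1$; both are valid, and for the fixed--$g$ statement you and the paper use the same argument via property~(c).
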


\noindent
As a corollary, we also have:

\begin{theorem}
\label{theorem:6"}
The function $(k,g) \mapsto R_\ell(k;g)$ 
is meromorphic in $\Pi_\alpha \times \C$, 
and for any fixed $g$ in $\C$ 
the function $k \mapsto R_\ell(k;g)$ 
is holomorphic in $\Pi_{\alpha,\ell}(g)$, 
as operator--valued functions taking their values in the
space of {\rm bounded} operators in  
${X}_{\wea}$.
\end{theorem}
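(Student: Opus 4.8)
The plan is to deduce Theorem \ref{theorem:6"} directly from Theorem \ref{theorem:6'} by unwinding the definition of the truncated resolvent. Recall that $R_\ell^{(\mathrm{tr})}(k;g) = \frac{1}{g}[R_\ell(k;g)-\I]$, so that, for $g \neq 0$,
\beq
R_\ell(k;g) = \I + g\, R_\ell^{(\mathrm{tr})}(k;g).\nonumber
\eeq
First I would observe that, since $\widehat{X}_{\wea}$ is continuously embedded in the space of bounded operators on $X_{\wea}$ (the Hilbert--Schmidt norm dominates the operator norm, as noted after \eqref{f:28}), every statement of Theorem \ref{theorem:6'} concerning $R_\ell^{(\mathrm{tr})}(k;g)$ as a $\widehat{X}_{\wea}$--valued function implies the corresponding statement for it as a bounded--operator--valued function. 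Adding the constant operator $\I$ and multiplying by the scalar $g$ (an entire function of $g$, trivially holomorphic jointly in $(k,g)$) preserves meromorphy in $(k,g)$ on $\Pi_\alpha \times \C$ and holomorphy in $k$ on $\Pi_{\alpha,\ell}(g)$ for each fixed $g$; this handles the region $g \neq 0$.

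Next I would treat the line $g=0$ separately, since the formula above divides by $g$. For $g=0$ one has directly from \eqref{f:45} that $R_\ell(k;0) = [\I - 0 \cdot L_\ell(k)]^{-1} = \I$, which is manifestly holomorphic in $k$ on all of $\Pi_\alpha$ (and indeed $\sigma_\ell(k;0)=1 \neq 0$, so $\Pi_{\alpha,\ell}(0)=\Pi_\alpha$, consistent with the claim). To see that there is no spurious singularity of the two--variable function along $g=0$, I would note that $R_\ell(k;g) = \I + g\,N_\ell(k;g)/\sigma_\ell(k;g)$ by \eqref{f:46}, and that near any point $(k_0,0)$ with $k_0 \in \Pi_\alpha$ one has $\sigma_\ell(k_0;0)=1$, hence $\sigma_\ell(k;g)$ is holomorphic and non--vanishing in a neighbourhood of $(k_0,0)$ by continuity (Theorem \ref{theorem:5}); since $g \mapsto N_\ell(k;g)$ is jointly continuous and holomorphic there (Theorem \ref{theorem:6}), the product $g\,N_\ell(k;g)/\sigma_\ell(k;g)$ is holomorphic in $(k,g)$ across $g=0$, and the apparent pole at $g=0$ coming from the factor $1/g$ in $R_\ell^{(\mathrm{tr})}$ is cancelled by the simple zero of $g$ in the numerator. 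Thus $R_\ell(k;g)$ extends holomorphically across $g=0$ and the meromorphy statement on all of $\Pi_\alpha \times \C$ follows.

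The only point requiring a little care — and what I would flag as the main (though modest) obstacle — is the bookkeeping of \emph{in which space} holomorphy/meromorphy is asserted. Theorems \ref{theorem:5}, \ref{theorem:6}, \ref{theorem:6'} phrase everything with values in $\widehat{X}_{\wea}$, whereas Theorem \ref{theorem:6"} deliberately weakens the target to the space of bounded operators; one must check that the identity operator $\I$ lives in this weaker space (it does, with operator norm $1$, even though it is not Hilbert--Schmidt on the infinite--dimensional $X_{\wea}$) and that addition of $\I$ is therefore a legitimate operation at the level of bounded--operator--valued functions but \emph{not} at the level of $\widehat{X}_{\wea}$--valued functions — which is precisely why the statement is downgraded. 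With this distinction made explicit, the proof is a one--line deduction: $R_\ell = \I + g R_\ell^{(\mathrm{tr})}$, combined with Theorem \ref{theorem:6'} for $g\neq 0$ and the direct computation $R_\ell(k;0)=\I$ together with the cancellation argument above for the neighbourhood of $g=0$.
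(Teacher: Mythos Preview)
Your proposal is correct and follows essentially the same route as the paper: pass from the Hilbert--Schmidt--valued holomorphy of $R_\ell^{(\mathrm{tr})}$ (Theorem \ref{theorem:6'}) to bounded--operator--valued holomorphy via $\|\cdot\|\leqslant\|\cdot\|_\mathrm{HS}$, then add the constant $\I$ (which is bounded but not Hilbert--Schmidt, hence the downgrade). Your separate treatment of $g=0$ is more careful than necessary --- Theorem \ref{theorem:6'} already asserts meromorphy of $R_\ell^{(\mathrm{tr})}=N_\ell/\sigma_\ell$ on all of $\Pi_\alpha\times\C$ (including $g=0$, where $\sigma_\ell=1$), so multiplying by the entire scalar $g$ and adding $\I$ works uniformly without a case split.
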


\noindent
In fact, the holomorphy properties of $R_\ell^{(\mathrm{tr})}(k;g)$ as 
a HS--operator--valued function imply the same holomorphy properties 
\emphsl{as a bounded operator--valued} function (since
$\|R_\ell^{(\mathrm{tr})}(k;g)\|\leqslant \|R_\ell^{(\mathrm{tr})}(k;g)\|_\mathrm{HS}$). 
By adding the constant operator $\I$ (holomorphic as a
bounded operator), one thus concludes that
$R_\ell(k;g)=\I + R_\ell^{(\mathrm{tr})}(k;g)$
has the same holomorphy (and meromorphy) properties in $k$ as
$R_\ell^{(\mathrm{tr})}(k;g)$, but in the sense of
a bounded \emphsl{(not Hilbert--Schmidt)}--operator--valued function.

\subsubsection{Poles of the resolvent and solutions of the Schr\"odinger--type equation}
\label{subsubse:poles-integer}

All the possible poles $k=k^{(j)}(\ell,g)$
correspond to situations in which there exists
a non--zero solution $x=x(R)$ of the homogeneous equation $g \, L_\ell(k) \, x=x$.
In fact, Eqs. \eqref{f:454} and \eqref{f:46'} imply the following identity between
HS--operator--valued functions, which is valid for all $(k,g)\in \Pi_{\alpha}\times \C$:
\beq
N_{\ell}(k;g) = \sigma_{\ell}(k;g) \ L_\ell(k) + g \, L_\ell(k)\ N_\ell(k;g).
\label{f:4546}
\eeq
A value $k=k^{(j)}(\ell,g)$ corresponds to a pole of the function
$k \to R_\ell^{(tr)}(k;g)$ iff 
the previous equation
reduces to the homogeneous equation
\beq
N_{\ell}(k;g) = g \, L_\ell(k)\, N_\ell(k;g), 
\qquad {\rm with} \qquad 
N_{\ell}(k^{(j)}(\ell,g);g) \neq 0.
\label{f:4647}
\eeq
Then it follows from Fredholm's theory that  
the latter kernel is of the form
$N_{\ell}(k^{(j)}(\ell,g);g)(R,R') = \sum_{i\in I}   x_i(R) y_i(R')$,
with $x_i\in X_{\wea}$ and $y_i\in X_{\wea}^*$, $I$ denoting a finite set.
The existence of a pole of the function $k\to R_\ell^{(tr)}(k;g)$ 
is therefore equivalent to the existence of 
at least one (non--zero) solution $x=x(R)$ in $X_{\wea}$
of the equation $g\,L_\ell(k)\,x=x$.

As shown below (see Lemma \ref{lemma:vii}), one can associate with any function
$x(R) $ in $X_{\wea}$, for every $\ell$, and for $k\in {\overline \Pi}_\alpha$, 
the function
\beq
\psi(R)= g \int_0^{+\infty}G_\ell(k;R,R')\,x(R')\,\rmd R',
\label{xtopsi}
\eeq
which satisfies the equation
$D_{\ell,k}\,  \psi = g \, x$, since 
$G_\ell(k;R,R')$ is a Green function of the differential operator
$D_{\ell,k}$. Now, in view of Eq. \eqref{xtopsi}, the definition
\eqref{f:27c} of $L_\ell(k)$ implies the following equality:
\beq
g \, [L_\ell(k) x](R) = \int_0^{+\infty} V_\ell(R,R') \, \psi(R')\, \rmd R'.
\label{psitoLx}
\eeq
So, if $x$ is a non--zero solution of the homogeneous Fredholm equation 
$gL_\ell(k)x=x$ (associated with a value of $k$ which is a  pole
of the function $k \mapsto R_\ell(k;g)$), then the function $\psi$
defined by Eq. \eqref{xtopsi} satisfies the following relations:
\beq
D_{\ell,k} \, \psi (R) =\, g x= \,
g \int_0^{+\infty} V_\ell(R,R') \psi(R')\, \rmd R',
\label{FrSch}
\eeq
and therefore $\psi$ is a non--zero solution of the
Schr\"odinger--type equation \eqref{f:24}.

\begin{figure}[t]
\begin{center}
\leavevmode
\psfig{file=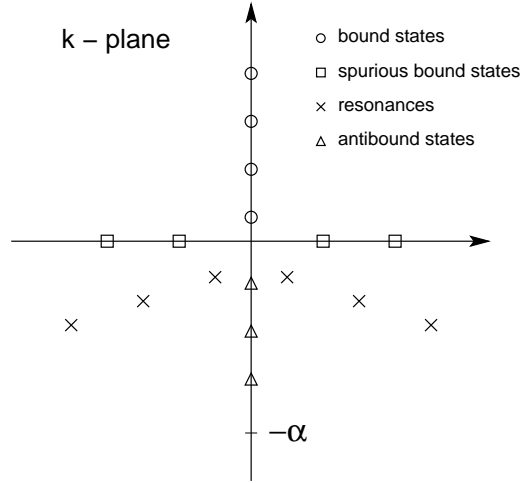,width=7cm}
\caption{\label{figure:1} Representation of bound states and resonances in the complex $k$--plane.}
\end{center}
\end{figure}

\subsubsection{Some results on the location of the poles 
and their physical interpretation (see Fig. $\boldsymbol{\ref{figure:1}}$)} %3.2.3
\label{subsubse:someresults}

For $g$ \emphsl{real} (interpreted physically as a coupling constant)
and for $k$ such that $\Imag k\geqslant 0$,
Eq. \eqref{xtopsi} defines a one--to--one correspondence between 
the solutions of the homogeneous equation  $g\,L_\ell(k)\,x=x$ 
and a class of square--integrable solutions of the
Schr\"odinger--type equation \eqref{f:24}, which will be fully described in
Theorem \ref{theorem:7} below.
According to the latter, all the possible zeros $k=k^{(j)}(\ell,g)$
of $\sigma_\ell(k;g)$ which lie in $\Imag k\geqslant 0$
correspond to ``bound state solutions" of \eqref{f:24}: 
these solutions are the contributions
\emphsl{with a given angular momentum} $\ell$
to the set of solutions of Problem 1,
whose general properties have been listed in
Theorem \ref{theorem:1} (see Section 2). In particular, 
as a general by--product of Theorem \ref{theorem:1}, it follows that
for each real value of $g$, all the possible zeros $k=k^{(j)}(\ell,g)$
in the closed half--plane $\Imag k\geqslant 0$ can be located only
either on the imaginary axis or on the real axis. These two situations,
which will be analyzed in detail below (see Theorem \ref{theorem:7}),
correspond respectively to \emphsl{bound states} and to \emphsl{spurious bound states}
(i.e., ``bound states embedded in the continuum"). Furthermore,
the zeros on the real axis are distributed in pairs
symmetric with respect to $k=0$ (in view of statement (b)
of Theorem \ref{theorem:5}).
Concerning the possible poles of $R_\ell^{(\mathrm{tr})}(k;g)$ 
in the strip $-\alpha<\Imag k<0$,
we can also say that (for the same reason)
they occur either on the imaginary axis (\emphsl{anti--bound states})
or in pairs symmetric with respect to the imaginary axis
(\emphsl{resonances}).

Since (at fixed $g$) $\sigma_\ell(k;g)$ is holomorphic in
$\Pi_\alpha$ ($\ell=0,1,2,\ldots$), and since the function 
$(k,\ell) \mapsto \sigma_\ell(k;g)$
tends uniformly to 1 for $\max\,(|k|,\ell) \to +\infty$
(see statement (c) of Theorem \ref{theorem:5}), there holds 
a finiteness property of the set of zeros of all the functions
$\sigma_\ell$ in the domain $\Pi_\alpha $ of the $k$--plane,
which can be stated as follows in terms of the corresponding
physical interpretation:

\skd

\begin{proposition}
\label{proposition:2}
For any nonlocal potential $V$ in a class $\cN_{\wea}$,
and for each fixed real value of the coupling constant $g$:
\begin{itemize}
\item[\rm (a)] there exists an integer $L =L(V,g)$ such that, for every $\ell \geqslant L$,
there occur no bound states, anti--bound states and resonances of 
angular momentum $\ell$;
\item[\rm (b)] for each integer $\ell$ such that $0\leqslant \ell <L$, there occur at most 
a finite number $n_\ell$ of (normal or spurious) bound states
and a finite number of resonances and anti--bound states of angular momentum $\ell$
in any strip
$\Omega_{\alpha -\varepsilon}^- \doteq \{k\in \C\,:\, -(\alpha-\varepsilon)\leqslant\Imag k <0\}$, 
all of them being localized in a finite disk of the form $|k|< k_{V,g}$.
\end{itemize}
\end{proposition}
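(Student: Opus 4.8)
The plan is to derive Proposition \ref{proposition:2} as a direct consequence of the uniform decay statement (c) in Theorem \ref{theorem:5} together with the classical fact that a non-zero holomorphic function of one variable has isolated zeros. Recall that, for $g$ fixed and real, the poles of $R_\ell^{(\mathrm{tr})}(k;g)$ in $\Pi_\alpha$ are exactly the zeros of $k\mapsto\sigma_\ell(k;g)$ (Section \ref{subsubse:meromorphy}), and by the discussion of Section \ref{subsubse:someresults} these zeros, restricted to $\Imag k\geqslant 0$, are bound states or spurious bound states, while those in the strip $-\alpha<\Imag k<0$ are resonances or anti-bound states. Hence the statement to prove is purely about the location of the set $Z_\ell(g)\doteq\{k\in\Pi_\alpha:\sigma_\ell(k;g)=0\}$.

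First I would use statement (c) of Theorem \ref{theorem:5}: for the fixed $g$, $\sup_{\max(|k|,\ell)\geqslant r}|\sigma_\ell(k;g)-1|\to 0$ as $r\to+\infty$. Fix $r_0$ so large that this supremum is, say, $<\tfrac12$; then for every $\ell\geqslant r_0$ and every $k\in\overline\Pi_\alpha$ one has $|\sigma_\ell(k;g)|>\tfrac12$, so $Z_\ell(g)=\emptyset$. Taking $L=L(V,g)\doteq\lceil r_0\rceil$ gives part (a): no bound states, anti-bound states, or resonances of angular momentum $\ell\geqslant L$. Likewise, for any $\ell$ with $0\leqslant\ell<L$ and any $k\in\overline\Pi_\alpha$ with $|k|\geqslant r_0$ one again has $|\sigma_\ell(k;g)|>\tfrac12$, so every zero in $Z_\ell(g)$ satisfies $|k|<r_0$; setting $k_{V,g}\doteq r_0$ establishes the localization-in-a-disk claim of part (b).

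It remains, for part (b), to see that in each such disk intersected with a closed sub-strip there are only finitely many zeros. Here I would invoke Theorem \ref{theorem:5}(a): for fixed $g$, $\sigma_\ell(\cdot;g)$ is continuous on $\overline\Pi_\alpha$ and holomorphic on $\Pi_\alpha$, and by the argument of Section \ref{subsubse:meromorphy} it is not identically zero (otherwise (c) would be contradicted). The set $K_\varepsilon\doteq\{k:|k|\leqslant k_{V,g},\ -(\alpha-\varepsilon)\leqslant\Imag k\leqslant 0\}$ is compact and contained, up to its lower edge, in $\Pi_\alpha$; since a non-zero holomorphic function has isolated zeros with no accumulation point inside its domain of holomorphy, and the zeros cannot accumulate on the edge $\Imag k=-(\alpha-\varepsilon)$ either because $\sigma_\ell$ is still holomorphic in a neighbourhood of that edge (it is holomorphic in all of $\Pi_\alpha$, and $-(\alpha-\varepsilon)>-\alpha$), the zero set $Z_\ell(g)\cap K_\varepsilon$ is finite. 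Counting these zeros with multiplicity, and using the correspondence of Section \ref{subsubse:poles-integer} between zeros of $\sigma_\ell$ in $\Imag k\geqslant 0$ and bound-state-type solutions (Theorem \ref{theorem:1}, part (i), also gives finiteness of multiplicities), one obtains the finite numbers $n_\ell$ of bound states and the finite count of resonances and anti-bound states of angular momentum $\ell$ in $\Omega_{\alpha-\varepsilon}^-$.

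The main (and only mild) obstacle is the bookkeeping at the boundary of the strip: one must ensure that zeros do not pile up as $\Imag k\to -\alpha$. This is handled by working in the slightly smaller strip $\Omega_{\alpha-\varepsilon}^-$ for arbitrary $\varepsilon>0$, exactly as stated in the proposition, so that the relevant compact set sits strictly inside the open half-plane $\Pi_\alpha$ where holomorphy (hence isolatedness of zeros) holds; no further estimate is needed. Everything else is an immediate packaging of Theorem \ref{theorem:5} with the identification of zeros of $\sigma_\ell$ as the physical objects listed.
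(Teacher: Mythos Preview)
Your proposal is correct and follows essentially the same approach as the paper: the paper does not give a detailed proof of Proposition~\ref{proposition:2} but simply states it as a consequence of the holomorphy of $\sigma_\ell(\cdot;g)$ in $\Pi_\alpha$ (Theorem~\ref{theorem:5}(a)) together with the uniform convergence $\sigma_\ell(k;g)\to 1$ as $\max(|k|,\ell)\to+\infty$ (Theorem~\ref{theorem:5}(c)). Your write-up supplies exactly the routine details of that argument, including the compactness step in the sub-strip $\Omega_{\alpha-\varepsilon}^-$ to rule out accumulation of zeros.
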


\subsection{Correspondence between
the solutions of the Fredholm equation with kernel $\boldsymbol{L_\ell(k)}$ and  
those of the nonlocal Schr\"odinger--type equation}
\label{subse:corresp}

\subsubsection{Preliminary properties} %3.3.1
\label{subsubse:preliminary}

{\rm We need to state four lemmas.} 

\vskip 0.2cm

\begin{lemma}
\label{lemma:vi} 
For every function $x$ in ${X}_{\wea}$, there hold the following properties: 
\begin{itemize}
\item[\rm (a)] 
the function $R \mapsto kR\,j_\ell(kR)\,x(R)$ is integrable on $[0,+\infty)$
for all $k$ in $\overline{\Omega}_\alpha$, and satisfies the following majorization:
\beq
\left|\int_R^{+\infty} \!\!\! kR'\,j_\ell(kR')\, x(R')\, \rmd R'\right| \leqslant 
\frac{c_\ell}{[2(\alpha-|\Imag k|)]^{1/2}}\, e^{-(\alpha-|\Imag k|)\,R}
\|x\|_{\wea},
\label{f:60a}
\eeq
which is valid for all $k$ in ${\Omega}_\alpha$ and $R\geqslant 1$;
\item[\rm (b)] 
the function 
$R \mapsto kR\,h_\ell^{(1)}(kR)\,x(R)$ is integrable on any interval $[R,+\infty)$ ($R>0$)
for all $k$ in $\overline{\Pi}_\alpha$, and satisfies the following majorization:
\beq
\left|\int_R^{+\infty} \!\! kR'\,h_\ell^{(1)}(kR')\, x(R')\,\rmd R'\right| \leqslant 
\frac{c_\ell^{(1)} (1+|k|^{-1})^\ell}{[2(\alpha+ \Imag k)]^{1/2}}
\, e^{-(\alpha +\Imag k)\, R} \|x\|_{\wea},
\label{f:60a'}
\eeq
which is valid for all $k$ in ${\Pi}_\alpha$ and $R\geqslant 1$.
\end{itemize}
\end{lemma}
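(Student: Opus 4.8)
The plan is to deduce both majorizations from a single Cauchy--Schwarz estimate in the Hilbert space $X_{\wea}$, after inserting the pointwise bounds on the spherical Bessel and Hankel functions established in Appendix \ref{appendix:a}. For part (a), I apply the Cauchy--Schwarz inequality on the interval $[R,+\infty)$ with respect to the weight $\mu(R')\doteq\we(R')\,e^{2\alpha R'}$ that defines $\|\cdot\|_{\wea}$ (see \eqref{f:28}):
\[
\left|\int_R^{+\infty} kR'\,j_\ell(kR')\,x(R')\,\rmd R'\right| \leqslant
\left(\int_R^{+\infty}\frac{|kR'\,j_\ell(kR')|^2}{\mu(R')}\,\rmd R'\right)^{1/2}\|x\|_{\wea}.
\]
By the bound \eqref{a:0}, which gives $|kR'\,j_\ell(kR')|\leqslant c_\ell\,\frac{|k|R'}{1+|k|R'}\,e^{|\Imag k|R'}\leqslant c_\ell\,e^{|\Imag k|R'}$, the integral in the first factor is at most $c_\ell^2\int_R^{+\infty}\frac{e^{-2(\alpha-|\Imag k|)R'}}{\we(R')}\,\rmd R'$. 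Since $\we(R)=R^{1-\varepsilon}(1+R)^{1+2\varepsilon}\geqslant 1$ for $R\geqslant 1$, this quantity, for $R\geqslant 1$ and $k\in\Omega_\alpha$, is bounded by $c_\ell^2\int_R^{+\infty}e^{-2(\alpha-|\Imag k|)R'}\,\rmd R' = c_\ell^2\,e^{-2(\alpha-|\Imag k|)R}/[2(\alpha-|\Imag k|)]$, and taking square roots yields \eqref{f:60a}. For the integrability assertion one runs the same estimate on $[0,+\infty)$, now retaining the factor $\frac{|k|R'}{1+|k|R'}$ and the full weight: the first factor becomes $\leqslant c_\ell\left(\int_0^{+\infty}\left(\frac{|k|R'}{1+|k|R'}\right)^2\frac{\rmd R'}{\we(R')}\right)^{1/2}$, which is finite for every $k\in\overline{\Omega}_\alpha$ by the condition \eqref{f:j1} imposed on $\we$, so $R\mapsto kR\,j_\ell(kR)\,x(R)$ is absolutely integrable on $[0,+\infty)$.

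Part (b) is entirely parallel, using instead the bound on the spherical Hankel function from Appendix \ref{appendix:a}, of the form $|kR'\,h_\ell^{(1)}(kR')|\leqslant c_\ell^{(1)}\,(1+|kR'|^{-1})^\ell\,e^{-R'\Imag k}$, valid for $k\in\overline{\Pi}_\alpha$; for $R'\geqslant 1$ one has $(1+|kR'|^{-1})^\ell\leqslant(1+|k|^{-1})^\ell$. The same Cauchy--Schwarz splitting on $[R,+\infty)$ leaves the exponential $e^{-2(\alpha+\Imag k)R'}$ under the integral, and using once more $\we(R')\geqslant 1$ on $[1,+\infty)$ together with $\int_R^{+\infty}e^{-2(\alpha+\Imag k)R'}\,\rmd R' = e^{-2(\alpha+\Imag k)R}/[2(\alpha+\Imag k)]$ gives \eqref{f:60a'} for $R\geqslant 1$ and $k\in\Pi_\alpha$. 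Since an interval $[R,+\infty)$ with $R>0$ stays away from the origin (where $h_\ell^{(1)}$ is singular), the only possible source of divergence of $\int_R^{+\infty}kR'\,h_\ell^{(1)}(kR')\,x(R')\,\rmd R'$ is at infinity, where the above estimate applies; at the boundary $\Imag k=-\alpha$ the exponential degenerates to $1$, but $\int^{+\infty}\frac{\rmd R'}{\we(R')}$ still converges because $\we(R')$ grows like $(R')^{2+\varepsilon}$ at infinity, so integrability holds on all of $\overline{\Pi}_\alpha$.

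There is no deep difficulty here; the one point to watch is the matching of exponents. The weight contributes $e^{-2\alpha R'}$ under the integral, the Bessel (resp. Hankel) bound contributes $e^{+2|\Imag k|R'}$ (resp. $e^{-2\Imag k\,R'}$), and these combine into the \emph{decaying} exponential $e^{-2(\alpha-|\Imag k|)R'}$ (resp. $e^{-2(\alpha+\Imag k)R'}$) precisely because $k\in\Omega_\alpha$ (resp. $k\in\Pi_\alpha$); the elementary tail integral $\int_R^{+\infty}e^{-2\beta R'}\,\rmd R'=e^{-2\beta R}/(2\beta)$ then delivers both the decay rate $e^{-(\alpha\mp|\Imag k|)R}$ and the singular prefactor $[2(\alpha\mp|\Imag k|)]^{-1/2}$ appearing in \eqref{f:60a}--\eqref{f:60a'}. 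Everything else is routine bookkeeping with the explicit weight $\we$.
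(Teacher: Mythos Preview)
Your proof is correct and follows essentially the same approach as the paper's: Cauchy--Schwarz in $X_{\wea}$, the pointwise bounds \eqref{a:0} and \eqref{a:0'} on $j_\ell$ and $h_\ell^{(1)}$, and the inequality $[\we(R)]^{-1}<1$ for $R\geqslant 1$ to reduce the tail to an elementary exponential integral. You spell out more detail than the paper does (in particular the boundary case $\Imag k=-\alpha$ for integrability in (b), and the exponent-matching remark), but the structure is the same.
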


\begin{proof}
We make use of the Schwarz inequality in the space $X_{\wea}$: 

(a)\, By using bound \eqref{f:j1}, 
with $w=\we >\we_0 $ (see \eqref{wf:2}), 
which allows us to take  
$A_{\we}(k)=A_{\we_0}= A_\varepsilon$ 
(given by \eqref{f:e0}), 
we obtain for all $k\in \overline{\Omega}_\alpha$:
\beq
\int_0^{+\infty} |kR\,j_\ell(kR)|\,|x(R)|\, \rmd R \, \leqslant \,
\left\|{k\cdot} \ j_\ell(k\cdot)\right\|^*_{\wea} 
\ \|x\|_{\wea}
\,\leqslant\, c_\ell \, A_\varepsilon \ \|x\|_{\wea}.
\label{f:vi1}
\eeq
The majorization \eqref{f:60a} of the remainder of this integral from $R$ to $+\infty$
is obtained similarly (for $k\in {\Omega}_\alpha$ and $R\geqslant 1$)
by using the $(e^{|\Imag k| R})$--dependence 
of the bound \eqref{a:0} on $[kR j_\ell(kR)]$, 
and the inequality $[\we(R)]^{-1} < 1$. 

(b) By using the 
bound \eqref{a:0'} on $[kR h_\ell^{(1)}(kR)]$, 
one obtains a Schwarz inequality similar to \eqref{f:vi1}, except
for the replacement of the integration interval $[0,+\infty)$
by $[R,+\infty)$ (for all $R>0$), which proves 
the corresponding integrability property for all $k$ in 
$\overline{\Pi}_\alpha$. The bound \eqref{f:60a'} is obtained by using 
again the inequality $[\we(R)]^{-1} < 1$
together with a majorant of bound \eqref{a:0'} for $R\geqslant 1$.
\end{proof}

\begin{lemma}
\label{lemma:vii} 
For every function $x$ in ${X}_{\wea}$, the corresponding function 
\beq
\psi_{x;\ell,k}(R)= g \int_0^{+\infty} \!\! G_\ell(k;R,R') \, x(R')\,\rmd R'
\label{f:62"}
\eeq
is well--defined for all $k$ with $\Imag k \geqslant -\alpha$,
and enjoys the following properties:
\begin{itemize}
\item[\rm (a)] there hold majorizations of the following form:\\[-6pt]
\begin{itemize}
\item[\rm (i)] for $\Imag k\geqslant 0$,
\beq
\left|\psi_{x;\ell,k}(R)\right| \leqslant |g|\ {\widehat c_{\ell,\varepsilon}} \ 
\|x\|_{\wea} \ R\, e^{-\min(\alpha,\Imag k)\,R};
\label{f:64"}
\eeq
\item[\rm (ii)] for $-\alpha\leqslant \Imag k< 0$,
\beq
\left|\psi_{x;\ell,k}(R)\right|\leqslant |g|\ {\widehat c_{\ell,\varepsilon}} \ 
\|x\|_{\wea} \ R\,  e^{|\Imag k|\,R};
\label{f:65"}
\eeq
\end{itemize}
\item[\rm (b)] for every $k$ such that $-\frac{\alpha}{2}< \Imag k <\alpha,\ k\neq 0$, there
holds the following limit:
\begin{align}
\left|\psi_{x;\ell,k}(R)+\rmi g\, b_{x;\ell,k}\, R h_\ell^{(1)}(kR)\right|
\xrightarrow[R\to +\infty]{} 0, \label{f:vii'} \\
\intertext{where:}
b_{x;\ell,k} \doteq \int_0^{+\infty}kR\,j_\ell(kR)\,x(R)\,\rmd R,
\label{f:vii}
\end{align}
with the specifications listed below. There exists a function ${\rm c}_\ell(k)$
such that, for $R\geqslant 1$, there hold the following inequalities:\\[-6pt]
\begin{itemize}
\item[\rm (i)] if \ $0\leqslant \Imag k <\alpha$, 
\beq
\null\qquad\left|\psi_{x;\ell,k}(R)+\rmi g\,  b_{x;\ell,k}\ R h_\ell^{(1)}(kR)\right|
\leqslant {\rm c}_\ell(k)\,  |g|\ \|x\|_{\wea} \ e^{-\alpha R}; 
\label{f:viia}
\eeq
\item[\rm (ii)] if \ $-\frac{\alpha}{2}\leqslant \Imag k <0$,
\beq
\left|\psi_{x;\ell,k}(R)+\rmi g\,  b_{x;\ell,k}\ R h_\ell^{(1)}(kR)\right| 
\leqslant {\rm c}_\ell(k)\, |g|\ \|x\|_{\wea} \ e^{-(\alpha-2|\Imag k|) R};
\label{f:viib}
\eeq
\end{itemize}
\item[\rm (c)] for all $k$ with $\Imag k \geqslant -\alpha$, $k\neq 0$,
the derivative $\psi'_{x;\ell,k}(R)$ of $\psi_{x;\ell,k}(R)$ is 
well--defined, absolutely continuous and bounded for $R$ tending to zero. \\
For $\Imag k \geqslant 0$, there holds a majorization of the following form:
\beq
|\psi'_{x;\ell,k}(R)| 
\leqslant |g|\ {\widehat c\,'_{\ell,\varepsilon}} \ 
\|x\|_{\wea}\ e^{-\min(\alpha,\Imag k)R},
\label{f:64''}
\eeq
and the following limit is valid for all $k$ such that $0\leqslant \Imag k <\alpha$, $k\neq 0$:
\beq
e^{R\Imag k} \
\left|\psi'_{x;\ell,k}(R) -\rmi k\psi_{x;\ell,k}(R) \right|
\xrightarrow[R\to +\infty]{} 0.
\label{f:vii"}
\eeq
\end{itemize}
\end{lemma}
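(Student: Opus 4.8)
The plan is to reduce all four statements to the single explicit representation obtained by splitting the Green--function integral \eqref{f:62"} at $R'=R$. Using formula \eqref{f:27d} for $G_\ell$,
\[
\psi_{x;\ell,k}(R)=-\rmi g\Bigl\{Rh_\ell^{(1)}(kR)\!\int_0^{R}\!\! kR'j_\ell(kR')\,x(R')\,\rmd R'+Rj_\ell(kR)\!\int_R^{+\infty}\!\! kR'h_\ell^{(1)}(kR')\,x(R')\,\rmd R'\Bigr\}.
\]
The two inner integrals are exactly the objects controlled by Lemma \ref{lemma:vi}: the first one has, as its absolutely convergent value on $[0,+\infty)$, the quantity $b_{x;\ell,k}$ of \eqref{f:vii} (for $k\in\overline{\Omega}_\alpha$), and its restriction to $[0,R]$ is finite for every $k$ since $\we$ vanishes only like $R^{1-\varepsilon}$ at the origin, so $1/\we$ is integrable there; the second one is finite for $k\in\overline{\Pi}_\alpha$ with the tail bound \eqref{f:60a'}. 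This already yields the asserted well--definedness of $\psi_{x;\ell,k}$ for $\Imag k\geqslant-\alpha$.

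For part (a) I would insert into the displayed formula the elementary bounds \eqref{a:0} on $|kR'j_\ell(kR')|$ (growth $\lesssim\frac{|k|R'}{1+|k|R'}\,e^{|\Imag k|R'}$) and \eqref{a:0'} on $|kR'h_\ell^{(1)}(kR')|$, and control the two integrals by the Schwarz inequality in $X_{\wea}$ --- using \eqref{f:j1} with $A_{\we}=A_\varepsilon$ for the $j_\ell$--factor and the tail estimate \eqref{f:60a'} for the $h_\ell^{(1)}$--factor. The prefactor $R$ in \eqref{f:64"}--\eqref{f:65"} is the one visible in the formula above, and the exponential weights combine (the factor $e^{-\Imag k\,R}$ from $h_\ell^{(1)}(kR)$, the factor $e^{-(\alpha+\Imag k)R}$ from the $h_\ell^{(1)}$--tail, the factor $e^{|\Imag k|R}$ from $j_\ell$) to give $e^{-\min(\alpha,\Imag k)R}$ when $\Imag k\geqslant0$ and $e^{|\Imag k|R}$ when $-\alpha\leqslant\Imag k<0$; the constant $\widehat c_{\ell,\varepsilon}$ merely absorbs $c_\ell$, $c_\ell^{(1)}$, $A_\varepsilon$ and the $\ell$--dependent Hankel prefactor $(1+|k|^{-1})^\ell$.

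For part (b) I would observe that as $R\to+\infty$ the first inner integral tends to $b_{x;\ell,k}$, so the first term of the displayed formula tends to $-\rmi g\,b_{x;\ell,k}\,Rh_\ell^{(1)}(kR)$. Subtracting this, the left--hand side of \eqref{f:vii'} is the sum of two remainders: $\mp\rmi g\,Rh_\ell^{(1)}(kR)\int_R^{+\infty}kR'j_\ell(kR')x(R')\rmd R'$, estimated by \eqref{f:60a} times the bound \eqref{a:0'} on $|kRh_\ell^{(1)}(kR)|$, and $-\rmi g\,Rj_\ell(kR)\int_R^{+\infty}kR'h_\ell^{(1)}(kR')x(R')\rmd R'$, estimated by \eqref{f:60a'} times the bound \eqref{a:0} on $|kRj_\ell(kR)|$. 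In both remainders the exponential factors multiply out to $e^{-\alpha R}$ when $\Imag k\geqslant0$ and to $e^{-(\alpha-2|\Imag k|)R}$ when $\Imag k<0$, which gives \eqref{f:viia}--\eqref{f:viib}; the limit \eqref{f:vii'} follows, and this is precisely what forces the restriction $|\Imag k|<\frac{\alpha}{2}$, while ${\rm c}_\ell(k)$ collects the $k$--dependent constants $c_\ell^{(1)}(1+|k|^{-1})^\ell$, $(2(\alpha\pm\Imag k))^{-1/2}$, etc.

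For part (c) I would differentiate the displayed formula in $R$. The two boundary terms produced by the variable limits of integration cancel identically, because $G_\ell(k;\cdot,\cdot)$ is continuous across the diagonal $R'=R$; hence $\psi'_{x;\ell,k}(R)$ is again a sum of the type $\partial_R[Rh_\ell^{(1)}(kR)]\int_0^R(\cdots)\,\rmd R'+\partial_R[Rj_\ell(kR)]\int_R^{+\infty}(\cdots)\,\rmd R'$, from which absolute continuity is immediate and boundedness as $R\to0$ follows by matching the $R^{-\ell-1}$ singularity of $\partial_R[Rh_\ell^{(1)}(kR)]$ against the $R^{\ell+2}$ vanishing of $\int_0^R kR'j_\ell(kR')x(R')\rmd R'$. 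The majorization \eqref{f:64''} is then obtained exactly as \eqref{f:64"}, with derivative bounds on $kRj_\ell$, $kRh_\ell^{(1)}$ in place of the functions themselves. For \eqref{f:vii"} I would use that $w_\ell(z)\doteq zh_\ell^{(1)}(z)$ satisfies $w_\ell'(z)-\rmi w_\ell(z)=O(e^{\rmi z}/z)$ at infinity; combining the asymptotics of $\psi_{x;\ell,k}$ from part (b) with the analogous asymptotics of $\psi'_{x;\ell,k}$ yields $|\psi'_{x;\ell,k}(R)-\rmi k\psi_{x;\ell,k}(R)|\lesssim{\rm c}_\ell(k)|g|\,\|x\|_{\wea}\,e^{-\alpha R}$ for $0\leqslant\Imag k<\alpha$, which beats the factor $e^{R\Imag k}$ precisely because $\Imag k<\alpha$. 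The main obstacle is this last part: producing the derivative bounds on the spherical Bessel and Hankel functions with the correct behaviour both at $R\to0$ (the cancellation of singularities) and $R\to+\infty$, and extracting from the asymptotics the sharp exponential decay required in \eqref{f:vii"} rather than a mere ``$\to0$''; the sharp rates \eqref{f:viia}--\eqref{f:viib} of part (b) are the second most delicate point.
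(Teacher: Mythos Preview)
Your approach is essentially the paper's: the same splitting via \eqref{f:27d}, the same use of Lemma~\ref{lemma:vi} together with the bounds \eqref{a:0}--\eqref{a:0'}, the same remainder decomposition for (b), and the same termwise differentiation for (c). Two small corrections worth noting. First, in part~(a) the constant $\widehat c_{\ell,\varepsilon}$ must be $k$--independent; the paper obtains $\widehat c_{\ell,\varepsilon}=c_\ell c_\ell^{(1)}A_\varepsilon$ not by absorbing $(1+|k|^{-1})^\ell$ but by using the monotonicity of $\tfrac{|k|R}{1+|k|R}$ to cancel the power factors between the outer $h_\ell^{(1)}(kR)$ and the inner $j_\ell(kR')$ (and vice versa), which is also what produces the clean prefactor $R$ (see \eqref{e:16}--\eqref{e:17}). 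Second, for \eqref{f:vii"} the dominant contribution after subtraction comes from $\frac{\rmd}{\rmd R}[Rh_\ell^{(1)}(kR)]-\rmi kRh_\ell^{(1)}(kR)$, and the Sommerfeld remainder \eqref{a:8'} gives only $O(R^{-2})$ for this, not $e^{-\alpha R}$; this still yields \eqref{f:vii"}, but your claimed exponential rate is too optimistic.
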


\begin{proof}
In view of expression \eqref{f:27d} of the Green function $G_\ell$,
we can rewrite Eq. \eqref{f:62"} under the following form, whose validity 
is established below:
\beq
\psi_{x;\ell,k}(R) = -\rmi g \left[Rh_\ell^{(1)}(kR)
\int_0^{R} kR' j_\ell(kR')x(R')\,\rmd R'
+R j_\ell(kR)\int_R^{+\infty} kR' h_\ell^{(1)}(kR')x(R')\,\rmd R'\right].
\label{e:15}
\eeq
In fact, for every $k$ such that $\Imag k \geqslant -\alpha$,
the convergence of the integrals in \eqref{e:15} 
is obtained, together with appropriate majorizations on the latter
by using the bounds \eqref{a:0} and \eqref{a:0'} on the functions
$j_\ell$ and $h_\ell^{(1)}$, in the following way:
\beq
\begin{split}
& |\psi_{x;\ell,k}(R)| \\
&\quad\leqslant |g|\frac{c_\ell c_\ell^{(1)}}{|k|}
\left[\left(\frac{1+|k|R}{|k|R}\right)^\ell\,e^{-R\Imag k}
\int_0^R\left(\frac{|k|R'}{1+|k|R'}\right)^{\ell+1}e^{R'|\Imag k|}\,|x(R')|\,\rmd R'\right. \\
&\qquad\left. +\left(\frac{|k|R}{1+|k|R}\right)^{\ell+1} e^{R|\Imag k|}
\int_R^{+\infty}\left(\frac{1+|k|R'}{|k|R'}\right)^\ell e^{-R'\Imag k} \, |x(R')|\,\rmd R'\right],
\end{split}
\label{e:16}
\eeq
which yields (by using the increase property of the function $\frac{|k|R}{1+|k|R}$):
\beq
\frac{|\psi_{x;\ell,k}(R)|}{|g|{c_\ell}c_\ell^{(1)} \, R} 
\leqslant
e^{-R\Imag k} \int_0^R e^{R'|\Imag k|}\,|x(R')|\,\rmd R' 
+e^{R|\Imag k|}\int_R^{+\infty}\!\!\! e^{-R'\Imag k} \, |x(R')|\,\rmd R',
\label{e:17}
\eeq
By using the assumption that $x$ belongs to 
${X}_{\wea}$, the two integrals on the r.h.s. of the
latter can be seen to be convergent for all $k$ in $\overline{\Pi}_\alpha$
in view of the Schwarz inequality, which therefore
implies the existence (and analyticity in $k$ for every $R\geqslant 0$)
of $\psi_{x;\ell,k}(R)$ in this domain of the $k$--plane. 
We now exhibit these convergence properties together with
the majorizations listed under (a).

\vskip 0.3cm

For $k$ in the half--plane $\Imag k \geqslant 0$, majorization \eqref{e:17}
also implies the following one: 
\beq
\frac{|\psi_{x;\ell,k}(R)|}{|g|{c_\ell}c_\ell^{(1)}R} 
\leqslant 
e^{-R\min (\alpha, \Imag k)}\int_0^R e^{\alpha R'}\,|x(R')|\,\rmd R' 
+e^{-\alpha R}\int_R^{+\infty}e^{\alpha R'}\, |x(R')|\,\rmd R',
\label{e:18}
\eeq
which then yields \eqref{f:64"} 
(with $\widehat c_{\ell,\varepsilon} = c_\ell c_\ell^{(1)} A_{\varepsilon}$)
by directly using the Schwarz inequality.

\vskip 0.2cm

For $k$ in the strip $-\alpha\leqslant \Imag k< 0$,
the majorization \eqref{e:17} readily implies that
$|\psi_{x;\ell,k}(R)| \leqslant |g|{c_\ell}c_\ell^{(1)} R \, e^{|\Imag k| R}
\int_0^{+\infty}e^{\alpha R'}\,|x(R')|\,\rmd R'$, which then yields \eqref{f:65"}.

\vskip 0.5cm

(b) Let $b_{x;\ell,k}$ be given by the integral in \eqref{f:vii},
whose convergence has been established in 
Lemma \ref{lemma:vi} (a), provided $|\Imag k| \leqslant \alpha$.
Equation \eqref{e:15} can then be rewritten as follows:
\beq
\begin{split}
& \psi_{x;\ell,k}(R)+\rmi g\,  b_{x;\ell,k}\, R h_\ell^{(1)}(kR) \\
& \quad=\rmi g\! \left[
Rh_\ell^{(1)}(kR) \!\! \int_R^{+\infty} \!\!\!\! kR' j_\ell(kR')x(R')\,\rmd R'
-R j_\ell(kR) \!\! \int_R^{+\infty} \!\!\!\! kR'h_\ell^{(1)}(kR')x(R')\,\rmd R'\right].
\end{split}
\label{e:15'}
\eeq
Let us show that for $-\frac{\alpha}{2}<\Imag k<\alpha$ (and $k\neq 0$),
each term in the bracket on the r.h.s. of the latter
tends to zero in the limit $R\to +\infty$, with the specifications
(i), (ii) listed under (b).

\vskip 0.2cm

In view of Lemma \ref{lemma:vi} (a) and of bound \eqref{a:0'}, the first term 
on the r.h.s. of \eqref{e:15'} can be majorized by 
$|g|\mathrm{c}_{\ell,1}(k)\, e^{-R\Imag k}\, e^{-(\alpha -|\Imag k|) R}\,\|x\|_{\wea}$, 
where we have put:
$\mathrm{c}_{\ell,1}(k)\doteq c_\ell c_\ell^{(1)}(1+|k|^{-1})^\ell \ [2(\alpha- |\Imag k|)]^{-{1/2}}$.
This bound correspond to the following two regimes:
\begin{itemize}
\item[(i)] \ $|g|\,\mathrm{c}_{\ell,1}(k)\ e^{-\alpha R } \ \|x\|_{\wea}$,
\ if $0\leqslant \Imag k<\alpha$, \\[-5pt]
\item[(ii)] \ $|g|\,\mathrm{c}_{\ell,1}(k) \ e^{-(\alpha-2|\Imag k|) R} \ \|x\|_{\wea}$,
\ if $-\frac{\alpha}{2}<\Imag k<0$.
\end{itemize}

\vskip 0.2cm

Similarly, in view of Lemma \ref{lemma:vi} (b) and of bound \eqref{a:0},
the second term on the r.h.s. of \eqref{e:15'} can be majorized by 
$|g|\mathrm{c}_{\ell,2}(k)\, e^{|\Imag k| R }\, e^{-(\alpha +\Imag k) R}\,\|x\|_{\wea}$, 
where we have put: 
$\mathrm{c}_{\ell,2}(k)\doteq c_\ell c_\ell^{(1)}(1+|k|^{-1})^\ell \ [2(\alpha+\Imag k)]^{-{1/2}}$.
Here again, this bound corresponds to the previous two regimes (i) and (ii), except for the substitution 
$\mathrm{c}_{\ell,1}(k) \to \mathrm{c}_{\ell,2}(k)$.
The inequalities \eqref{f:viia} and \eqref{f:viib} are therefore established 
with $\mathrm{c}_{\ell}(k)\doteq\mathrm{c}_{\ell,1}(k)+\mathrm{c}_{\ell,2}(k)$.

\vskip 0.3cm

(c) We now consider the derivative $\psi'_{x;\ell,k}$ of $\psi_{x;\ell,k}$,
which can be obtained for every positive $R$
by a direct computation from the r.h.s. of \eqref{e:15} 
(since for all $k\in\C$, $k\neq 0$, and $R>0$,
$j_\ell(kR)$ and $h_\ell^{(1)}(kR)$ define analytic functions of $R$ on $\R^+$).
This yields:
\beq
\begin{split}
& \psi'_{x;\ell,k}(R) =
-\rmi gk \frac{\rmd}{\rmd R}\left[Rh_\ell^{(1)}(kR)\right]
\int_0^R R'j_\ell(kR')\, x(R')\,\rmd R' \\
&\quad -\rmi gk \frac{\rmd}{\rmd R}\left[Rj_\ell(kR)\right]
\int_R^{+\infty}R'h_\ell^{(1)}(kR')\, x(R')\,\rmd R'.
\end{split}
\label{e:15"}
\eeq
The fact that $\psi'_{x;\ell,k}$ is absolutely continuous  
is then an immediate consequence of \eqref{e:15"}, in view of 
the convergence of the integral factors established above for 
$\Imag k \geqslant -\alpha$. The fact that 
$\psi'_{x;\ell,k}(R)$ remains a bounded and absolutely continuous function
\emphsl{in the limit} $R\to 0$ is also implied by Eq. \eqref{e:15"}
by taking into account the fact that the holomorphic functions $j_\ell$  
and $h_\ell^{(1)}$ respectively admit a 
zero of order $\ell$ and a pole of order $\ell +1$ at the origin. 

\vskip 0.2cm

The majorization \eqref{f:64''} is then deduced from
\eqref{e:15"} in the same way as \eqref{f:64"} is deduced 
from \eqref{e:15}; this is because,
in view of bounds \eqref{a:3b} and \eqref{a:6b}
(respectively similar to \eqref{a:0} and \eqref{a:0'}),
a majorization of the form \eqref{e:17} is equally valid for 
$\psi'_{x;\ell,k}$.

\vskip 0.2cm

Let us finally establish limit \eqref{f:vii"}. We note that 
the second term on the r.h.s. of \eqref{e:15"} tends to zero 
as a constant times $e^{-\alpha R}$, like the second term on 
the r.h.s. of \eqref{e:15} divided by $R$ 
(in view of the majorization of the latter used in 
\eqref{e:18}). If we now form the expression
$e^{R\Imag k}\times\left|\psi'_{x;\ell,k}(R) -\rmi k \psi_{x;\ell,k}(R)\right|$
and consider it as given by the corresponding difference of the r.h.s. of 
Eqs. \eqref{e:15"} and \eqref{e:15}, we conclude that its limit for $R\to +\infty$
is the same as the limit of the expression 
\beq 
e^{R\Imag k}\times 
(-\rmi gk) \left\{\frac{\rmd}{\rmd R}\left[Rh_\ell^{(1)}(kR)\right]
-\rmi kRh_\ell^{(1)}(kR)\right\}\int_0^R R'j_\ell(kR')\, x(R')\, \rmd R'.
\label{e:16'} 
\eeq
Now, in view of \eqref{a:8'}, the latter 
can be bounded by a quantity of the following form: 
$O(\frac{1}{R^2})\times\int_0^R e^{R'|\Imag k|}\,|x(R')|\,\rmd R' 
\leqslant O(\frac{1}{R^2})\times A_\varepsilon\,\|x\|_{\wea}$,
which proves the limit \eqref{f:vii"} for all $k$ such that 
$0\leqslant\Imag k< \alpha$, $k\neq 0$.
\end{proof}

\begin{lemma}
\label{lemma:vi'} 
For every function $\psi$ on $[0,+\infty[$ such that 
$|\psi(R)| \leqslant c(\psi)\, R$, or more generally for every
$\psi$ in the dual space ${X}^*_{\wea}$ 
of ${X}_{\wea}$, the following properties hold:
\begin{itemize}
\item[\rm (a)]\, the function $R \mapsto \psi(R)\,v_{\ell,0}(k;R)$, where 
$v_{\ell,0}(k;R)$ is the function defined by Eq. \eqref{f:27b} and Lemma \ref{lemma:v},
is integrable on $[0,+\infty)$ for all $k$ in $\overline{\Omega}_\alpha$;
\item[\rm (b)] \, the function $x_\psi(R)\doteq\int_0^{+\infty}V_\ell(R,R')\,\psi(R')\,\rmd R'$
is well--defined as an element of ${X}_{\wea}$, and one has
(for all $k$ in $\overline{\Omega}_\alpha$):
\beq
\int_0^{+\infty}kR\,j_\ell(kR)\,x_\psi(R)\,\rmd R
=\int_0^{+\infty}\psi(R)\,v_{\ell,0}(k;R)\,\rmd R;
\label{f:60}
\eeq
\item[\rm (c)] \, the following double integral is well--defined:
\beq
\int_0^{+\infty} \!\! \rmd R\int_0^{+\infty} \!\! \rmd R'\, 
\overline{\psi} (R)\,V_\ell(R,R')\,\psi (R') \, <\,+\infty. 
\label{f:69'}
\eeq
\end{itemize}
\end{lemma}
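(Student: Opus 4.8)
The plan is to reduce all three statements to the elementary Cauchy--Schwarz (duality) inequality between $X_\wea$ and its dual $X^*_\wea$, in exactly the spirit of the proof of Lemma~\ref{lemma:v}. The first step is to remark that the pointwise hypothesis $|\psi(R)|\leqslant c(\psi)\,R$ already forces $\psi\in X^*_\wea$: with $\we(R)=R^{1-\varepsilon}(1+R)^{1+2\varepsilon}$ (see~\eqref{wf:2}) one has $\int_0^{+\infty}|\psi(R)|^2/(\we(R)e^{2\alpha R})\,\rmd R\leqslant c(\psi)^2\int_0^{+\infty}R^{1+\varepsilon}(1+R)^{-1-2\varepsilon}e^{-2\alpha R}\,\rmd R<+\infty$, the integrand being integrable at $R=0$ (exponent $1+\varepsilon>-1$) and at infinity (because of the factor $e^{-2\alpha R}$). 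Hence it is enough to treat a general $\psi\in X^*_\wea$, and throughout I will use the pairing bound $\int_0^{+\infty}|f(R)g(R)|\,\rmd R\leqslant\|f\|_\wea\,\|g\|^*_\wea$, obtained by inserting $\we^{1/2}(R)e^{\alpha R}$ and its reciprocal and applying Schwarz.

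For part~(a), recall from Lemma~\ref{lemma:v} that $v_{\ell,0}(k;\cdot)\in X_\wea$ for every $k\in\overline\Omega_\alpha$; applying the pairing bound to $f=v_{\ell,0}(k;\cdot)\in X_\wea$ and $g=\psi\in X^*_\wea$ gives $\int_0^{+\infty}|\psi(R)\,v_{\ell,0}(k;R)|\,\rmd R\leqslant\|v_{\ell,0}(k;\cdot)\|_\wea\,\|\psi\|^*_\wea<+\infty$, which is the asserted integrability. For part~(b), the same Schwarz trick in the $R'$ variable, combined with the definition~\eqref{f:30'} of $V_\ell^{(w)}$, yields for a.e.\ $R$ the pointwise majorization $|x_\psi(R)|\leqslant\int_0^{+\infty}|V_\ell(R,R')|\,|\psi(R')|\,\rmd R'\leqslant V_\ell^{(w)}(R)\,\|\psi\|^*_\wea$; since $V_\ell^{(w)}\in X_\wea$ this shows $x_\psi\in X_\wea$ with $\|x_\psi\|_\wea\leqslant C(V_\ell)\,\|\psi\|^*_\wea$ by~\eqref{f:31}. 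The identity~\eqref{f:60} then follows by inserting~\eqref{f:27b} on the right-hand side, exchanging the two integrations by Fubini's theorem, and using the symmetry $V_\ell(R,R')=V_\ell(R',R)$ from~\eqref{f:24'}; the inner $R$--integral is then precisely $v_{\ell,0}(k;R')$. Fubini is legitimate because, integrating first over $R'$ as above and then over $R$, one gets $\int_0^{+\infty}\!\rmd R\int_0^{+\infty}\!\rmd R'\,|kRj_\ell(kR)|\,|V_\ell(R,R')|\,|\psi(R')|\leqslant\|\psi\|^*_\wea\,\|k\cdot\,j_\ell(k\cdot)\|^*_\wea\,\|V_\ell^{(w)}\|_\wea$, which is finite for $k\in\overline\Omega_\alpha$ in view of~\eqref{f:j0}--\eqref{f:j1} (with $\we\geqslant w_0^{(\varepsilon)}$, so that one may take $A_{\we}=A_\varepsilon$ as in~\eqref{f:e0}).

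Finally, part~(c) is immediate from the same two estimates: integrating $|\overline\psi(R)\,V_\ell(R,R')\,\psi(R')|$ first over $R'$ gives at most $V_\ell^{(w)}(R)\,\|\psi\|^*_\wea\,|\psi(R)|$, and integrating this over $R$ gives at most $\|\psi\|^*_\wea\,\|V_\ell^{(w)}\|_\wea\,\|\psi\|^*_\wea=C(V_\ell)\,\bigl(\|\psi\|^*_\wea\bigr)^2<+\infty$, so the double integral in~\eqref{f:69'} is absolutely convergent. I do not expect any genuine obstacle in this lemma; the only points demanding a bit of care are the a.e.\ nature of all the functions of $R$ (so that the pointwise bound on $x_\psi$ and the Fubini exchange are phrased for a.e.\ $R$) and the bookkeeping of which factors are estimated via the $X_\wea$-norm and which via the $X^*_\wea$-norm.
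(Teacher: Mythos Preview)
Your proof is correct and follows essentially the same approach as the paper: both reduce everything to the Cauchy--Schwarz pairing between $X_\wea$ and $X^*_\wea$, use the pointwise bound $|x_\psi(R)|\leqslant V_\ell^{(w)}(R)\,\|\psi\|^*_\wea$ from~\eqref{f:30'}, and obtain~\eqref{f:60} via Fubini and the symmetry~\eqref{f:24'}. The only cosmetic difference is that the paper first treats the case $|\psi(R)|\leqslant c(\psi)R$ directly (obtaining constants of the form $c(\psi)/(2\alpha)^{1/2}$) and then remarks at the end that the general $\psi\in X^*_\wea$ case follows by replacing this constant with $\|\psi\|^*_\wea$, whereas you do the reverse---showing first that the pointwise bound forces $\psi\in X^*_\wea$ and then handling only the general case; your ordering is arguably cleaner, and your explicit Fubini justification is a small improvement on the paper's ``one readily obtains''.
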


\begin{proof}
(a) The assumption $|\psi(R)| \leqslant c(\psi) \, R$ implies:
\beq
\int_0^{+\infty} |\psi(R)|\,|v_{\ell,0}(k;R)|\, \rmd R
\leqslant \frac{c(\psi)}{(2\alpha)^{1/2}}\ \|v_{\ell,0}(k;\cdot)\|_{\wea}.
\label{f:vi2}
\eeq
(b) Similarly, formula \eqref{f:30'} implies:
\beq
|x_\psi(R)|\leqslant \int_0^{+\infty} |V_\ell(R,R')|\ |\psi(R')|\, \rmd R'
\leqslant \frac{c(\psi)}{(2\alpha)^{1/2}} \ V_\ell^{(\we)}(R),
\label{f:vi3}
\eeq
and therefore, in view of \eqref{f:31}:
\beq
\|x_\psi\|_{\wea} \leqslant
\frac{c(\psi)}{(2\alpha)^{1/2}}\ C(V_\ell).
\label{f:vi4}
\eeq
Then, in view of \eqref{f:27b} and of the symmetry relation 
$V(R,R')=V(R',R)$, one readily obtains equality  
\eqref{f:60}, together with the following inequality (in view of 
\eqref{f:vi1} and \eqref{f:vi4}):
\beq
\left|\int_0^{+\infty}kR\,j_\ell(kR)\,x_\psi(R)\,\rmd R \right|\leqslant 
\frac{c(\psi)}{(2\alpha)^{1/2}}\ c_\ell \, C(V_\ell)\ A_{\varepsilon}. 
\label{f:vi5}
\eeq
(c) Finally, the double integral in \eqref{f:69'} can be rewritten and majorized
as follows:
\beq
\left|\int_0^\infty\overline{\psi}(R)\, x_\psi(R)\, \rmd R\right|
\leqslant \frac{c(\psi)}{(2\alpha)^{1/2}}\ \|x_\psi\|_{\wea}  
\leqslant \frac{c^2(\psi)}{2\alpha} \, C(V_\ell).
\label{f:vi6} 
\eeq
Note that inequalities similar to \eqref{f:vi2}--\eqref{f:vi6} are obtained 
by using the more general assumption
$\psi \in {X}^*_{\wea}$ and replacing  
$\frac{c(\psi)}{(2\alpha)^{1/2}}$ by $\left\|\psi\right\|_{\wea}^*$.
\end{proof}

\begin{lemma}[Wronskian Lemma]
\label{lemma:vii'}
Let $\psi(R)$ be a function on $\R^+$ which enjoys the following conditions:
\begin{itemize}
\item[\rm (i)] its derivative is absolutely continuous and bounded for $R$ tending to zero; \\[-6pt]
\item[\rm (ii)] it satisfies a bound of the form $|\psi(R)| \leqslant c\,R$; \\[-6pt]
\item[\rm (iii)] for given values of $\ell$ and $k\in \Pi_\alpha$, 
it is a solution of the integro--differential equation:
$D_{\ell,k} \psi(R) =  g \int_0^{+\infty} V_\ell(R,R') \psi(R')\, \rmd R'$, where $g$ is real.
\end{itemize}
Then one has:
\beq
\lim_{R\to +\infty}\left[\overline{\psi} (R)\,\psi'(R)-\overline{\psi'}(R)\,\psi(R)
-({\overline k}^{\,2}-k^2)\int_0^R {\overline \psi}(R')\, \psi(R')\, \rmd R'\right] = 0.
\label{f:wr}
\eeq
\end{lemma}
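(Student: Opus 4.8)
The plan is to carry out the classical Wronskian computation, taking proper care of the nonlocal right--hand side, and to reduce the vanishing of the limit \eqref{f:wr} to the fact that the contribution of the interaction term converges, as $R\to+\infty$, to a \emph{real} quantity --- a fact guaranteed by the Hermitian--type symmetry \eqref{f:24'} of $V_\ell$ together with the reality of $g$.

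First I would set $W(R)\doteq\overline{\psi}(R)\,\psi'(R)-\overline{\psi'}(R)\,\psi(R)$ and write $x_\psi(R)\doteq\int_0^{+\infty}V_\ell(R,R')\,\psi(R')\,\rmd R'$. By hypothesis~(ii), $\psi$ meets the hypotheses of Lemma~\ref{lemma:vi'}, so $x_\psi$ is well--defined as an element of ${X}_{\wea}$, hence locally integrable on $[0,+\infty)$. Consequently, by hypothesis~(iii), $\psi''=-k^2\psi+\ell(\ell+1)R^{-2}\psi+g\,x_\psi$ is locally integrable; since by~(i) $\psi'$ is absolutely continuous (so $\psi\in C^1$), $W$ is locally absolutely continuous with $W'(R)=\overline{\psi}(R)\,\psi''(R)-\overline{\psi''}(R)\,\psi(R)$ for almost every $R>0$. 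Substituting the equation of~(iii) and its complex conjugate --- this is where one uses that $g$ and $V_\ell$ are real, so that $\overline{\psi''}=-\overline{k}^{\,2}\overline{\psi}+\ell(\ell+1)R^{-2}\overline{\psi}+g\,\overline{x_\psi}$ --- the singular centrifugal terms cancel identically and one obtains
\[
W'(R)=(\overline{k}^{\,2}-k^2)\,|\psi(R)|^2+g\bigl[\,\overline{\psi}(R)\,x_\psi(R)-\psi(R)\,\overline{x_\psi(R)}\,\bigr],
\]
with both terms on the right locally integrable.

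Next I would integrate this identity from $0$ to $R$. The boundary term at the origin vanishes: by~(ii) $\psi(0)=0$, while by~(i) $\psi'$ stays bounded as $R\to0$, so $\lim_{R\to0^+}W(R)=0$. This yields
\[
W(R)-(\overline{k}^{\,2}-k^2)\!\int_0^R\!\overline{\psi}(R')\,\psi(R')\,\rmd R'
=g\!\int_0^R\!\bigl[\,\overline{\psi}(R')\,x_\psi(R')-\psi(R')\,\overline{x_\psi(R')}\,\bigr]\rmd R',
\]
so that \eqref{f:wr} amounts to showing that the right--hand side tends to $0$ as $R\to+\infty$. By Lemma~\ref{lemma:vi'}(c) --- estimate \eqref{f:69'} applied to the present $\psi$ --- the double integral $J\doteq\int_0^{+\infty}\!\rmd R\int_0^{+\infty}\!\rmd R'\,\overline{\psi}(R)\,V_\ell(R,R')\,\psi(R')$ is absolutely convergent; Fubini, the symmetry $V_\ell(R,R')=V_\ell(R',R)$ and the reality of $V_\ell$ then give $\overline{J}=J$, i.e.\ $J\in\R$. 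Absolute convergence of the double integral also yields, by dominated convergence in the outer variable, $\int_0^R\overline{\psi}(R')\,x_\psi(R')\,\rmd R'\to J$ and $\int_0^R\psi(R')\,\overline{x_\psi(R')}\,\rmd R'\to\overline{J}=J$ as $R\to+\infty$. Hence the right--hand side converges to $g\,(J-\overline{J})=0$, which is exactly \eqref{f:wr}.

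The only genuinely delicate point is the control of the interaction term: one must verify not merely that the relevant double integral converges, but that it is \emph{real}, which is precisely the content of the symmetry relation \eqref{f:24'} combined with $g\in\R$ --- the nonlocal counterpart of the elementary cancellation valid for real local potentials. Everything else (the cancellation of the centrifugal terms, the vanishing of $W$ at the origin, and the interchange of the limit $R\to+\infty$ with the inner integration) is routine, relying only on hypotheses~(i)--(ii) and on the Schwarz--inequality bounds already established in Lemmas~\ref{lemma:vi} and~\ref{lemma:vi'}.
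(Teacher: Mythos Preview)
Your argument is correct and follows essentially the same route as the paper: compute the derivative of the Wronskian from the integro--differential equation (using $g\in\R$ and reality of $V_\ell$), integrate from $0$ to $R$ with the boundary term vanishing via (i)--(ii), and then kill the interaction contribution in the limit $R\to+\infty$ by combining the absolute convergence from Lemma~\ref{lemma:vi'}(c) with the symmetry $V_\ell(R,R')=V_\ell(R',R)$. The only cosmetic difference is that the paper phrases the last step as the antisymmetric integrand $V_\ell(R,R')[\overline{\psi}(R)\psi(R')-\overline{\psi}(R')\psi(R)]$ integrating to zero over the full quadrant by symmetry, whereas you phrase it as the double integral $J$ being real; these are the same observation.
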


\begin{proof}
Equation \eqref{f:24} directly implies the following equalities for $g$ real:
\beq
\begin{split}
& \overline{\psi}(R)\,\psi''(R)-\overline{\psi''}(R)\,\psi(R)+(k^2-{\overline k}^{\,2}){\overline \psi}(R)\,\psi(R)
=\overline{\psi}(R)\,[D_{\ell,k}\psi](R)-[\overline{D_{\ell,k}\psi}](R)\,\psi(R) \\
&\quad =g \int_0^{+\infty} V_\ell(R,R') \left[\overline{\psi}(R)\,\psi(R')-\overline{\psi}(R')\,\psi(R)\right]\, \rmd R'.
\end{split}
\label{f:wr1}
\eeq
We note that, in view of Lemma \ref{lemma:vi'} (b) and (c) (and by taking condition (ii) into account),
the r.h.s. of the latter is well--defined as an integrable function $I(R)$ on $[0,+\infty)$,
and that, in view of the symmetry condition on the potential $V_\ell(R,R')$,
one then has $\lim_{\widehat{R} \to +\infty} \int_0^{\widehat{R}} I(R)\,\rmd R =0$.
Therefore, by integrating Eq. \eqref{f:wr1} side by side over $R$ between $0$ and $\widehat{R}$
and taking into account the fact that $\psi(0)=0$ with $\psi'(0)$ bounded, one readily obtains 
Eq. \eqref{f:wr}.
\end{proof}

\subsubsection{Homogeneous integral equation and bound state solutions} %3.3.2 
\label{subsubse:homogeneous}

We shall now focus on the basic relationship between 
(i) the solutions of the homogeneous Fredholm equation
$gL_\ell(k)x_\ell=x_\ell$ associated with the zeros 
$k^{(j)}(\ell,g)$ of $\sigma_\ell$
in the closed upper half--plane of $k$, and
(ii) the bound state solutions of the corresponding
nonlocal Schr\"odinger equation.
It is worthwhile to study this relationship specially 
\emphsl{for real values of the coupling} $g$, which 
correspond to the Hermitian character
of the Hamiltonian \eqref{f:19}. The results are
described by the following Theorem \ref{theorem:7}, 
whose proof relies basically on Lemmas \ref{lemma:vi},
\ref{lemma:vii}, \ref{lemma:vi'}, and \ref{lemma:vii'}.

\skd

\begin{theorem}
\label{theorem:7}
{\rm (a)} Let $x_{\ell,k}\in {X}_{\wea}$ 
be a non--zero solution of the homogeneous integral equation:
\beq
gL_\ell(k)\,x_{\ell,k}=x_{\ell,k},
\label{f:61}
\eeq
for fixed values of $\ell$ (non--negative integer), $g$ real, 
and $k$ such that $\Imag k\geqslant 0$, $k\neq 0$; 
then there exists a corresponding non--zero solution $\psi_{\ell,k}(R)$ 
of the Schr\"odinger--type integro--differential equation 
\eqref{f:24}, which is defined by
\beq
\psi_{\ell,k}(R)= g \int_0^{+\infty}G_\ell(k;R,R')\ x_{\ell,k}(R')\,\rmd R',
\label{f:62}
\eeq
and enjoys the following properties:
\begin{itemize}
\addtolength{\itemsep}{0.1cm}
\item[\rm (i)] $\psi_{\ell,k}'(R)$ is absolutely continuous and bounded in $[0,+\infty)$;
\item[\rm (ii)] $\psi_{\ell,k}(0)=0$ and there exists a constant $c(\psi)$
such that $|\psi(R)| \leqslant c(\psi)\ R$;
\item[\rm (iii)] $\|\psi_{\ell,k}\| 
\doteq \left[\int_0^{+\infty}|\psi_{\ell,k}(R)|^2\,\rmd R\right]^{1/2} < +\infty$.
\end{itemize}
Moreover, there holds the following inversion formula:
\beq
x_{\ell,k}(R)=\int_0^{+\infty}V_\ell(R,R')\ \psi_{\ell,k}(R')\,\rmd R'.
\label{f:63}
\eeq
{\rm (b)} If $k$ is such that $0\leqslant\Imag k \leqslant \alpha$, there is a finite constant  
$b_{\ell,k}$ such that:
\beq
b_{\ell,k}= \int_0^{+\infty}kR\,j_\ell(kR)\,x_{\ell,k}(R)\,\rmd R
=\int_0^{+\infty}\psi_{\ell,k}(R)\,v_{\ell,0}(k;R)\,\rmd R.
\label{f:63'}
\eeq
{\rm (c)} The respective cases $\Imag k > 0$ (ordinary bound states) 
and $k$ real, $k\neq 0$ (bound states embedded in the continuum
or ``spurious bound states") are distinguished from each other
by the following additional properties:
\begin{itemize}
\addtolength{\itemsep}{0.1cm}
\item[\rm (i)] if $\Imag k >0$, one necessarily has $\Real k =0$;
moreover, the solution $|\psi_{\ell,k}|$  satisfies the following 
global majorization:
\beq
|\psi_{\ell,k}(R)| \leqslant |g|\,{c_\ell}\,c_\ell^{(1)}\,A_{\varepsilon}
\|x_{\ell,k}\|_{\wea} \ R\ e^{-\min(\alpha,\Imag k)R};
\label{f:64}
\eeq
and
$|\psi'_{\ell,k}(R)|$ also tends to zero for $R$ tending to $+\infty$;
\item[\rm (ii)] if $\Imag k =0$, $k\neq 0$, there holds the following majorization: 
\beq
|\psi_{\ell,k}(R)| \leqslant |g|\,\widehat{c}_{\ell,k,\varepsilon}\
\|x_{\ell,k}\|_{\wea}\ R\ e^{-\alpha R}, 
\label{f:64'}
\eeq
where
$\widehat{c}_{\ell,k,\varepsilon}$ denotes a suitable constant,
and the previous relations \eqref{f:63'} become orthogonality relations, 
namely there holds the implication:
\beq
\Imag k=0 \quad \Longrightarrow \quad b_{\ell,k} =0. 
\label{f:63"}
\eeq
\end{itemize}
{\rm (d)} Conversely, if for fixed values of $\ell$ and $k$ 
($\ell \geqslant 0$, $\Imag k \geqslant 0$, $k\neq 0$), 
there exists a solution $\psi_{\ell,k}(R)$ of the  
integro--differential equation \eqref{f:24} which satisfies
properties (i), (ii), and (iii) listed in (a), then 
Eq. \eqref{f:63} defines a corresponding solution $x_{\ell,k}(R)$ in
${X}_{\wea}$ of the homogeneous equation \eqref{f:61}; moreover,
$\psi_{\ell,k}$ is reconstructed from $x_{\ell,k}$ by Eq. \eqref{f:62}
and all the properties described under (b) and (c) are valid.
\end{theorem}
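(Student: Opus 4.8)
The four assertions revolve around the two maps $x\mapsto\psi_{x;\ell,k}$ defined by \eqref{xtopsi}--\eqref{f:62} and $\psi\mapsto x_\psi\doteq\int_0^{+\infty}V_\ell(\cdot\,,R')\,\psi(R')\,\rmd R'$ introduced in Lemma \ref{lemma:vi'}, which the four preliminary lemmas are designed to make mutually inverse on the relevant solution spaces. The plan is therefore: read (a) and (b) off Lemmas \ref{lemma:vi}, \ref{lemma:vii} and \ref{lemma:vi'} together with the Green--function identity \eqref{f:27c'} and the algebraic identity \eqref{psitoLx}; deduce (c) from the Wronskian Lemma \ref{lemma:vii'} combined with the asymptotic information in Lemma \ref{lemma:vii}(b)--(c); and obtain (d) by a uniqueness argument for the homogeneous equation $D_{\ell,k}u=0$.

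\emph{Parts (a) and (b).} I would set $\psi_{\ell,k}\doteq\psi_{x_{\ell,k};\ell,k}$ as in \eqref{f:62}; Lemma \ref{lemma:vii} then yields immediately its well--definedness for $\Imag k\geqslant-\alpha$, property (i) (absolute continuity and boundedness of $\psi'_{\ell,k}$, from Lemma \ref{lemma:vii}(c)), and, via the bound \eqref{f:64"}, both $\psi_{\ell,k}(0)=0$ and $|\psi_{\ell,k}(R)|\leqslant c(\psi)\,R$, i.e. property (ii). Since $G_\ell$ is a Green function of $D_{\ell,k}$ (identity \eqref{f:27c'}; alternatively one differentiates the explicit representation \eqref{e:15} by variation of parameters), one gets $D_{\ell,k}\psi_{\ell,k}=g\,x_{\ell,k}$; the homogeneous equation \eqref{f:61} together with \eqref{psitoLx} then gives $x_{\ell,k}=g\,L_\ell(k)x_{\ell,k}=\int_0^{+\infty}V_\ell(\cdot\,,R')\psi_{\ell,k}(R')\,\rmd R'$, which is the inversion formula \eqref{f:63} and also proves $\psi_{\ell,k}\not\equiv0$ (else $x_{\ell,k}\equiv0$); inserting \eqref{f:63} back yields the Schr\"odinger--type equation \eqref{f:24}. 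Property (iii) is immediate from \eqref{f:64"} when $\Imag k>0$, and for real $k$ it will follow from (c). For (b), the finiteness of $b_{\ell,k}$ when $0\leqslant\Imag k\leqslant\alpha$ is the integrability statement of Lemma \ref{lemma:vi}(a), and since $|\psi_{\ell,k}(R)|\leqslant c(\psi)R$ and $x_{\psi_{\ell,k}}=x_{\ell,k}$, Lemma \ref{lemma:vi'}(b) turns \eqref{f:60} into the second equality of \eqref{f:63'}.

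\emph{Part (c).} I would apply the Wronskian Lemma \ref{lemma:vii'} to $\psi=\psi_{\ell,k}$ (with $g$ real), whose hypotheses are exactly the properties established in (a). If $\Imag k>0$, both $\psi_{\ell,k}$ and $\psi'_{\ell,k}$ decay exponentially (by \eqref{f:64"} and \eqref{f:64''}), so the boundary terms in \eqref{f:wr} vanish and $\psi_{\ell,k}\in L^2$; \eqref{f:wr} then collapses to $({\overline k}^{\,2}-k^2)\,\|\psi_{\ell,k}\|^2=0$, which, as $\psi_{\ell,k}\not\equiv0$, forces ${\overline k}^{\,2}=k^2$, hence $\Real k=0$; the majorization \eqref{f:64} and the decay of $|\psi'_{\ell,k}|$ are then read off from Lemma \ref{lemma:vii}. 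If $\Imag k=0$, $k\neq0$, then ${\overline k}^{\,2}-k^2=0$, so \eqref{f:wr} says $\overline{\psi_{\ell,k}}\psi'_{\ell,k}-\overline{\psi'_{\ell,k}}\psi_{\ell,k}\to0$; substituting the radiation limit \eqref{f:vii"}, i.e. $\psi'_{\ell,k}=\rmi k\psi_{\ell,k}+o(1)$, gives $2\rmi k\,|\psi_{\ell,k}(R)|^2\to0$, hence $|\psi_{\ell,k}(R)|\to0$; but by Lemma \ref{lemma:vii}(b) one has $\psi_{\ell,k}(R)=-\rmi g\,b_{\ell,k}\,Rh_\ell^{(1)}(kR)+O(e^{-\alpha R})$ with $|Rh_\ell^{(1)}(kR)|\to|k|^{-1}\neq0$ for real $k$, whence $b_{\ell,k}=0$, which is \eqref{f:63"}; feeding $b_{\ell,k}=0$ into \eqref{f:viia} gives \eqref{f:64'} and thereby also property (iii) in the real--$k$ case.

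\emph{Part (d) and the main obstacle.} Conversely, given $\psi_{\ell,k}\not\equiv0$ solving \eqref{f:24} with properties (i)--(iii), put $x_{\ell,k}\doteq\int_0^{+\infty}V_\ell(\cdot\,,R')\psi_{\ell,k}(R')\,\rmd R'$, which lies in $X_{\wea}$ by Lemma \ref{lemma:vi'}(b). Both $\psi_{\ell,k}$ and $\widetilde\psi\doteq\psi_{x_{\ell,k};\ell,k}$ satisfy $D_{\ell,k}u=g\,x_{\ell,k}$, so their difference solves $D_{\ell,k}u=0$, vanishes at $R=0$ with bounded derivative there, and is therefore a multiple $c\,Rj_\ell(kR)$ of the regular solution. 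A comparison of asymptotics forces $c=0$: when $\Imag k>0$, $Rj_\ell(kR)$ grows like $e^{R\Imag k}$ whereas $\widetilde\psi$ decays, so $c\neq0$ would violate $\psi_{\ell,k}\in L^2$; when $k$ is real, writing $Rj_\ell(kR)$ and the leading term of $\widetilde\psi$ in terms of $Rh_\ell^{(1)}(kR)$ and its complex conjugate shows that the amplitudes of the two non--decaying oscillations $e^{\pm\rmi kR}$ in $\psi_{\ell,k}$ must both vanish for $\psi_{\ell,k}\in L^2$, again giving $c=0$. Hence $\psi_{\ell,k}=\widetilde\psi$ is given by \eqref{f:62}, and \eqref{psitoLx} then yields $g\,L_\ell(k)x_{\ell,k}=x_{\ell,k}$, i.e. \eqref{f:61}, with $x_{\ell,k}\not\equiv0$ (since $\psi_{\ell,k}\not\equiv0$); the remaining statements of (b)--(c) apply to this $x_{\ell,k}$. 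The genuinely delicate point throughout is the spurious--bound--state case $\Imag k=0$: establishing the orthogonality \eqref{f:63"} (and dually $c=0$ in (d)) requires marrying the Wronskian identity to the precise oscillatory asymptotics of $\psi_{\ell,k}$ and exploits that on the real $k$--axis the outgoing solution $Rh_\ell^{(1)}(kR)$ neither grows nor decays, so that every bound of Lemma \ref{lemma:vii} that is uniform both as $R\to0$ and as $R\to+\infty$ is actually needed.
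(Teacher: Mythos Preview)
Your proposal is correct and follows essentially the same route as the paper: parts (a)--(b) are read off Lemmas \ref{lemma:vii} and \ref{lemma:vi'} together with the Green--function identity and \eqref{psitoLx}; part (c) combines the Wronskian Lemma \ref{lemma:vii'} with the asymptotics of Lemma \ref{lemma:vii}; and part (d) proceeds by showing that $\psi_{\ell,k}-\widetilde\psi$ is a multiple of $Rj_\ell(kR)$ and then killing that multiple via the $L^2$ condition, exactly as the paper does in \eqref{f:78}--\eqref{f:75'}.

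One small step is glossed over in your treatment of (c)(ii): when you substitute $\psi'_{\ell,k}=\rmi k\psi_{\ell,k}+o(1)$ into the Wronskian and conclude $2\rmi k|\psi_{\ell,k}(R)|^2\to0$, you are implicitly using that $|\psi_{\ell,k}(R)|$ stays bounded as $R\to\infty$ (otherwise the cross terms $\overline{\psi_{\ell,k}}\cdot o(1)$ need not vanish). The paper isolates this as a separate step, obtaining the uniform bound from \eqref{f:64"} and \eqref{f:viia} together with the bound \eqref{a:0'} on $kRh_\ell^{(1)}(kR)$; you should insert this before invoking the radiation limit.
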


\begin{proof}
As proved in Lemma \ref{lemma:vii}, the fact that the function $\psi_{\ell,k}$ 
is well--defined by formula \eqref{f:62} and satisfies properties (i) and (ii) 
listed in (a) is simply ensured by the assumption that 
$x_{\ell,k}\in X_{\wea}$.
It follows that the action on $\psi_{\ell,k}$ of
the second--order differential operator $D_{\ell,k}$
is well--defined and since 
$D_{\ell,k}(R) \, G_\ell(k;R,R')= \delta(R-R')$,  
Eq. \eqref{f:62} yields: 
\beq
D_{\ell,k}\  \psi_{\ell,k} = g \, x_{\ell,k}.
\label{f:66}
\eeq
In particular, since $x_{\ell,k}$ is non--zero,
$\psi_{\ell,k}$ is also non--zero. 
On the other hand, property (ii) of $\psi_{\ell,k}$ implies
that this function satisfies the assumptions of 
Lemma \ref{lemma:vi'}, and therefore (in view of Lemma \ref{lemma:vi'} (b)) 
the integral $\int_0^{+\infty} V_\ell(R,R') \, \psi_{\ell,k}(R')\, \rmd R'$ 
is convergent (for a.e. $R$) and defines an element of $X_{\wea}$. 
Now, by plugging the
expression \eqref{f:62} of $\psi_{\ell,k}$ in this integral, 
applying the Fubini theorem, and recognizing the definition
\eqref{f:27c} of $L_\ell(k;R,R')$, one obtains the following equality:
\beq
g \, [L_\ell(k) x_{\ell,k}](R) = 
\int_0^{+\infty} V_\ell(R,R') \,\psi_{\ell,k}(R')\, \rmd R'.
\label{f:65}
\eeq
Then, in view of Eqs. \eqref{f:66} and \eqref{f:65},
the assumption \eqref{f:61} of (a) implies the two equalities 
\eqref{FrSch} (for $\psi= \psi_{\ell,k}$); in other words, 
the integro--differential equation
\eqref{f:24} and the inversion formula \eqref{f:63} are satisfied
respectively by $\psi_{\ell,k}$ and $ x_{\ell,k}$.

\vskip 0.2cm

Another result of Lemma \ref{lemma:vi'} (b) (namely, Eq. \eqref{f:60})
implies that $\psi_{\ell,k}$ and $ x_{\ell,k}$ satisfy
our statement (b).

\vskip 0.2cm

The proof of the statements listed in (c) will rely crucially
on the Wronskian lemma (Lemma \ref{lemma:vii'}). We distinguish the two cases:

\vskip 0.2cm

(i) $\Imag k>0$:
The bound \eqref{f:64} coincides with 
\eqref{f:64"}, which has been established in
Lemma \ref{lemma:vii} under the same assumption for $x_{\ell,k}$. 
As a by--product, property (iii) of $\psi_{\ell,k}$ 
is established for the case $\Imag k>0$. Besides, Eq. \eqref{f:64''} 
implies that the function $\psi'_{\ell,k}(R)$ also tends to zero 
for $R$ tending to infinity. \\
Moreover, the uniform boundedness of $\psi'_{\ell,k}$ together with
the bound \eqref{f:64} on $\psi_{\ell,k}$ imply that the Wronskian
$W(R) \doteq 
[\overline{\psi_{\ell,k}}(R)\,\psi_{\ell,k}'(R)-
\overline{\psi_{\ell,k}'}(R)\,\psi_{\ell,k}(R)]$ 
tends to zero for $R$ tending to infinity.
Now, since $\psi_{\ell,k}$ is a solution of  
\eqref{f:24}, we can apply Lemma \ref{lemma:vii'}; 
Eq. \eqref{f:wr} then entails that $({\overline k}^2-k^2) 
\int_0^{+\infty}{\overline \psi}_{\ell,k}(R)\,\psi_{\ell,k}(R)\,\rmd R =0$,
which is only possible (for $\Imag k >0$) if $\Real k =0$, since 
$\psi_{\ell,k}$ is non--zero. 

\vskip 0.2cm

(ii) $\Imag k=0$: The following steps can be taken:

\sku

\begin{itemize}
\item[(1)] $|\psi_{\ell,k}(R)| $ is uniformly bounded.
This results from the inequalities  
\eqref{f:64"} and \eqref{f:viia} of 
Lemma \ref{lemma:vii} 
together with the bound \eqref{a:0'} on
$|kRh_\ell^{(1)}(kR)|$. 
\item[(2)] $\lim_{R\to \infty} |\psi_{\ell,k}(R)| =0$.
This is implied (for $k\neq 0$) by the following two statements.
At first, Eq. \eqref{f:wr} of Lemma \ref{lemma:vii'} now implies 
(since $k=\overline k$) that $\lim_{R\to \infty} W(R)=0$.
Secondly, in view of (1) and of
Eq. \eqref{f:vii"} of 
Lemma \ref{lemma:vii}  
(for $\Imag k=0$), it follows that the expression
$[2\rmi k {\overline\psi}_{\ell,k}(R) \, \psi_{\ell,k}(R)-W(R)]
={\overline\psi}_{\ell,k}(R)[\rmi k \psi_{\ell,k} -\psi'_{\ell,k}(R)]
-\psi_{\ell,k}(R)[\overline{\rmi k \psi_{\ell,k}}-
\overline{\psi'_{\ell,k}(R)}]$
tends itself to zero for $R$ tending to infinity. 
\item[(3)] There holds the orthogonality relation \eqref{f:63"},
i.e. $b_{\ell,k}=0$.
In fact, it also follows from Eq. \eqref{f:vii'} of Lemma 
\ref{lemma:vii} that
the difference $|\psi_{\ell,k}(R)| -|g||b_{\ell,k}| |Rh_\ell^{(1)}(kR)|$
tends to zero for $R$ tending to infinity.
But we know from \eqref{a:6'} that
in the limit $R\to \infty$, the function  
$kRh_\ell^{(1)}(kR)$ behaves like 
$\exp\{\rmi[kR-(\ell+1)\frac{\pi}{2}]\}$. 
Therefore the validity of (2) ($|\psi_{\ell,k}(R)| \to 0$)
necessitates that $b_{\ell,k}=0$.
\end{itemize}

\vskip 0.2cm
Let us now consider again Eq. \eqref{f:viia} of Lemma \ref{lemma:vii}.
Since $b_{\ell,k}=0$, this bound reduces to the following one: 
\beq
\left|\psi_{\ell,k}(R)\right|
\leqslant \mathrm{c}_\ell(k)\, |g|\, \|x\|_{\wea}
\ e^{-\alpha R} \qquad \mathrm{for} \quad R\geqslant 1.
\label{f:71}
\eeq
But, in view of Eq. \eqref{f:64"} of Lemma \ref{lemma:vii} (for $\Imag k=0$),
$\psi_{\ell,k}(R)$ also satisfies (for all $R\geqslant 0$) the bound
\beq
|\psi_{\ell.k}(R)|\leqslant |g|\, \widehat{c}_{\ell,\varepsilon} \, R\,\|x_{\ell,k}\|_{\wea}.
\label{f:72}
\eeq
Then, it is clear that the two bounds \eqref{f:71} and 
\eqref{f:72} can be replaced by a unique bound of the form 
\eqref{f:64'}, valid for all $R\geqslant 0$.
The latter also implies property (iii) of $\psi_{\ell,k}$ for the case $\Imag k =0$.

\vskip 0.5cm

\noindent
\emph{Proof of {\rm (d)}.} Let $\psi(R)\equiv\psi_{\ell,k}(R)$ be a solution of  
Eq. \eqref{f:24} satisfying properties (i), (ii), (iii) of (a); 
one then has:
$D_{\ell,k} \psi(R) = g\int_0^{+\infty} V_\ell(R,R')\,\psi(R')\,\rmd R'\doteq g \, x_\psi(R)$,
where $x_\psi$ belongs to $X_{\wea}$ in view of 
Lemma \ref{lemma:vi'} (b). If we now introduce the function 
$\widehat{\psi}(R)\doteq g\,\int_0^\infty G_\ell(k;R,R')\,x_\psi(R')\,\rmd R'$,
to which the previous study of \eqref{f:62} can be applied, 
we can assert that this function satisfies equation \eqref{f:66}, namely,
$D_{\ell,k}\widehat{\psi}(R) = g\,x_\psi(R)$, together with
properties (i), (ii) of (a), property (iii) being only obtained 
for the case $\Imag k >0$. Therefore the function 
$y(R)\doteq [\psi(R) -\widehat{\psi}(R)]$, which is such that
$D_{\ell,k} \, y(R) =0$ and satisfies properties (i) and (ii) of (a),
is a constant multiple of $kRj_\ell(kR)$; this multiple vanishes if
property (iii) of (a) is also satisfied and therefore, for the case $\Imag k>0$, 
we readily obtain that
\beq
\psi(R) = \widehat{\psi}(R) = g \int_0^\infty G_\ell(k;R,R')\,x_\psi(R')\,\rmd R'.
\label{f:73'}
\eeq
It requires a further argument to obtain the corresponding result for the case
$\Imag k=0$, since we can only write \emphsl{a priori} that: 
\beq
\psi(R) = g\int_0^\infty G_\ell(k;R,R')\,x_\psi(R')\,\rmd R' + \rho \, kRj_\ell(kR),
\label{f:78}
\eeq
where $\rho$ denotes a constant factor. Now, we can again deduce from Lemma \ref{lemma:vii} 
(applied to the function $\widehat{\psi}(R)$) that $\psi(R)$ admits an asymptotic behaviour 
of the following form (given by \eqref{f:vii'}):
\beq
\lim_{R\to \infty}\,
\left|\psi(R)-\rho\,kR j_\ell(kR)+\rmi \, g\,b\,Rh_\ell^{(1)}(kR)\right| =  0.
\label{f:75'}
\eeq
The proof that both constants $\rho $ and $b$ also vanish in this case relies on 
the fact that $\psi (R)$ is assumed to satisfy property (iii). 
In fact, the functions $kRh_\ell^{(1)}(kR)$ and $kR j_\ell(kR)$
behave at infinity, respectively, as $\exp\{\rmi[kR-(\ell+1)\frac{\pi}{2}]\}$ 
(in view of \eqref{a:6'}) and $\frac{\pi}{2}\cos[kR-(\ell+1)\frac{\pi}{2}]$ 
(see, e.g., \cite{Tichonov}), and therefore the finiteness of the quantity  
$\int_0^\infty |\psi(R)|^2 \rmd R$ is consistent with \eqref{f:75'} if and 
only if $\rho=b=0$. To conclude, we have obtained that Eq. \eqref{f:73'} is valid
for all cases ($\Imag k \geqslant 0$).
First, this implies that the function $x_\psi$ is non--zero and, moreover,
by applying to both sides of Eq. \eqref{f:73'} the integral operator 
associated with $V_\ell(R,R')$, one gets in view of \eqref{f:27c} (in operator form):
\beq
x_\psi  = V_\ell \, \psi = g \, V_\ell \, G_\ell(k) \, x_\psi = g L_\ell(k) x_\psi.
\label{f:76''}
\eeq
So, by starting from the assumptions of point (d), we have
been able to derive Eqs. \eqref{f:73'} and \eqref{f:76''}, namely,
we have reproduced the basic assumptions \eqref{f:61} and \eqref{f:62}
of (a) for all cases $\Imag k \geqslant 0$, which ends the proof of the theorem. 
\end{proof}

\skd

\begin{remark}
Had property (iii) not been imposed on $\psi$ in the assumptions of point (d),
one would have obtained from the general form \eqref{f:78}
(by applying the operator $V_\ell$ to both sides of the latter
and also accounting for \eqref{f:27b}):
\beq
x_\psi(R) = g\int_0^\infty L_\ell(k;R,R')\, x_\psi(R')\,\rmd R' + \rho \, v_{\ell,0}(k;R). 
\label{f:77}
\eeq
In particular, the latter form is relevant for $\rho =1$,
since it coincides with Eq. \eqref{f:27a}, whose solution 
$x_\psi(R) \doteq v_\ell(k,g;R)$ will be used below
for describing the \emphsl{scattering solution} of 
Eq. \eqref{f:24} (see Theorem \ref{theorem:7'} below).
\label{rem:1}
\end{remark}

\subsubsection{Inhomogeneous integral equation and scattering solutions; the partial 
scattering amplitude $\boldsymbol{T}_{\boldsymbol{\ell}}\boldsymbol{(k;g)}$} %3.3.3
\label{subsubse:inhomogeneous}

\begin{theorem}
\label{theorem:3'}
Being given any potential $V$ in a class $\cN_{\wea}$ 
and any given complex number $g$, let $\Omega_{\alpha,\ell}(g)$
be the set of all points $k\in \Omega_\alpha$ such that
the corresponding Fredholm--Smithies denominator $\sigma_\ell(k,g)$ 
of the resolvent $R_\ell(k;g)$ does not vanish.
Then the inhomogeneous integral equation
\beq
[1-g L_\ell(k)]v_\ell(k,g;\cdot)=
v_{\ell,0}(k;\cdot) \qquad (\ell=0,1,2,\ldots),
\label{f:71'}
\eeq
admits for every $g\in \C$ and $k\in\Omega_{\alpha,\ell}(g)$  
a unique solution $v_\ell(k,g;R)$ in ${X}_{\wea}$, 
which is well--defined by the formula
\beq
v_\ell(k,g;\cdot)= R_{\ell}(k;g) \, v_{\ell,0}(k;\cdot).
\label{f:71"}
\eeq 
Furthermore, the function $(k,g) \mapsto v_\ell(k,g;\cdot)$ is meromorphic in
$\Omega_\alpha \times \C$, and for any $g$ in $\C$, the function 
$k \mapsto v_\ell(k,g;\cdot)$ is holomorphic in 
$\Omega_{\alpha,\ell}(g)$, in the sense of the vector--valued functions
taking their values in $X_{\wea}$. 
\end{theorem}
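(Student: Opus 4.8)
The plan is to read off all three assertions from the Fredholm--Smithies apparatus already built for $\sigma_\ell$ and $N_\ell$ (Theorems~\ref{theorem:4}--\ref{theorem:6"}) together with the regularity of the free term $v_{\ell,0}$ (Lemma~\ref{lemma:v}); no new estimates are needed. Recall that $\Omega_\alpha\subset\Pi_\alpha$, and that for the weight $w=\wea$ the operators $L_\ell(k)$, hence also $\sigma_\ell(k;g)$ and $N_\ell(k;g)$, are holomorphic across $k=0$ (Corollary~\ref{lemma:v}--\ref{theorem:4}), so all the cited holomorphy statements hold on all of $\Omega_\alpha$. First I would settle existence and uniqueness: if $k\in\Omega_{\alpha,\ell}(g)$ then $\sigma_\ell(k,g)\neq0$, so by the Fredholm--Smithies alternative (equivalently, by Theorem~\ref{theorem:6"}) the operator $\I-gL_\ell(k)$ is invertible on $X_{\wea}$ with bounded inverse $R_\ell(k;g)$; hence \eqref{f:71'} has the unique solution $v_\ell(k,g;\cdot)=R_\ell(k;g)\,v_{\ell,0}(k;\cdot)$, that is, formula \eqref{f:71"}. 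This vector lies in $X_{\wea}$ because $v_{\ell,0}(k;\cdot)\in X_{\wea}$ for every $k\in\overline\Omega_\alpha$ (Lemma~\ref{lemma:v}, applied with the weight $\wea$, which satisfies a condition of type \eqref{f:j1}) and $R_\ell(k;g)$ is bounded.

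For the analytic dependence I would use the decomposition that follows from $R_\ell=\I+g\,N_\ell/\sigma_\ell$ (see \eqref{f:46} and \eqref{f:46'}):
\[
v_\ell(k,g;\cdot)=v_{\ell,0}(k;\cdot)+\frac{g}{\sigma_\ell(k;g)}\,N_\ell(k;g)\,v_{\ell,0}(k;\cdot).
\]
Here $v_{\ell,0}(k;\cdot)$ is holomorphic in $k$ on $\Omega_\alpha$ and independent of $g$ (Lemma~\ref{lemma:v}); $N_\ell$ is holomorphic in $(k,g)$ on $\Pi_\alpha\times\C\supset\Omega_\alpha\times\C$ as an $\widehat{X}_\wea$-valued (hence bounded-operator-valued) function (Theorem~\ref{theorem:6}); and $\sigma_\ell$ is holomorphic in $(k,g)$ on the same set (Theorem~\ref{theorem:5}) and, by property (c) of that theorem, not identically zero on any slice $\{g\}\times\Omega_\alpha$, so $N_\ell/\sigma_\ell$ defines a meromorphic operator-valued function on $\Omega_\alpha\times\C$, holomorphic off the analytic hypersurface $\{\sigma_\ell=0\}$. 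Applying the composition and multiplication properties of holomorphic operator- and vector-valued functions (the relevant lemmas of Appendix~\ref{appendix:b}) to the product $N_\ell(k;g)\,v_{\ell,0}(k;\cdot)$, then dividing by the nowhere-vanishing scalar $\sigma_\ell(k;g)$ and adding $v_{\ell,0}$, one concludes that $(k,g)\mapsto v_\ell(k,g;\cdot)$ is meromorphic in $\Omega_\alpha\times\C$ and that, for each fixed $g$, $k\mapsto v_\ell(k,g;\cdot)$ is holomorphic precisely on $\Omega_{\alpha,\ell}(g)=\{k\in\Omega_\alpha:\sigma_\ell(k,g)\neq0\}$, in the sense of $X_{\wea}$-valued functions.

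I expect the only delicate point to be the bookkeeping in this last step, namely checking that the Appendix~\ref{appendix:b} lemmas indeed cover (i) the action of a holomorphic operator-valued function on a holomorphic vector-valued function and (ii) the stability of joint holomorphy under division by a nowhere-vanishing holomorphic scalar; all the analytic input about $L_\ell$, $\sigma_\ell$, $N_\ell$ and $v_{\ell,0}$ has already been secured in the preceding theorems and lemma, so nothing further is required.
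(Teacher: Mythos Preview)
Your proposal is correct and follows essentially the same approach as the paper. The paper's proof is slightly more compact in that, rather than unpacking $R_\ell=\I+gN_\ell/\sigma_\ell$, it cites Theorem~\ref{theorem:6"} directly for the meromorphy of $R_\ell(k;g)$ as a bounded-operator-valued function and then invokes Lemma~\ref{lemma:B1'}(ii) for the action on the holomorphic vector $v_{\ell,0}(k;\cdot)$---which is precisely the Appendix~\ref{appendix:b} ingredient you anticipated needing.
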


\begin{proof}
The solution \eqref{f:71"} of \eqref{f:71'}, which follows from
\eqref{f:45}, defines $v_\ell(k,g;\cdot)$ as an element of
$X_{\wea}$ in view of the fact that 
$v_{\ell,0}(k;\cdot)$ belongs to $X_{\wea}$
(see Lemma \ref{lemma:v}, more precisely Corollary \ref{lemma:v}--\ref{theorem:4})
and that $ R_{\ell}(k;g)$ is a bounded operator in 
$X_{\wea}$ (see Theorem \ref{theorem:6"}).
Moreover, the meromorphy properties in $(k,g)$
and the holomorphy properties in $k$ at fixed $g$ of
$R_{\ell}(k;g)$ and $v_{\ell,0}(k;\cdot)$,
established respectively in Theorem \ref{theorem:6"} 
and Lemma \ref{lemma:v}, imply the corresponding properties
for $v_\ell(k,g;\cdot)$ in view of Lemma \ref{lemma:B1'} (ii).
\end{proof}

Let us now define for every $g\in \C$  and $k$ in $\Omega_{\alpha,\ell}(g)$ 
the following functions:
\begin{align}
& \Psi_\ell(k,g;R) = kR j_\ell(kR) + \Phi_\ell(k,g;R), \label{f:74} \\
& \Phi_\ell(k,g;R) = g\int_0^{+\infty} \!\! G_\ell(k;R,R') \, v_\ell(k,g; R')\,\rmd R'. \label{f:74'}
\end{align}
Since $v_\ell(k,g;\cdot)\in X_{\wea}$,
the function $\Phi_\ell(k,g;\cdot)$ is well--defined 
by Lemma \ref{lemma:vii}, and we have:
\beq
D_{\ell,k}\,\Psi_\ell(k,g;R) =
D_{\ell,k}\,\Phi_\ell(k,g;R) = g \, v_\ell(k,g; R)
= g^2 \, [L_\ell(k)\, v_\ell(k,g;\cdot)](R)+g\,v_{\ell,0}(k;R).
\label{f:74"}
\eeq
But, by taking Eqs. \eqref{f:27c} and \eqref{f:74'} into account 
(and interchanging convergent integrals) along with
Eq. \eqref{f:27b} (and Lemma \ref{lemma:v}), the r.h.s. of 
\eqref{f:74"} can be rewritten as follows:
\beq
g\int_0^\infty V_\ell(R,R') \, \Phi_\ell(k,g;R') \,\rmd R'
+g\int_0^\infty V_\ell(R,R') \, kR'j_\ell(kR') \,\rmd R', \nonumber
\eeq
so that (in view of Eqs. \eqref{f:74} and \eqref{f:74"}), one has:
\beq
D_{\ell,k} \, \Psi_\ell(k,g;R) =  
g \int_0^\infty V_\ell(R,R') \, \Psi_\ell(k,g;R') \,\rmd R'.
\label{f:75"}
\eeq
We then conclude that Eq. \eqref{f:74} reproduces the form \eqref{f:25}
of the scattering solution of the Schr\"odinger--type equation \eqref{f:24}.
We shall now see that Lemma \ref{lemma:vii} not only allows one to obtain 
the properties of this solution for $k$ and $g$ real, which have been 
listed under (S--b) and include in
particular the Sommerfeld radiation condition (given in \eqref{f:26}),
but also implies extensions of these properties to the complex domain
$\Omega_{\alpha,\ell}(g)$ of the $k$--plane (for each $g\in \C$). 
Moreover, one will show that there also hold meromorphy properties 
of this solution with respect to $(k,g)$ in the domain 
$\Omega_\alpha \times \C$. This will be the scope of the 
following two theorems.

\skd

\begin{theorem}
\label{theorem:7'}
For every $g\in \C$, $k \in\Omega_{\alpha,\ell}(g)$ and for any non--negative 
integer $\ell$, there exists a solution of Eq. \eqref{f:24} which is of the form
$\Psi_\ell(k,g;R)=kRj_\ell(kR)+\Phi_\ell(k,g;R)$, and satisfies the following 
properties: 
\begin{itemize}
\item[\rm (i)] $\frac{\rmd}{\rmd R}\Psi_\ell(k,g;R)$ is absolutely continuous for all 
$R$ and bounded for $R$ tending to zero; \\[-6pt]
\item[\rm (ii)] the function $\Phi_\ell(k,g;\cdot)$ is expressed in terms of the solution 
$v_\ell(k,g;\cdot)$ of the integral equation \eqref{f:71'} by the formula \eqref{f:74'},
and satisfies a bound of the following form:
\begin{align}
& |\Phi_\ell(k,g;R)| \leqslant |g|\, {\widehat{c}_{\ell,\varepsilon}} \,
\|v_\ell(k,g;\cdot)\|_{\wea} \ R\  e^{-R\Imag k}, \label{f:76} \\
\intertext{in particular one has:}
& \Phi_\ell(k,g;0) =0; \label{f:76'}
\end{align}
\item[\rm (iii)] for $0\leqslant\Imag k <\alpha $, there holds the following limit:
\beq
\lim_{R\to +\infty}\, 
e^{R\Imag k} \ \left[\frac{\rmd}{\rmd R}\Phi_\ell(k,g;R)-\rmi k\Phi_\ell(k,g;R)\right]=0; 
\label{f:76"}
\eeq
\item[\rm (iv)] for $-\frac{\alpha}{2}< \Imag k <\alpha$, there holds the 
following majorization (containing a suitable constant ${\widehat c}_\ell(k,g)$):
\beq
\left|\Phi_\ell(k,g;R)-T_\ell(k;g) \ \rmi kRh_\ell^{(1)}(kR)\right|
\leqslant {\widehat c}_\ell(k,g)\  e^{-\alpha R}\ \max \left({\rm 1},
e^{-2(\Imag k)R}\right),
\label{f:80} \\
\eeq
{where:}
\beq T_\ell(k;g)=-g\int_0^{+\infty} Rj_\ell(kR) \, 
v_\ell(k,g;R)\,\rmd R, \label{f:81}
\eeq
is well--defined as an analytic function of $k$ in the domain $\Omega_{\alpha,\ell}(g)$; \\[-6pt]
\item[\rm (v)] for $k\in\R^+$ and $g\in\R$, the function $\Psi_\ell(k,g;R)$ 
satisfies all the properties listed under {\rm (S--b)} of the scattering 
solution of Eq. \eqref{f:24}, and the corresponding function $T_\ell(k;g)$,
which is the physical partial scattering amplitude, can also be defined as
\beq
T_\ell(k;g)=\lim_{R \to +\infty}
\left|\frac{\Phi_\ell(k,g;R)}{\rmi kRh_\ell^{(1)}(kR)}\right|.
\label{psa}
\eeq
\end{itemize}
\end{theorem}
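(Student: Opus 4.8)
The plan is to obtain all the properties (i)--(v) by specializing the general estimates of Lemma~\ref{lemma:vii} to the particular density $x=v_\ell(k,g;\cdot)$, which belongs to $X_{\wea}$ by Theorem~\ref{theorem:3'}. With this choice the function $\Phi_\ell(k,g;\cdot)$ of~\eqref{f:74'} coincides with the function $\psi_{x;\ell,k}$ of Lemma~\ref{lemma:vii}, and the computation leading to~\eqref{f:75"} already shows that $\Psi_\ell(k,g;R)=kRj_\ell(kR)+\Phi_\ell(k,g;R)$ solves the Schr\"odinger--type equation~\eqref{f:24}. Thus it only remains to transcribe the bounds and asymptotics of Lemma~\ref{lemma:vii}.

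First I would dispose of (i)--(iii). For~(ii): inside $\Omega_\alpha$ one has $\min(\alpha,\Imag k)=\Imag k$ when $\Imag k\geqslant 0$, so the bound~\eqref{f:76} is~\eqref{f:64"} of Lemma~\ref{lemma:vii}(a)(i); when $\Imag k<0$ it is~\eqref{f:65"} of Lemma~\ref{lemma:vii}(a)(ii); in both regimes the exponential factor reads $e^{-R\Imag k}$, and $\Phi_\ell(k,g;0)=0$ follows at once from that bound (equivalently from $G_\ell(k;0,R')=0$ in~\eqref{f:62"}). For~(i): $\frac{\rmd}{\rmd R}\Phi_\ell(k,g;R)=\psi'_{x;\ell,k}(R)$ is absolutely continuous and bounded as $R\to 0$ by Lemma~\ref{lemma:vii}(c), while $\frac{\rmd}{\rmd R}[kRj_\ell(kR)]$ has the same properties because $j_\ell$ is entire with a zero of order $\ell$ at the origin; summing gives~(i). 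For~(iii): the limit~\eqref{f:76"} is precisely~\eqref{f:vii"} of Lemma~\ref{lemma:vii}(c), valid for $0\leqslant\Imag k<\alpha$.

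The heart of the matter is~(iv). With $x=v_\ell(k,g;\cdot)$ the constant $b_{x;\ell,k}$ of~\eqref{f:vii} equals $k\int_0^{+\infty}Rj_\ell(kR)\,v_\ell(k,g;R)\,\rmd R$, so that $g\,b_{x;\ell,k}=-k\,T_\ell(k;g)$ by the definition~\eqref{f:81} (well--definedness of the latter integral being the $X^*_{\wea}\times X_{\wea}$ pairing estimate of Lemma~\ref{lemma:vi}(a)). Substituting this relation into~\eqref{f:vii'} rewrites it as $|\Phi_\ell(k,g;R)-T_\ell(k;g)\,\rmi kRh_\ell^{(1)}(kR)|\to 0$, while the quantitative estimates~\eqref{f:viia}, \eqref{f:viib} furnish the rate $e^{-\alpha R}$ for $\Imag k\geqslant 0$ and $e^{-(\alpha-2|\Imag k|)R}$ for $-\frac{\alpha}{2}\leqslant\Imag k<0$; these two cases merge into the single factor $e^{-\alpha R}\max(1,e^{-2(\Imag k)R})$ of~\eqref{f:80}. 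To see that $T_\ell(k;g)$ is analytic in $k$ on $\Omega_{\alpha,\ell}(g)$ I would write it as the bilinear pairing $T_\ell(k;g)=-g\langle Rj_\ell(k\cdot),v_\ell(k,g;\cdot)\rangle$ of $X^*_{\wea}$ with $X_{\wea}$: the factor $k\mapsto Rj_\ell(k\cdot)$ is a holomorphic $X^*_{\wea}$--valued function (membership from Lemma~\ref{lemma:vi}(a), holomorphy by the same dominated--convergence reasoning as in the last part of Lemma~\ref{lemma:v}), the factor $k\mapsto v_\ell(k,g;\cdot)$ is holomorphic and $X_{\wea}$--valued on $\Omega_{\alpha,\ell}(g)$ by Theorem~\ref{theorem:3'}, and the pairing of two such functions is holomorphic by the vector--valued holomorphy results of Appendix~\ref{appendix:b}.

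Finally~(v): for $k\in\R^+$ and $g\in\R$, condition (i$'$) of~(S--b) is~(i), and condition (ii$'$) is the representation~\eqref{f:74} together with $\Phi_\ell(k,g;0)=0$ and the Sommerfeld radiation condition~\eqref{f:26}, the latter being~\eqref{f:76"} evaluated at $\Imag k=0$ (so $e^{R\Imag k}=1$); since moreover $\Psi_\ell$ solves~\eqref{f:24}, all the requirements listed under~(S--b) hold. The alternative formula~\eqref{psa} follows from~\eqref{f:80}: at $\Imag k=0$ the right--hand side of~\eqref{f:80} is $\widehat c_\ell(k,g)\,e^{-\alpha R}\to 0$, and since $kRh_\ell^{(1)}(kR)\sim\exp\{\rmi[kR-(\ell+1)\frac{\pi}{2}]\}$ by~\eqref{a:6'} its modulus tends to $1$, whence $|\Phi_\ell(k,g;R)/(\rmi kRh_\ell^{(1)}(kR))|\to|T_\ell(k;g)|$. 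I expect the only points requiring care to be the analyticity of $T_\ell$ in $k$ (assembling the Appendix~\ref{appendix:b} results on pairings of vector--valued holomorphic functions) and the bookkeeping of the sign of $\Imag k$ needed to merge the various exponential estimates into the uniform bounds~\eqref{f:76} and~\eqref{f:80}; everything else is a direct transcription of Lemma~\ref{lemma:vii}.
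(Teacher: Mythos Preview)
Your proof is correct and follows essentially the same route as the paper: specialize Lemma~\ref{lemma:vii} with $x=v_\ell(k,g;\cdot)$, so that $\Phi_\ell=\psi_{x;\ell,k}$ and $g\,b_{x;\ell,k}=-k\,T_\ell(k;g)$, and read off (i)--(iv) from the corresponding estimates \eqref{f:64"}, \eqref{f:65"}, \eqref{f:vii"}, \eqref{f:viia}, \eqref{f:viib}; property~(v) then follows by inspection using the asymptotics~\eqref{a:6'}. The only difference is organizational: the paper postpones the analyticity of $k\mapsto T_\ell(k;g)$ to the next theorem (Theorem~\ref{theorem:8}), where it is established via exactly the bilinear-pairing argument (Lemma~\ref{lemma:B2}) that you sketch here.
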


\begin{proof}
By applying Lemma \ref{lemma:vii} with $x(R) = v_\ell(k,g;R)\in {X}_{\wea}$, \,
$\psi_{x;\ell,k}(R) = \Phi_\ell(k,g;R)$ and
$b_{x;\ell,k}= -\frac{k}{g} \, T_\ell(k;g)$, one readily obtains
property (i) (given by (c)) and
properties (ii), (iii), (iv) ( since formulae
\eqref{f:76}, \eqref{f:76"}, \eqref{f:80}, and \eqref{f:81}, 
correspond respectively to 
[\eqref{f:64"}, \eqref{f:65"}], 
\eqref{f:vii"}, [\eqref{f:viia}, \eqref{f:viib}] and \eqref{f:vii}).
Property (v) is directly obtained by inspection,   
the limit \eqref{psa} being also directly implied by
\eqref{f:80} in view of the asymptotic behaviour \eqref{a:6'} of $Rh_\ell^{(1)}(kR)$ 
for $R$ tending to infinity. 
The proof of the analyticity property of the function $k \mapsto T_\ell(k;g)$
in the domain $\Omega_{\alpha,\ell}(g)$ is left to the following theorem.
\end{proof}

Now, by exploiting the holomorphy properties of $\sigma_\ell(k;g)$ and 
$N_\ell(k;g)$, obtained respectively in Theorems \ref{theorem:5} and \ref{theorem:6},
we can derive \emphsl{meromorphy properties} in the complex variables $(k,g)$
of the scattering solution $\Psi_\ell(k,g;R)$ and of the partial scattering
amplitude $T_\ell(k;g)$. For this purpose, by taking into account Eq. \eqref{f:46},
we can now re--express $v_\ell(k,g;R)$ as follows, for all $k\in\Omega_{\alpha,\ell}$
and $g\in \C$:
\beq
v_\ell(k,g;\cdot)=\frac{u_\ell(k,g;\cdot)}{\sigma_\ell(k;g)},
\label{f:72'}
\eeq
where:
\beq
u_\ell(k,g;\cdot)\doteq
\left[\sigma_\ell(k;g)+g N_\ell(k;g)\right] \, v_{\ell,0}(k;\cdot).
\label{f:73}
\eeq
We then have:

\skd

\begin{theorem}
\label{theorem:8}
{\rm (i)} $T_\ell(k;g)$ can be written as follows:
\begin{align}
& T_\ell(k;g)= -\frac{g}{\sigma_\ell(k;g)}
\int_0^{+\infty} R'j_\ell(kR')\,u_\ell(k,g;R')\,\rmd R'=
\frac{\Theta_\ell(k;g)}{\sigma_\ell(k;g)}, \label{f:82} \\
\intertext{where:}
\begin{split}
& \Theta_\ell(k;g) =
-g\,\sigma_\ell(k;g)\int_0^{+\infty} \!\! R'j_\ell(kR')\,v_{\ell,0}(k;R')\,\rmd R' \\
&\quad -g^2\int_0^{+\infty} \!\! R' j_\ell(kR')\,\rmd R'
\int_0^{+\infty} \!\! N_\ell(k;g;R',R)\,v_{\ell,0}(k;R)\,\rmd R.
\end{split}
\label{f:83}
\end{align}
The following properties hold for any non--negative integral value of $\ell$:
\begin{itemize}
\item[\rm (i.a)] the function $k \mapsto \sigma_\ell(k;g)$ is
defined and uniformly bounded in $\overline{\Pi}_\alpha$,
holomorphic in $\Pi_\alpha$;
\item[\rm (i.b)] the function $k \mapsto \Theta_\ell(k;g)$ is
defined and uniformly bounded in $\overline\Omega_\alpha$,
holomorphic in $\Omega_\alpha$;
\item[\rm (i.c)] $k \mapsto T_\ell(k;g)$ is a meromorphic 
function in $\Omega_\alpha$.
\end{itemize}

\vskip 0.2cm

\noindent
{\rm (ii)} The function $k \mapsto S_\ell(k;g)$, given by
\beq
S_\ell(k;g) = 1+2\rmi \, T_\ell(k;g),
\label{f:84}
\eeq
is meromorphic in $\Omega_\alpha$. It satisfies the condition
of elastic unitarity for $k$ real and can be written as follows:
\beq
S_\ell(k;g) = e^{2\rmi\delta_\ell(k;g)} \qquad (\ell=0,1,2,\ldots),
\label{f:85}
\eeq
where $\delta_\ell(k;g)$ is a real--valued function of $k$ on $\R^+$
(which is defined modulo $\pi$). Accordingly, the following 
representation of $T_\ell(k;g)$ holds:
\beq
T_\ell(k;g) = e^{\rmi\delta_\ell(k;g)}\sin\delta_\ell(k;g) \qquad (k\in\R^+).
\label{f:86}
\eeq
\end{theorem}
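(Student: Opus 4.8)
The plan is to deduce part (i) by feeding the Fredholm--Smithies representation of $R_\ell(k;g)$ into the definition \eqref{f:81} of $T_\ell$, and then to obtain part (ii) from the resulting quotient structure together with the Wronskian lemma (Lemma \ref{lemma:vii'}). For part (i), I start from \eqref{f:81} and insert the identity \eqref{f:72'}, $v_\ell(k,g;\cdot)=u_\ell(k,g;\cdot)/\sigma_\ell(k;g)$ with $u_\ell$ given by \eqref{f:73}; this at once yields the first equality in \eqref{f:82} with $\Theta_\ell(k;g)=-g\int_0^{+\infty}R'j_\ell(kR')\,u_\ell(k,g;R')\,\rmd R'$. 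Expanding $u_\ell=\sigma_\ell\,v_{\ell,0}+g\,N_\ell v_{\ell,0}$ and interchanging the two integrations in the $N_\ell$--term then produces the explicit formula \eqref{f:83}; the interchange is legitimate by Fubini's theorem, since $N_\ell(k;g)$ is a Hilbert--Schmidt operator on $X_{\wea}$ (Theorem \ref{theorem:6}), $v_{\ell,0}(k;\cdot)\in X_{\wea}$ (Lemma \ref{lemma:v}), and $R\mapsto R\,j_\ell(kR)$ defines an element of the dual space $X^*_{\wea}$ for every $k\in\overline\Omega_\alpha$ (a consequence of the bound \eqref{a:0} on $kRj_\ell(kR)$), so that the relevant double integral of absolute values is finite.

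Property (i.a) is just a restatement of Theorem \ref{theorem:5}. For (i.b) I observe that, for each fixed $g\in\C$, the vector--valued function $k\mapsto u_\ell(k,g;\cdot)$ is holomorphic in $\Omega_\alpha$ and continuous and uniformly bounded in the $X_{\wea}$--norm on $\overline\Omega_\alpha$ --- this follows by combining Theorems \ref{theorem:5} and \ref{theorem:6} with Lemma \ref{lemma:v} (and the elementary lemmas on products of operator-- and vector--valued holomorphic functions collected in Appendix \ref{appendix:b}); since $\Theta_\ell(k;g)$ is the bilinear duality pairing of $u_\ell(k,g;\cdot)\in X_{\wea}$ with the $X^*_{\wea}$--valued function $k\mapsto R\,j_\ell(kR)$, which is itself continuous on $\overline\Omega_\alpha$, holomorphic on $\Omega_\alpha$ and of uniformly bounded norm there, $\Theta_\ell$ is holomorphic in $\Omega_\alpha$ and uniformly bounded and continuous in $\overline\Omega_\alpha$. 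Finally, for (i.c): by Theorem \ref{theorem:5}(c), $\sigma_\ell(k;g)\to1$ as $|k|\to+\infty$, so $\sigma_\ell(\cdot;g)$ is not identically zero on $\Omega_\alpha$; hence $T_\ell=\Theta_\ell/\sigma_\ell$ is a meromorphic function of $k$ in $\Omega_\alpha$, its poles being contained in the zero set of $\sigma_\ell$. (This also supplies the analyticity of $k\mapsto T_\ell(k;g)$ in $\Omega_{\alpha,\ell}(g)$ that was left open in Theorem \ref{theorem:7'}.)

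For part (ii), the meromorphy of $S_\ell(k;g)=1+2\rmi\,T_\ell(k;g)$ in $\Omega_\alpha$ is immediate from (i.c). The substantive point is elastic unitarity: $|S_\ell(k;g)|=1$ whenever $g\in\R$ and $k\in\R^+$ with $\sigma_\ell(k;g)\neq0$. To prove it I would apply the Wronskian lemma (Lemma \ref{lemma:vii'}) to $\psi=\Psi_\ell(k,g;\cdot)$, the scattering solution of Eq. \eqref{f:24} furnished by Theorem \ref{theorem:7'}. For such $k$ this function is real--analytic in $R$ on $\R^+$, its derivative is absolutely continuous and bounded as $R\to0$ (Theorem \ref{theorem:7'}(i)), it obeys a bound $|\Psi_\ell(k,g;R)|\leqslant c\,R$ (near $R=0$ because $\Psi_\ell(k,g;0)=0$ with $\Phi_\ell$ satisfying \eqref{f:76} and $kRj_\ell(kR)=O(R)$; for large $R$ because $kRj_\ell(kR)$ and, via \eqref{f:80}, $\Phi_\ell(k,g;R)$ stay bounded on $\R^+$), and it solves the integro--differential equation in hypothesis (iii) of Lemma \ref{lemma:vii'} with $g$ real. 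Since ${\overline k}^{\,2}-k^2=0$ for $k$ real, Eq. \eqref{f:wr} then reduces to $\lim_{R\to+\infty}\bigl[\overline{\Psi_\ell}(R)\,\Psi_\ell'(R)-\overline{\Psi_\ell'}(R)\,\Psi_\ell(R)\bigr]=0$.

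It remains to evaluate this limit. Using the asymptotic form of Theorem \ref{theorem:7'}(iv) --- i.e. $\Phi_\ell(k,g;R)=T_\ell(k;g)\,\rmi kRh_\ell^{(1)}(kR)+O(e^{-\alpha R})$ for $\Imag k=0$ --- together with the asymptotics \eqref{a:6'} of $kRh_\ell^{(1)}(kR)$, the corresponding asymptotics of $kRj_\ell(kR)$ and of the derivatives, and the limit \eqref{f:76"} for $\Phi_\ell'$, one obtains the standard asymptotic representation $\Psi_\ell(k,g;R)=\tfrac{1}{2\rmi}\bigl(S_\ell(k;g)e^{\rmi(kR-\ell\pi/2)}-e^{-\rmi(kR-\ell\pi/2)}\bigr)+o(1)$, with $\Psi_\ell'(k,g;R)=\tfrac{k}{2}\bigl(S_\ell(k;g)e^{\rmi(kR-\ell\pi/2)}+e^{-\rmi(kR-\ell\pi/2)}\bigr)+o(1)$, the $o(1)$ remainders decaying exponentially. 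Inserting these into the Wronskian, the cross--terms between bounded oscillatory factors and the $o(1)$ remainders vanish in the limit, and a short computation gives $\overline{\Psi_\ell}\Psi_\ell'-\overline{\Psi_\ell'}\Psi_\ell\to\tfrac{\rmi k}{2}\bigl(|S_\ell(k;g)|^2-1\bigr)$; hence $|S_\ell(k;g)|=1$ since $k\neq0$. On $\R^+$, $S_\ell$ is then continuous and unimodular (on the dense subset where it is holomorphic), so it can be written $S_\ell(k;g)=e^{2\rmi\delta_\ell(k;g)}$ with $\delta_\ell$ real--valued and continuous, determined modulo $\pi$; and \eqref{f:86} follows from the algebraic identity $T_\ell=(S_\ell-1)/2\rmi=e^{\rmi\delta_\ell}\sin\delta_\ell$. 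The main obstacle is precisely this elastic--unitarity step in part (ii): one must check that the non--$L^2$ scattering solution at real $k$ genuinely satisfies all hypotheses of Lemma \ref{lemma:vii'}, and then carry out the asymptotic Wronskian computation while controlling --- via \eqref{f:80} and \eqref{f:76"} --- the remainders that multiply the bounded oscillatory leading terms; part (i), by contrast, is essentially bookkeeping with the holomorphy and boundedness results already established.
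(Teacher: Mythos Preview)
Your treatment of part (i) is essentially the paper's own argument: insert \eqref{f:72'}--\eqref{f:73} into \eqref{f:81} to get \eqref{f:82}--\eqref{f:83}, invoke Theorem \ref{theorem:5} for (i.a), and obtain (i.b) by recognising $\Theta_\ell$ as a duality pairing between the $X_{\wea}$--valued holomorphic function $u_\ell(k,g;\cdot)$ and the $X^*_{\wea}$--valued holomorphic function $k\mapsto R\,j_\ell(kR)$ (the paper spells out the bound \eqref{f:88} on $\|\cdot\,j_\ell(k\cdot)\|^*_{\wea}$ and cites Lemmas \ref{lemma:B1'}, \ref{lemma:B2}, \ref{lemma:B8} from Appendix \ref{appendix:b} explicitly, but this is precisely what you describe).

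Part (ii) is where you genuinely diverge. The paper does \emph{not} prove elastic unitarity at all: it simply invokes the unitarity of the full scattering operator $S=W_+^\dagger W_-$, which was established in the earlier paper \cite[IV]{Bertero} summarised in Section \ref{se:outline}, and reads off $|S_\ell(k;g)|=1$ as the partial--wave restriction of that abstract result. Your route --- apply the Wronskian Lemma \ref{lemma:vii'} to the scattering solution $\Psi_\ell$ and compute the limiting Wronskian from the asymptotics \eqref{f:80}, \eqref{f:76"}, \eqref{a:6'} --- is a correct, self--contained alternative that stays entirely within the partial--wave framework of Section \ref{se:rotationally} and avoids importing the functional--analytic machinery (wave operators, Ikebe--type expansions) of \cite[IV]{Bertero}. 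The cost is the asymptotic bookkeeping you flag as the ``main obstacle''; your outline handles it correctly (the computation $W(R)\to\tfrac{\rmi k}{2}(|S_\ell|^2-1)$ is right, and the exponential decay of the remainders in \eqref{f:80} together with \eqref{f:vii"} for the derivative is exactly what is needed to discard the cross--terms). The paper's approach buys brevity and places the unitarity in its natural three--dimensional setting; yours buys independence from \cite[IV]{Bertero} and makes the argument parallel to the bound--state analysis of Theorem \ref{theorem:7}.
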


\begin{proof}
(i) Formulae \eqref{f:82}, \eqref{f:83} obviously
follow from \eqref{f:81}, \eqref{f:72'}, and \eqref{f:73}.
Statement (i.a) has been proved in Theorem \ref{theorem:5}(a). \\[+4pt]
\emph{Proof of} (i.b). For each $k\in \overline\Pi_\alpha$, $N_\ell(k;g)$ acts on
$X_{\wea}$ as a bounded (and even Hilbert--Schmidt) operator
and, in view of Theorem \ref{theorem:6}(a), the function $k \mapsto N_\ell(k;g)$ is 
holomorphic in $\Pi_{\alpha}$ as a bounded--operator--valued function (since it is
holomorphic as a HS--operator--valued function). Since $v_{\ell,0}(k;\cdot)$ is 
holomorphic in $\Omega_{\alpha}$ as a function with values in $X_{\wea}$ 
(see Lemma \ref{lemma:v} and Corollary \ref{lemma:v}--\ref{theorem:4}), 
it then follows from Lemma \ref{lemma:B1'}(ii) 
that the function $k\to u_\ell^{(N)}(k;\cdot)$, defined by
\beq
u_\ell^{(N)}(k,g;R)=
\int_0^{+\infty} \!\! N_\ell(k;g;R,R')\,v_{\ell,0}(k;R')\,\rmd R,
\label{f:87}
\eeq
is also holomorphic in $\Omega_{\alpha}$ as a function with values in 
$X_{\wea}$. In view of Eq. \eqref{f:83} and of (i.a),
proving the analyticity of $\Theta_\ell(k;g)$ in $\Omega_{\alpha}$ amounts to
proving that the functions of $k$, defined by the integrals
\beq
\int_0^{+\infty} \!\! Rj_\ell(kR)\,v_{\ell,0}(k;R)\,\rmd R, 
\quad \mathrm{and} \quad
\int_0^{+\infty} \!\! Rj_\ell(kR) \, u_{\ell}^{(N)}(k,g;R)\,\rmd R,
\label{f:87'}
\eeq
are holomorphic in $\Omega_{\alpha}$. But this follows from
Lemma \ref{lemma:B2}(ii) by noting that:
\begin{itemize}
\item[(a)] the function $R \mapsto Rj_\ell(kR)$ takes its values in the dual space
$X_{\wea}^*$ of $X_{\wea}$, 
for all $k\in\overline\Omega_\alpha$; in fact (in view of \eqref{a:0} and \eqref{f:e0}) 
the following bound holds:
\beq
\begin{split}
& \hspace{-0.4cm} \mathrm{For\,\,} k\in\overline\Omega_\alpha, \\
& \!\!\!\!\!\left[\left\|\cdot \, j_\ell(k\cdot)\right\|^*_{\wea}\right]^2
= \int_0^{\infty}\frac{|Rj_\ell(kR)|^2 \, e^{-2\alpha R}}{R^{1-\varepsilon}(1+R)^{1+2\varepsilon}}\,\rmd R
\leqslant\frac{c_\ell^2}{|k|^2}\int_0^{\infty}
\frac{|k|R}{R^{1-\varepsilon}(1+R)^{1+2\varepsilon}}\,\rmd R
\leqslant\frac{c_\ell^2}{|k|}\,A_\varepsilon^2;
\end{split}
\label{f:88}
\eeq
\item[(b)] the holomorphic function $k \mapsto Rj_\ell(kR)$
satisfies (for $k\in \Omega_\alpha$) the conditions of 
Lemma \ref{lemma:B8}(ii), and therefore defines a 
\emphsl{holomorphic vector--valued function} of $k$ in $\Omega_\alpha$
taking its values in ${X}_{\wea}^*$. 
\end{itemize}
Moreover, since $\frac{|v_{\ell,0}(k,\cdot)|}{|k|^{1/2}}$ is
uniformly bounded in $X_{\wea}$ 
for $k\in \overline\Omega_{\alpha}$ (see Corollary \ref{lemma:v}--\ref{theorem:4}), and 
since $\|N_\ell(k;g)\|_\mathrm{HS}$ is uniformly
bounded (at fixed $g$) in $\overline\Pi_\alpha$ 
(see Theorem \ref{theorem:6}), it follows from \eqref{f:87} that
$\frac{|u_\ell^{(N)}(k,g;\cdot)|}{|k|^{1/2}}$ is also
uniformly bounded in $X_{\wea}$ 
for $k\in \overline\Omega_{\alpha}$. Then, it results from
the uniform bound \eqref{f:88} in $X_{\wea}^*$ 
that the functions defined by the two integrals in \eqref{f:87'}, 
which are respectively bounded by
$\left\|\cdot \, j_\ell(k\cdot)\right\|^*_{\wea}
\times\|v_{\ell,0}(k,\cdot)\|_{\wea}$
and
$\left\|\cdot \, j_\ell(k\cdot)\right\|^*_{\wea} 
\times\left\|u_\ell^{(N)}(k;\cdot)\right\|_{\wea}$, 
are uniformly bounded for $k\in \overline\Omega_{\alpha}$.
In view of \eqref{f:83} and \eqref{f:87} (and of (i.a)), this implies that
the functions $\Theta_\ell(k;g)$ are uniformly bounded for
$k\in \overline\Omega_{\alpha}$, which ends the proof of (i.b).

\vskip 0.2cm

\noindent
Statement (i.c) obviously follows from \eqref{f:82} and statements (i.a) and (i.b).

\vskip 0.2cm

\noindent
(ii) From \eqref{f:84} and (i.c) it follows that 
$S_\ell(k;g)$ is a meromorphic function of $k$ for
$k\in\Omega_\alpha$. The condition of elastic unitarity  
$S_\ell(k;g)\overline{S_\ell(k,g)} ={\rm 1}$ follows
from the unitarity of the scattering operator $S$
proved in \cite[IV]{Bertero} in a very general
setting (see Section \ref{se:outline}).
This implies the representations \eqref{f:85} and \eqref{f:86}.
\end{proof}

\subsubsection{Complement on spurious bound states: the corresponding 
properties of scattering solutions, partial scattering amplitudes and 
phase--shifts}
\label{subsubse:spurious}

The phenomenon of the ``bound states embedded in the continuum"
(or ``positive--energy bound states" or ``spurious bound states") traces back to a classical
paper of Wigner and von Neumann \cite{VonNeumann}.
It can be qualitatively explained as follows:
bumps in a potential will reflect a wave, and well--arranged bumps can act constructively
and prevent a wave from reaching infinity:
i.e., stationary waves with falloff can be formed \cite{Simon}.
For example, a bound state embedded 
in the continuum seems to appear in the negative helium ion. The
level $^4\mathrm{P}_{5/2}$ of this system lies in fact in the continuum, and it is not
liable to auto--ionization. An analogous state seems to be present in the helium atom
(both of these examples have been indicated by Wigner to 
the author of Ref. \cite{Fonda}, as written
in a footnote of that paper).

For the class of nonlocal potentials considered here,
we can say that the partial scattering amplitude $T_\ell(k;g)$,
and therefore also the phase--shift $\delta_\ell(k;g)$, remain
well--defined
at all the values of $k$ corresponding to spurious bound states.
This is a consequence of the previous theorem, namely of
the {\it meromorphy} property of $T_\ell(k;g)$
in $\Omega_\alpha$ and of its \emphsl{boundedness} for all real $k$,
which is implied by Eq. \eqref{f:86} (expressing the unitarity condition).
Putting these two properties together entails that
$T_\ell(k;g)$ is finite 
and holomorphic at all real values of $k$, and therefore
in particular at those for which $\sigma_\ell(k;g)=0$,
corresponding to the occurrence of spurious bound states
(note that it is thus necessary that any such value of $k$ be also
a zero of the holomorphic function
$\Theta_\ell(k;g)$ introduced in Theorem \ref{theorem:8}).

We are now going to show that not only the partial scattering 
amplitude $T_\ell(k;g)$ but also the scattering solution
$\Psi_\ell(k,g;R)$ remains finite at all pairs $(k,g)$ which 
correspond to spurious bound states. More precisely we can state:

\skd

\begin{theorem}
\label{theorem:8'}
Let $g=\widehat{g}(k)$ be any solution of the equation $\sigma_\ell(k;g)=0$
considered as an analytic curve $\cC$ in a complex neighborhood
${\cN}_0$ in $\C^2$ of a certain real point $(k_0,g_0)$,
such that $g_0=\widehat{g}(k_0)$ with $k_0 \in \R^+$.
Let us also assume that the curve $\cC$ is associated with a
``simple pole" of the Fredholm resolvent kernel 
$R_\ell^{(\mathrm{tr})}(k;g)$.
Then the following properties are valid: \\[-6pt]

{\rm (i)} For $(k,g)\in{\cN}_0$, there exists a decomposition
of the following form of $R_\ell^{(\mathrm{tr})}(k;g)$:
\beq
R_\ell^{(\mathrm{tr})}(k;g;R,R') = \frac{p_\ell(k;R,R')}{\widehat{g}(k)-g}
+ \widehat{R}_\ell^{(\mathrm{tr})}(k;g;R,R'),
\label{f:89}
\eeq
where $p_\ell(k;R,R')$ is (for each $k$) a kernel of finite rank $r$
in $\widehat{X}_{\wea}$, which depends holomorphically
on $k$ in a suitable neighborhood ${\cV}_0$ of $k_0$, and satisfies the 
functional relation
\beq
\int_0^{+\infty} \!\! p_\ell(k;R,R'') \, p_\ell(k;R'',R')\,\rmd R'' = p_\ell(k;R,R'),
\label{f:90}
\eeq
while the kernel $\widehat{R}_\ell^{(\mathrm{tr})}(k;g)$ is uniformly bounded  
in $\widehat{X}_{\wea}$, for $(k,g)\in {\cN}_0$; 
in this open set, it defines a {\rm HS}--operator--valued
holomorphic function of $(k,g)$ taking its values in $\widehat{X}_{\wea}$.

\vskip 0.2cm

{\rm (ii)} The following equalities hold for all $k$ in ${\cV}_0$:
\beq
\int_0^{+\infty}\!\! Rj_\ell(k R)\,p_\ell(k;R,R')\,\rmd R = 0, 
\quad 
\int_0^{+\infty} \!\! p_\ell(k;R,R') \, v_{\ell,0}(k;R')\,\rmd R' = 0.
\label{f:89"}
\eeq

\vskip 0.2cm

{\rm (iii)} The solution $v_\ell(k,g;R)$ of the inhomogeneous equation \eqref{f:71'},
which is well--defined as an element of ${X}_{\wea}$ by the formula 
$v_\ell(k,g;\cdot)= R_{\ell}(k;g)  v_{\ell,0}(k;\cdot)$ for $\sigma(k;g)\neq 0$, 
tends to a finite limit $v_\ell(k,\widehat{g}(k);\cdot)$ in 
${X}_{\wea}$, when $g$ tends to $\widehat{g}(k)$
and the vector--valued function $(k,g) \mapsto v_\ell(k,g;\cdot)$
is holomorphic in ${\cN}_0$. 

\vskip 0.2cm

{\rm (iv)} The function $\Phi_\ell(k,g;R)$ defined in Eq. \eqref{f:74'}, 
the scattering solution $\Psi_\ell(k,g;R)= k Rj_\ell(k R)+\Phi_\ell(k,g;R)$,
and the partial scattering amplitude $T_\ell(k;g)$, defined in \eqref{f:81},
remain finite and satisfy all the properties listed in Theorem \ref{theorem:7'} 
at all points $(k,g) \in {\cN}_0$.
\end{theorem}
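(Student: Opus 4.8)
The plan is to deduce everything from the Fredholm--Smithies representation $R_\ell^{(\mathrm{tr})}(k;g)=N_\ell(k;g)/\sigma_\ell(k;g)$ of Theorems \ref{theorem:5} and \ref{theorem:6}, combined with the ``simple pole'' hypothesis. First I would use that hypothesis to write, after shrinking $\mathcal N_0$, $\sigma_\ell(k;g)=(g-\widehat g(k))\,s_\ell(k;g)$ with $\widehat g$ holomorphic on a disk $\mathcal V_0$ around $k_0$ (implicit function theorem at a simple zero, so $\partial_g\sigma_\ell\neq 0$ on $\mathcal C$) and $s_\ell$ holomorphic and non-vanishing on $\mathcal N_0$. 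Setting $p_\ell(k;\cdot,\cdot):=-N_\ell(k;\widehat g(k);\cdot,\cdot)/s_\ell(k;\widehat g(k))$ gives the residue of $R_\ell^{(\mathrm{tr})}$ at $g=\widehat g(k)$; equivalently $p_\ell(k)$ is the Riesz spectral projection of the compact operator $L_\ell(k)$ for the eigenvalue $\mu_0(k)=1/\widehat g(k)$, a finite-rank idempotent (rank $r$) precisely because the pole is first order. Holomorphy of $k\mapsto p_\ell(k)$ in $\widehat X_{\wea}$ on $\mathcal V_0$ (after shrinking) follows from holomorphy of $k\mapsto L_\ell(k)$ in HS-norm (Theorem \ref{theorem:4} and the composition lemmas of Appendix \ref{appendix:b}) through the contour-integral formula for $p_\ell(k)$, and $\widehat R_\ell^{(\mathrm{tr})}(k;g):=R_\ell^{(\mathrm{tr})}(k;g)-p_\ell(k;\cdot,\cdot)/(\widehat g(k)-g)$ then extends across $\mathcal C$ as a bounded HS-valued holomorphic function: this is statement (i).

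The functional relation \eqref{f:90} is the idempotence $p_\ell\circ p_\ell=p_\ell$ of this projection; it can also be obtained directly by multiplying the resolvent identities $R_\ell(k;g)[\I-gL_\ell(k)]=\I=[\I-gL_\ell(k)]R_\ell(k;g)$ by $(\widehat g(k)-g)$ and letting $g\to\widehat g(k)$ to get $p_\ell(k)=\widehat g(k)L_\ell(k)p_\ell(k)=\widehat g(k)p_\ell(k)L_\ell(k)$, then combining these with the first resolvent identity $R_\ell(k;g)-R_\ell(k;g')=(g-g')R_\ell(k;g)L_\ell(k)R_\ell(k;g')$ near the pole. In particular $p_\ell(k;R,R')=\sum_{i=1}^{r}x_i(k;R)\,y_i(k;R')$, where $\{x_i(k;\cdot)\}$ spans the solution space of $\widehat g(k)L_\ell(k)x=x$ in $X_{\wea}$ and $\{y_i(k;\cdot)\}$ the adjoint eigenfunctions.

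For the orthogonality relations (ii): using the symmetry of $V_\ell$ and of $G_\ell(k)$ and the definition \eqref{f:27b} of $v_{\ell,0}$, one checks that the two relations in \eqref{f:89"} are equivalent, and that each reduces to the single assertion $b_{x;\ell,k}:=\int_0^{+\infty}kR\,j_\ell(kR)\,x(R)\,\rmd R=0$ for every $x$ in the range of $p_\ell(k)$. At the real point $(k_0,g_0)$ this is exactly Theorem \ref{theorem:7}(c)(ii) (the orthogonality \eqref{f:63"}, which rests on the Wronskian Lemma \ref{lemma:vii'}, valid since $g_0$ is real). To pass to all $k\in\mathcal V_0$ one argues by holomorphy: $k\mapsto\big(R'\mapsto\int_0^{+\infty}R\,j_\ell(kR)\,p_\ell(k;R,R')\,\rmd R\big)$ is holomorphic on $\mathcal V_0$ with values in $X_{\wea}^{*}$ (by the uniform bound \eqref{f:88}, the holomorphy of $k\mapsto Rj_\ell(kR)$ from Lemma \ref{lemma:B8}, and that of $p_\ell$), so it suffices to see it vanishes on a subset with an accumulation point; this is supplied by the unitarity of $S_\ell(k;g)$ for real $k$ (Theorem \ref{theorem:8}(ii)) and the boundedness of $T_\ell=\Theta_\ell/\sigma_\ell$ on $\R^+$, which force $\Theta_\ell$ to vanish wherever $\sigma_\ell$ does on the real trace of $\mathcal C$, whence the orthogonality propagates analytically. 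I expect this propagation step — reconciling the real-point Wronskian information with the fact that $\mathcal C$ leaves $\R^2$ — to be the main obstacle and the part requiring the most careful organisation, since the sesquilinear Wronskian identity is unavailable for complex $\widehat g(k)$.

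Granting (i) and (ii), statements (iii) and (iv) are essentially formal. Indeed $v_\ell(k,g;\cdot)=R_\ell(k;g)v_{\ell,0}(k;\cdot)=v_{\ell,0}(k;\cdot)+g\,R_\ell^{(\mathrm{tr})}(k;g)v_{\ell,0}(k;\cdot)$, and by (i) the only possibly singular contribution is $g\,(\widehat g(k)-g)^{-1}\!\int_0^{+\infty}p_\ell(k;\cdot,R')v_{\ell,0}(k;R')\,\rmd R'$, which vanishes by the second relation in \eqref{f:89"}; hence $v_\ell(k,g;\cdot)=v_{\ell,0}(k;\cdot)+g\,\widehat R_\ell^{(\mathrm{tr})}(k;g)v_{\ell,0}(k;\cdot)$ is holomorphic on $\mathcal N_0$ with values in $X_{\wea}$ and has the finite limit $v_{\ell,0}(k;\cdot)+\widehat g(k)\,\widehat R_\ell^{(\mathrm{tr})}(k;\widehat g(k))v_{\ell,0}(k;\cdot)$ as $g\to\widehat g(k)$; passing to the limit in $[\I-gL_\ell(k)]v_\ell(k,g;\cdot)=v_{\ell,0}(k;\cdot)$ shows this limit still solves \eqref{f:71'}, giving (iii). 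For (iv) one re-reads the construction of Theorem \ref{theorem:7'}: $\Phi_\ell(k,g;R)=g\int_0^{+\infty}G_\ell(k;R,R')v_\ell(k,g;R')\,\rmd R'$ is well defined and holomorphic on $\mathcal N_0$ by Lemma \ref{lemma:vii} and Appendix \ref{appendix:b} because $v_\ell(k,g;\cdot)\in X_{\wea}$ depends holomorphically on $(k,g)$ there, and $T_\ell(k;g)=-g\int_0^{+\infty}Rj_\ell(kR)v_\ell(k,g;R)\,\rmd R$ is finite by \eqref{f:88}; since every property asserted in Theorem \ref{theorem:7'} was derived purely from $v_\ell(k,g;\cdot)$ solving \eqref{f:71'} and lying in $X_{\wea}$ together with Lemma \ref{lemma:vii}, all of them persist at every $(k,g)\in\mathcal N_0$, in particular along $\mathcal C$.
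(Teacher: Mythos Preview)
Your treatment of (i), (iii), (iv) is correct and essentially the paper's own argument: the contour--integral (Riesz projection) definition of $p_\ell(k)$, the decomposition \eqref{f:89}, and the deduction of (iii)--(iv) from the second orthogonality in \eqref{f:89"} via $v_\ell=v_{\ell,0}+g\,\widehat R_\ell^{(\mathrm{tr})}v_{\ell,0}$ all match.

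The gap is in (ii), precisely at the propagation step you flagged. Your proposed mechanism---unitarity forces $T_\ell$ bounded on $\R^+$, hence $\Theta_\ell(k;\widehat g(k))=0$ on the real trace of $\cC$---yields only a \emph{scalar} relation, namely $\int Rj_\ell(kR)\,[p_\ell(k)v_{\ell,0}(k)](R)\,\rmd R=0$. In the rank--one picture $p_\ell=x\otimes\psi^{*}$ this says $\langle Rj_\ell,x\rangle\cdot\langle\psi^{*},v_{\ell,0}\rangle=0$, which does not force either factor to vanish, so it does not give the \emph{vector} equalities in \eqref{f:89"}. The paper closes this by a simpler route you half--anticipated but did not use: since Theorem~\ref{theorem:7}(c)(ii) needs only $g$ real and $k>0$, one applies it not just at $(k_0,g_0)$ but at every nearby \emph{real} point on $\cC$; this yields $b_{x;\ell,k}=0$ for all eigenfunctions $x$ and all $k\in\cV_0\cap\R^+$, and then the vector--valued holomorphy you already established carries both relations of \eqref{f:89"} to all of $\cV_0$ by analytic continuation.

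A second point the paper makes explicit and you gloss over with ``the two relations are equivalent'': the passage from the first orthogonality to the second uses that for each adjoint eigenfunction $\psi^{*(j)}$ (i.e.\ $\psi^{*(j)}=g_0\,\psi^{*(j)}L_\ell(k_0)$), the vector $x^{*(j)}:=V_\ell\psi^{*(j)}$ is again an eigenfunction of $L_\ell(k_0)$ (this uses both symmetries $V_\ell=V_\ell^{T}$ and $G_\ell=G_\ell^{T}$). One then has $\langle\psi^{*(j)},v_{\ell,0}\rangle=\langle V_\ell\psi^{*(j)},kRj_\ell\rangle=b_{x^{*(j)};\ell,k}$, so the second relation follows from the first applied to the $x^{*(j)}$. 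Without this observation your ``equivalence'' claim is incomplete.
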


\begin{proof}
(i) For convenience, we will choose 
${\cN}_0 \doteq \{(k,g)\,:\, k\in {\cV}_0; \ |g-\widehat{g}(k)|< a\}$,
for a suitable choice of ${\cV}_0$ containing $k_0>0$, and of $a>0$.
For any $k$ fixed in ${\cV}_0$, the existence of a decomposition of the form
\eqref{f:89} for the Fredholm resolvent kernel of $L(k)$
is a standard result (see, e.g., \cite{Goursat}). A simple presentation given in 
Subsection III--3 of \cite{Bros-Pesenti} (see, in particular, formulae (77) through (83),
Lemma 1 of the latter, which we here consider for the simple--pole case $n=1$) 
allows one to write the following formula for $p_\ell(k;R,R')$:
\beq
p_\ell(k;R,R')= -\frac{1}{2\pi\rmi}\int_\gamma R_\ell^{(\mathrm{tr})}(k;g;R,R')\, \rmd g,
\label{f:91}
\eeq
where $\gamma$ denotes a closed (anticlockwise) contour surrounding the point $\widehat{g}(k)$ 
inside the disk $|g- \widehat{g}(k)|<a$ of the $g$--plane
(note that for ${\cV}_0$ sufficiently small, $\gamma$ may be considered as independent of $k$).
This integral is meaningful in the sense of HS--kernels in 
$\widehat{X}_{\wea}$, since the function
$g \mapsto R_\ell^{(\mathrm{tr})}(k;g)=\frac{N_\ell(k;g)}{\sigma_\ell(k;g)}$
is holomorphic in the complement of $\cC$ as a HS--operator--valued function (see 
Theorem \ref{theorem:6} and the last page of Appendix \ref{appendix:b}).
Moreover, in view of the fact that the function
$k \mapsto R_\ell^{(\mathrm{tr})}(k;g;R,R')$ is holomorphic in ${\cV}_0$
for all $g\in \gamma$ (as a result of Theorem \ref{theorem:6'}), it follows that 
$k \mapsto p_\ell(k;R,R')$ is itself a HS--operator--valued holomorphic function 
(of finite rank $r$) in ${\cV}_0$. The projector formula \eqref{f:90}
(true for all $k\in {\cV}_0$) refers to formula (80) of \cite{Bros-Pesenti}.
Finally, in view of \eqref{f:89} and \eqref{f:91}, one also has (for $k\in {\cV}_0$):
\beq
\int_\gamma \widehat{R}_\ell^{(\mathrm{tr})}(k;g;R,R')\, \rmd g = 0,
\label{f:91'}
\eeq
which proves (also in view of Theorem \ref{theorem:6'})
that the function $(k,g) \mapsto \widehat{R}_\ell^{(\mathrm{tr})}(k;g;R,R')$ 
is holomorphic in ${\cN}_0$ as a HS--operator--valued function.

\vskip 0.2cm

\noindent
(ii) Since (see \cite{Goursat,Bros-Pesenti}), 
the kernel $p_\ell(k_0)$ of rank $r$ is such that
\beq
p_\ell(k_0)=g_0 \, L_\ell(k_0) \, p_\ell(k_0)=g_0 \, p_\ell(k_0)\, L_\ell(k_0),
\label{f:92-ex}
\eeq
and since $p_\ell(k_0)\in\widehat{X}_{\wea}$,
there exist $r$ linearly independent solutions 
$x^{(j)}(R) \in {X}_{\wea}$ 
of the homogeneous Fredholm equation
$g_0 \, L_\ell(k_0) \, x =x$,
and $r$ linearly independent solutions 
$\psi^{*(j)}(R) \in {X}_{\wea}^*$ 
of the corresponding equation
$g_0 \,\psi^* \, L_\ell(k_0)=\psi^*$ such that:
\beq
p_\ell(k_0;R,R')= \sum_{j=1}^r x^{(j)}(R)\,\psi^{*(j)}(R').
\label{f:89'}
\eeq
But we are in the case when formula \eqref{f:63"} of Theorem \ref{theorem:7} applies, 
namely we have, for $1\leqslant j \leqslant r$,  
$\int_0^{+\infty} Rj_\ell(k_0 R)\,x^{(j)}(R)\,\rmd R =0$,  entailing the orthogonality 
$\int_0^{+\infty}Rj_\ell(k_0 R)\, p_\ell(k_0;R,R')\, \rmd R =0$. 

Let us now associate with $\psi^{*(j)}$ ($1\leqslant j\leqslant r$) the function 
$x^{*(j)}\doteq V_\ell\, \psi^{*(j)}$, which belongs to  
${X}_{\wea}$ since
$\psi^{*(j)}\in {X}^*_{\wea}$ (see Lemma \ref{lemma:vi} (c)).
We then have: $\psi^{*(j)} = g_0 \,\psi^{*(j)}\,L_\ell(k_0)=g_0 \, \psi^{*(j)}\,V_\ell \, G_\ell(k_0)$ 
(in view of \eqref{f:27c}), which also yields:
$V_\ell\, \psi^{*(j)}=g_0 \, V_\ell\,[G_\ell(k_0)\,V_\ell\,\psi^{*(j)}]=g_0\,L_\ell(k_0)[V_\ell\,\psi^{*(j)}]$, 
(by using the symmetry relations \eqref{f:27d} and \eqref{f:24'}), 
i.e., $x^{*(j)}=g_0 \, L_\ell(k_0)\,x^{*(j)}$.
It then follows from Lemma \ref{lemma:vi} (c) and  
Eq. \eqref{f:63"} (applied to the eigenfunction $x^{*(j)}$ of $L_\ell(k_0)$) 
that: $\int_0^{+\infty}\psi^{*(j)}(R)\, v_{\ell,0}(k;R)\,\rmd R
=\int_0^{+\infty} Rj_\ell(k_0 R)\,x^{*(j)}(R)\,\rmd R =0$,
for $1\leqslant j \leqslant r$, which entails the orthogonality relation 
$\int_0^{+\infty} p_\ell(k_0;R,R')\, v_{\ell,0}(k;R)\, \rmd R =0$.  

The previous argument can of course be applied to any  \emphsl{real}
neighbouring point $(k,g)$ of $(k_0,g_0)$ on the curve $\cC$, thus 
implying that relations \eqref{f:89"} hold for all $k$ in ${\cV}_0 \cap \R^+$.
Therefore, they also hold in ${\cV}_0$ by the principle of analytic continuation.

\vskip 0.2cm

\noindent
{\rm (iii)} In view of Eqs. \eqref{f:89} and \eqref{f:89"}, one has
for all $(k,g)\in {\cN}_0\setminus \cC$:
\beq
v_\ell(k,g;\cdot)= v_{\ell,0}(k;\cdot) 
+ R^{(\mathrm{tr})}_{\ell}(k;g)\, v_{\ell,0}(k;\cdot)=v_{\ell,0}(k;\cdot)
+\widehat{R}^{(\mathrm{tr})}_{\ell}(k;g) \,  v_{\ell,0}(k;\cdot). 
\label{f:92}
\eeq
But, since the function $k \mapsto \widehat{R}^{(\mathrm{tr})}_{\ell}(k;g)$ 
is holomorphic at fixed $g$ for $(k,g)\in {\cN}_0$ as a HS--operator--valued 
function taking its values in $\widehat{X}_{\wea}$,
it follows from Lemmas \ref{lemma:v} and \ref{lemma:B1'}(ii)
(as in the proof of Theorem \ref{theorem:3'}) that the 
r.h.s. of Eq. \eqref{f:92} defines the analytic continuation of
the function $k \mapsto v_\ell(k,g;\cdot)$ at fixed $g$ from  
$\Omega^+_{\alpha,\ell}(g) $ to the set $\{k\,:\,(k,g)\in {\cN}_0\}$
(as a vector--valued function taking its values in ${X}_{\wea}$).

\vskip 0.2cm

\noindent
{\rm (iv)} In view of the properties of $v_\ell(k,g;\cdot)$ established in (iii),
the finiteness and holomorphy properties of $\Phi_\ell(k,g)$,
$\Psi_\ell(k,g)$, and $T_\ell(k;g)$, and all the bounds and asymptotic limits
of the latter, which have been established in Theorem \ref{theorem:7'}, 
are directly extended to the set ${\cN}_0$, since the arguments given in the
proof of that theorem remain valid there without modification.
\end{proof}

\skd

\begin{remark}
In view of Theorem \ref{theorem:7}, one can say that at each pair 
$(k,g=\widehat{g}(k))$, $k>0$, corresponding to a spurious bound state, 
there exists a finite--dimensional affine subspace of functions 
$[v_\ell]_\mu(k,g;\cdot)\in {X}_{\wea}$ 
of the form 
$[v_\ell]_\mu(k,g;R) = v_\ell(k,g;R) + \sum_{j=1}^r \mu_j\, x^{(j)}(R)$,
($\mu\doteq(\mu_1,\ldots,\mu_r)\in \R^r$), which all are solutions 
of the inhomogeneous equation \eqref{f:71'}, while the corresponding 
affine subspace of functions
$[\Psi_\ell]_\mu(k,g;R)= k Rj_\ell(k R)+[\Phi_\ell]_\mu(k,g;R)$, 
where we have put 
$[\Phi_\ell]_\mu(k,g;R)= g\int_0^{+\infty} G_\ell(k;R,R') [v_\ell]_\mu(k,g; R')\,\rmd R'$,
satisfy all the properties of scattering  solutions of the Schr\"odinger--type equation \eqref{f:24}.
However, in view of Eqs. \eqref{f:81} and \eqref{f:63"}, all these solutions lead to the 
(unique) scattering amplitude
$T_\ell(k;g)=-g\int_0^{+\infty}R'j_\ell(kR')[v_\ell]_\mu(k,g;R')\,\rmd R'$.
Moreover, we can say that the particular solutions 
$v_\ell(k,\widehat{g}(k);\cdot)$ and $\Psi_\ell(k,\widehat{g}(k);\cdot)$
are distinguished from all the others by their property of being the restrictions 
to the curve $\cC$ ($g=\widehat{g}(k)$) of the (respective) solutions 
$v_\ell(k,g;\cdot)$ and $\Psi_\ell(k,g;\cdot)$, 
which depend holomorphically on $g$ and $k$ 
in a complex neighborhood of $\cC$ in $\C^2$.
\label{rem:2}
\end{remark}

\vskip 0.3cm
 
We shall now complete this study of 
bound states embedded in the continuum by recalling the
following two interesting properties, 
which concern the behavior of the
phase--shift $\delta_\ell(k;g)$ for $k$ varying between zero and infinity.

\vskip 0.3cm

{\bf (a)}\ The following proposition was proved by
Gourdin, Martin and Chadan and quoted in \cite{Chadan}
for the case of \emphsl{separable} nonlocal potentials.

\skd

\begin{proposition}[G.M.C. \cite{Chadan}]
\label{propo:2}
{\rm (i)} Positive energy bound states correspond to those energies at which the phase--shift
crosses a value of $n\pi$, $n=0,\pm 1, \ldots$, downward, (i.e., with a negative slope)
and conversely. \\[+2pt]
{\rm (ii)} The phase--shift never crosses $n\pi$ upward. \\[+2pt]
{\rm (iii)} The phase--shift may become tangent to $n\pi$ either from below or from above.
\end{proposition}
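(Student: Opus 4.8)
The plan is to read off the behaviour of $\delta_\ell(\cdot;g)$ near a point $k_0>0$ with $\delta_\ell(k_0)\in\pi\Z$ directly from the representation $T_\ell=\Theta_\ell/\sigma_\ell$ of Theorem~\ref{theorem:8}. Since $T_\ell$ is meromorphic in $\Omega_\alpha$ and $|T_\ell(k)|=|\sin\delta_\ell(k)|\leqslant1$ on $\R^+$ (elastic unitarity, Eq.~\eqref{f:86}), it is holomorphic in a complex neighbourhood of every point of $\R^+$, and $\delta_\ell(k_0)\in\pi\Z$ is equivalent to $T_\ell(k_0)=0$, i.e.\ (when $\sigma_\ell(k_0;g)\neq0$) to $\Theta_\ell(k_0;g)=0$. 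Writing $\phi=\delta_\ell-n\pi$ one has $T_\ell=e^{\rmi\phi}\sin\phi=\phi+\rmi\phi^2+O(\phi^3)$, so on $\R^+$ near $k_0$ the function $\delta_\ell(k)-n\pi$ behaves like $\Real T_\ell(k)$; moreover $\Imag T_\ell(k)=|T_\ell(k)|^2\geqslant0$ on $\R^+$ forces the leading Taylor coefficient of $T_\ell$ at $k_0$ to be real when the zero is simple (so $\delta_\ell'(k_0)=T_\ell'(k_0)\in\R$, a genuine \emph{crossing}) and to have nonnegative imaginary part when the zero is double (a \emph{tangency}, whose side is fixed by the sign of the real leading coefficient of $\Real T_\ell$). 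The proposition will follow once we control, at the relevant $k_0$, the \emph{order} of the zero of $T_\ell$ and, for a simple zero, the \emph{sign} of $T_\ell'(k_0)$.

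The key device is the splitting of the outgoing Green function of \eqref{f:27d}, via $h_\ell^{(1)}=j_\ell+\rmi y_\ell$, into its standing–wave part (real for real $k$) plus a rank–one on–shell term, $G_\ell(k;R,R')=G_\ell^{(P)}(k;R,R')-\rmi k\,Rj_\ell(kR)\,R'j_\ell(kR')$, whence $L_\ell(k)=L_\ell^{(P)}(k)-\rmi\,v_{\ell,0}(k;\cdot)\otimes Rj_\ell(k\cdot)$ with $L_\ell^{(P)}(k)=V_\ell G_\ell^{(P)}(k)$ real for real $k$. In the separable case $V_\ell(R,R')=v_\ell(R)v_\ell(R')$ for which the statement was obtained, $L_\ell(k)$ has rank one, \eqref{f:27b} gives $v_{\ell,0}(k;\cdot)=kJ_\ell(k)v_\ell$, and the Fredholm/Smithies determinant and \eqref{f:81} become entirely explicit:
\beq
\Theta_\ell(k;g)=-\,g\,k\,J_\ell(k)^2,\qquad
\sigma_\ell(k;g)=1-gP_\ell(k)+\rmi\,g\,k\,J_\ell(k)^2,\qquad
\tan\delta_\ell(k;g)=\frac{-\,g\,k\,J_\ell(k)^2}{1-gP_\ell(k)},
\eeq
with $J_\ell(k)=\int_0^\infty Rj_\ell(kR)v_\ell(R)\,\rmd R$ and $P_\ell(k)=\int_0^\infty\!\!\int_0^\infty v_\ell(R)v_\ell(R')\,kRR'\,j_\ell(k\min)\,y_\ell(k\max)\,\rmd R\,\rmd R'$ both real for $k$ real. (For a general $V\in\cN_\wea$ the same splitting and a Sherman–Morrison computation in Smithies' formalism give $\sigma_\ell=\sigma_\ell^{(P)}(1+\rmi gD)$, $\Theta_\ell=-g\sigma_\ell^{(P)}D$, $\tan\delta_\ell=-gD$, where $\sigma_\ell^{(P)}$ is the Smithies determinant of $gL_\ell^{(P)}(k)$ and $D(k;g)=k\langle Rj_\ell(k\cdot),(\I-gL_\ell^{(P)}(k))^{-1}V_\ell Rj_\ell(k\cdot)\rangle$, both real on $\R^+$, $D$ being moreover a real quadratic form since $(\I-gL_\ell^{(P)}(k))^{-1}V_\ell$ is formally symmetric.)

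Now argue by cases near a zero $k_0>0$ of $\Theta_\ell$ — equivalently of $J_\ell$. By Theorems~\ref{theorem:7}--\ref{theorem:8'}, the positive–energy bound states of angular momentum $\ell$ are exactly the real $k_0>0$ with $\sigma_\ell(k_0;g)=0$, i.e.\ (separable case) $J_\ell(k_0)=0$ \emph{and} $1-gP_\ell(k_0)=0$; one checks via the orthogonality relation \eqref{f:63"} of Theorem~\ref{theorem:7} that at such a point $T_\ell$ stays finite and $D$ vanishes to second order. \emph{First case:} $J_\ell(k_0)=0$ but $1-gP_\ell(k_0)\neq0$ (no bound state, since $\sigma_\ell(k_0;g)=1-gP_\ell(k_0)\neq0$). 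Then $J_\ell(k)^2$, hence $\Theta_\ell$ and $T_\ell=\Theta_\ell/\sigma_\ell$, has an \emph{even}–order zero, so $\delta_\ell$ is tangent to $n\pi$ at $k_0$, with $\delta_\ell(k)-n\pi\simeq\dfrac{-g\,k_0\,J_\ell'(k_0)^2}{1-gP_\ell(k_0)}(k-k_0)^2$ — the side depends on the sign of $g/(1-gP_\ell(k_0))$ and can be either: this is statement (iii), and no crossing occurs here. \emph{Second case:} $\sigma_\ell(k_0;g)=0$ (a bound state), so $\sigma_\ell$ has a simple zero while $\Theta_\ell=-gkJ_\ell^2$ has a zero of order $\geqslant2$; then $T_\ell$ has a simple zero, $\delta_\ell$ \emph{crosses} $n\pi$, and $\delta_\ell'(k_0)=T_\ell'(k_0)=k_0\,J_\ell'(k_0)^2/P_\ell'(k_0)\in\R$. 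Conversely, any genuine crossing forces $T_\ell$ to have a simple zero, hence — as $\Theta_\ell$ has only even–order zeros wherever $\sigma_\ell\neq0$ — forces $\sigma_\ell(k_0;g)=0$, a bound state. This gives (i) and (ii) up to the sign of the slope at a bound state.

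The main obstacle is exactly that sign: one must show $\delta_\ell'(k_0)\leqslant0$ at a positive–energy bound state, i.e.\ $P_\ell'(k_0)<0$ (and more generally that the on–shell $K$–matrix form $D(k;g)$ decreases through its zeros lying on $\sigma_\ell^{(P)}=0$). The argument I would use is the dispersive one: $P_\ell(k)$ is the real part of a Herglotz–type function of the energy $k^2$, so it admits a principal–value representation $P_\ell(k)=\mathrm{p.v.}\!\int_0^\infty\rho_\ell(\kappa)\,(k^2-\kappa)^{-1}\,\rmd\kappa$ with a \emph{nonnegative} spectral density $\rho_\ell$ (built from the continuum eigenfunctions, plus a discrete $k$–independent part from true bound states), whence $P_\ell'(k)=-2k\,\mathrm{p.v.}\!\int_0^\infty\rho_\ell(\kappa)\,(k^2-\kappa)^{-2}\,\rmd\kappa<0$ for $k>0$ when $\rho_\ell\not\equiv0$; equivalently, a Wronskian–in–$k$ identity for the standing–wave solution $\psi_\ell^{(P)}(k;R)$ — obtained by differentiating the Schr\"odinger equation \eqref{f:24} in $k$ exactly as in Lemma~\ref{lemma:vii'} — exhibits $P_\ell'(k_0)$ as $-\int_0^\infty|\psi_\ell^{(P)}(k_0;R)|^2\,\rmd R$ plus a boundary term that is controlled precisely because $J_\ell(k_0)=0$ suppresses the $j_\ell$–component at infinity. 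Making either of these rigorous within the weighted Hilbert–space framework of Section~\ref{se:rotationally} — identifying $\rho_\ell$ and justifying the representation and the differentiation under the integral, or bounding the boundary term in the Wronskian identity — is the one genuinely delicate step; granted it, (i)–(iii) follow from the case analysis above, the borderline possibility $J_\ell'(k_0)=0$ at a bound state simply giving $\delta_\ell'(k_0)=0$, a tangency, consistently with (ii) and (iii).
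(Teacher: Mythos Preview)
The paper does not actually prove this proposition: it is stated without proof and attributed to Gourdin, Martin and Chadan (as quoted in \cite{Chadan}), explicitly restricted to the case of \emph{separable} nonlocal potentials. Your attempt therefore goes well beyond what the paper does.

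That said, your approach for the separable case is the classical one and is essentially correct: split the outgoing Green function into its real standing--wave part and the rank--one on--shell piece, obtain $\tan\delta_\ell=-gkJ_\ell^2/(1-gP_\ell)$, and read off (i)--(iii) from the case analysis on whether $1-gP_\ell(k_0)$ vanishes simultaneously with $J_\ell(k_0)$. The step you correctly flag as delicate---showing $P_\ell'(k_0)<0$ at a spurious bound state so that the crossing is downward---is exactly the heart of the Gourdin--Martin argument, and your Wronskian--in--$k$ sketch (differentiate \eqref{f:24} in $k$ and integrate against the bound--state $\overline{\psi}$) is the standard route; your dispersive alternative is less convincing as written, since the principal--value integral for $P_\ell'$ is not manifestly sign--definite. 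One minor caveat: $\sigma_\ell$ in this paper is Smithies' \emph{modified} determinant (with $(\sigma_\ell)_1=0$), not the ordinary $\det(\I-gL_\ell)$; for a rank--one $L_\ell$ the two differ by the nowhere--vanishing factor $e^{g\Tr L_\ell}$, so your zero analysis survives, but the explicit expression you write down for $\sigma_\ell$ is that of the ordinary determinant. Your Sherman--Morrison extension to general $V\in\cN_\wea$ is a nice observation but also goes beyond what the paper claims; the proposition is stated only for separable potentials.
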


\skd

{\bf (b)}\ An extension of Levinson's theorem has been proved, in the case of nonlocal
separable potentials, in Ref. \cite{Martin1};
see also Ref. \cite[II]{Bertero}, where Levinson's theorem has been proved, 
in the case $\ell=0$,
for a class of potentials very close to that considered here. The extension to any integral value of $\ell$
is straightforward.
One can then state that the total variation of the phase--shift in the interval $0\leqslant k<+\infty$
is given by:
\beq
\delta_\ell(0)-\delta_\ell(\infty)=\pi(N_\ell + N'_\ell) \qquad (\ell~{\rm integer}),
\label{f:104}
\eeq
where $N_\ell$ is the number of negative 
energy bound states and $N'_\ell$ is the number
of positive energy bound states; for simplicity, 
one assumes that all the bound states are represented 
by simple poles, and that there are no bound states at $k=0$.

\subsection{Partial wave expansion of the scattering amplitude}
\label{subse:partial-wave}

The absence of bound states for $\ell $ sufficiently large 
was already noted earlier as being a corollary of property (c) of
$\sigma_\ell(k;g)$ stated in Theorem \ref{theorem:5}. 
It can also be presented in a simpler and more precise
way as follows.
In view of the complement of theorem 6, the norm $\|L_\ell(k)\|$
of $L_\ell(k)$, considered as a bounded operator on
$X_{\wea}$, satisfies the  majorization \eqref{wf:1}
for all $k$ in $\Pi_\alpha$:
\beq
\|L_\ell(k)\| \leqslant
\|L_\ell(k)\|_\mathrm{HS} \leqslant
\frac{\sqrt{\pi}}{2} \, A_\varepsilon^2 \, \frac{C(V)}{2\ell+1}.
\label{k:2}
\eeq
Let $\ell=\ell_0(g)$ be the smallest integer such that the r.h.s. of
\eqref{k:2} is majorized by $1/|g|$ and let 
$\kappa_g \doteq |g|\frac{\sqrt{\pi}}{2} \, A_\varepsilon^2 \, \frac{C(V)}{2\ell_0(g)+1} < 1$.
Then for all integers $\ell $ such that $\ell \geqslant \ell_0(g)$,
the operator $gL_\ell(k)$ is a contraction in  
$X_{\wea}$ and therefore 
\emphsl{there exist no bound state or spurious bound state and even no resonance} 
in the strip $-\alpha\leqslant \Imag k <0$.
Next we prove the following proposition.

\skd

\begin{proposition}
\label{pro:k1}
For $k$ real and positive, the asymptotic behavior 
of the partial scattering amplitude $T_\ell(k;g)$,
for large values of $\ell$, is governed by the following majorization  
(including a suitable constant $\widehat{C}(k,g)$):
\beq
|T_\ell(k;g)| \leqslant \frac{\widehat{C}(k,g)}{(2\ell +1)^\frac{1}{2}} 
\frac{\Gamma (\ell +1)}{\Gamma(\ell +\frac{3}{2})} \,e^{-\beta(k)\ell} 
= O \left(\ell^{-1}\,e^{-\beta(k)\ell}\right),
\ {\rm with}\ \,\cosh\beta(k)=1+\frac{2\alpha^2}{k^2}.
\label{f:117}
\eeq
\end{proposition}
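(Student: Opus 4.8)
The plan is as follows. Since for $\ell$ smaller than the integer $\ell_{0}(g)$ defined just before the statement the unitarity relation \eqref{f:86} gives the trivial bound $|T_{\ell}(k;g)|=|\sin\delta_{\ell}(k;g)|\leqslant 1$ on $\R^{+}$, only finitely many $\ell$ are not already covered by \eqref{f:117} for $\widehat{C}(k,g)$ chosen large enough; so I may assume $\ell\geqslant\ell_{0}(g)$. In that range, as recalled through \eqref{k:2} and the lines below it, $gL_{\ell}(k)$ is a contraction of $X_{\wea}$ for every $k\in\Pi_{\alpha}$, hence $\sigma_{\ell}(k;g)\neq 0$, $\R^{+}\subset\Omega_{\alpha,\ell}(g)$, and $\|R_{\ell}(k;g)\|\leqslant(1-\kappa_{g})^{-1}$. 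Starting from $T_{\ell}(k;g)=-g\int_{0}^{+\infty}Rj_{\ell}(kR)\,v_{\ell}(k,g;R)\,\rmd R$ (formula \eqref{f:81}), writing $v_{\ell}(k,g;\cdot)=R_{\ell}(k;g)\,v_{\ell,0}(k;\cdot)$ (Theorem \ref{theorem:3'}) and applying the Schwarz inequality twice in $X_{\wea}$ — once for the pairing of $v_{\ell}(k,g;\cdot)$ against $Rj_{\ell}(k\,\cdot\,)\in X^{*}_{\wea}$, once inside $v_{\ell,0}$ as in the proof of Lemma \ref{lemma:v} — I obtain, for $k\in\R^{+}$ and $\ell\geqslant\ell_{0}(g)$,
\beq
|T_{\ell}(k;g)|\;\leqslant\;\frac{|g|\,|k|}{1-\kappa_{g}}\,C(V_{\ell})\,\bigl[\left\|\,\cdot\, j_{\ell}(k\,\cdot\,)\right\|^{*}_{\wea}\bigr]^{2}\;\leqslant\;\frac{|g|\,|k|\,{\bf C}(V)}{(1-\kappa_{g})\,(2\ell+1)^{1/2}}\,\bigl[\left\|\,\cdot\, j_{\ell}(k\,\cdot\,)\right\|^{*}_{\wea}\bigr]^{2},
\eeq
using \eqref{f:31} in the last step. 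Everything is thereby reduced to an $\ell$--uniform estimate of $\bigl[\left\|\,\cdot\, j_{\ell}(k\,\cdot\,)\right\|^{*}_{\wea}\bigr]^{2}$ that is sharp in its exponential dependence on $\ell$.

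For that estimate I would first dispose of the polynomial weight at negligible cost: since $R^{2}/\we(R)=R^{1+\varepsilon}/(1+R)^{1+2\varepsilon}\leqslant R$ for all $R\geqslant 0$ (recall \eqref{wf:2}), one has, for $k\in\R^{+}$,
\beq
\bigl[\left\|\,\cdot\, j_{\ell}(k\,\cdot\,)\right\|^{*}_{\wea}\bigr]^{2}=\int_{0}^{+\infty}\frac{R^{1+\varepsilon}}{(1+R)^{1+2\varepsilon}}\,|j_{\ell}(kR)|^{2}\,e^{-2\alpha R}\,\rmd R\;\leqslant\;\int_{0}^{+\infty}R\,|j_{\ell}(kR)|^{2}\,e^{-2\alpha R}\,\rmd R .
\eeq
The last integral is a classical Weber--Schafheitlin (Lipschitz--Hankel) integral: with $j_{\ell}(x)=\sqrt{\pi/(2x)}\,J_{\ell+\frac12}(x)$ and the Laplace transform of $J_{\ell+\frac12}^{2}$ one gets
\beq
\int_{0}^{+\infty}R\,|j_{\ell}(kR)|^{2}\,e^{-2\alpha R}\,\rmd R=\frac{1}{2k^{2}}\,Q_{\ell}\!\left(1+\frac{2\alpha^{2}}{k^{2}}\right)=\frac{1}{2k^{2}}\,Q_{\ell}\bigl(\cosh\beta(k)\bigr),
\eeq
$Q_{\ell}$ being the Legendre function of the second kind; this is precisely the step producing the hyperbolic argument $1+2\alpha^{2}/k^{2}=\cosh\beta(k)$ of \eqref{f:117}, and hence the rate $\beta(k)$.

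It then remains to bound $Q_{\ell}(\cosh\beta)$ uniformly in $\ell$ with the correct prefactor. I would use the representation $Q_{\ell}(\cosh\beta)=\int_{\beta}^{+\infty}e^{-(\ell+\frac12)t}\left[2(\cosh t-\cosh\beta)\right]^{-1/2}\,\rmd t$; substituting $t=\beta+s$ and using the convexity bound $\cosh(\beta+s)-\cosh\beta\geqslant s\sinh\beta$ gives $Q_{\ell}(\cosh\beta)\leqslant e^{-(\ell+\frac12)\beta}\int_{0}^{+\infty}(2s\sinh\beta)^{-1/2}e^{-(\ell+\frac12)s}\,\rmd s=\bigl(\pi/(2(\ell+\tfrac12)\sinh\beta)\bigr)^{1/2}e^{-(\ell+\frac12)\beta}$. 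Finally Gautschi's inequality $\Gamma(\ell+1)/\Gamma(\ell+\tfrac32)\geqslant c_{0}\,(\ell+\tfrac12)^{-1/2}$ (with $c_{0}$ absolute) lets one replace $(\ell+\tfrac12)^{-1/2}$ by $\Gamma(\ell+1)/\Gamma(\ell+\tfrac32)$ up to a constant, and $e^{-(\ell+\frac12)\beta}\leqslant e^{-\ell\beta}$. Collecting, the constant in \eqref{f:117} may be taken of the form $\widehat{C}(k,g)=\mathrm{const}\cdot|g|\,{\bf C}(V)\,(1-\kappa_{g})^{-1}|k|^{-1}(\sinh\beta(k))^{-1/2}$, with $\sinh\beta(k)=2\alpha\sqrt{k^{2}+\alpha^{2}}/k^{2}$; the $O(\ell^{-1}e^{-\beta(k)\ell})$ form follows since both $(2\ell+1)^{-1/2}$ and $\Gamma(\ell+1)/\Gamma(\ell+\tfrac32)$ are $O(\ell^{-1/2})$.

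The only non--routine ingredient — and hence the main obstacle — is the exact evaluation of the Lipschitz--Hankel integral $\int_{0}^{+\infty}R\,|j_{\ell}(kR)|^{2}e^{-2\alpha R}\,\rmd R$ as a Legendre $Q$--function: it is this identity, rather than mere size bounds on $j_{\ell}$, that encodes the optimal balance between the growth of $j_{\ell}(kR)^{2}$ for $kR\lesssim\ell$ and the decay $e^{-2\alpha R}$, and therefore the correct exponent $\beta(k)$ with $\cosh\beta(k)=1+2\alpha^{2}/k^{2}$; one must also verify (as above) that replacing the weight $\we$ by the bare measure $R\,\rmd R$ introduces no hidden $\ell$--dependence. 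Once the integral is identified in this way, the asymptotics of $Q_{\ell}(\cosh\beta)$ with the sharp prefactor are entirely standard.
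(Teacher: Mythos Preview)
Your argument is correct and follows essentially the same route as the paper: the Schwarz--inequality reduction to $[\|\,\cdot\,j_\ell(k\,\cdot)\|^*_{\wea}]^2$, the Watson/Lipschitz--Hankel identity producing $Q_\ell(1+2\alpha^2/k^2)$, and the standard exponential bound on $Q_\ell(\cosh\beta)$ are exactly the steps in the paper's proof (its formulae \eqref{k:3"}--\eqref{k:10}). The only cosmetic differences are that the paper quotes the bound \eqref{k:10} on $Q_\ell$ directly with the factor $\Gamma(\ell+1)/\Gamma(\ell+\tfrac32)$ (so your Gautschi step is unnecessary there), and that the paper writes the norm with $kR\,j_\ell(kR)$ rather than $R\,j_\ell(kR)$, which accounts for your $|k|$ in the numerator versus its $1/|k|$.
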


\begin{proof}
In view of Theorem \ref{theorem:3'}, we know that, for
$k\in \Omega_{\alpha,\ell}(g)$, $v_\ell(k,g;\cdot)$ belongs to 
$X_{\wea}$, and we therefore have in view of
\eqref{f:81}:
\beq
\frac{|T_\ell(k;g)|}{|g|} =
\left|\int_0^{+\infty}R j_\ell(kR)v_\ell(k,g;R)\,\rmd R\right| 
\leqslant
\frac{\|v_\ell(k,g;\cdot)\|_{\wea}}{|k|}
\ \left\|{k\cdot} \,\, j_\ell(k\cdot)\right\|^*_{\wea}.
\label{k:3}
\eeq
But, for $\ell\geqslant \ell_0(g)$, 
$\Omega_{\alpha,\ell}(g)$ contains the whole real set $k > 0$, 
and from Eq. \eqref{f:27a} and bound \eqref{k:2} we can also write:
\beq
\left\|v_\ell(k,g;\cdot)\right\|_{\wea} \leqslant 
\|v_{\ell,0}(k;\cdot)\|_{\wea}
+\frac{\sqrt{\pi}}{2} \,
\frac{|g|A_\varepsilon^2 C(V)}{2\ell+1} \,
\|v_\ell(k,g;\cdot)\|_{\wea},
\label{k:6-ex}
\eeq
which yields: 
\beq
\|v_\ell(k,g;\cdot)\|_{\wea}
\leqslant (1- \kappa_g)^{-1} \, 
\|v_{\ell,0}(k;\cdot)\|_{\wea}.
\label{k:7}
\eeq
So, for all $k>0$ and $\ell\geqslant\ell_0(g)$, we have 
(in view of \eqref{k:3} and \eqref{k:7}):
\beq
|T_\ell(k;g)| \leqslant \frac{|g|}{k} (1-\kappa_g)^{-1} 
\|v_{\ell,0}(k;\cdot)\|_{\wea}
\ \left\|{k\cdot} \,\, j_\ell(k\cdot)\right\|^*_{\wea},
\label{k:3'}
\eeq
and therefore, by applying Lemma \ref{lemma:v}  
(formula \eqref{f:v3}) for $w=w^{(\varepsilon)}$: 
\beq
|T_\ell(k;g)| \leqslant \frac{|g|}{k} (1-\kappa_g)^{-1} 
\frac{C(V)}{(2\ell +1)^{1/2}}
\left[\left\|{k\cdot} \ j_\ell(k\cdot)\right\|^*_{\wea}\right]^2.
\label{k:3"}
\eeq
We shall now introduce a majorization of
$\left\|{k\cdot} \ j_\ell(k\cdot)\right\|^*_{\wea}$,
which is \emph{uniform with respect to  all values of $\ell$ };
for this purpose, we shall use the following equality (valid for all $\ell$ and $k>0$):
\beq
\int_0^{+\infty}\!\! e^{-2\alpha R}J_{\ell+1/2}^{\,2}(kR)\,\rmd R
=\frac{1}{\pi k}\,Q_\ell\left(1+\frac{2\alpha^2}{k^2}\right),
\label{k:4}
\eeq
$Q_\ell(\cdot)$ denoting the second kind Legendre function (see \cite{Watson}).
In fact, we can write, by Eq. \eqref{a:1b'}:
\beq
\begin{split}
& \left[\left\|{k\cdot} \ j_\ell(k\cdot)\right\|^*_{\wea}\right]^2
\doteq 
\int_0^{+\infty} \!\! e^{-2\alpha R}\,\frac{|kR j_\ell(kR)|^2}
{R^{1-\varepsilon}(1+R)^{1+2\varepsilon}}\,\rmd R \\
&\qquad\leqslant\frac{\pi k}{2}
\int_0^{+\infty} \!\! e^{-2\alpha R} \, |J_{\ell+1/2}(kR)|^2 \,
\frac{R^\varepsilon}{(1+R)^{1+2\varepsilon}}\,\rmd R,
\end{split}
\label{k:5}
\eeq
and therefore, by making use of \eqref{k:4}
(since $J_{\ell + {1/2}}(kR)$ is real--valued for $k>0$):
\beq
\left[\left\|{k\cdot} \ j_\ell(k\cdot)\right\|^*_{\wea}\right]^2
\leqslant \frac{1}{2} \, Q_\ell\left(1+\frac{2\alpha^2}{k^2}\right).
\label{k:6}
\eeq
In view of the latter, majorization \eqref{k:3"} then yields:
\beq
|T_\ell(k,g)|\leqslant\frac{|g|}{2 k}(1- \kappa_g)^{-1} \,
\frac{C(V)}{(2\ell+1)^{1/2}} \,
Q_\ell\left(1+\frac{2\alpha^2}{k^2}\right).
\label{k:8}
\eeq
We now use the following uniform bound on the second--kind Legendre 
functions \cite{Bateman}\cite[III]{Bertero}:
\beq
|Q_\ell(\cosh\beta)|\leqslant \sqrt{\pi} \,
\frac{\Gamma(\ell+1)}{\Gamma(\ell+\frac{3}{2})} \, e^{-\beta(\ell+1)} \,
(1-e^{-2\beta})^{-1/2} \qquad (\beta>0).
\label{k:10}
\eeq
By taking the latter into account at the r.h.s. of Eq. \eqref{k:8}, 
one then readily obtains the majorization \eqref{f:117},
with the following definition of the constant $\widehat{C}(k,g)$:
\begin{equation*}
\widehat{C}(k,g)= \frac{\sqrt \pi \,|g|}{2k}(1-\kappa_g)^{-1}\  C(V) \,
\frac{e^{-\beta(k)}}{\left(1-e^{-2\beta(k)}\right)^\frac{1}{2}}.
\end{equation*}
\end{proof}

We can now introduce the so--called partial waves:
\beq
a_\ell(k;g)=\frac{T_\ell(k;g)}{k}=\frac{e^{2\rmi\delta_\ell(k;g)}-1}{2\rmi k}.
\label{f:118}
\eeq
Then the \emphsl{total scattering amplitude}
$F(k,\cos\theta;g)$ can be expanded as a Fourier--Legendre
series in terms of the $a_\ell(k;g)$, 
usually called the \emphsl{partial wave expansion}:
\beq
F(k,\cos\theta;g)=\sum_{\ell=0}^\infty(2\ell+1)a_\ell(k;g)P_\ell(\cos\theta),
\label{f:119}
\eeq
where the $P_\ell(\cdot)$ are the Legendre polynomials,
and $\theta$ is the scattering angle
in the center of mass system.
We have thus obtained results very similar to those
which hold for the class of local Yukawian potentials. In particular,
in view of \eqref{f:117}, \emphsl{expansion \eqref{f:119}
converges in an ellipse $\cE_{\beta(k)}$ with foci $\pm 1$ and half major axis
$\cosh\beta(k)=(1+2\alpha^2/k^2)$. This convergent expansion defines the total
scattering amplitude $F(k,\cos\theta;g)$ as a holomorphic function of 
$\cos \theta$ in the ellipse $\cE_{\beta(k)}$ for each positive value of $k$.}

\vskip 0.2cm

\noindent
Finally, we study the behaviour 
of the partial scattering amplitude $T_\ell(k;g)$
for small and large values of $k$. We prove the following proposition.

\skd

\begin{proposition}
\label{pro:k2}
{\rm (i)} For any real $g$ and any integer $\ell$ ($\ell \geqslant 0)$, 
the asymptotic behaviour of
the partial scattering amplitude $T_\ell(k;g)$ for $k$ tending to
infinity in $\Omega_\alpha$ is such that 
\beq
T_\ell(k;g) = \frac{1}{k} \,\widehat{T}_\ell(k;g),
\label{f:120as}
\eeq
where $\widehat{T}_\ell$ is bounded in $k$ at infinity.

{\rm (ii)} Considering the generic case of values 
of $g$ and $\ell$ such that  $\sigma_\ell(0;g) \neq 0$,  
the corresponding threshold behaviour (for $k\to 0$) 
of the partial scattering amplitude $T_\ell(k;g)$ is such that
\beq
\label{f:120}
T_\ell(k;g) = k^{2\ell+1} \, \widehat{T}^{(\mathrm{loc})}_\ell(k;g),
\eeq
where $\widehat{T}^{(\mathrm{loc})}_\ell$ is bounded in $k$ in a complex neighbourhood of $k=0$.
\end{proposition}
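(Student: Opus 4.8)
The plan is to read off both asymptotic regimes from the quotient representation $T_\ell(k;g)=\Theta_\ell(k;g)/\sigma_\ell(k;g)$ of Theorem \ref{theorem:8}, together with the explicit integral formula \eqref{f:81} and the bounds already assembled for $v_\ell(k,g;\cdot)$, $v_{\ell,0}(k;\cdot)$, $N_\ell(k;g)$, $\sigma_\ell(k;g)$, and the spherical Bessel functions. For part (i), I would start from \eqref{f:81}, $T_\ell(k;g)=-g\int_0^\infty Rj_\ell(kR)\,v_\ell(k,g;R)\,\rmd R$, and apply the Schwarz inequality in $X_{\wea}$ exactly as in \eqref{k:3}: $|T_\ell(k;g)|/|g|\leqslant |k|^{-1}\,\|v_\ell(k,g;\cdot)\|_{\wea}\,\|k\cdot\,j_\ell(k\cdot)\|^*_{\wea}$. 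By Corollary \ref{lemma:v}--\ref{theorem:4}(b), for $k\to\infty$ in $\Omega_\alpha$ the HS--norm $\|L_\ell(k)\|_{\rm HS}\to 0$, so $gL_\ell(k)$ is eventually a contraction and $\|v_\ell(k,g;\cdot)\|_{\wea}\leqslant (1-o(1))^{-1}\|v_{\ell,0}(k;\cdot)\|_{\wea}$, which by the Complement to Lemma \ref{lemma:v}(iii) (with $w=w^{(\varepsilon)}$) is $O(|k|)$. Since $\|k\cdot\,j_\ell(k\cdot)\|^*_{\wea}$ is $O(1)$ in $|k|$ (bound \eqref{f:88}), the two factors combine to give $|T_\ell(k;g)|\leqslant |k|^{-1}\times(\text{bounded})$, i.e. \eqref{f:120as}. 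One subtlety: for a fixed $g$ and fixed $\ell$ there may be zeros of $\sigma_\ell$ with large $|k|$ inside $\Omega_\alpha$; but Proposition \ref{proposition:2}(b) confines all such zeros to a finite disk, so the estimate is valid for $|k|$ large, which is all that is claimed.

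For part (ii), the threshold behaviour, the key input is the explicit power of $k$ carried by $j_\ell(kR)$ near the origin. I would work from \eqref{f:82}--\eqref{f:83}: since $\sigma_\ell(0;g)\neq 0$ by assumption and $\sigma_\ell$ is holomorphic at $k=0$ (it is holomorphic in all of $\Pi_\alpha$, including the origin, by Corollary \ref{lemma:v}--\ref{theorem:4}(b)), the factor $1/\sigma_\ell(k;g)$ is holomorphic and nonzero in a complex neighbourhood of $k=0$; it contributes no vanishing and no singularity. Hence the threshold power comes entirely from $\Theta_\ell(k;g)$. In \eqref{f:83}, $\Theta_\ell$ is a sum of two terms, each of which contains a factor $R'j_\ell(kR')$ integrated against a function ($v_{\ell,0}(k;R')$, resp. the $N_\ell$--smeared version) that is holomorphic in $k$ near $0$ with values in $X_{\wea}$. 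Writing $v_{\ell,0}(k;R')=\int_0^\infty V_\ell(R',R'')\,kR''j_\ell(kR'')\,\rmd R''$ shows that $v_{\ell,0}$ itself carries one factor $j_\ell(k\cdot)$. So each term of $\Theta_\ell$ is, up to holomorphic-in-$k$ coefficients, a double integral whose integrand contains $j_\ell(kR')\,j_\ell(kR'')$.

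The plan for extracting the power $k^{2\ell+1}$ is then to use the standard entire-function factorization $j_\ell(z)=z^\ell\,\widetilde{\jmath}_\ell(z)$ with $\widetilde{\jmath}_\ell$ entire and $\widetilde{\jmath}_\ell(0)\neq 0$ (equivalently, rescale the bounds \eqref{a:0} to peel off $(kR)^\ell$). Substituting into \eqref{f:81}/\eqref{f:82} gives $T_\ell(k;g)=k^{2\ell+1}\times\big[\text{integral with integrand }R'\widetilde{\jmath}_\ell(kR')\cdots\big]/\sigma_\ell(k;g)$, and the remaining job is to check that this bracketed quantity, call it $\widehat T^{(\mathrm{loc})}_\ell(k;g)$, stays bounded as $k\to0$ in a complex neighbourhood. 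For this I would invoke the holomorphy-of-integrals machinery from Appendix \ref{appendix:b} (Lemmas \ref{lemma:B2}, \ref{lemma:B8}): the integrands are holomorphic in $k$ near $0$ and dominated, uniformly for $k$ in a small disk, by fixed integrable functions of $(R,R')$ built from $w^{(\varepsilon)}(R)^{1/2}e^{\alpha R}$ and the bounds \eqref{a:0}, \eqref{f:G}, \eqref{wf:1}; this gives both holomorphy and a uniform bound, hence boundedness of $\widehat T^{(\mathrm{loc})}_\ell$. I expect the main obstacle to be bookkeeping rather than conceptual: one must make sure that after pulling out $k^{2\ell+1}$ the \emph{remaining} integrals are still absolutely convergent near $R=0$ and $R'=0$ uniformly in $k$ (the factor $w^{(\varepsilon)}(R)=R^{1-\varepsilon}(1+R)^{1+2\varepsilon}$ with $\varepsilon>0$ is what secures integrability at the origin), and that no hidden negative power of $k$ re-enters through the Green function $G_\ell(k;R,R')$ inside $N_\ell(k;g)$ — which is why it is essential to use the weight $w^{(\varepsilon)}$, for which, by Corollary \ref{lemma:v}--\ref{theorem:4}(b), $k\mapsto L_\ell(k)$ and hence $N_\ell(k;g)$ are holomorphic \emph{including at the origin}. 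Once this uniform domination is in place, both \eqref{f:120as} and \eqref{f:120} follow, and the word ``generic'' in the statement is precisely the hypothesis $\sigma_\ell(0;g)\neq0$ that removes the exceptional case of a zero-energy resonance.
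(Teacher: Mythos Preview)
Your overall plan is sound and tracks the paper's own argument closely, but there is a bookkeeping slip in part~(i) that, as written, does not yield the claimed $O(1/|k|)$. You cite the Complement to Lemma~\ref{lemma:v}\,(iii) for $\|v_{\ell,0}(k;\cdot)\|_{\wea}=O(|k|)$, but item~(iii) there is for the weight $w_2^{(\varepsilon)}=R^{3-\varepsilon}(1+R)^{2\varepsilon}$, not for $w^{(\varepsilon)}=R^{1-\varepsilon}(1+R)^{1+2\varepsilon}$. If you actually had $\|v_{\ell,0}\|=O(|k|)$ and $\|k\cdot j_\ell(k\cdot)\|^*=O(1)$, the Schwarz bound \eqref{k:3} would only give $|T_\ell|=O(1)$, not $O(1/|k|)$. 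The correct input for $w^{(\varepsilon)}$ is Corollary~\ref{lemma:v}--\ref{theorem:4}\,(a), which gives $\|v_{\ell,0}(k;\cdot)\|_{\wea}\leqslant\min(1,\tfrac{\sqrt{|k|}}{2})\,c_\ell C(V_\ell)A_\varepsilon=O(1)$; together with $\|k\cdot j_\ell(k\cdot)\|^*_{\wea}\leqslant c_\ell A_\varepsilon$ from \eqref{f:j1} (not \eqref{f:88}, which bounds $\|\cdot\,j_\ell(k\cdot)\|^*$ without the factor $k$), your contraction argument then does give $|T_\ell|=O(1/|k|)$.

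As a comparison: the paper avoids the contraction step altogether by working directly from the Smithies decomposition. It writes the single estimate
\[
|T_\ell(k;g)|\leqslant \frac{|g|}{|k|}\Bigl[1+\frac{\Psi(|g|\|L_\ell(k)\|_{\rm HS})}{|\sigma_\ell(k;g)|}\Bigr]\frac{{\bf C}(V)}{\sqrt{2\ell+1}}\bigl[\|k\cdot j_\ell(k\cdot)\|^*_{\wea}\bigr]^2,
\]
and then both regimes drop out by inserting the appropriate Bessel estimate: for $|k|\to\infty$ one uses $\|k\cdot j_\ell(k\cdot)\|^*_{\wea}\leqslant c_\ell A_\varepsilon$ and $\sigma_\ell\to1$, $\Psi\to0$; for $|k|\to0$ one uses $|kRj_\ell(kR)|\leqslant c_\ell|k|^{\ell+1}R^{\ell+1}e^{\eta R}$ (from \eqref{a:0}) to get $\|k\cdot j_\ell(k\cdot)\|^*_{\wea}\leqslant\widehat c_\ell|k|^{\ell+1}$, together with $|\sigma_\ell(k;g)|^{-1}\leqslant M_\ell$ near $0$ and the $k$--independent option of \eqref{wf:1} for $\|L_\ell(k)\|_{\rm HS}$. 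This squares the Bessel norm automatically, giving $|k|^{2\ell+2}/|k|=|k|^{2\ell+1}$ without your explicit factorization $j_\ell(z)=z^\ell\widetilde\jmath_\ell(z)$ and the Appendix~\ref{appendix:b} machinery. Your route for (ii) is correct but heavier; the paper's is a one--line norm estimate.
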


\begin{proof}
In view of Eqs. \eqref{f:82}, \eqref{f:83},  
and of the corresponding norm inequalities that have been used in
the proof of Theorem \ref{theorem:8}, we can write:
\beq
|T_\ell (k;g)| \leqslant \frac{|g|}{|k|} 
\left[1 +\frac{|g|}{|\sigma_\ell(k;g)|} \, \|N_\ell(k;g)\|_\mathrm{HS}\right] 
\ \left\|{k\cdot} \ j_\ell(k\cdot)\right\|^*_{\wea}
\ \|v_{\ell,0}(k;\cdot)\|_{\wea},
\label{k:14}
\eeq
and therefore, in view of Lemma \ref{lemma:v} (Eq. \eqref{f:v3}) and of \eqref{f:Psi'}:
\beq
|T_\ell (k;g)| \leqslant \frac{|g|}{|k|} 
\left[1 + \frac{\Psi(|g|\,\|L_\ell(k)\|_\mathrm{HS})}{|\sigma_\ell(k;g)|}\right] \
\frac{{\bf C}(V)}{(2\ell +1)^{1/2}} \,
\left[\left\|{k\cdot} \ j_\ell(k\cdot)\right\|^*_{\wea}\right]^2.
\label{k:14'}
\eeq

\vskip 0.1cm

\noindent
(i) For all $k\in {\overline \Omega}_\alpha$,
we can apply the global bounds \eqref{wf:1} and  
\eqref{f:j1} with $A_{\we}(k) = A_\varepsilon$
(see Eq. \eqref{f:e0}), which allow one to replace 
the majorization \eqref{k:14'} by
\beq
|T_\ell (k;g)| \leqslant \frac{|g|}{|k|} 
\left[1 + 
\frac{\Psi\left(|k|^{-1}\,|g|\frac{1+\ell\pi}{(2\ell+1)^{1/2}}\,{\bf C}(V) A_\varepsilon^2\right)}
{|\sigma_\ell(k;g)|}\right]
\
\frac{{\bf C}(V)}{(2\ell +1)^{1/2}} \, c_\ell^2 \, A_\varepsilon^2.
\label{k:14"}
\eeq
Now, we know that, for $k$ tending to infinity, 
one has $\Psi(c/|k|) = O(1/|k|)$ (in view of \eqref{f:Psi} and, moreover, 
$|\sigma_\ell(k;g)|$ tends uniformly to $1$ (see Theorem \ref{theorem:5} (c)).
Then, majorization \eqref{k:14"} implies:
\beq
|T_\ell (k;g)| \leqslant \frac{|g|}{|k|}\left[1 + O\left(\frac{1}{|k|}\right)\right] 
\ \frac{{\bf C}(V)}{(2\ell +1)^{1/2}} \, c_\ell^2 \, A_\varepsilon^2,
\label{k:13}
\eeq
which directly yields the asymptotic behaviour \eqref{f:120as}.

\vskip 0.1cm

\noindent
(ii) We now consider a (real) value $g_0$ of $g$ such that 
$\sigma_\ell(0;g_0) \neq 0$; moreover, we
know from Theorem \ref{theorem:5} that $\sigma_\ell(k;g_0)$ 
is holomorphic as a function of
$k$ in $\Pi_\alpha $. It follows that 
there exists a neighborhood of $k=0$,
say $S_\eta=\{k \,:\, |k|\leqslant\eta\}$, and a constant $M_\ell$ such that
$|\sigma_\ell(k;g_0)|^{-1}\leqslant M_\ell$ for all $k$ in $S_\eta$.

\vskip 0.2cm

\noindent
Here we shall make use of a majorization on the
norm of the function $kR j_\ell(kR)$ in $X^*_{\wea}$,
which is different from those obtained in \eqref{f:j1} and \eqref{k:5}. 
For this purpose, we shall use the bound \eqref{a:0}, which 
yields for all $R\geqslant 0$ and $k\in S_\eta$:
$|kR j_\ell(kR)|\leqslant c_\ell \,|k|^{\ell +1}\,R^{\ell +1}\,e^{\eta R}$.
If $\eta$ is chosen such that $\eta<\alpha$,  
the latter implies the following majorization:
\beq
\left\|{k\cdot} \ j_\ell(k\cdot)\right\|^*_{\wea}
\leqslant c_\ell \, |k|^{\ell+1}
\left[\int_0^{+\infty}\frac{R^{2\ell+1+\varepsilon} \, e^{-2(\alpha-\eta)R}}
{(1+R)^{1+2\varepsilon}}\right]^{1/2}
= \widehat{c}_\ell(\alpha,\eta,\varepsilon) \, |k|^{\ell +1},
\label{k:12'}
\eeq
where $\widehat{c}_\ell$ is a new constant (also depending of $\alpha, \eta, \varepsilon$).

By taking \eqref{k:12'} into account and also the $k$--independent option of 
the global bound \eqref{wf:1} on $\|L_\ell (k)\|_\mathrm{HS}$, the majorization 
\eqref{k:14'} can then be replaced by the following one, valid for all $k$ in $S_\eta$:
\beq
|T_\ell (k;g)| \leqslant \frac{|g|}{|k|} 
\left[1 + M_\ell \, 
\Psi\left(|g|\,\frac{\pi^{1/2}}{2}\frac{{\bf C}(V)}{2\ell +1} \, A_\varepsilon^2\right)\right] \
\frac{{\bf C}(V)}{(2\ell +1)^{1/2}}\,\widehat{c}_\ell^{\,2} \, |k|^{2\ell +2},
\label{k:12"}
\eeq
which directly yields the threshold behaviour \eqref{f:120}.
\end{proof}

\skd

\begin{remark}
In view of formula \eqref{f:84}, it follows from the previous 
propositions that $S_\ell(k)\to 1$ both for $k$ tending to zero and 
for $|k|$ tending to infinity. \\
Condition (ii) also implies that all the partial waves
$a_\ell(k)= {T_\ell(k,g)/k}$ are finite at $k=0$.
\label{rem:3}
\end{remark}

\sku

Finally, we can also re--express 
the statement
(i) of Proposition \ref{pro:k2} 
in terms of the physically interesting
quantities $\delta_\ell$ (namely, the phase--shifts
defined by Eq. \eqref{f:85}) as the following 

\skd

\begin{corollary}
\label{cor:phsh}
For any real $g$ and any integer $\ell$ ($\ell \geqslant 0)$,
the limit of the phase--shift $\delta_\ell(k;g)$ for $k$ tending to
infinity in $\Omega_\alpha$ exists and is equal to zero (mod. $\pi$).
\end{corollary}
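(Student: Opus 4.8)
The plan is to deduce Corollary \ref{cor:phsh} directly from statement (i) of Proposition \ref{pro:k2} together with the representation \eqref{f:85}--\eqref{f:86} of the partial scattering amplitude in terms of the phase--shift. First I would recall that, for $k$ real and positive, Theorem \ref{theorem:8}(ii) gives $T_\ell(k;g) = e^{\rmi\delta_\ell(k;g)}\sin\delta_\ell(k;g)$, so that $|T_\ell(k;g)| = |\sin\delta_\ell(k;g)|$. By statement (i) of Proposition \ref{pro:k2}, $T_\ell(k;g) = \frac{1}{k}\,\widehat{T}_\ell(k;g)$ with $\widehat{T}_\ell$ bounded in $k$ at infinity; hence $|\sin\delta_\ell(k;g)| \leqslant \mathrm{const}/k \xrightarrow[k\to+\infty]{} 0$ for $k$ real. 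Since $\delta_\ell(k;g)$ is defined only modulo $\pi$, the vanishing of $|\sin\delta_\ell(k;g)|$ in the limit is exactly the statement that $\delta_\ell(k;g) \to 0 \pmod \pi$, which is the assertion for $k$ real.

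Next I would extend this to $k$ tending to infinity inside the whole strip $\Omega_\alpha$. Here the point is that $S_\ell(k;g) = 1 + 2\rmi\, T_\ell(k;g)$ is a meromorphic function of $k$ in $\Omega_\alpha$ (Theorem \ref{theorem:8}(ii)), and by Proposition \ref{pro:k2}(i) we have $|S_\ell(k;g) - 1| = 2\,|T_\ell(k;g)| \leqslant \mathrm{const}/|k|$ uniformly for $k\in\Omega_\alpha$ with $|k|$ large. Thus $S_\ell(k;g) \to 1$ and, choosing the branch of $\delta_\ell$ that is continuous along any path going to infinity in $\Omega_\alpha$, the analytic logarithm relation $S_\ell = e^{2\rmi\delta_\ell}$ forces $\delta_\ell(k;g) \to 0 \pmod\pi$ there as well. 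I would state the result carefully as ``$\equiv 0$ modulo $\pi$'' to respect the intrinsic ambiguity in the definition of the phase--shift.

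The reasoning is essentially immediate once Proposition \ref{pro:k2}(i) and Theorem \ref{theorem:8}(ii) are in hand, so there is no serious technical obstacle; the only point requiring a little care is the bookkeeping of the $\pmod\pi$ ambiguity and the observation that $|T_\ell(k;g)|$ controls $|\sin\delta_\ell|$ on the real axis, while on the complex strip one should argue through $S_\ell(k;g)\to 1$ rather than through $\sin\delta_\ell$ (since $\delta_\ell$ is \emph{a priori} only a real--valued function on $\R^+$, and the cleanest way to phrase the complex statement is via the multiplicative form $S_\ell = e^{2\rmi\delta_\ell}$). A short proof along these lines is all that is needed:

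\begin{proof}
For $k$ real and positive, Theorem \ref{theorem:8}(ii) gives
$|T_\ell(k;g)| = |\sin\delta_\ell(k;g)|$, while Proposition \ref{pro:k2}(i)
gives $|T_\ell(k;g)| \leqslant |g|\,\|\widehat{T}_\ell\|/|k|$ with
$\widehat{T}_\ell$ bounded at infinity; hence $|\sin\delta_\ell(k;g)|\to 0$
as $k\to+\infty$, which means $\delta_\ell(k;g)\to 0$ modulo $\pi$.
For $k$ tending to infinity in the strip $\Omega_\alpha$, formula
\eqref{f:84} and Proposition \ref{pro:k2}(i) yield
$|S_\ell(k;g)-1| = 2|T_\ell(k;g)| \leqslant 2|g|\,\|\widehat{T}_\ell\|/|k|
\xrightarrow[|k|\to+\infty]{} 0$, so $S_\ell(k;g)\to 1$; by the
representation \eqref{f:85}, $\delta_\ell(k;g)$ therefore tends to $0$
modulo $\pi$ along any path going to infinity in $\Omega_\alpha$.
\end{proof}
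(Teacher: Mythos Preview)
Your proposal is correct and matches the paper's own reasoning: the corollary is stated there simply as a re-expression of Proposition \ref{pro:k2}(i) via the relation \eqref{f:85}--\eqref{f:86}, with no separate proof given (cf.\ also Remark \ref{rem:3}, which records $S_\ell(k)\to 1$ as $|k|\to\infty$). One small cosmetic point: the factor $|g|$ you insert in the bound $|T_\ell|\leqslant |g|\,\|\widehat T_\ell\|/|k|$ is not part of the statement \eqref{f:120as}; the $g$--dependence is already absorbed into $\widehat T_\ell$, so you can simply write $|T_\ell(k;g)|\leqslant C/|k|$.
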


\section{Nonlocal potentials with interpolation in the CAM--plane and analyticity
properties of the partial scattering amplitude}
\label{se:interpolation}

\subsection{Interpolation of the partial potentials $\boldsymbol{V_\ell}$ 
in the CAM--plane. Classes $\boldsymbol{\cN^{\,\gamma}_{\wea}}$
of nonlocal potentials}
\label{subse:interpolation}

Let us recall that the complex angular momentum (CAM) theory in potential scattering
has been rigorously stated for potentials which do not depend on the angular
momentum, namely for a large class of local potentials, including the Yukawian potentials. 
For this class of potentials one can show the existence of a distinguished
\emphsl{meromorphic interpolation} $T(\lambda,k)$ of the sequence of partial
waves $T_\ell(k)$ in the half--plane $\C^+_{-\frac{1}{2}}$ ($\Real\lambda >-\frac{1}{2}$),
which allows for performing a \emph{Watson resummation} of the partial wave expansion.
Concerning the notion of \emph{analytic interpolation of a sequence}
$\{f_\ell\}_{\ell=0}^\infty$ in the half--plane $\C^+_{-\frac{1}{2}}$,
namely the existence of a function $F(\lambda)$, holomorphic in  
$\C^+_{-\frac{1}{2}}$ and  such that $F(\ell) =f_\ell$ for all integers $\ell \geqslant 0$,
it is worthwhile to recall the important notion of \emphsl{Carlsonian interpolation}.
An analytic interpolation $F$ is said to be Carlsonian if it satisfies a global bound of the 
form $|F(\lambda)|\leqslant C e^{(\pi-\varepsilon)|\lambda|}$, with $\varepsilon >0$, 
in $\C^+_{-\frac{1}{2}}$ (which of course requires that the given sequence $\{f_\ell\}$ itself
satisfies such inequalities). Carlson's theorem \cite{Boas} then asserts that
$F(\lambda)$ is the \emphsl{unique} Carlsonian interpolation of the sequence
$\{f_\ell\}$; in fact, any other analytic interpolation 
of this sequence, such as the one obtained by adding to $F$
the non--Carlsonian function $\sin \pi \lambda$, is a non--Carlsonian interpolation. 
It is a remarkable feature of the theory of Yukawian potentials (and of a more general class 
of local potentials $V(R)$ enjoying suitable analyticity properties in $R$) that
all the relevant CAM analytic interpolations of the angular momentum formalism
can be performed in the sense of Carlsonian interpolations. 
In this connection, let us also mention that the introduction of so--called ``exchange potentials"
leads one to split the set of partial waves into two separate subsets, namely
the even--$\ell$ partial waves,
and the odd--$\ell$ partial waves.
When it is attempted to perform the CAM interpolation of the partial waves,
one is faced with the problem of handling the factor $(-1)^\ell = \cos\pi\ell$; indeed,
the function $\cos\pi\lambda$, whose restriction to integers gives $(-1)^\ell$,
is not a Carlsonian function. Therefore, in that case, one is led 
to perform \emphsl{Carlsonian analytic interpolations} of the two sets (even--$\ell$ and
odd--$\ell$) separately. Let us also note that such a separation
was also found to occur at a more fundamental level 
in our approach of complex angular momentum analysis in the general framework of 
relativistic Quantum Field Theory, 
with the notion of Bethe--Salpeter kernel playing the 
role of a generalized nonlocal potential (see \cite{Bros3}).
This separation is also exhibited in a striking way when one considers the scattering
of identical particles, since in view of the symmetrization (resp.,
antisymmetrization) properties of the boson (resp., fermion) quantum description,
one can only obtain an analytic interpolation of even--$\ell$ waves 
with all the odd--$\ell$ waves identically equal to zero for  the case of bosons, 
and the converse for the case of fermions. 

\vskip 0.1cm

Here we wish to exhibit the mechanism of analytic interpolation
of the partial waves in the framework of nonlocal rotationally invariant
potentials belonging to appropriate subclasses of the classes
considered previously in Section \ref{se:rotationally}: in view of their expansion in
partial potentials $V_\ell$, they can also be called (with a small abuse of language)
``angular--momentum--dependent potentials".
The first step in order to introduce such a relevant subclass 
of nonlocal potentials consists in finding suitable conditions 
which allow one to perform the 
\emphsl{analytic interpolation of the partial potentials} $V_\ell(R,R')$ 
(see Eqs. \eqref{f:20} and \eqref{f:21}) from integral $\ell$--values to
complex $\lambda$--values in a specific domain of the CAM--plane, containing
the real positive semi--axis. For the sake of simplicity and without loss of
generality, we only consider the case of an interpolation with respect to the set of
\emphsl{all integers} $\ell$; the case of separate interpolations with respect to the subsets
of even integers and of odd integers can be treated similarly.

\vskip 0.1cm

As we shall now explain it below, the existence of a Carlsonian 
analytic interpolation in $\lambda$ for the sequence 
$\{V_\ell(R,R')\}_{\ell=0}^\infty$ can be established either 
from appropriate bounds to be satisfied by multiple differences 
of the sequential elements $V_\ell$ themselves, or
from appropriate analyticity and increase properties of the complete potential
$V({\bR,\bR'})=V(R,R';\cos\eta)$ with respect to the complexified angular variable 
$\cos\eta$.

\vskip 0.2cm

\noindent
{\bf (i)\  Hausdorff--type bounds on the 
$\boldsymbol{V_\ell}$:} \,
Let us suppose that the partial potentials $V_\ell(R,R')$ ($\ell=0,1,2,\ldots$)
form, for arbitrary values of $R$ and $R'$, a sequence of real numbers which are
constrained as follows. Denote by $\Delta$ the difference operator:
$\Delta V_\ell(\cdot,\cdot) = V_{\ell+1}(\cdot,\cdot)-V_\ell(\cdot,\cdot)$;
we thus have:
\beq
\label{g:1}
\Delta^k V_\ell(\cdot,\cdot) =
\underbrace{\Delta \times \Delta \times \cdots \times \Delta}_{k~{\rm times}} \, V_\ell(\cdot,\cdot) =
\sum_{m=0}^k (-1)^m \binom{k}{m} \, V_{\ell+k-m}(\cdot,\cdot), \nonumber
\eeq
(for every $k \geqslant 0$); $\Delta^0$ is the identity operator, by definition.
Now, let us suppose that the sequence $\{V_\ell(\cdot,\cdot)\}_{\ell=0}^\infty$ is
constrained by the following Hausdorff--type bound:
\beq
\label{g:2}
(\ell+1)^{(1+\epsilon)} \sum_{i=0}^\ell \binom{\ell}{i}^{(2+\epsilon)}
\left|\Delta^i V_{(\ell-i)}(\cdot,\cdot)\right|^{(2+\epsilon)} < M 
\qquad (\ell=0,1,2,\ldots;\,\epsilon>0),
\eeq
where $M$ and $\varepsilon$ are given positive constants ($\varepsilon$ being as small as wanted).
It can be proved \cite{Widder} that condition \eqref{g:2}
is necessary and sufficient to represent the sequence $\{V_\ell(\cdot,\cdot)\}_{\ell=0}^\infty$ as:
\beq
\label{g:3}
V_\ell(\cdot,\cdot) = \int_0^1 x^\ell\,u(x;\cdot,\cdot)\,\rmd x \qquad (\ell=0,1,2,\ldots),
\eeq
where $u(x;\cdot,\cdot)$ belongs to $L^{(2+\epsilon)}(0,1)$.
Let us now put $x = e^{-v}$ into the integral on the r.h.s of \eqref{g:3}; we thus obtain:
\beq
\label{g:4}
V_\ell(\cdot,\cdot) = \int_0^{+\infty} e^{-\ell v}\,e^{-v}\,u(e^{-v};\cdot,\cdot)\,\rmd v
\qquad (\ell=0,1,2,\ldots). \nonumber
\eeq
The numbers $V_\ell(\cdot,\cdot)$ can be regarded as the restriction to the integers
of the following Laplace transform:
\beq
\label{g:5}
V(\lambda;\cdot,\cdot) = \int_0^{+\infty}
e^{-(\lambda+1/2)v}\,e^{-v/2}\,u(e^{-v};\cdot,\cdot)\,\rmd v
\qquad \left(\lambda\in\C,\Real\lambda\geqslant -\frac{1}{2}\right).
\eeq
In the latter, the function 
$v \mapsto U(v;\cdot,\cdot)\doteq e^{-v/2}\,u(e^{-v};\cdot,\cdot)$ can be shown 
to be in $L^1(0,+\infty)$ as a consequence of the inclusion
$u(x;\cdot,\cdot)\in L^{(2+\epsilon)}(0,1)$ (whatever small $\varepsilon$ is).
Therefore, for all $R,R'$, the Laplace transform $V(\lambda;R,R')$
of $U(v,R,R')$, defined by \eqref{g:5}, is holomorphic in the half--plane
$\Real\lambda > -\frac{1}{2}$. In view of the Riemann--Lebesgue theorem,
completed by the analysis of \cite[p. 125]{Hoffman} (and also by taking the 
Carlson theorem into account), we can state

\skd

\begin{proposition} 
\label{pro:25}
Let the partial potentials $V_\ell(R,R')$ ($\ell=0,1,2,\ldots$) 
of a given rotationally invariant nonlocal potential $V(\bR,\bR')$ satisfy
Hausdorff--type conditions of the form \eqref{g:2}.
Then there exists a unique Carlsonian interpolation $V(\lambda;R,R')$
of the sequence $\{V_\ell(R,R')\}_{\ell=0}^\infty$, which is analytic in
$\C^+_{-\frac{1}{2}}$, continuous at $\Real \lambda = -\frac{1}{2}$ and tends
uniformly to zero for $|\lambda|\to\infty$ inside any fixed
half--plane $\C^+_{-\frac{1}{2}+\delta}$, with $\delta>0$.
\end{proposition}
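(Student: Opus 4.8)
The plan is to reduce Proposition \ref{pro:25} to the chain of implications already assembled in the text: from the Hausdorff bound \eqref{g:2} to the moment representation \eqref{g:3}, then to the Laplace representation \eqref{g:5}, and finally to the holomorphy, continuity and decay of $V(\lambda;R,R')$ in $\C^+_{-\frac12}$. The only genuinely new content beyond what precedes the statement is (a) the uniformity in $\lambda$ of the decay as $|\lambda|\to\infty$ inside $\C^+_{-\frac12+\delta}$, and (b) the uniqueness assertion, which is a direct invocation of Carlson's theorem \cite{Boas}. So the proof is essentially an orchestration of cited facts, and I would write it in that spirit.

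First I would recall that, for fixed $R,R'$, condition \eqref{g:2} is, by \cite{Widder}, necessary and sufficient for the representation \eqref{g:3} with $u(\cdot;R,R')\in L^{(2+\epsilon)}(0,1)$; substituting $x=e^{-v}$ gives the form \eqref{g:4}, and extending the exponent from integer $\ell$ to complex $\lambda$ yields \eqref{g:5}. The key integrability point, already noted in the text, is that $U(v;R,R')\doteq e^{-v/2}u(e^{-v};R,R')$ lies in $L^1(0,+\infty)$: indeed $\int_0^\infty |U(v;R,R')|\,\rmd v = \int_0^1 x^{-1/2}|u(x;R,R')|\,\rmd x$, which is finite by H\"older's inequality since $u(\cdot;R,R')\in L^{(2+\epsilon)}(0,1)$ and $x^{-1/2}\in L^{q}(0,1)$ for $q<2$, in particular for the conjugate exponent $q=(2+\epsilon)/(1+\epsilon)<2$. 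Standard properties of the Laplace transform of an $L^1$ function then give that $\lambda\mapsto V(\lambda;R,R')=\int_0^\infty e^{-(\lambda+\frac12)v}U(v;R,R')\,\rmd v$ is holomorphic in $\Real\lambda>-\frac12$ and continuous up to $\Real\lambda=-\frac12$, and $V(\ell;R,R')=V_\ell(R,R')$ by construction.

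For the uniform decay, I would estimate directly: for $\Real\lambda\geqslant -\frac12+\delta$,
\beq
|V(\lambda;R,R')| \leqslant \int_0^\infty e^{-(\Real\lambda+\frac12)v}|U(v;R,R')|\,\rmd v
\leqslant \int_0^\infty e^{-\delta v}|U(v;R,R')|\,\rmd v \leqslant \|U(\cdot;R,R')\|_{L^1}, \nonumber
\eeq
so $V$ is bounded on the half-plane; boundedness together with the Carlsonian growth (which holds automatically since $|V(\lambda;R,R')|\leqslant\|U(\cdot;R,R')\|_{L^1}$ is in particular $\leqslant C e^{(\pi-\varepsilon)|\lambda|}$) already makes Carlson's theorem applicable for uniqueness. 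For the \emph{uniform} limit $V(\lambda;R,R')\to 0$ as $|\lambda|\to\infty$ in $\C^+_{-\frac12+\delta}$, I would split the integral at a cutoff $v_0$: the tail $\int_{v_0}^\infty$ is $\leqslant \|U\|_{L^1((v_0,\infty))}$, which is small uniformly in $\lambda$ for $v_0$ large; and on $(0,v_0)$ one applies the Riemann--Lebesgue theorem as completed by the discussion in \cite[p.~125]{Hoffman}, noting that $|V(\lambda;R,R')|$ restricted to the boundary line $\Real\lambda=-\frac12+\delta$ tends to $0$ as $|\Imag\lambda|\to\infty$, and that for $\Real\lambda$ increasing the extra factor $e^{-(\Real\lambda+\frac12-\delta)v}\leqslant 1$ only helps; a Phragm\'en--Lindel\"of / maximum-modulus argument on the half-plane $\C^+_{-\frac12+\delta}$ then upgrades the boundary decay to decay throughout the half-plane, uniformly. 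Finally, uniqueness: any two analytic interpolations of $\{V_\ell(R,R')\}$ that are both Carlsonian agree on the nonnegative integers and satisfy the growth hypothesis of Carlson's theorem, hence coincide in $\C^+_{-\frac12}$; this is exactly the statement recalled in the discussion of Carlson's theorem \cite{Boas} in Section \ref{se:interpolation}.

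The main obstacle I anticipate is the \emph{uniformity} in $\lambda$ of the limit as $|\lambda|\to\infty$ across the full half-plane (not merely along vertical lines), since Riemann--Lebesgue as usually stated gives decay only in the imaginary direction on a fixed vertical line. The clean way around this is the Phragm\'en--Lindel\"of step sketched above: boundedness on $\overline{\C^+_{-\frac12+\delta}}$ plus decay on its boundary and at $\Real\lambda\to+\infty$ (the latter being immediate from dominated convergence, as $e^{-(\Real\lambda+\frac12)v}\to 0$ pointwise) forces uniform decay throughout, by applying the maximum principle to $V(\lambda;R,R')$ on large half-disks. Everything else is bookkeeping with the $L^1$ norm of $U(\cdot;R,R')$, and since the statement does not claim any uniformity in $R,R'$, I would not need to track how $\|U(\cdot;R,R')\|_{L^1}$ depends on $R,R'$ at this stage — that control will come later from condition \eqref{f:30} when the class $\cN^{\,\gamma}_{\wea}$ is assembled.
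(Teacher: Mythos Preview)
Your proposal is correct and follows essentially the same route as the paper. The paper does not present a separate proof block for this proposition: its argument is the text preceding the statement, which sets up \eqref{g:3}--\eqref{g:5} via \cite{Widder}, shows $U(\cdot;R,R')\in L^1$ from $u\in L^{2+\epsilon}$, and then simply invokes ``the Riemann--Lebesgue theorem, completed by the analysis of \cite[p.~125]{Hoffman}'' together with Carlson's theorem. Your write-up reproduces this chain, making the H\"older step explicit and replacing the bare citation of Hoffman with a concrete tail-splitting plus Phragm\'en--Lindel\"of sketch for the uniform decay; the two arguments are doing the same work.
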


\skt

\noindent
{\bf (ii)\   Analyticity and behaviour at infinity 
of $\boldsymbol{V(R,R';\cos\eta)}$ in the
complex $\boldsymbol{\cos\eta}$--plane:} \,
Since every rotationally--invariant nonlocal potential $V(\bR,\bR')$ defines
for every $(R,R')$ an invariant kernel on the sphere, namely a function
of the scalar product $\frac{\bR}{R}\cdot \frac{\bR'}{R'} \doteq \cos\eta$, whose Fourier--Legendre coefficients
$V_\ell(R,R')$ are given by \eqref{f:21}, we can apply the results of  
\cite{Bros4} (see Theorem 1 of the latter) concerning
the Fourier--Laplace--type transformation of holomorphic invariant kernels on
cut--domains of the complexified sphere. These results can be summarized in the following

\skd

\begin{proposition} 
\label{pro:26}
Let $V(R,R';\zeta)$, initially defined for $-1\leqslant\zeta\equiv\cos\eta\leqslant +1$, 
satisfy (for each $(R,R')$ fixed) the following conditions:
\begin{itemize}
\addtolength{\itemsep}{0.1cm}
\item[\rm (a)] it admits an analytic continuation $\widehat{V}(R,R';\zeta)$ with respect to $\zeta$
in a given cut--plane $\Pi_\gamma = \C\setminus\{\zeta\in\R\,:\,\zeta\geqslant\gamma\}$, 
where $\gamma\geqslant 0$;
\item[\rm (b)] there exists a real number $m$, with $m>-1$, 
and a function $g(v)$ in $L^1(\R^+)$ such that for $\zeta=\cos(u+iv)$ 
varying in the closure of the cut--plane $\Pi_\gamma$ (represented by the set 
$0\leqslant u\leqslant 2\pi,\, v\geqslant 0$),
there holds a uniform bound of the following form
$|\widehat{V}(R,R';\cos(u+iv))|\leqslant (RR')^{-1}\, V_*(R,R') \, g(v) \, e^{mv}$.
\end{itemize}
Then there exists a function 
$V(\lambda; R,R')$, holomorphic in the complex half--plane 
$\C^+_m$ such that for all integers  $\ell >m$ one has:
$V(\ell; R,R')=V_\ell(R,R')$. Moreover, 
this interpolation of the sequence 
$\{V_\ell(R,R')\}_{\ell=0}^\infty$ is Carlsonian: in fact, it satisfies  
a global bound of the following form in
$\C^+_m$ (with a suitable constant $\cK$):
\beq
|V(\lambda; R,R')|\leqslant \cK V_*(R,R')\,  e^{-\gamma\Real \lambda}.
\label{g:5'}
\eeq
\end{proposition}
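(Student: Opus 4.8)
The plan is to represent $V(\lambda;R,R')$ by a Froissart--Gribov-type integral obtained by analytically continuing, in the index, the contour representation of the Fourier--Legendre coefficients. First I would rewrite the defining formula \eqref{f:21} by means of the classical Heine--Neumann identity
\beq
\int_{-1}^{+1} f(\z)\,P_\ell(\z)\,\rmd\z = \frac{1}{\rmi\pi}\oint_{\cC_0} f(\z)\,Q_\ell(\z)\,\rmd\z ,
\eeq
valid for $f$ holomorphic in a neighbourhood of $[-1,+1]$, with $\cC_0$ a suitably oriented loop encircling $[-1,+1]$ and $Q_\ell$ the second-kind Legendre function. Applied to $f=\widehat V(R,R';\cdot)$ and combined with \eqref{f:21}, this gives $V_\ell(R,R') = \frac{2RR'}{\rmi\pi}\oint_{\cC_0}\widehat V(R,R';\z)\,Q_\ell(\z)\,\rmd\z$. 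I would then deform $\cC_0$ outwards: by hypothesis (a) the integrand is holomorphic off the cut $\{\z\in\R:\z\geqslant\gamma\}$, and by the uniform bound (b) --- where $m>-1$ and $g\in L^1(\R^+)$ --- the integrand decays at infinity, so for every integer $\ell>m$ the contribution of the arc at infinity vanishes (along a suitable sequence of radii, using $g\in L^1$ and $m>-1$) and $\cC_0$ collapses onto the two banks of the cut. The outcome is a representation
\beq
V_\ell(R,R') = 2RR'\int_\gamma^{+\infty}\varphi(R,R';\z)\,Q_\ell(\z)\,\rmd\z \qquad(\ell~\mathrm{integer},\ \ell>m),
\label{FGrep}
\eeq
where $\varphi$ is a constant multiple of the jump of $\widehat V$ across the cut, satisfying $|\varphi(R,R';\z)|\leqslant c\,(RR')^{-1}\,V_*(R,R')\,\widetilde g(\z)$ with $\widetilde g$ integrable against the natural measure on $[\gamma,+\infty)$ (this is just (b) after the change of variable $\z=\cos(\rmi v)$). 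This contour manipulation is most transparent in the variable $w$ with $\z=\cos w$, and it is precisely what is carried out in Theorem~1 of \cite{Bros4}, uniformly in $\gamma\geqslant 0$ (the case $\gamma<1$, where the cut reaches into $[-1,+1]$, being included).

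Next I would \emph{define} $V(\lambda;R,R')$ by the integral \eqref{FGrep} with $Q_\ell$ replaced by the Legendre function $Q_\lambda$ of complex order. Since $Q_\lambda(\z)=O(\z^{-\lambda-1})$ as $\z\to\infty$, the integral converges absolutely and locally uniformly for $\Real\lambda>m$, so it defines a function holomorphic in $\C^+_m$ (differentiation under the integral being justified by the same domination), and it equals $V_\ell(R,R')$ at every integer $\ell>m$ by construction. The Carlsonian bound \eqref{g:5'} then follows from the standard uniform majorization of $Q_\lambda$ on $[\gamma,+\infty)$ --- the complex-order analogue of inequality \eqref{k:10} already used in Section~\ref{se:rotationally} --- which supplies the exponential factor $e^{-\gamma\Real\lambda}$; pulling $V_*(R,R')$ out of $\varphi$ and using $\int\widetilde g<+\infty$ then yields \eqref{g:5'} with a suitable constant $\cK$. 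Finally, uniqueness of an analytic interpolation obeying such an exponential bound is Carlson's theorem \cite{Boas}.

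The step I expect to be the real obstacle is the contour deformation together with the uniform control of $Q_\lambda$ jointly in $\z\in[\gamma,+\infty)$ and $\lambda\in\C^+_m$: one must check that the arc at infinity genuinely drops out for all integers $\ell>m$ (this is where $m>-1$ enters), that the jump $\varphi$ is integrable both near $\z=\gamma$ and at infinity against the weight induced by $Q_\lambda$ (this is where $g\in L^1$ enters), and that the resulting estimate is uniform enough in $\lambda$ to qualify as Carlsonian. Since all of this is exactly the content of \cite{Bros4}, in practice the proof reduces to checking that hypotheses (a)--(b) are an instance of those of Theorem~1 of \cite{Bros4}, applied pointwise in $(R,R')$ with the overall factor $V_*(R,R')$ carried along.
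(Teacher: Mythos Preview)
Your proposal is correct and coincides with the paper's approach: the paper does not give an independent proof but states the proposition as a direct summary of Theorem~1 of \cite{Bros4}, and your sketch is precisely the Froissart--Gribov construction (Heine--Neumann identity, deformation onto the cut, replacement of $Q_\ell$ by $Q_\lambda$, and the exponential bound from the integral representation of $Q_\lambda$) that underlies that theorem. Your closing remark that the proof reduces to verifying that hypotheses (a)--(b) match those of \cite{Bros4} is exactly how the paper treats it.
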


\vskip 0.2cm

\noindent
It also results from Theorem 3 of \cite{Bros4} that, conversely,
if the sequence $\{V_\ell(R,R')\}_{\ell=0}^\infty$ admits a 
Carlsonian interpolation $V(\lambda;R,R')$ in $\C^+_m$
satisfying a global bound of the previous type, then the potential
$V(R,R';\cos\eta)$ admits (for each $R,R'$) an analytic continuation in
the cut--plane $\Pi_\gamma$ of the complex variable $\zeta=\cos(u+iv)$,
which behaves at infinity like $e^{mv}$ (up to a power of $v$).

\skd

\begin{remark}
In the following, we shall consider a framework in which the
majorizing potential $V_*(R,R')$ satisfies an appropriate $L^2$--norm. 
The previous conditions of analyticity 
of ${\widehat V}(R,R';\cos\eta)$ in $\cos \eta$ and of 
$V(\lambda; R,R')$ in $\lambda$ must then be understood to hold
\emph{for a.e. $(R,R')$, with respect to the chosen $L^2$--norm.} 
\label{rem:4}
\end{remark}

\skd

In view of these results, one can exhibit examples of such nonlocal potentials, 
namely those of the form 
$\widehat{V}(R,R';\cos\eta) = (RR')^{-1}\,V_*(R,R') \, \rmf(\cos(u+iv))$,
where $\rmf$ denotes a holomorphic function in the cut--plane $\Pi_\gamma$, 
which behaves at infinity like $e^{mv}$. As a basic explicit example of
this type, for which $m=-1$, one can take $\rmf(\cos\eta)$ of the following form:
$\rmf(\cos\eta) = (e^\gamma - \cos\eta)^{-1}$, $(\gamma > 0)$.
Then, in view of formula \eqref{f:21}, one has:
$V_\ell(R,R')= 2\pi V_*(R,R')\, \rmf_\ell$, where $\rmf_\ell$ is expressed in terms of the
second--kind Legendre function $Q_\ell$
(see \cite[Vol. 2, p. 316, formula (17)]{Bateman2}) by the formula:
\beq
\label{g:6}
\rmf_\ell = \int_{-1}^{+1}\frac{P_\ell(\cos\eta)}{e^\gamma-\cos\eta}\,\rmd(\cos\eta)
= 2 \, Q_\ell(e^\gamma).
\eeq
It is known that the function $\lambda \mapsto Q_\lambda(e^\gamma)$ is 
holomorphic in the half--plane $\Real\lambda > -1$,
and tends to zero uniformly as $e^{-\gamma(\lambda+1)}$
for $|\lambda|\to\infty$ in the 
half--plane $\Real\lambda\geqslant -\frac{1}{2}$. 
It therefore represents the unique Carlsonian
interpolation of the sequence $\{Q_\ell(e^\gamma)\}_{\ell=0}^\infty$.
This example therefore illustrates in a typical way the previous proposition
(including a bound of the form \eqref{g:5'}). 

\noindent
Another example of a function $V(\lambda;R,R')$ which satisfies \eqref{g:5'}
is obtained in the previous approach (i) (Hausdorff--type bounds) by imposing  
the support condition $u(e^{-v};\cdot,\cdot)=0$ for $v\in[0,\gamma)$
into the Laplace representation \eqref{g:5} of $V(\lambda;R,R')$. 

\vskip 0.2cm 

\noindent
{\bf Classes $\boldsymbol{\cN^{\,\gamma}_{\wea}}$ of nonlocal potentials:} \,
For the purpose of the present Section \ref{se:interpolation},
we shall introduce subclasses $\cN^{\,\gamma}_{\wea}$,
of the previously considered classes $\cN_{\wea}$ of rotationally--invariant 
nonlocal potentials $V(\bR,\bR')$ (see Subsection \ref{subse:properties-L}), 
by imposing the following additional 

\vskip 0.1cm

\noindent
\textbf{Assumption:} the Fourier--Legendre coefficients $V_\ell(R,R')$ of 
$V(\bR,\bR')$ admit a Carlsonian interpolation $V(\lambda;R,R')$ in the half--plane 
$\Real\lambda > -\frac{1}{2}$ satisfying a global bound of the form \eqref{g:5'}; 
the given number $\gamma$ is supposed to be positive and 
the analyticity property of $V$ with respect to $\lambda$ is supposed to hold 
for a.e. $(R,R')$, according to the choice of the majorizing potential $V_*(R,R')$ 
(see our previous remark and the Hilbertian requirement on $V_*$ specified below). 
For brevity, we shall call these potentials \emphsl{Carlsonian potentials with 
CAM--interpolation $V(\lambda;R,R')$ and rate of decrease $\gamma$.}

\vskip 0.1cm

\noindent
Regarding the Hilbert space on which the potentials $V(\lambda;R,R') $ are acting,
we thus consider the space $X_{w,\alpha}$ defined in \eqref{f:28}  
with the choice \eqref{wf:2} for the weight--function $w$, namely:
$w^{(\varepsilon)}(R)=R^{(1-\varepsilon)}(1+R)^{(1+2\varepsilon)}$ ($\varepsilon>0$), 
which already played the main role in Section \ref{se:rotationally}.

Next, we wish to ensure the following condition on the potential
(note that we use the same notation as in Section \ref{se:rotationally},
formula \eqref{f:30} for the constant $C(\cdot)$): for all $\lambda$
with $\Real\lambda >-\frac{1}{2}$: 
\beq
\begin{split}
& C(V(\lambda;..)) \doteq \left\{\int_0^{+\infty} 
R^{(1-\varepsilon)}(1+R)^{(1+2\varepsilon)}\,e^{2\alpha R}\,\rmd R\right. \\
&\quad \left.\times\int_0^{+\infty} \!\!
(R')^{(1-\varepsilon)}(1+R')^{(1+2\varepsilon)}\,e^{2\alpha R'}
\left|V(\lambda;R,R')\right|^2\,\rmd R'\right\}^{1/2}<\infty,
\end{split}
\label{AG.7}
\eeq
Since $V(\lambda;R,R') $ is assumed to satisfy a bound of the form \eqref{g:5'}, it is 
natural to impose the following condition on the kernel $V_*(R,R')$: 
\beq
\begin{split}
& C(V_*) \doteq
\left\{\int_0^{+\infty} \,\, 
R^{(1-\varepsilon)}(1+R)^{(1+2\varepsilon)}\,e^{2\alpha R}\,\rmd R\right. \\
&\quad \left.\times\int_0^{+\infty} \!\! 
(R')^{(1-\varepsilon)}(1+R')^{(1+2\varepsilon)} \, e^{2\alpha R'}
\left|V_*(R,R')\right|^2\,\rmd R'\right\}^{1/2} <\infty,
\end{split}
\label{AG.9}
\eeq
or (as in \eqref{f:30'} and \eqref{f:31}) in terms of the function
\beq
V_*^{\left(\we\right)}(R) \doteq 
\left(\int_0^{+\infty} \we(R') \, e^{2\alpha R'} \, V_*^2(R,R')
\,\rmd R'\right)^{1/2},
\label{AG.6'}
\eeq
which belongs to $X_{\wea}$, 
\beq
C(V_*) = \left\|V_*^{\left(\we\right)}\right\|_{\wea} < \infty.
\label{AG.8}
\eeq 
In view of \eqref{g:5'}, one therefore has the following global majorization:
\beq
C(V(\lambda; ..)) \leqslant \cK \, e^{-\gamma\Real\lambda} \, C(V_*) 
\qquad \left(\Real\lambda > -\frac{1}{2}\right).
\label{AG.10}
\eeq
The set of conditions \eqref{g:5'}, \eqref{AG.7}, \eqref{AG.9}, and \eqref{AG.10} characterize the class 
$\cN^{\,\gamma}_{\wea}$ of nonlocal potentials for which we are going to study the analyticity properties of 
the partial scattering amplitudes. 

\vskip 0.3cm

\noindent
\textbf{Simple examples of potentials in
$\boldsymbol{\cN^{\,\gamma}_{\wea}}$:}
The simplest examples which can be seen to satisfy the previous conditions
are obtained by choosing $V_*$ as a kernel of rank one $V_*= \mathrm{v}\otimes\mathrm{v}$ 
and $V(\lambda;R,R')= \mathrm{v}(R)\,\mathrm{v}(R')\,e^{-\gamma \lambda}$
(or $\mathrm{v}(R)\,\mathrm{v}(R')\,Q_\lambda(e^\gamma)$). 
Concerning $\mathrm{v}$, one saturates bound \eqref{AG.9}
with a ``Yukawa--type" function such as $\mathrm{v}(R)= (1+R)^{-\frac{3}{2}-\varepsilon}\,e^{-\alpha R}$. 

\vskip 0.5cm

\subsection{Analyticity and boundedness properties in complex $\boldsymbol{(\lambda,k)}$--space of the functions 
$\boldsymbol{v_0(\lambda,k;\cdot)}$, the operators $\boldsymbol{L(\lambda,k)}$,
and the resolvent $\boldsymbol{R(\lambda,k;g)}$}
\label{subse:aggiunte4.2}

Our assumptions on the CAM interpolation $V(\lambda;R,R')$ of the potentials 
$V_\ell$ will allow us to introduce corresponding CAM interpolations 
$v_0(\lambda,k;\cdot)$, $L(\lambda,k)$, and $R^{(\mathrm{tr})}(\lambda,k;g)$ 
for the respective sequences $\{v_{\ell,0}(k;\cdot)\}$, $\{L_\ell(k)\}$, and 
$\{R^{(\mathrm{tr})}_\ell(k;g)\}$, defined earlier in \eqref{f:27b}, \eqref{f:27c},
\eqref{f:454}, and \eqref{f:46'}.
The derivation of the properties of all these CAM interpolations
relies not only on the assumptions on $V(\lambda;R,R')$, but also on 
the properties of the spherical functions $j_\lambda(kR)$ and of the
CAM Green function $G(\lambda,k;R,R')$ for 
$(\lambda,k)\in\C^+_{-\frac{1}{2}}\times \C^{\cut}$, where 
$\C^{\cut}=\C\setminus (-\infty,0]$. These properties have been 
established in Appendix \ref{appendix:a} (Subsections \ref{subappendix:a.complex}
and \ref{subappendix:a.complements}) and, in particular, it has been shown there
that for nonintegral values of $\ell$, these analytic functions of the two complex
variables ${\lambda,k}$ are holomorphic with respect to $k$
in a ramified domain with branch point at $k=0$, from which we only retain here
a distinguished sheet $\C^{\cut}$. For all $k$ in this domain, the functions 
$j_\lambda(kR)$ and $G(\lambda,k;R,R')$ are CAM interpolations of the corresponding 
sequences of functions $\{j_\ell(kR)\}_{\ell=0}^\infty$ and $\{G_\ell(k;R,R')\}_{\ell=0}^\infty$; 
however, it is to be noted that these interpolations
are {\bf non--Carlsonian} for general values of $k$. 
It is only when $k=\rmi\kappa$, $\kappa>0$, that 
(in view of bound \eqref{a:20}) the Green function
$G(\lambda,\rmi\kappa;R,R')$ appears to be the 
(unique) Carlsonian interpolation of the corresponding
sequence $\{G_\ell(\rmi\kappa;R,R')\}_{\ell=0}^\infty$, thus implying similar
Carlsonian properties for the functions $L(\lambda,i\kappa)$ 
and $R^{(\mathrm{tr})}(\lambda,\rmi\kappa;g)$ ($\kappa>0$). 
The (non--Carlsonian) interpolations obtained for general values of $k$ in $\C^{(cut)}$
are the analytic continuations of the latter with respect to $k$. 

\vskip 0.1cm

The previous considerations concerning the occurrence of a branch--point at $k=0$
and of non--Carlsonian bounds in the half--plane $\Real \lambda> -\frac{1}{2}$ lead us 
to introduce the following domains:

\begin{itemize}
\addtolength{\itemsep}{0.1cm}
\item[(i)] In the complex $\lambda$--plane: for each pair of positive numbers
$(\gamma,\delta)$, the (\emph{truncated}) \emph{angular sector} 
\beq
\Lambda_\gamma^{(\delta)} \doteq
\left\{\lambda\in\C\,:\,|\Imag\lambda|<\frac{\gamma}{3\pi}\left(\Real\lambda +\frac{1}{2}\right); \,
\Real\lambda >-\frac{1}{2}+\delta\right\},
\label{AG.5}
\eeq
whose closure is denoted by $\overline{\Lambda}_\gamma^{(\delta)}$.
\item[(ii)] In the complex $k$--plane:\\ \null\hspace{0.2cm} the \emph{cut--strip}
$\Omega_\alpha^{\cut} \doteq \Omega_\alpha\setminus (-\infty,0]$,
and its closure $\overline{\Omega}_\alpha^{\cut}$;\\ 
\null \hspace{0.2cm} the \emph{cut half--plane} $\Pi_\alpha^{\cut} \doteq \Pi_\alpha\setminus (-\infty,0]$, 
and its closure $\overline{\Pi}_\alpha^{\cut}$.
\item[(iii)] In the complex $(\lambda,k)$--space $\C^2$: the domain
$D^{(\delta)}_{\gamma,\alpha}\doteq\Lambda_\gamma^{(\delta)}\times\Omega_\alpha^{\cut}$,
whose closure is 
$\overline{D}^{(\delta)}_{\gamma,\alpha}\doteq\overline{\Lambda}_\gamma^{(\delta)}\times\overline{\Omega}_\alpha^{\cut}$.
\end{itemize}

\subsubsection{The vector--valued function $\boldsymbol{(\lambda,k) \mapsto v_0(\lambda,k;R)}$.}  %4.2.1
\label{subsubse:v-0-lambda}

The CAM interpolation of the sequence of functions $\{v_{\ell,0}(k;R)\}_{\ell=0}^\infty$
(see Eq. \eqref{f:27b}) is formally defined by the following expression,
whose analyticity and boundedness properties are stated below:
\beq
v_0(\lambda,k;R) = \int_0^{+\infty}V(\lambda;R,R')\,kR'\,j_\lambda(kR')\,\rmd R'.
\label{AG.11}
\eeq
In order to obtain a majorization of the integrand on the r.h.s. of the latter,
we shall use bound \eqref{a:41} for the spherical Bessel function $j_\lambda(kR)$,
which holds for $(\lambda,k)\in\C_{-\frac{1}{2}}^+\times \C^{\cut}$,
and yields (compare to \eqref{f:j0} and \eqref{f:j1}):
\beq
\begin{split}
&\left\|{k\cdot} \ j_\lambda(k\cdot)\right\|^*_{\wea}
\doteq
\left(\int_0^{+\infty}\frac{|kR j_\lambda(kR)|^2}{{\we}(R)\,e^{2\alpha R}}\,\rmd R\right)^{{1/2}} \\
&\quad\leqslant
\sqrt{\frac{\pi}{2}}\,|k|^\frac{1}{2}\, 
e^{\frac{3\pi}{2}|\Imag\lambda|}\left(\frac{3}{2}+\frac{1}{\pi(\Real\lambda+\frac{1}{2})}\right)
\ \left(\int_0^{+\infty}
\frac{R\, e^{-2(\alpha-|\Imag k|)R}}{{\we}(R)}\,\rmd R
\right)^{1/2}\!\!\!.
\end{split}
\label{AG.14}
\eeq
By taking Eqs. \eqref{wf:2} and \eqref{f:e0} into account for majorizing
the latter integral, we then obtain, for 
$(\lambda,k)\in \C^+_{-\frac{1}{2}}\times \overline \Omega_\alpha^{\cut}$:
\beq
\left\|{k\cdot} \ j_\lambda(k\cdot)\right\|^*_{\wea} \leqslant
\sqrt{\frac{\pi}{2}} \, A_\varepsilon\, |k|^\frac{1}{2}\, 
e^{\frac{3\pi}{2}|\Imag\lambda|}\left(\frac{3}{2}+\frac{1}{\pi(\Real\lambda+\frac{1}{2})}\right).
\label{AG.15} % formula 4.14
\eeq
We then have: 

\skd

\begin{theorem}
\label{the:AG.1}
For every nonlocal potential $V$ in $\cN^{\,\gamma}_{\wea}$,
the corresponding function $(\lambda,k) \mapsto v_0(\lambda,k;\cdot)$, 
is well--defined by the integral \eqref{AG.11} as a vector--valued function in  
the set $\C^+_{-\frac{1}{2}}\times {\overline \Omega}_\alpha^{\cut}$,
holomorphic in $\C^+_{-\frac{1}{2}}\times {\Omega}_\alpha^{\cut}$,
taking its values in $X_{\wea}$; 
the corresponding norm $\|v_0(\lambda,k;\cdot)\|_{\wea}$ admits the following
majorization in $\C^+_{-\frac{1}{2}}\times {\overline \Omega}_\alpha^{\cut}$:
\beq
\left\|v_0(\lambda,k;\cdot)\right\|_{\wea}
\leqslant\sqrt{\frac{\pi}{2}} \, A_\varepsilon \, \cK C(V_*)
\, |k|^\frac{1}{2}\, e^{-\gamma\Real \lambda}e^{\frac{3\pi}{2}|\Imag\lambda|} 
\left(\frac{3}{2}+\frac{1}{\pi(\Real\lambda+\frac{1}{2})}\right),
\label{AG.12}
\eeq
(the constants on the r.h.s. being defined by Eqs. \eqref{f:e0},
\eqref{g:5'}, \eqref{AG.6'}, and \eqref{AG.8}).
Moreover, the function $k^{-\frac{1}{2}}\ v_0(\lambda,k;\cdot)$ 
is defined as a continuous and uniformly bounded 
vector--valued function of $(\lambda,k)$ in any closed set  
$\overline{D}^{(\delta)}_{2\gamma,\alpha}$ (for any $\delta >0$). 
\end{theorem}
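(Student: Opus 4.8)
The plan is to mimic the proof of Lemma \ref{lemma:v} (and its Complement), replacing the integer-$\ell$ ingredients by their CAM counterparts: the bound \eqref{a:41} on $j_\lambda(kR)$ (summarized in \eqref{AG.15}) plays the role that \eqref{a:0} played there, and the assumed bound \eqref{g:5'} on $V(\lambda;R,R')$ together with the Hilbertian condition \eqref{AG.9} on $V_*$ replaces the finiteness of $C(V_\ell)$. First I would establish the pointwise majorization
\beq
|v_0(\lambda,k;R)| = \left|\int_0^{+\infty}V(\lambda;R,R')\,kR'\,j_\lambda(kR')\,\rmd R'\right|
\leqslant V_*^{(\we)}(R)\cdot \cK\, e^{-\gamma\Real\lambda}\,\left\|{k\cdot}\ j_\lambda(k\cdot)\right\|^*_{\wea},
\eeq
obtained by inserting the bound $|V(\lambda;R,R')|\leqslant \cK\, e^{-\gamma\Real\lambda}\,V_*(R,R')$ from \eqref{g:5'} and applying the Schwarz inequality in $X_{\wea}$ exactly as in \eqref{f:v2}, recognizing the definition \eqref{AG.6'} of $V_*^{(\we)}$. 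Since $V_*^{(\we)}\in X_{\wea}$ by \eqref{AG.8}, this shows $v_0(\lambda,k;\cdot)\in X_{\wea}$ for every $(\lambda,k)\in\C^+_{-\frac{1}{2}}\times\overline\Omega_\alpha^{\cut}$, and taking the $X_{\wea}$-norm and using \eqref{AG.15} together with \eqref{AG.8} gives the majorization \eqref{AG.12} directly.

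Next I would prove continuity and holomorphy. Here the argument parallels the ``Proof of the last statement'' in Lemma \ref{lemma:v}: for $(\lambda,k)$ varying in any bounded piece of $\C^+_{-\frac{1}{2}}\times\Omega_\alpha^{\cut}$ (away from $\Real\lambda=-\frac12$ and from the cut), the integrand of \eqref{AG.11} is holomorphic in $(\lambda,k)$ for a.e.\ $R$ — using that $j_\lambda(kR)$ is holomorphic in $(\lambda,k)$ on $\C^+_{-\frac12}\times\C^{\cut}$ (Appendix \ref{appendix:a}) and that $V(\lambda;R,R')$ is holomorphic in $\lambda$ for a.e.\ $(R,R')$ by the Assumption defining $\cN^{\,\gamma}_{\wea}$ — and it is dominated, uniformly on such a bounded set, by the integrable function $R'\mapsto \mathrm{const}\cdot V_*(R,R')\,w^{1/2}(R')e^{\alpha R'}\cdot\big[(1+R')^{-1}w^{-1/2}(R')\big]$ (the last bracket coming from the bound on $kR'j_\lambda(kR')$ analogous to the estimate used in \eqref{AG.14}; integrability follows from Schwarz using \eqref{AG.6'}). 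Thus $v_0(\lambda,k;R)$ lies in a class $\cC(D,\mu,p)$ with $\mu(R)=\we(R)e^{2\alpha R}$, $p=2$, and Lemmas \ref{lemma:B8}–\ref{lemma:B9} yield that $(\lambda,k)\mapsto v_0(\lambda,k;\cdot)$ is continuous on the closure and holomorphic on the interior; covering $\C^+_{-\frac12}\times\Omega_\alpha^{\cut}$ by such bounded sets finishes this part.

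Finally, for the last sentence I would restrict to $\overline{D}^{(\delta)}_{2\gamma,\alpha}=\overline\Lambda_{2\gamma}^{(\delta)}\times\overline\Omega_\alpha^{\cut}$ and divide \eqref{AG.12} by $|k|^{1/2}$, so that the $|k|^{1/2}$ factor disappears and the residual $k$-dependence enters only through $A_\varepsilon$, which is $k$-independent. On $\overline\Lambda_{2\gamma}^{(\delta)}$ one has $|\Imag\lambda|\leqslant \frac{2\gamma}{3\pi}(\Real\lambda+\frac12)$ and $\Real\lambda\geqslant -\frac12+\delta$, hence
\beq
e^{-\gamma\Real\lambda}\,e^{\frac{3\pi}{2}|\Imag\lambda|}\leqslant e^{-\gamma\Real\lambda}\,e^{\gamma(\Real\lambda+\frac12)} = e^{\gamma/2},
\eeq
so the exponential factors are bounded, and the rational factor $\big(\tfrac32+\tfrac{1}{\pi(\Real\lambda+\frac12)}\big)$ is bounded by $\tfrac32+\tfrac{1}{\pi\delta}$. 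Therefore $k^{-1/2}v_0(\lambda,k;\cdot)$ is uniformly bounded in $X_{\wea}$ on $\overline{D}^{(\delta)}_{2\gamma,\alpha}$, and its continuity there is inherited from the continuity already proved. The main obstacle I anticipate is not any single estimate but rather the bookkeeping needed to verify carefully that the dominating function used for the Lebesgue/Vitali-type argument (via Lemmas \ref{lemma:B8}–\ref{lemma:B9}) is genuinely $L^1$ in $R'$ uniformly on the relevant bounded $(\lambda,k)$-sets, including near the corner $\Real\lambda\to-\frac12$ and near $k=0$ on the cut-strip boundary; once the weight $\we$ and the bound \eqref{a:41} are in hand this is routine, but it is where the choice \eqref{wf:2} of $\we$ and the exponent $1-\varepsilon$ at the origin are actually used.
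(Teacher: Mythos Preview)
Your proposal is correct and follows essentially the same route as the paper: Schwarz inequality combined with the bound \eqref{g:5'} and the definition \eqref{AG.6'} to get the pointwise estimate, then \eqref{AG.8} and \eqref{AG.15} for the norm bound \eqref{AG.12}, then Lemmas \ref{lemma:B8}--\ref{lemma:B9} with $\mu(R)=\we(R)e^{2\alpha R}$ for holomorphy and continuity, and finally the sector geometry of $\overline{\Lambda}_{2\gamma}^{(\delta)}$ to kill the exponential and rational factors exactly as you wrote. Even your explicit uniform bound $e^{\gamma/2}\bigl(\tfrac{3}{2}+\tfrac{1}{\pi\delta}\bigr)$ matches the paper's.
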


\begin{proof} 
The proof is quite similar to the one of Lemma \ref{lemma:v}.
By applying the Schwarz inequality to integral \eqref{AG.11}
and taking Eqs. \eqref{g:5'} and \eqref{AG.6'} into account, 
one obtains (for a.e. $R$): 
\beq
|v_0(\lambda,k;R)|
\leqslant \cK \, e^{-\gamma \Real\lambda} \,
V_\ell^{\left(\we\right)}(R)
\, \left\|{k\cdot} \ j_\lambda(k\cdot)\right\|^*_{\wea}.
\label{f:v2'}
\eeq
It then follows from \eqref{AG.8} and \eqref{AG.15}  
that the function $R \mapsto v_0(\lambda,k;R)$ belongs to 
$X_{\wea}$ for all 
$(\lambda,k)\in\C^+_{-\frac{1}{2}}\times\overline\Omega_\alpha^{\cut}$,
and satisfies bound \eqref{AG.12}.
Then, it follows from the latter that,
for $(\lambda,k)$ varying in any set 
$\overline{D}^{(\delta)}_{2\gamma,\alpha}$ 
(defined for any $\delta >0$ via \eqref{AG.5}), 
the function $(\lambda,k)\mapsto |k|^{-\frac{1}{2}}\,\|v_0(\lambda,k;\cdot)\|_{\wea}$ 
is uniformly bounded by the constant 
$\sqrt{\frac{\pi}{2}} A_\varepsilon \cK \cC(V_*) e^{\gamma/2}\left(\frac{3}{2}+\frac{1}{\pi\delta}\right)$.

\vskip 0.1cm

Finally, the holomorphy and continuity properties of the vector--valued
function $(\lambda,k)\mapsto v_0(\lambda,k;\cdot)$ are obtained as
in Lemma \ref{lemma:v} by a direct application of 
Lemma \ref{lemma:B9} (giving the holomorphy and continuity properties in
$(\lambda,k)$ of the integral \eqref{AG.11}, for a.e. $R$) 
and of Lemma \ref{lemma:B8}, by noting that the 
function $(\lambda,k,R)\mapsto v_0(\lambda,k;R)$ 
belongs to a relevant class ${\cC}(D, {\mu},p)$, with  
$D= \C^+_{-\frac{1}{2}}\times \overline\Omega_\alpha^{\cut}$
(resp., $\overline{D}^{(\delta)}_{2\gamma,\alpha}$ for the continuity property),  
$\mu(R)= \we(R)\, e^{2\alpha R}$ and $p=2$.
\end{proof}

\subsubsection{The operator--valued function $\boldsymbol{L(\lambda,k)}$.}  % 4.2.2
\label{subsubse:L-lambda}

The CAM interpolation of the sequence of kernels $\{L_\ell(k;R,R')\}_{\ell=0}^\infty$ 
(see Eq. \eqref{f:27c}) is formally defined by the following expression,
whose analyticity and boundedness properties are stated below:
\beq
L(\lambda,k;R,R')=\int_0^{+\infty}V(\lambda;R,R'')G(\lambda,k;R'',R')\,\rmd R''.
\label{AG.19}
\eeq
As for the case of the kernels $L_\ell(k;R,R')$ (see Section \ref{se:rotationally}),
we are going to show that $L(\lambda,k;R,R')$ is bounded by
a kernel of rank one. To this effect, we shall use bounds
\eqref{a:30} and \eqref{a:32} for
the complex angular momentum Green function,
which hold respectively for $\Imag k \geqslant 0$ and $\Imag k <0$ in the domain
$(\lambda,k)\in\C_{-\frac{1}{2}}^+\times \C^{\cut}$.
These bounds imply the following global majorization, which holds for
$(\lambda,k)\in\C_{-\frac{1}{2}}^+\times 
\overline{\Pi}_\alpha^{\cut}$:
\beq
\left|G(\lambda,k;R,R')\right| \leqslant c\,\sqrt{R R'} \, e^{\alpha(R+R')} \,
e^{3\pi|\Imag\lambda|}\left(1+\frac{1}{2\Real\lambda+1}\right).
\label{AG.3}
\eeq
Now, if the potential $V(\lambda;R,R')$ belongs to the class
$\cN^{\,\gamma}_{\wea}$,
it follows from \eqref{g:5'} and \eqref{AG.3} that the following majorization
holds for a.e. $(R,R',R'')$ and  
$(\lambda,k)\in\C_{-\frac{1}{2}}^+\times \overline{\Pi}_\alpha^{\cut}$:
\beq
\left|V(\lambda;R,R'') G(\lambda,k;R'',R')\right| 
\leqslant M_\gamma(\lambda)\, V_*(R,R'')\,\sqrt{R'' R'} \, e^{\alpha(R''+R')},
\label{AG.4}
\eeq
where:
\beq
M_\gamma(\lambda) \doteq
c\,\cK\,e^{3\pi|\Imag\lambda|}\,e^{-\gamma\Real\lambda}
\left(1+\frac{1}{2\Real\lambda+ 1}\right).
\label{AG.4'}
\eeq
In view of the latter, we obtain the following bound for the integral \eqref{AG.19}: 
\beq
|L(\lambda,k;R,R')| \leqslant 
M_\gamma(\lambda) \, \sqrt{R'} \, e^{\alpha R'}
\int_0^{+\infty} V_*(R,R'') \, \sqrt{R''} \, e^{\alpha R''}\, \rmd R'',
\label{AG.a}
\eeq
which yields, by taking \eqref{AG.6'} into account, using the bound 
$\int_0^{+\infty} \frac{R''}{\we(R'')}\,\rmd R'' \leqslant A_\varepsilon^2$
(with $A_\varepsilon$ given by \eqref{f:e0}), and Schwarz's inequality: 
\beq
|L(\lambda,k;R,R')| \leqslant 
M_\gamma(\lambda)\, A_\varepsilon\, V_*^{\left(\we\right)}(R)\, \sqrt{R'} \, e^{\alpha R'}. 
\label{AG.b} 
\eeq
As in Section \ref{se:rotationally} (see Theorem \ref{theorem:4}), 
it is then appropriate to introduce the Hilbert space 
$\widehat{X}_{\wea}$
of Hilbert--Schmidt kernels $K(R,R')$ with respect to the
measure $\mu(R)\,\rmd R= \we(R) \, e^{2\alpha R} \,\rmd R
=R^{1-\varepsilon}(1+R)^{1+2\varepsilon}\,e^{2\alpha R}\,\rmd R$ 
(see Appendix \ref{subappendix:b.products}, formula \eqref{b:8}), whose norm is given by 
\beq
\left\|K\right\|^2_\mathrm{HS}
\doteq \left\|K\right\|^2_{\widehat{X}_{\wea}}
=\int_0^{+\infty}\!\!\! \frac{e^{-2\alpha R}}{\we(R)}\,\rmd R
\int_0^{+\infty}\!\!\! \we(R')\,e^{2\alpha R'}\left|K(R',R)\right|^2\,\rmd R'.
\label{AG.23}
\eeq
In fact, the kernel of rank one on the r.h.s. of Eq. \eqref{AG.b} belongs to this space,
its norm being expressed and majorized as follows:
\beq
\left\|V_*^{\left(\we\right)}\right\|_{\wea} 
\left\|\sqrt{(\cdot)}\, e^{(\alpha\,\cdot)}\right\|^*_{\wea} =
C(V_*)\, \left[\int_0^{+\infty}\frac{R''}{\we(R'')}\,\rmd R''\right]^{1/2} \leqslant 
C(V_*)\,A_\varepsilon.
\label{AG.23'} % formula 4.24
\eeq
It then follows from \eqref{AG.b} that, for each 
$(\lambda,k)\in\C_{-\frac{1}{2}}^+\times 
\overline{\Pi}_\alpha^{\cut}$, 
the kernel $L(\lambda,k;R,R')$ belongs to
$\widehat{X}_{\wea}$
and satisfies the following norm inequality:
\beq
\left\|L(\lambda,k)\right\|_\mathrm{HS} \leqslant
M_\gamma(\lambda)\ C(V_*)\ A_\varepsilon^2.
\label{AG.c}
\eeq
We can then state the following theorem.

\skd

\begin{theorem}
\label{the:AG.2}
For every nonlocal potential $V\in\cN_{\wea}^{\,\gamma}$,
the corresponding kernels $L(\lambda,k;R,R')$ (formally defined by \eqref{AG.19})
are well--defined as compact operators $L(\lambda,k)$ of Hilbert--Schmidt--type
acting in the Hilbert space $X_{\wea}$ 
for all $(\lambda,k)$ in 
$\C_{-\frac{1}{2}}^+ \times \overline{\Pi}_\alpha^{\cut}$, 
and the corresponding Hilbert--Schmidt norm 
$\left\|L(\lambda,k)\right\|_{\mathrm{HS}}$ of $L(\lambda,k)$ in 
$\widehat{X}_{\wea}$ admits the following global majorization:
\beq
\left\|L(\lambda,k)\right\|_\mathrm{HS} \leqslant
c\,\cK\,C(V_*)\,A_\varepsilon^2\,e^{3\pi|\Imag\lambda|}\,e^{-\gamma\Real\lambda}
\left(1+\frac{1}{\Real\lambda+\frac{1}{2}}\right).
\label{AG.d}
\eeq
Moreover, the $\mathrm{HS}$--operator--valued function $(\lambda,k) \mapsto L(\lambda,k)$,
taking its values in $\widehat{X}_{\wea}$,
is holomorphic in $\C_{-\frac{1}{2}}^+\times {\Pi}_\alpha^{\cut}$, 
and is continuous and uniformly bounded in any set of the form
$\overline{\Lambda}_\gamma^{(\delta)}\times \overline{\Pi}_\alpha^{\cut}$. 
\end{theorem}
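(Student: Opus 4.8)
The plan is to follow the template of Theorem \ref{theorem:4}, which treats the analogous statement for fixed integer $\ell$, and adapt each step to the two-variable setting $(\lambda,k)$. First I would establish the three claims in sequence: (1) for each $(\lambda,k)\in\C^+_{-\frac12}\times\overline\Pi_\alpha^{\cut}$ the integral \eqref{AG.19} converges absolutely for a.e.\ $R$ and defines a Hilbert--Schmidt kernel; (2) the HS-norm satisfies the global bound \eqref{AG.d}; (3) the map $(\lambda,k)\mapsto L(\lambda,k)$ is holomorphic in the open set $\C^+_{-\frac12}\times\Pi_\alpha^{\cut}$ and continuous and uniformly bounded on sets $\overline\Lambda_\gamma^{(\delta)}\times\overline\Pi_\alpha^{\cut}$. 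Steps (1) and (2) are essentially already carried out in the discussion preceding the theorem: the pointwise rank-one majorization \eqref{AG.b} together with the membership $V_*^{(\we)}\in X_{\wea}$ (from \eqref{AG.8}) shows $L(\lambda,k;R,R')$ is dominated for a.e.\ $R$ by the rank-one kernel $M_\gamma(\lambda)\,A_\varepsilon\,V_*^{(\we)}(R)\sqrt{R'}\,e^{\alpha R'}$, whose $\widehat X_{\wea}$-norm is bounded via \eqref{AG.23'} by $M_\gamma(\lambda)\,C(V_*)\,A_\varepsilon^2$; inserting the definition \eqref{AG.4'} of $M_\gamma(\lambda)$ and using $c\cK C(V_*)A_\varepsilon^2\,e^{3\pi|\Imag\lambda|}e^{-\gamma\Real\lambda}(1+\tfrac{1}{2\Real\lambda+1})\le c\cK C(V_*)A_\varepsilon^2\,e^{3\pi|\Imag\lambda|}e^{-\gamma\Real\lambda}(1+\tfrac{1}{\Real\lambda+1/2})$ yields \eqref{AG.d}. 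Compactness is automatic since HS operators are compact.

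For step (3), the holomorphy and continuity, I would invoke Lemma \ref{lemma:B10} from Appendix \ref{appendix:b} exactly as in the proof of Theorem \ref{theorem:4}, but now with the \emph{pair} of complex variables $\z=(\lambda,k)$. The factorization of the integrand needed by that lemma is $F_1(\lambda,k;R,R'')=V(\lambda;R,R'')$ (which is holomorphic in $\lambda$ for a.e.\ $(R,R'')$ by the defining assumption of the class $\cN^{\,\gamma}_{\wea}$, see Remark \ref{rem:4}, and $k$-independent), dominated by $G_1(R,R'')=\cK\,e^{-\gamma\Real\lambda}\,V_*(R,R'')$ uniformly on the relevant $\lambda$-set in view of \eqref{g:5'}; and $F_2(\lambda,k;R'',R')=G(\lambda,k;R'',R')$, which is holomorphic in $(\lambda,k)$ on $\C^+_{-\frac12}\times\C^{\cut}$ by the results of Appendix \ref{appendix:a} (Subsections \ref{subappendix:a.complex}, \ref{subappendix:a.complements}), dominated by $G_2(R'',R')=c\,e^{3\pi|\Imag\lambda|}(1+\tfrac{1}{2\Real\lambda+1})\sqrt{R''R'}\,e^{\alpha(R''+R')}$ in view of the combined bounds \eqref{a:30}, \eqref{a:32} summarized in \eqref{AG.3}. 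The product $G_1\ast G_2$ is precisely (up to constants) the rank-one majorant appearing on the right of \eqref{AG.b}, and its $\widehat X_{\wea}$-norm was just shown finite; this fulfills the integrability hypothesis of Lemma \ref{lemma:B10}. The lemma then delivers holomorphy of $(\lambda,k)\mapsto L(\lambda,k)$ in the interior and continuity up to the boundary on any set where the majorizing constants are uniform — which is the case on $\overline\Lambda_\gamma^{(\delta)}\times\overline\Pi_\alpha^{\cut}$, since there $\Real\lambda+\tfrac12\ge\delta$ keeps the factor $(1+\tfrac{1}{\Real\lambda+1/2})$ bounded, the exponent $3\pi|\Imag\lambda|-\gamma\Real\lambda$ is bounded above because the angular-sector constraint $|\Imag\lambda|<\tfrac{\gamma}{3\pi}(\Real\lambda+\tfrac12)$ forces $3\pi|\Imag\lambda|-\gamma\Real\lambda<\tfrac{\gamma}{2}$, and the constant $\|L(\lambda,k)\|_{\rm HS}$ is correspondingly uniformly bounded by \eqref{AG.d}.

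The main obstacle, and the point requiring the most care, is verifying that the hypotheses of Lemma \ref{lemma:B10} are genuinely met when the domination of $F_2$ involves a $\lambda$-dependent constant that blows up as $\Real\lambda\to-\tfrac12$ and as $|\Imag\lambda|\to\infty$. Unlike the fixed-$\ell$ case, where $h_M(\ell,K)$ was a harmless constant, here the majorant $G_2$ is uniform only on the truncated sectors; consequently holomorphy must be obtained first on each $\Lambda_\gamma^{(\delta)}\times\Pi_\alpha^{\cut}$ and then patched together over $\delta\downarrow0$ to cover $\C^+_{-\frac12}\times\Pi_\alpha^{\cut}$ — this last step is legitimate since holomorphy is a local property and every point of $\C^+_{-\frac12}\times\Pi_\alpha^{\cut}$ lies in the interior of some such product set. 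A secondary subtlety is the a.e.\ nature of the analyticity of $V(\lambda;\cdot,\cdot)$ in $\lambda$: one must check (as in Lemma \ref{lemma:v}, invoking Lemma \ref{lemma:B9}) that the $R''$-integral in \eqref{AG.19} inherits holomorphy in $(\lambda,k)$ for a.e.\ $R$ before applying the operator-valued Lemma \ref{lemma:B10}; the integrable-dominating-function hypothesis of Lemma \ref{lemma:B9} is supplied by the bound \eqref{AG.4} with the Schwarz-integrable majorant $V_*(R,R'')\sqrt{R''}\,e^{\alpha R''}$, exactly paralleling the argument in Lemma \ref{lemma:v}. Once these two bookkeeping points are settled, the proof is a direct transcription of the proof of Theorem \ref{theorem:4}.
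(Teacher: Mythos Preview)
Your proposal follows exactly the paper's approach: the HS-bound is obtained from the rank-one majorization established in the text preceding the theorem (culminating in \eqref{AG.c}, of which \eqref{AG.d} is the rewriting via \eqref{AG.4'}), and holomorphy/continuity comes from Lemma \ref{lemma:B10} applied with $\zeta=(\lambda,k)$, $F_1=V(\lambda;R,R'')$, $F_2=G(\lambda,k;R'',R')$. The paper's own proof is two sentences referring back to these ingredients.

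There is one slip in your localization argument. You assert that ``every point of $\C^+_{-\frac12}\times\Pi_\alpha^{\cut}$ lies in the interior of some $\Lambda_\gamma^{(\delta)}\times\Pi_\alpha^{\cut}$'', but this is false: the truncated angular sectors $\Lambda_\gamma^{(\delta)}$ only exhaust the open sector $|\Imag\lambda|<\tfrac{\gamma}{3\pi}(\Real\lambda+\tfrac12)$ as $\delta\downarrow 0$, which is a proper subset of $\C^+_{-\frac12}$; a point such as $\lambda=1+100\rmi$ lies in no $\Lambda_\gamma^{(\delta)}$. The fix is immediate: localize instead over arbitrary compact subsets of $\C^+_{-\frac12}\times\Pi_\alpha^{\cut}$ (or, say, sets $\{\Real\lambda\ge -\tfrac12+\delta,\ |\lambda|\le M\}\times\{k\in\overline\Pi_\alpha^{\cut}:|k|\le K\}$), on which $M_\gamma(\lambda)$ in \eqref{AG.4'} is bounded and Lemma \ref{lemma:B10} applies; since holomorphy is a local property this suffices. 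The angular sectors $\overline\Lambda_\gamma^{(\delta)}$ are the right family only for the \emph{uniform} boundedness claim in the last clause of the theorem, not for holomorphy on the full half-plane.
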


\begin{proof}
The main part of it has been given in the previous argument; in particular, the 
global majorization \eqref{AG.d} (which is a rewriting of \eqref{AG.c} and \eqref{AG.4'})
is seen to give a uniform bound for
$\left\|L(\lambda,k)\right\|_\mathrm{HS}$
in any set of the form 
$\overline{\Lambda}_\gamma^{(\delta)}\times 
\overline{\Pi}_\alpha^{\cut}$ (see Eq. \eqref{AG.5}). 
As in Theorem \ref{theorem:4}, the holomorphy and continuity properties of the
HS--operator--valued function $(\lambda,k)\mapsto L(\lambda,k)$
are directly obtained by applying Lemma \ref{lemma:B10} 
(now with $\zeta=(\lambda,k)$) to integral \eqref{AG.19}.
\end{proof}

\subsubsection{Smithies' formalism for the resolvent $\boldsymbol{R(\lambda,k;g)}$.}  % 4.2.3
\label{subsubse:smithies-lambda}

We can now formally write the following expression for the resolvent:
\beq
R(\lambda,k;g)=\left[\I-gL(\lambda,k)\right]^{-1}.
\label{AG.26}
\eeq
The fact that $L(\lambda,k)$ is a Hilbert--Schmidt operator on the Hilbert space
$X_{\wea}$ allows us to use Smithies' formulae and bounds.
Accordingly, we can write:
\beq
R(\lambda,k;g)=\I+ g \frac{N(\lambda,k;g)}{\sigma(\lambda,k;g)},
\label{AG.27}
\eeq
where the operators 
$N(\lambda,k;g)$ and the functions $\sigma(\lambda,k;g)$
are defined by extending formally 
all the formulae (3.39) through (3.45) of Smithies' formalism 
from non--negative integral values of $\ell$ to complex values of $\lambda$
in $\C^+_{-\frac{1}{2}}$. More precisely, we can now prove the following theorems.

\begin{theorem}
\label{the:AG.3}
For every nonlocal potential $V\in\cN_{\wea}^{\,\gamma}$, the function
$(\lambda,k,g)\mapsto \sigma(\lambda,k;g)$
is holomorphic in $\C^+_{-\frac{1}{2}}\times 
{\Pi}_\alpha^{(\mathrm{cut})}\times \C$
and continuous in $\C^+_{-\frac{1}{2}}\times 
\overline{\Pi}_\alpha^{(\mathrm{cut})}\times \C$.
Moreover, at fixed $g$, it is uniformly bounded in any closed set of the form 
$\overline{\Lambda}_\gamma^{(\delta)}\times 
\overline{\Pi}_\alpha^{\cut}$,
and the function $\sigma(\lambda,k;g)-1$ tends uniformly to zero 
for $|\lambda|$ tending to infinity in any subset 
$\overline{\Lambda}_{\gamma'}^{(\delta)}\times 
\overline{\Pi}_\alpha^{\cut}$ with $\gamma' <\gamma$. 
\end{theorem}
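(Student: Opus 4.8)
The plan is to transcribe, nearly verbatim, the proof of Theorem \ref{theorem:5}, substituting the integer bound \eqref{wf:1} on $\|L_\ell(k)\|_{\rm HS}$ by the CAM bound \eqref{AG.d} of Theorem \ref{the:AG.2}. First I would invoke Smithies' basic estimate (see \cite[Lemma 5.4]{Smithies}), which gives, for all integers $n\geqslant 1$ and all $(\lambda,k)\in\C^+_{-\frac{1}{2}}\times\overline{\Pi}_\alpha^{\cut}$,
\beq
\left|(\sigma)_n(\lambda,k)\right| \leqslant \left(\frac{e}{n}\right)^{n/2}\left\|L(\lambda,k)\right\|_{\rm HS}^{\,n}.
\nonumber
\eeq
Combined with \eqref{AG.d}, this shows that for each fixed $(\lambda,k)$ the series defining $\sigma(\lambda,k;g)$ (the CAM analogue of \eqref{f:47}, obtained by replacing every subscript $\ell$ by $\lambda$ in \eqref{f:47}--\eqref{f:50}) is an entire function of $g$, dominated by $\Phi\!\left(|g|\,\|L(\lambda,k)\|_{\rm HS}\right)$, with $\Phi$ the positive increasing entire function of \eqref{f:Phi}; hence
\beq
\left|\sigma(\lambda,k;g)-1\right| \leqslant
\Phi\!\left(|g|\, c\,\cK\,C(V_*)\,A_\varepsilon^2\, e^{3\pi|\Imag\lambda|}\,e^{-\gamma\Real\lambda}\Bigl(1+\tfrac{1}{\Real\lambda+\frac12}\Bigr)\right).
\nonumber
\eeq

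For the uniform boundedness at fixed $g$, I would use the defining inequality of the truncated angular sector \eqref{AG.5}: on $\overline{\Lambda}_\gamma^{(\delta)}$ one has $3\pi|\Imag\lambda|\leqslant\gamma\bigl(\Real\lambda+\frac12\bigr)$, whence $e^{3\pi|\Imag\lambda|}e^{-\gamma\Real\lambda}\leqslant e^{\gamma/2}$, while $1+(\Real\lambda+\frac12)^{-1}\leqslant 1+\delta^{-1}$; so $\|L(\lambda,k)\|_{\rm HS}$ is uniformly bounded on $\overline{\Lambda}_\gamma^{(\delta)}\times\overline{\Pi}_\alpha^{\cut}$, and by monotonicity of $\Phi$ so is $\sigma(\lambda,k;g)$. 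For the decay statement I would argue likewise on $\overline{\Lambda}_{\gamma'}^{(\delta)}\times\overline{\Pi}_\alpha^{\cut}$ with $\gamma'<\gamma$: there $e^{3\pi|\Imag\lambda|}e^{-\gamma\Real\lambda}\leqslant e^{\gamma'/2}\,e^{-(\gamma-\gamma')\Real\lambda}$, and since inside the sector $|\Imag\lambda|$ stays bounded by a fixed multiple of $\Real\lambda+\frac12$, letting $|\lambda|\to\infty$ forces $\Real\lambda\to+\infty$; hence $\|L(\lambda,k)\|_{\rm HS}\to 0$ uniformly on that set, so $\sup|\sigma(\lambda,k;g)-1|\leqslant\Phi\bigl(\sup|g|\,\|L(\lambda,k)\|_{\rm HS}\bigr)\to\Phi(0)=0$.

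For the holomorphy and continuity, I would note that by Theorem \ref{the:AG.2} the map $(\lambda,k)\mapsto L(\lambda,k)$ is holomorphic in $\C^+_{-\frac{1}{2}}\times\Pi_\alpha^{\cut}$ and continuous in $\C^+_{-\frac{1}{2}}\times\overline{\Pi}_\alpha^{\cut}$ as an $\widehat{X}_{\wea}$-valued function; then Lemma \ref{lemma:B6} gives the same for the powers $L^n(\lambda,k)$, Lemma \ref{lemma:B3} gives it for the traces $(\rho)_n(\lambda,k)=\Tr[L^n(\lambda,k)]$, and therefore the coefficients $(\sigma)_n(\lambda,k)$ --- being polynomials in the $(\rho)_p$ with $p\leqslant n$ --- are holomorphic in $\C^+_{-\frac{1}{2}}\times\Pi_\alpha^{\cut}$ and continuous on its closure. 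Applying Lemma \ref{lemma:B0} to the sequence $\{(\lambda,k,g)\mapsto(\sigma)_n(\lambda,k)\,g^n;\ n\in\N\}$, whose partial sums are dominated on compacta by the convergent majorant found above, then yields that $\sigma(\lambda,k;g)$ is holomorphic in $\C^+_{-\frac{1}{2}}\times\Pi_\alpha^{\cut}\times\C$ and continuous in $\C^+_{-\frac{1}{2}}\times\overline{\Pi}_\alpha^{\cut}\times\C$. I expect the only genuinely delicate bookkeeping to concern the behaviour near the branch point $k=0$ (already absorbed into Theorem \ref{the:AG.2}) and the elementary but necessary observation that $|\lambda|\to\infty$ within a truncated sector is equivalent to $\Real\lambda\to+\infty$; everything else is a direct transcription of the proof of Theorem \ref{theorem:5}.
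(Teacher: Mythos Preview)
Your proposal is correct and follows essentially the same route as the paper: both combine Smithies' estimate $|\sigma_n|\leqslant(e/n)^{n/2}\|L(\lambda,k)\|_{\rm HS}^n$ with the bound \eqref{AG.d} (equivalently \eqref{AG.c}--\eqref{AG.4'}), majorize $|\sigma-1|$ by $\Phi$ of \eqref{f:Phi}, invoke Lemmas \ref{lemma:B6}, \ref{lemma:B3}, \ref{lemma:B0} for holomorphy/continuity, and read off the boundedness and decay from the behaviour of $M_\gamma(\lambda)$ on the truncated sectors. Your explicit unpacking of the sector inequality $3\pi|\Imag\lambda|\leqslant\gamma(\Real\lambda+\tfrac12)$ and the remark that $|\lambda|\to\infty$ in $\overline{\Lambda}_{\gamma'}^{(\delta)}$ forces $\Real\lambda\to+\infty$ are exactly the details the paper leaves implicit in its one-line appeal to ``the expression \eqref{AG.4'} of $M_\gamma(\lambda)$''.
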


\begin{proof}
In view of the holomorphy and continuity properties of
$L(\lambda,k)$ stated in Theorem \ref{the:AG.2}, one can then
follow the argument given in the proof of Theorem \ref{theorem:5}
for justifying successively (and for every $n\geqslant 2$)
the corresponding holomorphy and continuity properties 
of the functions $\rho_n(\lambda,k)= \Tr [L^n(\lambda,k)]$,
$Q_n(\lambda,k)$ and $\sigma_n(\lambda,k)$ (defined as in Eqs.
\eqref{f:49} and \eqref{f:48}).

Now, by combining the basic inequalities of Smithies' theory 
with the bound \eqref{AG.c} on $\|L(\lambda,k)\|_\mathrm{HS}$, 
one obtains the following majorizations, similar to \eqref{f:54}:
\beq
|\sigma_n(\lambda,k)|\leqslant\left(\frac{e}{n}\right)^{n/2}
\left\|L(\lambda,k)\right\|^n_\mathrm{HS}\leqslant
\left(\frac{e}{n}\right)^{n/2}\left[
C(V_*)\,A_\varepsilon^2\,M_\gamma(\lambda)\right]^n.
\label{AG.28}
\eeq
It follows that the series $\sigma(\lambda,k;g)=\sum_{n=0}^\infty\sigma_n(\lambda,k)\,g^n$
(with $\sigma_0 =1$) is dominated, for all $(\lambda,k,g)$ in
$\C^+_{-\frac{1}{2}}\times 
\overline{\Pi}_\alpha^{(\mathrm{cut})}\times \C$
by a convergent series with positive terms.
By associating with the latter the entire function $\Phi(z)$
as in the proof of Theorem \ref{theorem:5} (see Eq. \eqref{f:Phi}),
one then concludes from inequality \eqref{AG.28}, written for all values of $n$, that 
the sum of the series $\sigma(\lambda,k;g)$ is well--defined and satisfies 
the following global majorization:
\beq
|\sigma(\lambda,k;g)-1|\leqslant\Phi\left(|g|\ \|L(\lambda,k)\|_\mathrm{HS}\right)
\leqslant\Phi\left(|g|\,C(V_*)\,A_\varepsilon^2\,M_\gamma(\lambda)\right).
\label{AG.30}
\eeq
By applying Lemma \ref{lemma:B0} to the sequence of functions
$\{(\lambda,k,g) \mapsto \sigma_n(\lambda,k)g^n;\,n\in\N\}$,
inequalities \eqref{AG.28} and \eqref{AG.30} entail 
that the function $\sigma(\lambda,k;g)$ is holomorphic in the domain
$\C^+_{-\frac{1}{2}}\times{\Pi}_\alpha^{(\mathrm{cut})}\times\C$ of $\C^3$, 
and also defined and continuous for $k\in \overline{\Pi}_\alpha^{(\mathrm{cut})}$. 
Moreover, since the function $M_\gamma(\lambda)$ (see Eq. \eqref{AG.4'}) 
is uniformly bounded in any set $\overline{\Lambda}_\gamma^{(\delta)}$  
and tends uniformly to zero for $|\lambda|$ tending to infinity
in any set $\overline{\Lambda}_{\gamma'}^{(\delta)}$, the last statement 
of the theorem directly follows from majorization \eqref{AG.30}.
\end{proof}

\skd

\begin{theorem}
\label{the:AG.4}
For every nonlocal potential $V\in\cN_{\wea}^{\,\gamma}$, the operators
$N(\lambda,k;g)$ exist as Hilbert--Schmidt operators acting on $X_{\wea}$
for all $(\lambda,k,g)$ in the subset 
$\C^+_{-\frac{1}{2}}\times 
\overline{\Pi}_\alpha^{(\mathrm{cut})}\times \C$ of $\C^3$.
The $\mathrm{HS}$--operator--valued function
$(\lambda,k,g) \mapsto N(\lambda,k;g)$, taking its values in 
$\widehat{X}_{\wea}$, is holomorphic in $\C^+_{-\frac{1}{2}}\times 
{\Pi}_\alpha^{(\mathrm{cut})}\times\C$
and continuous in $\C^+_{-\frac{1}{2}}\times 
\overline{\Pi}_\alpha^{(\mathrm{cut})}\times \C$.
Moreover, at fixed $g$,
$\|N(\lambda,k;g)\|_\mathrm{HS}$ is uniformly bounded 
in any closed set of the form $\overline{\Lambda}_\gamma^{(\delta)}\times 
\overline{\Pi}_\alpha^{\cut}$
and tends uniformly to zero for $|\lambda|$ tending to infinity in any subset 
$\overline{\Lambda}_{\gamma'}^{(\delta)}\times
\overline{\Pi}_\alpha^{\cut}$ with $\gamma' <\gamma$.
\end{theorem}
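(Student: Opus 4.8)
The plan is to carry over, essentially verbatim, the argument of the proof of Theorem~\ref{theorem:6}, replacing the discrete parameter $\ell$ by the complex variable $\lambda$ and using Theorems~\ref{the:AG.2} and~\ref{the:AG.3} as the CAM substitutes for Theorems~\ref{theorem:4} and~\ref{theorem:5}. First I would note that, by the Smithies formulae \eqref{f:52}--\eqref{f:53} (now written with $L(\lambda,k)$ in place of $L_\ell(k)$), each $(N)_n(\lambda,k)$ is a polynomial in $L(\lambda,k)$ whose coefficients are polynomials in the traces $\rho_p(\lambda,k)\doteq\Tr[L^p(\lambda,k)]$, $p\leqslant n$. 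Since $(\lambda,k)\mapsto L(\lambda,k)$ is, by Theorem~\ref{the:AG.2}, holomorphic in $\C^+_{-\frac{1}{2}}\times\Pi_\alpha^{\cut}$ and continuous in $\C^+_{-\frac{1}{2}}\times\overline{\Pi}_\alpha^{\cut}$ as a function with values in $\widehat{X}_{\wea}$, Lemmas~\ref{lemma:B3}, \ref{lemma:B6} and~\ref{lemma:B7} (invoked exactly as in the proofs of Theorems~\ref{theorem:5}, \ref{theorem:6} and~\ref{the:AG.3}) give the same holomorphy and continuity for every power $L^n(\lambda,k)$, for every trace $\rho_p(\lambda,k)$, and hence for every $\widehat{X}_{\wea}$-valued function $(\lambda,k)\mapsto(N)_n(\lambda,k)$.

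Next I would dominate the series $N(\lambda,k;g)=\sum_{n\geqslant0}(N)_n(\lambda,k)\,g^n$. The Smithies inequalities \eqref{f:60'}, applied with $L(\lambda,k)$ in place of $L_\ell(k)$, give $\|(N)_n(\lambda,k)\|_{\rm HS}\leqslant e^{(n+1)/2}n^{-n/2}\,\|L(\lambda,k)\|_{\rm HS}^{\,n+1}$ for $n\geqslant1$; together with the global bound \eqref{AG.c}, namely $\|L(\lambda,k)\|_{\rm HS}\leqslant M_\gamma(\lambda)\,C(V_*)\,A_\varepsilon^2$, this dominates the series term by term in the HS-norm by a convergent numerical series, for every $(\lambda,k,g)$ in $\C^+_{-\frac{1}{2}}\times\overline{\Pi}_\alpha^{\cut}\times\C$. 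This already yields the existence, for every such $(\lambda,k,g)$, of $N(\lambda,k;g)$ as a Hilbert--Schmidt operator on $X_{\wea}$, together with
\[
\|N(\lambda,k;g)\|_{\rm HS}\leqslant\frac{1}{|g|}\,\Psi\bigl(|g|\,\|L(\lambda,k)\|_{\rm HS}\bigr)\leqslant\frac{1}{|g|}\,\Psi\bigl(|g|\,C(V_*)\,A_\varepsilon^2\,M_\gamma(\lambda)\bigr),
\]
with $\Psi$ the entire increasing function defined in \eqref{f:Psi}. Applying Lemma~\ref{lemma:B0} to the sequence $\{(\lambda,k,g)\mapsto(N)_n(\lambda,k)\,g^n;\ n\in\N\}$, whose terms are holomorphic (resp.\ continuous) in the required sets by the previous paragraph and whose partial sums are locally uniformly bounded by the above estimate, then gives the holomorphy of $(\lambda,k,g)\mapsto N(\lambda,k;g)$ in $\C^+_{-\frac{1}{2}}\times\Pi_\alpha^{\cut}\times\C$ and its continuity in $\C^+_{-\frac{1}{2}}\times\overline{\Pi}_\alpha^{\cut}\times\C$, as an $\widehat{X}_{\wea}$-valued function.

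Finally, the asymptotics in $\lambda$ follow from the displayed bound by the same elementary estimates on $M_\gamma$ used in the proof of Theorem~\ref{the:AG.3}: from the explicit form \eqref{AG.4'}, on $\overline{\Lambda}_\gamma^{(\delta)}$ one has $3\pi|\Imag\lambda|\leqslant\gamma(\Real\lambda+\tfrac12)$ and $(2\Real\lambda+1)^{-1}\leqslant(2\delta)^{-1}$, so $M_\gamma(\lambda)\leqslant c\,\cK\,e^{\gamma/2}\bigl(1+(2\delta)^{-1}\bigr)$ there, whereas on $\overline{\Lambda}_{\gamma'}^{(\delta)}$ with $\gamma'<\gamma$ one has $e^{3\pi|\Imag\lambda|-\gamma\Real\lambda}\leqslant e^{\gamma'/2}e^{-(\gamma-\gamma')\Real\lambda}$, so $M_\gamma(\lambda)\to0$ as $|\lambda|\to\infty$ (since $\Real\lambda\to+\infty$ there). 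Using $\Psi(0)=0$ and the monotonicity of $\Psi$, this gives at once the uniform boundedness of $\|N(\lambda,k;g)\|_{\rm HS}$ on $\overline{\Lambda}_\gamma^{(\delta)}\times\overline{\Pi}_\alpha^{\cut}$ and its uniform vanishing as $|\lambda|\to\infty$ on $\overline{\Lambda}_{\gamma'}^{(\delta)}\times\overline{\Pi}_\alpha^{\cut}$ ($\gamma'<\gamma$). I do not anticipate a genuinely new difficulty: the whole scheme is the $\ell\rightsquigarrow\lambda$ transcription of Theorem~\ref{theorem:6}, and the only point requiring a little care is checking that the domination of $\{(N)_n(\lambda,k)g^n\}$ by a convergent numerical series is locally uniform in the three variables $(\lambda,k,g)$ jointly — which is needed to apply Lemma~\ref{lemma:B0} — and this is handled exactly as it was for $\sigma(\lambda,k;g)$ in the proof of Theorem~\ref{the:AG.3}.
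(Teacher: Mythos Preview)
Your proposal is correct and follows essentially the same approach as the paper's own proof: both transcribe the proof of Theorem~\ref{theorem:6} to the CAM setting, establishing holomorphy/continuity of each $(N)_n(\lambda,k)$ via Lemma~\ref{lemma:B6} (and its companions), dominating the series by Smithies' inequalities combined with the bound \eqref{AG.c} on $\|L(\lambda,k)\|_{\rm HS}$, applying Lemma~\ref{lemma:B0}, and reading off the asymptotic behaviour in $\lambda$ from the explicit form \eqref{AG.4'} of $M_\gamma(\lambda)$. Your write-up is in fact slightly more detailed than the paper's on the final $\lambda$-asymptotics, but the argument is the same.
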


\begin{proof}
By defining successively (and for every $n\geqslant 1$)
the bounded--operator--valued functions $\Delta_n(\lambda,k)$ and 
the HS--operator--valued functions $N_n(\lambda,k)$ in terms of $L(\lambda,k)$ 
as in Eqs. \eqref{f:53} and \eqref{f:52}, one deduces from Lemma \ref{lemma:B6} 
that all these functions satisfy the same holomorphy and continuity
properties as those of $L(\lambda,k)$ specified in
Theorem \ref{the:AG.2}. Moreover, in view of Smithies' theory, 
there hold the following inequalities, similar to \eqref{f:60'}: 
\beq
\|N_n(\lambda,k)\|_\mathrm{HS} 
\leqslant \|\Delta_n(\lambda,k)\|\ \|L(\lambda,k)\|_\mathrm{HS}
\leqslant 
\frac{e^{(n+1)/2}}{n^{n/2}} \left\|L(\lambda,k)\right\|_\mathrm{HS}^{n+1}.
\label{AG.32}
\eeq
In view of the latter, the series $\sum_{n=0}^\infty N_n(\lambda,k)\,g^n$ is dominated term by term
in the $\mathrm{HS}$--norm by a convergent series; 
the sum of this operator--valued entire series is therefore well--defined as a
$\mathrm{HS}$--operator $N(\lambda,k;g)$ for all $(\lambda,k,g)$ in
$\C^+_{-\frac{1}{2}}\times\overline{\Pi}_\alpha^{(\mathrm{cut})}\times\C$.
Now, it follows from \eqref{AG.32} and from the bound \eqref{AG.c} on 
$\|L(\lambda,k)\|_\mathrm{HS}$ that the norm of 
$N(\lambda,k;g)$ in $\widehat{X}_{\wea}$ satisfies the bound:
\beq
\|N(\lambda,k;g)\|_\mathrm{HS} \leqslant 
\frac{1}{|g|} \Psi\left(|g|\,\|L(\lambda,k)\|_\mathrm{HS}\right)
\leqslant \frac{1}{|g|} \,
\Psi\left(|g|\,C(V_*)\,A_\varepsilon^2\,M_\gamma(\lambda)\right),
\label{AG.34}
\eeq
where $\Psi(z)$ is the entire function introduced in the proof of
Theorem \ref{theorem:6} (see Eq. \eqref{f:Psi}). The fact that
$\Psi(z)$ is an increasing function of $z$ for $z\geqslant 0$ has been used for
obtaining the second inequality in \eqref{AG.34}.

By applying Lemma \ref{lemma:B0} to the sequence of functions
$\{(\lambda,k,g) \to N_n(\lambda,k)\,g^n;\,n\in\N\}$,
inequalities \eqref{AG.32} and \eqref{AG.34}
entail that the HS--operator--valued function
$N(\lambda,k;g)$ is holomorphic in the domain
$\C^+_{-\frac{1}{2}}\times {\Pi}_\alpha^{(\mathrm{cut})}\times \C$ of $\C^3$, 
and also defined and continuous for  
$k\in \overline{\Pi}_\alpha^{(\mathrm{cut})}$. 
Finally, the last statement of the theorem directly follows from 
\eqref{AG.34} and from the expression \eqref{AG.4'} of
$M_\gamma(\lambda)$ (as for the last statement of Theorem \ref{the:AG.3}).
\end{proof}

\subsection
{Meromorphy properties of the resolvent and their physical interpretation} 
\label{subse:aggiunte4.3}

\subsubsection{General structure}
\label{subsubse:general}

Let us introduce, as in Section \ref{se:rotationally}, the truncated Fredholm 
resolvent $R^{(\mathrm{tr})}(\lambda,k;g)$ as follows:
\beq
R^{(\mathrm{tr})}(\lambda,k;g) = \frac{N(\lambda,k;g)}{\sigma(\lambda,k;g)}.
\label{AG.35}
\eeq
It follows from Theorems \ref{the:AG.3} and \ref{the:AG.4} that 
$R^{(\mathrm{tr})}(\lambda,k;g)$ is a  HS--operator--valued meromorphic
function of $(\lambda,k,g)$, whose ``poles" are localized on the
various possible connected components of the complex analytic set 
defined by the equation $\sigma(\lambda,k;g) =0$.
Now, in view of the last property stated in
Theorem \ref{the:AG.3}, this set cannot contain any component
of the form $g=g_0$, and therefore for each fixed $g$ (in $\C$) the
subset
\beq
D_\alpha (V;g)\doteq 
\left\{(\lambda,k)\in\C^+_{-\frac{1}{2}}\times{\Pi}_\alpha^{(\mathrm{cut})}
\,:\,\sigma(\lambda,k;g) \neq 0\right\}
\label{AG.35'}
\eeq
is a domain, which is the complement in  
$\C^+_{-\frac{1}{2}}\times {\Pi}_\alpha^{(\mathrm{cut})}$ of a one--dimensional analytic set.
We can then state:

\skd

\begin{theorem}
\label{the:AG.5}
For every nonlocal potential $V\in\cN_{\wea}^{\,\gamma}$,
the operators $R^{(\mathrm{tr})}(\lambda,k;g)$ exist  
as Hilbert--Schmidt operators acting on $X_{\wea}$
for all $(\lambda,k,g)$ in the dense subdomain of 
$\C^+_{-\frac{1}{2}}\times 
\overline{\Pi}_\alpha^{(\mathrm{cut})}\times \C$ where $\sigma(\lambda,k;g)\neq 0$.
Moreover, the $\mathrm{HS}$--operator--valued function
$(\lambda,k,g)\! \mapsto\! R^{(\mathrm{tr})}(\lambda,k;g)$, taking its values in 
$\widehat{X}_{\wea}$,
is a meromorphic function whose restriction to each fixed value of $g$ is 
holomorphic in $D_\alpha (V;g)$.
\end{theorem}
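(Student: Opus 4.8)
The plan is to deduce Theorem \ref{the:AG.5} as a direct corollary of Theorems \ref{the:AG.3} and \ref{the:AG.4} together with the general principle that a quotient of holomorphic (operator-valued) functions is meromorphic away from the zero set of the denominator. First I would recall from \eqref{AG.35} that $R^{(\mathrm{tr})}(\lambda,k;g)=N(\lambda,k;g)/\sigma(\lambda,k;g)$, where by Theorem \ref{the:AG.4} the numerator $N(\lambda,k;g)$ is an $\mathrm{HS}$-operator-valued function taking values in $\widehat{X}_{\wea}$, holomorphic in $\C^+_{-\frac{1}{2}}\times\Pi_\alpha^{(\mathrm{cut})}\times\C$ and continuous on the closure in the $k$-variable, and by Theorem \ref{the:AG.3} the scalar denominator $\sigma(\lambda,k;g)$ has exactly the same holomorphy/continuity domains. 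Hence on the open set where $\sigma\neq 0$, the ratio is a well-defined bounded (indeed $\mathrm{HS}$) operator on $X_{\wea}$, with $\|R^{(\mathrm{tr})}(\lambda,k;g)\|_{\mathrm{HS}}\leqslant \|N(\lambda,k;g)\|_{\mathrm{HS}}/|\sigma(\lambda,k;g)|$.

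Next I would verify that $\sigma$ is not identically zero on any connected component, so that the quotient genuinely defines a meromorphic — not nowhere-defined — function. This is precisely the content of the last statement of Theorem \ref{the:AG.3}: $\sigma(\lambda,k;g)-1\to 0$ uniformly as $|\lambda|\to\infty$ in any set $\overline{\Lambda}_{\gamma'}^{(\delta)}\times\overline{\Pi}_\alpha^{\cut}$ with $\gamma'<\gamma$. Consequently, for $|\lambda|$ large enough $\sigma$ stays close to $1$ and in particular is nonzero there; since $\C^+_{-\frac{1}{2}}\times\Pi_\alpha^{(\mathrm{cut})}\times\C$ is connected, the analytic set $\{\sigma=0\}$ is a proper (hence at most one-dimensional, by the analytic-set structure theorem) subvariety, and its complement is a dense subdomain. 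For each fixed $g$, the same argument applied to $\lambda\mapsto\sigma(\lambda,k;g)$ shows that $\sigma(\cdot,\cdot;g)$ is not identically zero on $\C^+_{-\frac{1}{2}}\times\Pi_\alpha^{(\mathrm{cut})}$, so that the domain $D_\alpha(V;g)$ of \eqref{AG.35'} is the complement of a one-dimensional analytic set and is open, connected and dense, exactly as asserted preceding the theorem.

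Then I would combine these facts: on $D_\alpha(V;g)$ the function $(\lambda,k)\mapsto R^{(\mathrm{tr})}(\lambda,k;g)$ is holomorphic as an $\widehat{X}_{\wea}$-valued function, because it is the quotient of the holomorphic $\widehat{X}_{\wea}$-valued function $N$ by the holomorphic nonvanishing scalar $\sigma$ — this is a standard stability property of vector-valued holomorphy (one can invoke the relevant lemma of Appendix \ref{appendix:b} on products and quotients, or simply check holomorphy locally via power series, since division by a nonvanishing holomorphic scalar preserves local analyticity). Globally in $(\lambda,k,g)$, the same representation exhibits $R^{(\mathrm{tr})}$ as a meromorphic $\widehat{X}_{\wea}$-valued function whose polar set is contained in $\{\sigma=0\}$. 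Finally, for the first assertion (existence of $R^{(\mathrm{tr})}(\lambda,k;g)$ for $k$ in the closed cut half-plane), I would use the continuity statements of Theorems \ref{the:AG.3} and \ref{the:AG.4} up to $k\in\overline{\Pi}_\alpha^{(\mathrm{cut})}$: on the subset where $\sigma\neq 0$ the quotient is still a well-defined $\mathrm{HS}$ operator, and this subset is dense in $\C^+_{-\frac{1}{2}}\times\overline{\Pi}_\alpha^{(\mathrm{cut})}\times\C$.

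The only genuine subtlety — and the step I would be most careful about — is the passage from ``$N$ holomorphic, $\sigma$ holomorphic and not identically zero'' to ``$N/\sigma$ meromorphic with polar set $\{\sigma=0\}$'' in the vector-valued and several-complex-variables setting: one must ensure that the zero set of $\sigma$ is a genuine analytic hypersurface (so that ``meromorphic'' has its standard meaning) and that the quotient extends holomorphically across those points of $\{\sigma=0\}$ where $N$ also vanishes to at least the same order. For the purposes of this theorem, however, only the weak conclusion is needed (holomorphy on the complement of $\{\sigma=0\}$, meromorphy globally), so this obstacle is mild; it is handled by the connectedness argument above plus the elementary local fact that dividing a locally analytic vector-valued function by a locally analytic nonvanishing scalar yields a locally analytic vector-valued function. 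No new estimates are required beyond those already established in Theorems \ref{the:AG.3} and \ref{the:AG.4}.
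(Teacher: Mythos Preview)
Your proposal is correct and follows essentially the same route as the paper: the argument is contained in the paragraph preceding the theorem (quotient of the holomorphic $N$ by the holomorphic scalar $\sigma$, with $\sigma\not\equiv 0$ guaranteed by the $|\lambda|\to\infty$ behaviour of Theorem~\ref{the:AG.3}), and the theorem is then stated without a separate proof. One small slip to fix: in the full three-variable domain $\C^+_{-\frac{1}{2}}\times\Pi_\alpha^{(\mathrm{cut})}\times\C$ the zero set $\{\sigma=0\}$ is a complex \emph{two}-dimensional hypersurface, not one-dimensional; your one-dimensional claim is correct only after fixing $g$, which is precisely what is needed for the definition of $D_\alpha(V;g)$ and for the statement of the theorem.
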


As in Section \ref{se:rotationally} (see Theorem \ref{theorem:6"}), 
one can also state the following property of the ``complete resolvent"
$R(\lambda,k;g)= [\I -g L(\lambda,k)]^{-1} =\I +gR^{(\mathrm{tr})}(\lambda,k;g)$:

\skd

\begin{theorem}
\label{the:AG.6}
For any fixed $g$, the function $R(\lambda,k;g)$ is holomorphic 
in the domain $D_\alpha (V;g)$ as an operator--valued function, 
taking its values in the space of bounded operators in $X_{\wea}$.
\end{theorem}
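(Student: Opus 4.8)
The plan is to derive Theorem \ref{the:AG.6} as an immediate consequence of Theorem \ref{the:AG.5}, exactly in the way Theorem \ref{theorem:6"} was derived from Theorem \ref{theorem:6'} in Section \ref{se:rotationally}. The essential observation is that the definition $R(\lambda,k;g) = \I + g\,R^{(\mathrm{tr})}(\lambda,k;g)$ expresses the complete resolvent as the sum of the constant operator $\I$ (which is trivially holomorphic as a bounded-operator-valued function of $(\lambda,k)$) and $g$ times the truncated resolvent, with $g$ held fixed.

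First I would invoke Theorem \ref{the:AG.5} to recall that, for each fixed $g\in\C$, the function $(\lambda,k) \mapsto R^{(\mathrm{tr})}(\lambda,k;g)$ is holomorphic in $D_\alpha(V;g)$ as a function taking its values in $\widehat{X}_{\wea}$, i.e.\ in the Hilbert--Schmidt norm. Then I would use the elementary norm inequality $\|A\| \leqslant \|A\|_{\mathrm{HS}}$, valid for every Hilbert--Schmidt operator $A$ on $X_{\wea}$ (this is the inequality $\left\|A\right\|_\mathrm{HS} \geqslant \left\|A \right\|$ recorded in the definitions of Subsection \ref{subse:properties-L}), to conclude that holomorphy in the HS--norm implies holomorphy in the operator norm: if a difference quotient converges in the stronger HS--norm, it converges a fortiori in the weaker operator norm to the same limit. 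Hence $(\lambda,k) \mapsto R^{(\mathrm{tr})}(\lambda,k;g)$ is also holomorphic in $D_\alpha(V;g)$ as a bounded-operator-valued function.

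Finally I would add back the constant term: since $\I$ is holomorphic (indeed constant) as a bounded-operator-valued function on all of $\C^+_{-\frac{1}{2}}\times\Pi_\alpha^{(\mathrm{cut})}$, and $g$ is a fixed scalar, the function $R(\lambda,k;g) = \I + g\,R^{(\mathrm{tr})}(\lambda,k;g)$ inherits holomorphy on $D_\alpha(V;g)$ as a bounded-operator-valued function. (The identity $R(\lambda,k;g) = [\I - gL(\lambda,k)]^{-1}$ on $D_\alpha(V;g)$ follows from the Smithies representation \eqref{AG.27} together with \eqref{AG.35}, just as in Section \ref{se:rotationally}, so that $R(\lambda,k;g)$ indeed deserves the name of complete resolvent on that domain.) There is no real obstacle here: the only point requiring a word is that holomorphy of a vector-valued function is preserved under a continuous linear embedding of the target space — here the identity map $\widehat{X}_{\wea}\hookrightarrow \mathcal{B}(X_{\wea})$, which is contractive by the HS-dominates-operator-norm inequality — and this is exactly the mechanism already used for Theorem \ref{theorem:6"}, so it can be cited rather than reproved.
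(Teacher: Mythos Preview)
Your proposal is correct and follows essentially the same approach as the paper: the paper simply remarks that Theorem \ref{the:AG.6} follows from Theorem \ref{the:AG.5} ``as in Section \ref{se:rotationally} (see Theorem \ref{theorem:6"})'', i.e.\ by the identical mechanism you describe---passing from HS--holomorphy of $R^{(\mathrm{tr})}(\lambda,k;g)$ to bounded--operator holomorphy via $\|A\|\leqslant\|A\|_{\mathrm{HS}}$, and then adding the constant operator $\I$.
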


\subsubsection{Symmetry properties in the complex variables $\boldsymbol{(\lambda,k,g)}$.} %4.3.2
\label{subsubse:symmetry}

In Section \ref{se:rotationally} we have shown that the basic symmetry properties
\eqref{f:24'}, \eqref{f:symd}, and \eqref{f:symc} imply 
the corresponding invariance  of the quantities
$\sigma_\ell (k;g)$, $N_\ell(k;g)$, and $R_\ell(k;g)$ under the 
transformation $(k\to -\overline k, g\to \overline g)$ in their respective 
analyticity domains of the complex $(k,g)$--space (see Theorems \ref{theorem:5} (b) 
and \ref{theorem:6} (b)). Here one can similarly justify the invariance 
of the corresponding interpolated quantities in the CAM--plane, 
namely $\sigma(\lambda,k;g)$, $N(\lambda,k;g)$, and $R(\lambda,k;g)$, 
under the transformation 
$(\lambda \to \overline \lambda, k \to -\overline k, g\to \overline g)$.
Note however that some additional specifications must be given
in view of the occurrence of the branch point at $k=0$ and of 
the affiliated cut on the real negative $k$--axis, which are effective 
not only for complex values of $\lambda$ in $\C^+_{-\frac{1}{2}}$, but already for 
nonintegral real values of $\lambda$ ($\lambda >-\frac{1}{2}$). 
As a matter of fact, it can be seen that for the quantities
previously mentioned, their invariance under the transformation 
$(\lambda \to \overline \lambda, k \to -\overline k, g\to \overline g)$
can be first established for $k$ varying in the upper half--plane; 
the extension of this invariance property to the cut--plane 
$\Pi_\alpha^{\cut}$ then follows by analytic continuation, provided one 
introduces a $k\to -\overline k$--invariant ramified
analyticity domain over $\Pi_\alpha\setminus \{k=0\}$ 
(thus including also a cut--domain with a cut along 
the positive real $k$--axis, which is the symmetric domain of 
$\Pi_\alpha^{\cut}$). Here again, these properties are based on:
\begin{itemize}
\item[(a)] the symmetry properties of the potential, 
i.e.,  $V(\overline{\lambda};R,R')=\overline{V(\lambda;R,R')}$,
and $V(\lambda;R,R')=V(\lambda;R',R)$ for all $\lambda\in\C^+_{-\frac{1}{2}}$,
which correspond to the Carlsonian interpolation of \eqref{f:24'}; \\[-6pt]
\item[(b)] the relation 
$G(\overline \lambda,-\overline k;R,R')=
\overline {G(\lambda,k;R,R')}$ for all 
$(\lambda,k)$ such that $\lambda\in\C^+_{-\frac{1}{2}}$ and $\Imag k >0$. 
This relation, which extrapolates \eqref{f:symd}, is easily derived from the
integral representation \eqref{a:16} of $G(\lambda,\rmi\kappa; R,R')$
and the analytic continuation of the latter to complex values of $\lambda$ and $\kappa$. 
\end{itemize}
From (a) and (b), one then derives the analog of \eqref{f:symc}, namely:

\begin{itemize}
\item[(c)] \, $L(\overline{\lambda},-\overline{k};R,R')=
\overline{L(\lambda,k;R,R')}$, and subsequently (by arguments
similar to those given in the proofs of Theorems \ref{theorem:5} (b) and 
\ref{theorem:6} (b)); \\[-6pt]
\item[(d)] \,
$\sigma(\overline\lambda,-\overline k; \overline g)
=\overline{\sigma(\lambda,k;g)}$, \,
$N(\overline\lambda,-\overline k; \overline g)
=\overline{N(\lambda,k;g)}$, \,
$R(\overline\lambda,-\overline k; \overline g)
=\overline{R(\lambda,k;g)}$.
\end{itemize}

\subsubsection{Poles of the resolvent and solutions of the Schr\"odinger--type equation} %4.3.3
\label{subsubse:poles-complex}

By the same analysis as in Subsection \ref{subse:mero} for the case $\ell$ integer,
we can say that the existence of a pole $k=k(\lambda,g)$ of 
the meromorphic function $k \mapsto R(\lambda,k;g)$, namely a value of $k$  
such that $\sigma(\lambda,k(\lambda,g);g)=0$ with 
$N(\lambda,k;g) \neq 0$, is equivalent to the existence of 
at least one non--zero solution $x=x(R) $ in $X_{\wea}$
of the homogeneous Fredholm equation $gL(\lambda,k)x=x$.

Concerning the terminology, we prefer to say that such a solution $x$
is associated with a \emphsl{singular pair} $(\lambda,k)$
(i.e., a pair satisfying the equation $\sigma(\lambda,k;g) =0$ for a fixed value of $g$) 
rather than with the variable--dependent notion of ``pole".
At a singular pair $(\lambda,k)$ indeed, 
it can be advantageous as well to consider the pole 
of the meromorphic function $\lambda \mapsto R(\lambda,k;g)$,
at the value $\lambda=\lambda(k,g)$ such that $\sigma(\lambda(k,g),k;g) =0$  
(instead of the pole in the variable $k$, as always considered before).

As shown below in Lemma \ref{lemma:viivii}, one can associate with any function
$x(R) $ in $X_{\wea}$, and for every $\lambda \in \C^+_{-\frac{1}{2}}$ 
and $k\in {\overline \Pi}_\alpha^{\cut}$, the function
\beq
\psi(R)= g \int_0^{+\infty}G(\lambda,k;R,R')\,x(R')\,\rmd R',
\label{xtopsi'}
\eeq
which satisfies the equation $D_{\lambda,k}\ \psi = g \, x$,
since $G(\lambda,k;R,R')$ is (for every complex pair $(\lambda,k)$) the 
Green function of the corresponding differential operator
$D_{\lambda,k}$ (defined as $D_{\ell,k}$ by complexification of $\ell$
in Eq. \eqref{f:24}). Now, in view of Eq. \eqref{xtopsi'}, the definition
\eqref{AG.19} of $L(\lambda,k)$ implies the following equality:
\beq
g \ [L(\lambda,k) x](R) = 
\int_0^{+\infty} V(\lambda;R,R') \, \psi(R')\, \rmd R'.
\label{psitoLx'}
\eeq
So, if $x$ is a non--zero solution of the homogeneous Fredholm equation 
$gL(\lambda,k)x=x$ associated with a given singular pair $(\lambda,k)$ 
of the resolvent $ R^{(\mathrm{tr})}(\lambda,k;g)$ ($g$ being fixed), 
then one has:
\beq
D_{\lambda,k}\psi(R) =\ g\,x(R) =\ 
g\int_0^{+\infty}V(\lambda;R,R')\,\psi(R')\,\rmd R'.
\label{f:124}
\eeq
Thus $\psi$ appears as a non--zero solution of the extension to complex $\lambda$
of the Schr\"odinger--type equation \eqref{f:24}.

As in Section \ref{se:rotationally} (see Theorem \ref{theorem:7}), 
some specific properties of this type of solution $\psi(R)$ will be given below. 

\subsubsection{Some results on the locations of the poles 
of $\boldsymbol{R^{(\mathrm{tr})}(\lambda,k;g)}$ } %4.3.4
\label{subsubse:some}

First of all, we shall take into account the fact  
that each class of nonlocal potentials 
$\cN_{\wea}^{\,\gamma}$ (for any $\gamma>0$) 
is contained in the corresponding class $\cN_{\wea}$
introduced in Section \ref{se:rotationally}.
It follows that all the properties of the poles 
in $k$ at fixed integer $\ell$ proved above
hold true for the potentials in $\cN_{\wea}^{\,\gamma}$.
By incorporating the results of this previous analysis
and taking $g$ real, we are led to distinguish 
between two situations,
whose specifications and interest will be justified below.

\begin{itemize}
\item[(a)] 
{\bf $\boldsymbol{\lambda}$ real and larger than $\boldsymbol{-\frac{1}{2}}$, 
$\boldsymbol{k}$ complex in $\boldsymbol{\overline{\Pi}_\alpha^{(\mathrm{cut})}\!\!.}$} \
In this case we shall see that one obtains a ``natural" extension of the
results  obtained in Section \ref{se:rotationally} for $\lambda=\ell$ integer 
(which leads us to use the same terminology): \\[-5pt]
\begin{itemize}
\addtolength{\itemsep}{0.1cm}
\item[(a.1)] \emph{bound states}: zeros of $\sigma$ sitting on the positive imaginary axis; 
no other zeros of $\sigma$ can occur in the upper half--plane $\Imag k >0$;
\item[(a.2)] \emph{spurious bound states}: zeros of $\sigma$ sitting on the positive real axis 
$\Imag k=0$, $\Real k>0$;
\item[(a.3)] \emph{anti--bound states}: zeros of $\sigma$ sitting on the negative imaginary axis
$-\alpha<\Imag k<0$, $\Real k=0$;
\item[(a.4)] \emph{resonances}: zeros of $\sigma$ in the half--strip $-\alpha<\Imag k<0$,
$\Real k >0$. \\[-5pt]
\end{itemize}
\item[(b)]
{\bf $\boldsymbol{\lambda}$ complex in $\boldsymbol{\C^+_{-\frac{1}{2}}}$ and 
$\boldsymbol{k}$ real in $\boldsymbol{\R^+}$.} \ 
In this case we may have: \\[-5pt]
\begin{itemize}
\addtolength{\itemsep}{0.1cm}
\item[(b.1)] zeros of $\sigma$ in the \emphsl{first quadrant}
of the $\lambda$--plane ($\Imag\lambda>0$, $\Real \lambda> -\frac{1}{2}$),
which correspond to an alternative description of \emph{resonances};
\item[(b.2)] zeros of $\sigma$ in the \emphsl{fourth quadrant} of the $\lambda$--plane
($\Imag\lambda<0$, $\Real\lambda> -\frac{1}{2}$), corresponding to 
\emph{antiresonances} (see the end of Subsection \ref{subsubse:some}, 
and Subsection \ref{subse:analysis}).
\end{itemize}
\end{itemize}

\vskip 0.1cm

Note that in this description a dissymmetric role is played by the 
first and fourth quadrant of the $\lambda$--plane, but we have to keep in
mind that here the interesting range of $k$ has been restricted to
the region $\Real k \geqslant 0$. As explained in the previous subsection, the
symmetry $\lambda \to \overline \lambda$ would be restored only
if accompanied by the transformation $k\to -\overline k$.

\vskip 0.3cm

We shall first prove the following variant of Lemma \ref{lemma:vii}.

\skd

\begin{lemma}
\label{lemma:viivii} 
For every function $x$ in ${X}_{\wea}$
and for all $(\lambda,k)$ such that $\Real\lambda >-\frac{1}{2}$
and $\Imag k\geqslant 0$, the corresponding function
\beq
\psi_{x;\lambda,k}(R)= g \int_0^{+\infty} \!\! G(\lambda,k;R,R')\,x(R')\,\rmd R'
\label{AG:62}
\eeq
is well--defined as a locally bounded function, contained in the space 
$X^*_{\wea}$. Moreover, it enjoys the following properties:

\vskip 0.2cm

\noindent
{\rm (i)} If $\Real\lambda>0$, there holds a global majorization of the following form 
for $R$ varying on the whole half--line $\{R\in\R^+\}$:
\begin{align}
\hspace{-0.1cm}
|\psi_{x;\lambda,k}(R)| &\leqslant |g|\ \|x\|_{\wea}
\,\widehat{c}_0(\lambda)\ [\Phi(k)]^{-(\Real\lambda+\frac{1}{2})}
\ \min\left(R,\,\beta(\lambda,\alpha)\,R^{-{\Real\lambda}}\right), \label{f:viivii} \\
\intertext{where:}
\widehat{c}_0(\lambda) &= c \, e^{\pi |\Imag\lambda|}
\left(1+\frac{1}{2\Real\lambda+1}\right)\ \left[\frac{1}{2(\Real\lambda+1)} 
+ \frac{1}{2\Real \lambda}\right]^{1/2},
\label{AG:61}
\end{align}
$\Phi(k) =1$ if $\,\,|\!\Arg(-\rmi k)|\leqslant\frac{\pi}{4}$,
while $\Phi(k)=\sin 2|\phi(k)|$ if
$\frac{\pi}{4}\leqslant |\phi(k)| =|\Arg(-\rmi k)|<\frac{\pi}{2}$,
and $\beta$ denotes a suitable positive function of $\lambda$ and $\alpha$; 

\vskip 0.2cm

\noindent
{\rm (ii)} the derivative $\psi'_{x;\lambda,k}(R)$ 
of $\psi_{x;\lambda,k}$ is well--defined on $\R^+$ and 
satisfies a global majorization of the form:
\beq
|\psi'_{x;\lambda,k}(R)| 
\leqslant |g|\ \frac{\widehat{c}_1(\lambda,k)}{\sqrt{2\alpha}} \ 
\|x\|_{\wea} \ R^{-1/2};
\label{f:viivii'}
\eeq

\vskip 0.3cm

\noindent
{\rm (iii)} for any potential $V$ in a class $\cN^{\,\gamma}_{\wea}$,
the corresponding double integral
\beq
\int_0^{+\infty} \!\! \rmd R \int_0^{+\infty} \!\! \rmd R' \
\overline{\psi}(R) \, V(\lambda;R,R') \, \psi (R')
\label{AG:66} \\
\eeq
is absolutely convergent.
\end{lemma}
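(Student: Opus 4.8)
The plan is to prove Lemma \ref{lemma:viivii} along the same lines as Lemma \ref{lemma:vii}, using the explicit Green-function representation of $G(\lambda,k;R,R')$ together with the CAM-bounds established in Appendix \ref{appendix:a}. First I would rewrite \eqref{AG:62} in the split form analogous to \eqref{e:15}, separating the integral at $R'=R$: for $\Imag k\geqslant 0$ the function $G(\lambda,k;R,R')$ decomposes into a product of the ``regular'' solution $R'j_\lambda(kR')$ (for $R'\leqslant R$) and an ``irregular'' solution behaving like $R'h_\lambda^{(1)}(kR')$ or, in the form more convenient for the domain $\Real k\gtrless 0$, the CAM analogue used in Appendix \ref{appendix:a}. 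The existence of $\psi_{x;\lambda,k}$ as a locally bounded function in $X^*_{\wea}$ then follows from the Schwarz inequality in $X_{\wea}$, exactly as the convergence of the integrals in \eqref{e:15} was established in the proof of Lemma \ref{lemma:vii}, using the bound \eqref{a:41} on $j_\lambda$ and the bounds \eqref{a:30}, \eqref{a:32} on $G(\lambda,k)$.

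For part (i), I would insert the global majorization \eqref{AG.3} on $|G(\lambda,k;R,R')|$ (valid for $(\lambda,k)\in\C^+_{-\frac12}\times\overline{\Pi}_\alpha^{\cut}$) into \eqref{AG:62} and apply Schwarz's inequality against $x\in X_{\wea}$; this produces the factor $|g|\,\|x\|_{\wea}$ and the prefactor $\widehat{c}_0(\lambda)$ as in \eqref{AG:61}, with the $\sqrt{R}\,e^{\alpha R}$-type weights absorbed by the measure $\we(R)e^{2\alpha R}$ and the resulting integral $\int_0^{+\infty}R/\we(R)\,\rmd R\leqslant A_\varepsilon^2$. The two competing estimates $|\psi_{x;\lambda,k}(R)|\leqslant c(\psi)R$ (small $R$, from the zero of order $\lambda$ of $j_\lambda$ at the origin when $\Real\lambda>0$) and $|\psi_{x;\lambda,k}(R)|\leqslant \text{const}\cdot R^{-\Real\lambda}$ (large $R$, from the decay of the irregular solution times the exponential weight) combine into the $\min$ appearing in \eqref{f:viivii}. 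The $[\Phi(k)]^{-(\Real\lambda+1/2)}$ factor is precisely the sector-dependent constant coming from the bounds in Appendix \ref{appendix:a} on $j_\lambda(kR)$ when $k$ wanders into the cut-strip with $|\Arg(-\rmi k)|$ between $\pi/4$ and $\pi/2$; I would quote it directly rather than re-derive it. For (ii), I would differentiate the split representation term by term (legitimate since $j_\lambda(kR)$ and $h_\lambda^{(1)}(kR)$ are analytic in $R$ on $\R^+$), use the analogous derivative bounds from Appendix \ref{appendix:a}, and obtain \eqref{f:viivii'} by the same Schwarz estimate, with $R^{-1/2}$ replacing $R$ because the differentiated irregular solution loses one power of $R$.

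For part (iii), the absolute convergence of the double integral \eqref{AG:66} would follow by writing it as $\int_0^{+\infty}\overline{\psi}(R)\,x_\psi(R)\,\rmd R$ with $x_\psi(R)\doteq\int_0^{+\infty}V(\lambda;R,R')\psi(R')\,\rmd R'$, showing that $x_\psi\in X_{\wea}$ by the same argument as in Lemma \ref{lemma:vi'}(b) (now using the bound \eqref{g:5'} on $V(\lambda;R,R')$ together with $V_*^{(\we)}\in X_{\wea}$), and then applying Schwarz's inequality once more against $\psi\in X^*_{\wea}$. The pairing $\int\overline{\psi}\,x_\psi$ is finite because $\|\psi\|^*_{\wea}$ is controlled by the majorization \eqref{f:viivii} (since $R\mapsto\min(R,\beta R^{-\Real\lambda})$ is square-integrable against $1/(\we(R)e^{2\alpha R})$ when $\Real\lambda>0$).

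The main obstacle I anticipate is bookkeeping the precise form of the constants and the $\Phi(k)$-factor: one must carefully match the split representation of $G(\lambda,k;R,R')$ to the specific Appendix-A bounds \eqref{a:30}, \eqref{a:32}, \eqref{a:41} in each angular sector of $k$, and verify that the interplay between the power-law weight $\we(R)=R^{1-\varepsilon}(1+R)^{1+2\varepsilon}$, the exponential $e^{2\alpha R}$, and the behaviour $R^{-\Real\lambda}$ of $\psi$ at infinity yields a genuinely convergent estimate uniformly in $(\lambda,k)$ over the relevant domain. The condition $\Real\lambda>0$ in (i) is exactly what makes $R^{-\Real\lambda}$ integrable near infinity after the weight is restored; for $-\frac12<\Real\lambda\leqslant 0$ one only gets the weaker local-boundedness and membership in $X^*_{\wea}$ asserted before (i), which is why the global majorization is stated separately under the hypothesis $\Real\lambda>0$.
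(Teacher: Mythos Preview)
Your overall strategy is sound, but the paper takes a notably more direct route that avoids the split $G=j_\lambda\times h_\lambda^{(1)}$ altogether. For complex $\lambda$ the paper never writes $\psi_{x;\lambda,k}$ in the form \eqref{e:15}; instead it applies Schwarz's inequality immediately to \eqref{AG:62},
\[
|\psi_{x;\lambda,k}(R)|\leqslant |g|\,\|x\|_{\wea}\left[\int_0^{+\infty}|G(\lambda,k;R,R')|^2\,\frac{e^{-2\alpha R'}}{\we(R')}\,\rmd R'\right]^{1/2},
\]
and then plugs in bounds on $G$ as a whole. For the $X^*_{\wea}$ statement the crude bound \eqref{a:31"} ($|G|\leqslant c(\lambda,k)\sqrt{RR'}$ for $\Imag k\geqslant 0$) suffices. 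For (i), however, the crucial input is not \eqref{AG.3} but the refined bound \eqref{a:30'}, which carries the factor $[\min(R/R',R'/R)]^{\Real\lambda+1/2}$; it is this factor that, after splitting the $R'$-integral at $R'=R$, produces the two majorants $R^2$ and $\beta^2 R^{-2\Real\lambda}$ and hence the $\min$ in \eqref{f:viivii}. The $\Phi(k)$ factor likewise comes from \eqref{a:30'}, not from the bound \eqref{a:41} on $j_\lambda$. Citing \eqref{AG.3} here is a genuine slip: that bound has only $\sqrt{RR'}$ dependence (and an unwanted $e^{\alpha(R+R')}$ on the full half-plane) and cannot yield the $R^{-\Real\lambda}$ decay.

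Your split-form approach could in principle recover this decay from the behaviour of $h_\lambda^{(1)}$, but Appendix \ref{appendix:a} does not supply separate bounds on $h_\lambda^{(1)}$ for complex $\lambda$---only on $j_\lambda$ and on $G$---so you would have to derive them yourself. The paper sidesteps this entirely. Similarly, for (ii) the paper does not differentiate a split representation; it uses the ready-made bound \eqref{a:131"} on $\partial G/\partial R$ inside the same Schwarz estimate. Part (iii) is done in one line via $|\!\int\!\!\int \overline{\psi}V\psi|\leqslant C(V(\lambda;..))\,(\|\psi\|^*_{\wea})^2$, which is equivalent to but shorter than your $x_\psi$ route.
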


\begin{proof}
By using the assumption that $x$ belongs to 
${X}_{\wea}$ and the Schwarz inequality,
we obtain the following majorization of the expression
\eqref{AG:62} of $\psi_{x;\lambda,k}(R)$:
\beq
|\psi_{x;\lambda,k}(R)|\leqslant
|g|\, \|x\|_{\wea}
\left[\int_0^{+\infty} |G(\lambda,k;R,R')|^2 \, \frac{e^{-2\alpha R'}}{\we(R')} 
\,\rmd R'\right]^{1/2},
\label{AG:63}
\eeq
with 
$w^{(\varepsilon)}$ defined by Eq. \eqref{wf:2}. 
In the latter integral, we can now plug the bound
\eqref{a:31"} on $|G(\lambda,k;R,R')|$, which is valid 
for $\Imag k \geqslant 0$ and $\lambda \in \C^+_{-\frac{1}{2}}$. 
A straightforward majorization then shows that
the integral in \eqref{AG:63} is convergent and bounded by 
$R\,[c(\lambda,k)]^2/(2\alpha)$, which implies that 
$\psi_{x;\lambda,k}(R)$ is well--defined and bounded by 
$|g|\,\|x\|_{\wea}\,\frac{c(\lambda,k)}{\sqrt {2\alpha}}\,\sqrt{R}$. 
It then also follows that
\beq
\left\|\psi_{x;\lambda,k}\right\|^*_{\wea}\leqslant
\frac{c(\lambda,k)}{\sqrt {2\alpha}} \ |g|\,\|x\|_{\wea}
\left[\int_0^{+\infty} \frac{R\,e^{-2\alpha R}}{w^{(\varepsilon)}(R)} \,\rmd R\right]^{1/2}\  < +\infty.
\label{AG:64-ex}
\eeq
\emph{Proof of {\rm (i)}.}
\, Using \eqref{a:30'} (along with inequality $R^{\varepsilon}\,(1+R)^{-1-2\varepsilon}< 1$)
allows one to majorize the integral in \eqref{AG:63} by the expression
\beq
c^2(\lambda,k) \ [\Phi(k)]^{-(2\Real \lambda +1)}
R\,\left[\int_0^R \!\! e^{-2\alpha R'}\left(\frac{R'}{R}\right)^{\!\! 2\Real\lambda+1}\!\!\rmd R'
+\int_R^{+\infty} \!\! e^{-2\alpha R'}\left(\frac{R}{R'}\right)^{\!\! 2\Real\lambda +1} \!\! \rmd R'\right],
\label{AG:64'}
\eeq
which can itself be majorized by either one of the 
following two expressions (by respectively majorizing $e^{-2\alpha R'}$ by one or not):  
\begin{align}
& \mathrm{(a)} \qquad 
c^2(\lambda,k) \, [\Phi(k)]^{-(2\Real \lambda +1)}
R^2 \left[\frac{1}{2(\Real\lambda +1)} + \frac{1}{2\Real \lambda}\right], \label{AG:64"} \\
\intertext{which holds under the additional condition $\Real \lambda>0$.}
& \mathrm{(b)} \qquad 
c^2(\lambda,k) \, [\Phi(k)]^{-(2\Real \lambda +1)}
\left[R^{-2\Real \lambda}\,\frac{\Gamma(2\Real\lambda +2)}{{(2\alpha)}^{2\Real\lambda+2}} 
+(2\alpha)^{-1} R \ e^{-2\alpha R}\right]. \label{AG:64}
\end{align}
Here, the expression inside the bracket can itself be
majorized by $\beta^2(\lambda,\alpha)\ R^{-2\Real \lambda}$
(in terms of a suitable positive constant $\beta(\lambda,\alpha)$).
Therefore the inequalities \eqref{AG:64"} and \eqref{AG:64} imply a global
majorization of the r.h.s. of \eqref{AG:63} of the form \eqref{f:viivii}, 
by noting that, in view of \eqref{a:30"}, the expression $\widehat{c}_0(\lambda)$, 
as defined by Eq. \eqref{AG:61}, is such that 
$\widehat{c}_0(\lambda) = \left[\frac{1}{2(\Real\lambda+1)} 
+ \frac{1}{2\Real \lambda}\right]^{1/2} \times \sup_{\{k\,:\,\Imag k\geqslant 0\}} c(\lambda,k)$.

\vskip 0.3cm

\noindent
\emph{Proof of {\rm (ii)}:}
In view of Eq. \eqref{AG:62}, one is led to establish the convergence 
and boundedness properties of the integral
$\cJ(R)= \int_0^\infty g \,\frac{\partial G}{\partial R}(\lambda,k;R,R') \, x(R') \,\rmd R'$,
which will then define the function $\psi'_{x;\lambda,k}(R)$. 
Similarly to formula \eqref{AG:63}, one can now write,
in view of the Schwarz inequality and of bound \eqref{a:131"},
the following successive majorizations: 
\beq
\begin{split}
& |\cJ(R)| \leqslant
|g| \, \|x\|_{\wea}
\left[\int_0^{+\infty}\left|\frac{\partial G}{\partial R}(\lambda,k;R,R')\right|^2 
\frac{e^{-2\alpha R'}}{w^{(\varepsilon)}(R')} \,\rmd R'\right]^{1/2} \\
&\quad\leqslant 
\frac{\widehat{c}_1(\lambda,k)}{\sqrt{R}} \ |g| \, \|x\|_{\wea}
\left[\int_0^{+\infty} \frac{R \, e^{-2\alpha R}}{w^{(\varepsilon)}(R)} \,\rmd R\right]^{1/2}\  < +\infty,
\end{split}
\label{AG:63'}
\eeq
which therefore proves that 
$\psi'_{x;\lambda,k}(R)$ is well--defined on $\R^+$ and satisfies the
global majorization \eqref{f:viivii'}.

\vskip 0.3cm

\noindent
\emph{Proof of {\rm (iii)}:}
In view of the conditions \eqref{AG.7} and \eqref{AG.10} 
on the potential $V$ and of the fact that $\psi$ belongs to
the space $X^*_{\wea}$, it results 
from the Schwarz inequality that the integral \eqref{AG:66} 
is absolutely convergent and bounded by the constant 
$ C(V(\lambda;..))\, \left(\left\|\psi\right\|^*_{\wea}\right)^2$. 
\end{proof}

\noindent
We can now give an appropriate variant of the Wronskian Lemma \ref{lemma:vii'}.

\skd

\begin{lemma}[Wronskian Lemma]
\label{lemma:vii'vii'} 
For given values of $\lambda$, $k$, $g$ such that $\Real\lambda>0$,
$\Imag k>0$, and $g\in\R$, let $\psi(R)$ be a solution of the following
integro--differential equation:
$D_{\lambda,k}\,\psi(R) = g \int_0^{+\infty} V(\lambda;R,R') \,\psi(R')\,\rmd R'$, 
associated with a solution $x \in X_{\wea}$ of the 
corresponding equation $x=gL(\lambda,k) \, x$ via Eq. \eqref{xtopsi'}.
Then there holds the following identity, in which all the integrals are
absolutely convergent:
\beq
\begin{split}
& 2\Imag k \, \Real k \int_0^{+\infty} {\overline \psi}(R) \, \psi(R)\,\rmd R
-\Imag \lambda \ (2\Real \lambda +1)
\int_0^{+\infty} \frac{{\overline \psi}(R)\, \psi(R)}{{R}^2}\,\rmd R \\
&\quad = g\int_0^{+\infty} \!\! \rmd R\int_0^{+\infty} \!\! \rmd R'
\,\frac{[V(\lambda;R,R')-V({\overline \lambda};R,R')]}{2\rmi} \,\overline{\psi}(R) \, \psi(R').
\end{split}
\label{f:wrwr}
\eeq
\end{lemma}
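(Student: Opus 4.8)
The plan is to derive \eqref{f:wrwr} by the same Wronskian computation as in Lemma~\ref{lemma:vii'}, but now keeping track of \emph{both} the energy term and the centrifugal term since $\lambda$ is complex. First I would write down, for $g$ real, the analog of \eqref{f:wr1}: starting from $D_{\lambda,k}\psi = g\int_0^{+\infty}V(\lambda;R,R')\psi(R')\,\rmd R'$ and its complex conjugate $\overline{D_{\lambda,k}\psi}=D_{\overline\lambda,\overline k}\overline\psi = g\int_0^{+\infty}V(\overline\lambda;R,R')\overline\psi(R')\,\rmd R'$ (using property (a) of Subsection~\ref{subsubse:symmetry}, i.e. $\overline{V(\lambda;R,R')}=V(\overline\lambda;R,R')$), I would form $\overline\psi(R)\,D_{\lambda,k}\psi(R) - \psi(R)\,D_{\overline\lambda,\overline k}\overline\psi(R)$. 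Recalling that $D_{\lambda,k}\psi = \psi'' + k^2\psi - \frac{\lambda(\lambda+1)}{R^2}\psi$, the left side becomes a total derivative plus correction terms: explicitly
\beq
\frac{\rmd}{\rmd R}\!\left[\overline\psi\,\psi' - \overline{\psi'}\,\psi\right]
+ (k^2 - \overline k^{\,2})\,\overline\psi\,\psi
- \frac{\lambda(\lambda+1) - \overline\lambda(\overline\lambda+1)}{R^2}\,\overline\psi\,\psi,
\nonumber
\eeq
and one checks $k^2-\overline k^{\,2} = 4\rmi\,\Real k\,\Imag k$ and $\lambda(\lambda+1)-\overline\lambda(\overline\lambda+1) = (\lambda-\overline\lambda)(\lambda+\overline\lambda+1) = 2\rmi\,\Imag\lambda\,(2\Real\lambda+1)$. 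On the right side, the potential terms combine into $g\int_0^{+\infty}[V(\lambda;R,R')-V(\overline\lambda;R,R')]\,\overline\psi(R)\,\psi(R')\,\rmd R'$ modulo a symmetric piece (after also using $V(\lambda;R,R')=V(\lambda;R',R)$), exactly as in Lemma~\ref{lemma:vii'}.

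Next I would integrate this identity over $R\in[0,\widehat R]$ and let $\widehat R\to+\infty$. The boundary term at $R=0$ vanishes because $\psi(0)=0$ with $\psi'$ bounded near $0$ — this is guaranteed by Lemma~\ref{lemma:viivii}, whose hypotheses apply here since $\Real\lambda>0$ and $\psi=\psi_{x;\lambda,k}$ with $x\in X_{\wea}$ (the local boundedness of $\psi$ and $\psi'$ near the origin is part of that lemma's conclusion; here one uses $\Imag k>0\geqslant 0$ so the Green-function bound \eqref{a:31"} is available). The boundary term at $R=\widehat R$, namely $\overline\psi(\widehat R)\psi'(\widehat R)-\overline{\psi'(\widehat R)}\psi(\widehat R)$, tends to zero: by Lemma~\ref{lemma:viivii}(i) and (ii) with $\Real\lambda>0$ one has $|\psi(R)|=O(R^{-\Real\lambda})$ and $|\psi'(R)|=O(R^{-1/2})$ as $R\to+\infty$ (the global majorizations \eqref{f:viivii} and \eqref{f:viivii'}), and moreover the exponential decay of $G(\lambda,k;R,R')$ in $R$ for $\Imag k>0$ — visible through the factor implicit in \eqref{a:31"}/\eqref{a:131"} when $\Imag k>0$ — in fact forces both $\psi$ and $\psi'$ to decay exponentially, so the Wronskian term at infinity vanishes. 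As for the double integral on the right-hand side, its absolute convergence is precisely the content of Lemma~\ref{lemma:viivii}(iii) applied to $\psi\in X^*_{\wea}$ (using conditions \eqref{AG.7} and \eqref{AG.10}), and the symmetric piece integrates to zero by Fubini together with the symmetry $V(\lambda;R,R')=V(\lambda;R',R)$, exactly as in the proof of Lemma~\ref{lemma:vii'}. Collecting everything and dividing by $2\rmi$ yields \eqref{f:wrwr}.

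I expect the main obstacle to be the justification of the vanishing of the boundary term at $R=+\infty$ with enough care about the centrifugal term and about which Green-function bound is in force. For $\Imag k>0$ strictly positive one does get genuine exponential decay of $\psi$ and $\psi'$, so the Wronskian at infinity goes to zero outright — this is cleaner than in Lemma~\ref{lemma:vii'}, where on the real axis one needed the Sommerfeld-type cancellation via \eqref{f:vii"}. The more delicate bookkeeping is near $R=0$: since $j_\lambda(kR)\sim (kR)^{\lambda}$ and $h$-type behaviour gives the Green function a $R^{\lambda+1}$ versus $R^{-\lambda}$ split, one must confirm that $\psi(R)=O(R^{\Real\lambda+1})$ (hence $o(R)$ if $\Real\lambda>0$) and that $\psi'$ stays bounded, so that both $\overline\psi\,\psi'$ and the integral $\int_0^{R}\overline\psi\psi/R^2$ appearing after integrating the centrifugal term are well-defined and vanish at the lower endpoint — all of which is packaged in Lemma~\ref{lemma:viivii}(i),(ii). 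Once those local and asymptotic estimates are invoked, the algebra reducing $k^2-\overline k^{\,2}$ and $\lambda(\lambda+1)-\overline\lambda(\overline\lambda+1)$ to the stated real coefficients is routine, and no new idea beyond the scheme of Lemma~\ref{lemma:vii'} is needed.
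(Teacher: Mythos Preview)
Your approach is essentially the same as the paper's: form the pointwise identity \eqref{f:wrwr1}, integrate over $[0,\widehat R]$, and use Lemma~\ref{lemma:viivii} to control the boundary terms and the integrability of $\overline\psi\psi/R^2$ and of the double integral on the right. One small imprecision: Lemma~\ref{lemma:viivii}(ii) only gives $|\psi'(R)|\leqslant\mathrm{const}\cdot R^{-1/2}$, not boundedness of $\psi'$ near $0$; what actually makes the Wronskian vanish at $R=0$ is the product estimate $|\psi\,\psi'|=O(R)\cdot O(R^{-1/2})=O(R^{1/2})\to 0$, exactly as the paper uses \eqref{f:viivii} and \eqref{f:viivii'} together.
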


\begin{proof}
The following equation (analogous to Eq. \eqref{f:wr1})
results from the extension of Eq. \eqref{f:24} to  
complex values of $(\lambda,k)$:
\beq 
\begin{split}
& \overline{\psi}(R)\,\psi''(R)-\overline{\psi''} (R)\,\psi(R) 
+(k^2-{\overline k}^{\,2}){\overline \psi}(R) \, \psi(R) 
-[\lambda(\lambda+1)-\overline{\lambda}(\overline{\lambda}+1)]
\frac{{\overline \psi}(R) \, \psi(R)}{R^2} \\
& \quad = \overline{\psi} (R)\ [D_{\lambda,k} \, \psi](R)
-[\overline{D_{\lambda,k}\psi}](R)\,\psi(R) \\
& \quad = g \int_0^{+\infty} 
\left[V(\lambda;R,R')\overline{\psi}(R)\,\psi(R')-
V(\overline{\lambda};R,R')\,\overline{\psi}(R')\,\psi (R)\right]
\, \rmd R'
\end{split}
\label{f:wrwr1}
\eeq
Then, since $\psi(R)$ satisfies all the properties
described in Lemma \ref{lemma:viivii}, 
it is legitimate to integrate  
over $R$ from $0$ to $+\infty$ both sides of the latter equation;
more precisely, one deduces from (i) and (iii) of Lemma \ref{lemma:viivii} 
that, under the conditions $\Real\lambda >0$ and $\Imag k >0$,
the r.h.s. of \eqref{f:wrwr1}, as well as the function  
${{\overline \psi}(R) \psi(R)/R^2}$ are integrable 
over $R$ from $0$ to $+\infty$. Moreover, the majorizations \eqref{f:viivii}
and \eqref{f:viivii'} imply that the function
$[\overline{\psi} (R)\,\psi' (R)-\overline{\psi'}(R)\,\psi(R)]$
tends to zero at both ends of the half--line $[0,+\infty)$. 
It therefore follows that the remaining 
integrated term of Eq. \eqref{f:wrwr1}, namely,
$4\rmi\Imag k \Real k \int_0^{\widehat{R}} \overline{\psi}(R)\,\psi(R)\,\rmd R$,
has a finite limit when $\widehat{R}$ tends to $+\infty$.
Finally, by taking into account the symmetry property of
$V(\lambda;R,R')$, the integral 
in Eq. \eqref{f:wrwr1} can be rewritten under the form 
of Eq. \eqref{f:wrwr}, in which all terms are well--defined.
\end{proof}

\vskip 0.3cm

\noindent
{\bf Application: regions of $\boldsymbol{(\lambda,k)}$--space free of singularities}

\vskip 0.2cm

In Theorem \ref{theorem:7} (c), we had proved (as an application of
the Wronskian Lemma) that no singularity of the resolvent $R_\ell(k;g)$ 
can occur in the upper half--plane of $k$, except on the imaginary axis.
In other words, there are no singular pairs $(\lambda,k)$ with 
$\lambda=\ell$ integer $(\ell\geqslant 0)$ and $\Imag k>0$, $\Real k \neq 0$.
Now, there are some extensions of that result 
to the location of the singularity manifolds in complex 
$(\lambda,k)$--space of the ``interpolated resolvent" $R(\lambda,k;g)$, 
which similarly follow from Lemma \ref{lemma:vii'vii'}. 

\vskip 0.3cm

\noindent
(a) For $\lambda$ real, Eq. \eqref{f:wrwr} reduces to:
\beq
\Imag k \, \Real k \int_0^{+\infty} {\overline \psi}(R) \, \psi(R) \,\rmd R=0,
\label{f:wrwr2}
\eeq
which entails that there are no singular pairs $(\lambda,k)$ 
with $\lambda$ real positive and $\Imag k >0$, $\Real k \neq 0$, 
since Eq. \eqref{f:wrwr2} excludes the possibility of a non--zero 
function $\psi$ in these situations.

\vskip 0.3cm

\noindent
(b) If the potential $V(\lambda; R,R')$ is constant with respect to $\lambda$,
the r.h.s. of Eq. \eqref{f:wrwr} still vanishes. Then one sees that no singular 
pairs $(\lambda,k)$ can occur such that $\Imag\lambda<0$ with $\Imag k>0$ and 
$\Real k>0$ (and in the symmetric region obtained by 
$(\lambda,k) \to ({\overline \lambda}, -{\overline k})$).
In all such situations indeed, the l.h.s. of Eq. \eqref{f:wrwr} 
would have to be strictly positive for any non--zero function $\psi$.

Note that, via an argument of continuity, this result also implies that
there is no singular pair $(\lambda,k)$ with $\Imag\lambda<0$ and $ k>0$.

We notice that an analog of this situation is always encountered in the 
complex angular momentum formalism of the theory of local potentials, 
since the correspondence with the present case is simply given formally 
by the relation $V(\lambda;R,R')= \delta(R-R')V(R)$:
in other words, the complex angular momentum interpolation of the potential
is always constant in $\lambda$. As a matter of fact, in that framework, 
the previous result appeared as a basic theorem, which was proved by T. Regge 
\cite{DeAlfaro}: according to the latter, all the singularity manifolds of 
$R(\lambda,k;g)$, such as those which manifest themselves as resonances 
(i.e., containing pairs $(\lambda,k)$ with $\lambda=\ell$ integer and $\Imag k<0$),
can only manifest themselves at $ k>0$ in the region $\Imag\lambda>0$.

\vskip 0.3cm

\noindent
(c) An interesting simple class of nonlocal potentials 
(already mentioned in Subsection \ref{subse:interpolation}) are the 
potentials of the form $V(\lambda;R,R')= V_*(R,R') \ \widetilde{F}(\lambda)$,
where the function $\widetilde{F}$ is holomorphic, bounded and of Hermitian--type 
in the half--plane $\C^+_{-\frac{1}{2}}$.
As typical simple examples, we mentioned 
$\widetilde{F}(\lambda)= e^{-\gamma \lambda}$ or
$\widetilde{F}(\lambda)= Q_\lambda(e^\gamma)$.
For such a class, let $\{{\cL}_j;\  j\in \Z\}$ be the set of lines in
$\C^+_{-\frac{1}{2}}$ on which one has $\Imag\widetilde{F}(\lambda)= 0$,
${\cL}_j$ and ${\cL}_{-j}$ being complex conjugate of each other
with the following property: all the points of a curve 
${\cL}_j$ with $j>0$ (resp., $j<0$) belong to the region $\Imag\lambda>0$
(resp., $\Imag\lambda<0$), ${\cL}_0$ being along the real positive axis.
Then, at any point $\lambda\in {\cL}_j$ we have 
$V(\lambda;R,R')=V({\overline\lambda};R,R')$, and therefore the r.h.s. of
Eq. \eqref{f:wrwr} vanishes at such points. Therefore, on the basis of 
Eq. \eqref{f:wrwr} as in (b), no singular pairs $(\lambda,k)$ can occur 
with $\lambda$ in any line ${\cL}_j$, $j\leqslant 0$, and with $\Imag k \geqslant 0$ 
and $\Real k \geqslant 0$. This property indicates the 
\emphsl{possible occurrence of singularity manifolds containing branches in the region 
$\Imag k\geqslant 0$, $\Real k \geqslant 0$ and $\Real\lambda>0$, $\Imag\lambda<0$, 
but always located in ``strips" of this fourth quadrant of the $\lambda$--plane, 
well--separated from one another by the set of lines ${\cL}_j$, $j\leqslant 0$}.
These possible singularities, which did not exist for the case of local potentials, 
can be seen to enjoy properties which are related to the notion of \emphsl{antiresonance}
(see Section \ref{se:AG.5}).

\subsection{Analyticity and boundedness properties in complex 
$\boldsymbol{(\lambda,k)}$--space of the partial scattering 
amplitude $\boldsymbol{T(\lambda,k;g)}$}
\label{subse:aggiunte4.4}

We now extend Eq. (3.20) from non--negative integral values $\ell$ 
to complex $\lambda$ by considering the integral equation
\beq
[\I-g L(\lambda,k)]\,v(\lambda,k;g;\cdot)=v_0(\lambda,k;\cdot),
\label{AG.36}
\eeq
where, according to Theorems \ref{the:AG.1} and \ref{the:AG.2}, 
the vector--valued and operator--valued functions
$(\lambda,k) \mapsto v_0(\lambda,k;\cdot)$ and 
$(\lambda,k) \mapsto L(\lambda,k)$ are well--defined in the set 
$\C^+_{-\frac{1}{2}}\times {\overline\Omega}_\alpha^{\,\cut}$,
and holomorphic in the domain $\C^+_{-\frac{1}{2}}\times \Omega_\alpha^{\cut}$.
Then there holds the following analog of Theorem \ref{theorem:3'},
in which we have put 
$\underline{D}_{\,\alpha}(V;g)=D_\alpha(V;g)\cap\Omega_\alpha^{\cut}$ 
(see Eq. \eqref{AG.35'}).

\skd

\begin{theorem}
\label{the:AG.7}
For any nonlocal potential $V$ in a class 
$\cN_{\wea}^{\,\gamma}$, 
the inhomogeneous equation \eqref{AG.36}
admits for any $g\in\C$ and $(k,\lambda)\in{\underline D}_{\,\alpha}(V;g)$ 
a unique solution $v(\lambda,k;g;\cdot)$ in $X_{\wea}$, 
which is well--defined by the formula:
\beq
v(\lambda,k;g;\cdot)=R(\lambda,k;g) \, v_0(\lambda,k;\cdot).
\label{AG.37}
\eeq
Furthermore, for any $g$, the function $v(\lambda,k;g;\cdot)$
is holomorphic in ${\underline D}_{\,\alpha}(V;g)$ as a 
vector--valued function, taking its values in $X_{\wea}$.
\end{theorem}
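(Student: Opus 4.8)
The plan is to follow the template of the proof of Theorem \ref{theorem:3'}, translating each ingredient from the fixed-integer-$\ell$ setting into the CAM setting, where all the needed analyticity and boundedness facts have now been assembled in Theorems \ref{the:AG.1}, \ref{the:AG.2}, \ref{the:AG.5}, and \ref{the:AG.6}. First I would address existence and uniqueness: for $(\lambda,k)\in{\underline D}_{\,\alpha}(V;g)$ one has by definition $\sigma(\lambda,k;g)\neq 0$, so $R(\lambda,k;g)=[\I-gL(\lambda,k)]^{-1}$ is a well-defined bounded operator on $X_{\wea}$ by Theorem \ref{the:AG.6}. Hence the right-hand side of \eqref{AG.37} makes sense as an element of $X_{\wea}$, since $v_0(\lambda,k;\cdot)\in X_{\wea}$ by Theorem \ref{the:AG.1}. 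That $v(\lambda,k;g;\cdot)$ so defined solves \eqref{AG.36} and is the unique solution is then immediate from the identity $[\I-gL(\lambda,k)]R(\lambda,k;g)=R(\lambda,k;g)[\I-gL(\lambda,k)]=\I$; uniqueness follows because the homogeneous equation $[\I-gL(\lambda,k)]x=0$ would force $x=0$ after applying $R(\lambda,k;g)$.

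Next I would establish the holomorphy of $(\lambda,k)\mapsto v(\lambda,k;g;\cdot)$ on ${\underline D}_{\,\alpha}(V;g)$ at fixed $g$. The natural route is to invoke Lemma \ref{lemma:B1'}(ii) (the same lemma used in the proof of Theorem \ref{theorem:3'}), applied to the composition of the operator-valued function $R(\lambda,k;g)$ with the vector-valued function $v_0(\lambda,k;\cdot)$: the former is holomorphic on $D_\alpha(V;g)$ as a bounded-operator-valued function by Theorem \ref{the:AG.6}, and its restriction to the cut-strip sublevel ${\underline D}_{\,\alpha}(V;g)=D_\alpha(V;g)\cap\Omega_\alpha^{\cut}$ is therefore also holomorphic there; the latter is holomorphic on $\C^+_{-\frac{1}{2}}\times\Omega_\alpha^{\cut}\supset{\underline D}_{\,\alpha}(V;g)$ by Theorem \ref{the:AG.1}. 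Applying $R(\lambda,k;g)$ to $v_0(\lambda,k;\cdot)$ then yields a holomorphic $X_{\wea}$-valued function on ${\underline D}_{\,\alpha}(V;g)$, which is exactly \eqref{AG.37}. (If one prefers, one can instead use the representation $R(\lambda,k;g)=\I+gN(\lambda,k;g)/\sigma(\lambda,k;g)$ from \eqref{AG.27} together with Theorems \ref{the:AG.3} and \ref{the:AG.4}, so that $v(\lambda,k;g;\cdot)=v_0(\lambda,k;\cdot)+g\,\sigma(\lambda,k;g)^{-1}\,N(\lambda,k;g)\,v_0(\lambda,k;\cdot)$, a product and quotient of holomorphic functions with non-vanishing denominator on ${\underline D}_{\,\alpha}(V;g)$; both arguments amount to the same thing.)

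I do not anticipate a serious obstacle here: the substance of the theorem lies entirely in the infrastructure already built in Subsections \ref{subse:aggiunte4.2} and \ref{subse:aggiunte4.3}, and the proof is a verbatim analogue of Theorem \ref{theorem:3'}. The one point that needs a word of care is the bookkeeping of domains: one must note that $v_0$ is defined and holomorphic on the larger set involving $\Omega_\alpha^{\cut}$ while $R$ is controlled only on $D_\alpha(V;g)$ (a subset of $\C^+_{-\frac{1}{2}}\times\Pi_\alpha^{\cut}$), so the common domain on which both the composition and its holomorphy are valid is precisely the intersection ${\underline D}_{\,\alpha}(V;g)=D_\alpha(V;g)\cap\Omega_\alpha^{\cut}$ — which is exactly the set appearing in the statement. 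Everything else is a routine application of the cited lemmas.
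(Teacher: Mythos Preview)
Your proposal is correct and follows essentially the same approach as the paper: both invoke Theorem \ref{the:AG.1} for $v_0(\lambda,k;\cdot)\in X_{\wea}$, Theorem \ref{the:AG.6} for the boundedness and holomorphy of $R(\lambda,k;g)$, and Lemma \ref{lemma:B1'}(ii) to conclude holomorphy of the composition. Your additional remarks on uniqueness and domain bookkeeping are accurate elaborations of points the paper leaves implicit.
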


\begin{proof}
The solution \eqref{AG.37} of equation \eqref{AG.36}, which follows from \eqref{AG.26},
defines $v(\lambda,k;g;\cdot)$ as an element of 
$X_{\wea}$, in view of the fact that $v_0(\lambda,k;\cdot)$ belongs to 
$X_{\wea}$ (see Theorem \ref{the:AG.1})
and that $R(\lambda,k;g)$ is a bounded operator in 
$X_{\wea}$ (see Theorem \ref{the:AG.6})
for all $g\in\C$ and $(k,\lambda)\in D_\alpha(V;g)$. 
Moreover, the holomorphy properties of 
$R(\lambda,k;g)$ and $v_0(\lambda,k;\cdot)$, established in
Theorems \ref{the:AG.6} and \ref{the:AG.1} respectively, 
imply the corresponding property for $v(\lambda,k;g;\cdot)$
in view of Lemma \ref{lemma:B1'}(ii).
\end{proof}

Now, by taking into account the expression \eqref{AG.27} of 
$R(\lambda,k;g)$ in Eq. \eqref{AG.37}, we can re--express 
$v(\lambda,k;g;\cdot)$ as follows, for all 
$(\lambda,k)\in{\underline D}_{\,\alpha}(V;g)$: 
\begin{align}
v(\lambda,k;g;\cdot) &= \frac{u(\lambda,k;g;\cdot)}{\sigma(\lambda,k;g)}, \label{AG.38} \\
\intertext{where:}
u(\lambda,k;g;\cdot) &\doteq
\left[\sigma(\lambda,k;g)+g N(\lambda,k;g)\right]v_0(\lambda,k;\cdot).\label{AG.39}
\end{align}
In fact, in view of Theorems \ref{the:AG.3} and \ref{the:AG.4},
for every $(\lambda,k)\in \C^+_{-\frac{1}{2}}\times {\Omega}_\alpha^{\cut}$,
one can act with the multiplier $\sigma(\lambda,k;g)$  
and with the Hilbert--Schmidt operator $N(\lambda,k;g)$ on the vector 
$v_0(\lambda,k;\cdot)\in X_{\wea}$. One can then state  
(also in view of Lemma \ref{lemma:B1'}(ii)) the following theorem.

\skd

\begin{theorem}
\label{the:AG.7'}
For any nonlocal potential $V$ in a class
$\cN_{\wea}^{\,\gamma}$, the quantity 
$u(\lambda,k;g;\cdot)$ is well--defined for every $g\in\C$ and
every $(\lambda,k)\in\C^+_{-\frac{1}{2}}\times {\Omega}_\alpha^{\cut}$,
as a vector in $X_{\wea}$, depending holomorphically on $(\lambda,k)$, such that 
\beq
\left\|u(\lambda,k;g;\cdot)\right\|_{\wea}\leqslant
\left\{|\sigma(\lambda,k;g)| + |g| \left\|N(\lambda,k;g)\right\|_\mathrm{HS}\right\} 
\, \|v_0(\lambda,k;g;\cdot)\|_{\wea}.
\label{AG.41-ex}
\eeq
\end{theorem}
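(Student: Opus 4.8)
The plan is to assemble Theorem \ref{the:AG.7'} directly from the three preceding theorems, exactly as the analogous step in Section \ref{se:rotationally} (the passage from Theorems \ref{theorem:5}, \ref{theorem:6} to formula \eqref{f:73}) was carried out. First I would fix $g\in\C$ and $(\lambda,k)\in\C^+_{-\frac{1}{2}}\times\Omega_\alpha^{\cut}$, and recall that by Theorem \ref{the:AG.1} the vector $v_0(\lambda,k;\cdot)$ is a well-defined element of $X_{\wea}$, depending holomorphically on $(\lambda,k)$ in that domain. By Theorem \ref{the:AG.3} the scalar $\sigma(\lambda,k;g)$ is a holomorphic function of $(\lambda,k,g)$ there, and by Theorem \ref{the:AG.4} the operator $N(\lambda,k;g)$ is a Hilbert--Schmidt operator on $X_{\wea}$, taking its values in $\widehat{X}_{\wea}$ and holomorphic in $(\lambda,k,g)$. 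Hence the expression $\sigma(\lambda,k;g)\,\I+g\,N(\lambda,k;g)$ acts on $v_0(\lambda,k;\cdot)$ as a bounded operator, and $u(\lambda,k;g;\cdot)$ defined by \eqref{AG.39} is a genuine vector in $X_{\wea}$.

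Next I would establish the norm inequality \eqref{AG.41-ex}. This is just the triangle inequality together with the elementary facts that multiplication by the scalar $\sigma(\lambda,k;g)$ multiplies the norm by $|\sigma(\lambda,k;g)|$, and that $\|N(\lambda,k;g)\,v_0(\lambda,k;\cdot)\|_{\wea}\leqslant \|N(\lambda,k;g)\|\,\|v_0(\lambda,k;\cdot)\|_{\wea}\leqslant \|N(\lambda,k;g)\|_\mathrm{HS}\,\|v_0(\lambda,k;\cdot)\|_{\wea}$, the last step using the standard bound $\|A\|\leqslant\|A\|_\mathrm{HS}$ recalled after formula \eqref{f:28}. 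Combining these two estimates and factoring out $\|v_0(\lambda,k;g;\cdot)\|_{\wea}$ yields precisely \eqref{AG.41-ex}. (If desired, one may note that the right-hand side is finite and, via the global majorizations \eqref{AG.12}, \eqref{AG.30}, and \eqref{AG.34}, even uniformly bounded on sets $\overline{D}^{(\delta)}_{2\gamma,\alpha}$, but this is not part of the statement.)

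Finally, for the holomorphy assertion, I would invoke Lemma \ref{lemma:B1'}(ii) (the product rule for vector-valued and operator-valued holomorphic functions, already used twice in the proofs of Theorems \ref{theorem:3'} and \ref{the:AG.7}). The map $(\lambda,k)\mapsto g\,N(\lambda,k;g)\,v_0(\lambda,k;\cdot)$ is the application of a holomorphic bounded-operator-valued function to a holomorphic vector-valued function, hence holomorphic in $X_{\wea}$; the map $(\lambda,k)\mapsto\sigma(\lambda,k;g)\,v_0(\lambda,k;\cdot)$ is a holomorphic scalar times a holomorphic vector, hence holomorphic; and the sum of two holomorphic vector-valued functions is holomorphic. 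Thus $u(\lambda,k;g;\cdot)$ is holomorphic on $\C^+_{-\frac{1}{2}}\times\Omega_\alpha^{\cut}$, which completes the proof.

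I do not expect any genuine obstacle here: the theorem is a bookkeeping consequence of the machinery already set up, and all three inputs (finiteness, the norm bound, and holomorphy) reduce to one-line applications of results proved earlier in the excerpt. The only point requiring a modicum of care is to make sure that $N(\lambda,k;g)$ is being used as a \emph{bounded}-operator-valued holomorphic function (so that Lemma \ref{lemma:B1'}(ii) applies directly), which is legitimate because holomorphy in the HS-norm implies holomorphy in the operator norm, exactly as remarked after Theorem \ref{theorem:6"} and in the proof of Theorem \ref{theorem:8}(i.b).
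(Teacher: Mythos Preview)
Your proposal is correct and follows essentially the same approach as the paper: the paper states this theorem as an immediate consequence of Theorems \ref{the:AG.1}, \ref{the:AG.3}, \ref{the:AG.4} together with Lemma \ref{lemma:B1'}(ii), exactly the ingredients you assemble. The paper does not even provide a separate proof environment for it, merely remarking in the preceding sentence that one can act with the multiplier $\sigma$ and the Hilbert--Schmidt operator $N$ on $v_0$ and then invoke Lemma \ref{lemma:B1'}(ii); your write-up simply makes these steps explicit.
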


\begin{remark}
It is clear that the various functions $(\lambda,k)\mapsto \sigma(\lambda,k;g)$,
$N(\lambda,k;g)$, $u(\lambda,k;g;\cdot)$ are respectively CAM interpolations of 
the corresponding sequences of functions $k\mapsto \sigma_\ell(k;g)$, $N_\ell(k;g)$,
$u_\ell(k;g;\cdot)$ (see Eq. \eqref{f:73}), since all the integral relations of the 
present section reduce legitimately to the corresponding equations of Section 
\ref{se:rotationally} for $\lambda =\ell \in \N$ (any potential in 
$\cN_{\wea}^{\,\gamma}$ being contained in $\cN_{\wea}$).
We also stress the fact that while the functions $\sigma$ and $N$
are holomorphic in $\C^+_{-\frac{1}{2}}\times \Pi_\alpha^{\cut}$, 
the domain of the function $u$ is only 
$\C^+_{-\frac{1}{2}}\times \Omega_\alpha^{\cut}$, which is the maximal 
domain in which the function $v_0$ can be proved to be holomorphic 
(see Theorem \ref{the:AG.1}).
\label{rem:5}
\end{remark}

Next, by substituting the complex variable $\lambda$ to the integer $\ell$ in formulae
\eqref{f:72'}, \eqref{f:73}, \eqref{f:82}, and \eqref{f:83},
we shall introduce an analytic interpolation $T(\lambda,k;g)$
of the sequence of partial scattering amplitudes 
$\{T_\ell(k;g)\}_{\ell=0}^\infty$ by the following formula:
\begin{align}
T(\lambda,k;g) &= -g\int_0^{+\infty}\hspace{-0.2cm}
R'\,j_\lambda(kR') \, v(\lambda,k;g;R')\,\rmd R' = 
\frac{\Theta(\lambda,k;g)}{\sigma(\lambda,k;g)}, \label{AG.40} \\
\intertext{where:} 
\Theta(\lambda,k;g) &= -g \int_0^{+\infty}\hspace{-0.2cm}
R'\,j_\lambda(kR') \, u(\lambda,k;g;R')\,\rmd R'. \label{AG.40'}
\end{align}
$T(\lambda,k;g)$ will be called here the CAM--partial--scattering--amplitude.
We can in fact prove the following 

\skd

\begin{theorem}
\label{the:AG.8}
For every nonlocal potential $V$ belonging to the class 
$\cN_{\wea}^{\,\gamma}$, the following properties hold:
\begin{itemize}
\item[\rm (i)] the function $(\lambda,k,g)\mapsto \Theta(\lambda,k;g)$ is defined and 
holomorphic in $\C^+_{-\frac{1}{2}}\times {\Omega}_\alpha^{\cut}\times \C$;
at fixed $g$, it is uniformly bounded in any sector
$\overline{D}^{\,(\delta)}_{\gamma,\alpha}$ (for any $\delta >0$);
\item[\rm (ii)] for every $g$ the function $T(\lambda,k;g)$
is meromorphic in $\C^+_{-\frac{1}{2}}\times {\Omega}_\alpha^{\cut}$
and holomorphic in ${\underline D}_{\,\alpha}(V;g)$.
Moreover, for every $\gamma'$, with $\gamma'<\gamma$,
there exists a number $\delta_0$ (depending on $\gamma'$ and $g$) 
such that $T(\lambda,k;g)$ is holomorphic in the corresponding 
truncated sector $\overline D^{\,(\delta_0)}_{\gamma'\!,\alpha}$
and satisfies an exponentially decreasing bound of the following form:
\beq 
|T(\lambda,k;g)|\leqslant {\bf c}_{\gamma'\!,g} \ e^{-(\gamma -\gamma')\Real\lambda}.
\label{AG.41}
\eeq
\end{itemize}
\end{theorem}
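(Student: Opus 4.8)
The plan is to mirror closely the proof of Theorem \ref{theorem:8} (the integer-$\ell$ case), now working with the two complex variables $(\lambda,k)$ and exploiting the CAM-specific bounds established in Theorems \ref{the:AG.1}, \ref{the:AG.2}, \ref{the:AG.3}, and \ref{the:AG.4}. Formulae \eqref{AG.40} and \eqref{AG.40'} are already given, so the content of part (i) is to show that the two terms making up $\Theta(\lambda,k;g)$ (the $\sigma\cdot v_0$ term and the $N\cdot v_0$ term, after integration against $R'j_\lambda(kR')$) are holomorphic in $\C^+_{-\frac12}\times\Omega_\alpha^{\cut}\times\C$ and uniformly bounded on sectors $\overline D^{\,(\delta)}_{\gamma,\alpha}$. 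First I would record that the function $R\mapsto R j_\lambda(kR)$ lies in the dual space $X^*_{\wea}$ for $(\lambda,k)\in\C^+_{-\frac12}\times\overline\Omega_\alpha^{\cut}$, with a bound obtained from \eqref{AG.15} by dividing out one power of $k$:
\beq
\left\|\,(\cdot)\,j_\lambda(k\cdot)\right\|^*_{\wea} \leqslant
\sqrt{\frac{\pi}{2}}\,A_\varepsilon\,|k|^{-\frac12}\,
e^{\frac{3\pi}{2}|\Imag\lambda|}\left(\frac32+\frac{1}{\pi(\Real\lambda+\frac12)}\right),
\nonumber
\eeq
together with the holomorphy in $(\lambda,k)$ of this vector-valued function via Lemma \ref{lemma:B8}(ii). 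Combining this with the holomorphy and bound \eqref{AG.12} of $v_0$, the holomorphy and uniform bounds of $\sigma$ and of $\|N\|_{\rm HS}$ (Theorems \ref{the:AG.3}, \ref{the:AG.4}), and Lemma \ref{lemma:B1'}(ii) to handle the vector $u^{(N)}(\lambda,k;g;\cdot)\doteq N(\lambda,k;g)\,v_0(\lambda,k;\cdot)$, I obtain that the two scalar integrals in \eqref{AG.40'} — each majorized in modulus by $\left\|(\cdot)j_\lambda(k\cdot)\right\|^*_{\wea}$ times $\|v_0\|_{\wea}$ or $\|u^{(N)}\|_{\wea}$ — are holomorphic in $\C^+_{-\frac12}\times\Omega_\alpha^{\cut}\times\C$. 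For the uniform bound on $\overline D^{\,(\delta)}_{\gamma,\alpha}$, I note that the factor $|k|^{-\frac12}$ from $\left\|(\cdot)j_\lambda(k\cdot)\right\|^*_{\wea}$ cancels against the factor $|k|^{\frac12}$ carried by $v_0$ (and hence by $u^{(N)}$, whose norm is controlled using $\|N\|_{\rm HS}\leqslant|g|^{-1}\Psi(|g|\,\|L\|_{\rm HS})$ from \eqref{AG.34}); the remaining $\lambda$-dependent factors are the products of $e^{\frac{3\pi}{2}|\Imag\lambda|}$-type exponentials with $e^{-\gamma\Real\lambda}$ and with the bounded function $M_\gamma(\lambda)$, and on the sector $\Lambda_\gamma^{(\delta)}$ defined by $|\Imag\lambda|<\frac{\gamma}{3\pi}(\Real\lambda+\frac12)$ one has $3\pi|\Imag\lambda|<\gamma(\Real\lambda+\frac12)$, so $e^{3\pi|\Imag\lambda|}e^{-\gamma\Real\lambda}$ stays bounded (by $e^{\gamma/2}$); the remaining $(\Real\lambda+\frac12)^{-1}$-type factors are bounded by $\delta^{-1}$ on $\overline\Lambda_\gamma^{(\delta)}$. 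This gives (i).

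For part (ii), the meromorphy of $T(\lambda,k;g)=\Theta(\lambda,k;g)/\sigma(\lambda,k;g)$ in $\C^+_{-\frac12}\times\Omega_\alpha^{\cut}$ is immediate from (i) and Theorem \ref{the:AG.3}, since $\sigma$ is holomorphic and, by the argument already made in Subsection \ref{subsubse:general}, not identically zero on any slice; holomorphy of $T$ on $\underline D_{\,\alpha}(V;g)=D_\alpha(V;g)\cap\Omega_\alpha^{\cut}$ (where $\sigma\neq0$) follows. The exponentially decreasing bound \eqref{AG.41} on the truncated sector is the crux. Fix $\gamma'<\gamma$. By the last statement of Theorem \ref{the:AG.3}, $\sigma(\lambda,k;g)-1\to0$ uniformly for $|\lambda|\to\infty$ in $\overline\Lambda_{\gamma''}^{(\delta)}\times\overline\Pi_\alpha^{\cut}$ for any $\gamma''<\gamma$ and $\delta>0$; choosing $\gamma''$ with $\gamma'<\gamma''<\gamma$, pick $\delta_0=\delta_0(\gamma',g)$ large enough that $|\sigma(\lambda,k;g)|\geqslant\frac12$ on $\overline D^{\,(\delta_0)}_{\gamma''\!,\alpha}$ — this is possible because $M_{\gamma''}(\lambda)$ is small there (small $\Real\lambda+\frac12\geqslant\delta_0$ makes the prefactor $(1+\frac1{2\Real\lambda+1})$ close to $1$, and the exponential $e^{3\pi|\Imag\lambda|-\gamma''\Real\lambda}$ is bounded with a margin that can be made arbitrarily negative as $\Real\lambda\to\infty$ while $|\Imag\lambda|<\frac{\gamma''}{3\pi}(\Real\lambda+\frac12)$; more care is needed near the "short" edge $\Real\lambda=-\frac12+\delta_0$ of the sector, but there one uses continuity and compactness of the relevant $k$-section to conclude $|\sigma|$ is bounded below by a positive constant, then shrinks $\delta_0$-dependence accordingly). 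On this set $|T|\leqslant 2|\Theta|$, and from (i) $|\Theta(\lambda,k;g)|\leqslant{\bf c}\,e^{-\gamma\Real\lambda}e^{3\pi|\Imag\lambda|}(\text{bounded})$; since on $\overline\Lambda_{\gamma'}^{(\delta_0)}$ one has $3\pi|\Imag\lambda|<\gamma'(\Real\lambda+\frac12)$, we get $e^{-\gamma\Real\lambda}e^{3\pi|\Imag\lambda|}\leqslant e^{\gamma'/2}e^{-(\gamma-\gamma')\Real\lambda}$, whence \eqref{AG.41} with a suitable ${\bf c}_{\gamma'\!,g}$. Working on the slightly smaller sector $\Lambda_{\gamma'}^{(\delta_0)}\subset\Lambda_{\gamma''}^{(\delta_0)}$ is exactly what creates the exponential-gap factor $e^{-(\gamma-\gamma')\Real\lambda}$.

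The main obstacle is the lower bound on $|\sigma(\lambda,k;g)|$ that makes $T$ holomorphic on the \emph{whole} truncated sector $\overline D^{\,(\delta_0)}_{\gamma'\!,\alpha}$ rather than merely on the complement of the zero set of $\sigma$. Away from a neighborhood of the edge $\Real\lambda=-\frac12$ this is a clean consequence of the decay of $M_\gamma(\lambda)$ and the bound \eqref{AG.30}, which forces $|\sigma-1|$ small. The delicate point is that near that edge (where $\Real\lambda+\frac12$ is only as large as $\delta_0$ but $|\lambda|$ and $|k|$ need not be large) $\sigma$ could a priori vanish; the remedy is that the exponential prefactor $e^{-\gamma''\Real\lambda}$ in $M_{\gamma''}(\lambda)$, together with the freedom to enlarge $\delta_0$, still drives $\|L(\lambda,k)\|_{\rm HS}$ — and hence $|\sigma-1|$ — below any prescribed threshold uniformly over $k\in\overline\Pi_\alpha^{\cut}$ once $\Real\lambda\geqslant-\frac12+\delta_0$ with $\delta_0$ large. (Note $e^{-\gamma''\Real\lambda}=e^{\gamma''/2}e^{-\gamma''(\Real\lambda+\frac12)}\leqslant e^{\gamma''/2}e^{-\gamma''\delta_0}$ on the sector, and this beats the algebraic blow-up $1+\frac1{2\Real\lambda+1}\leqslant 1+\frac1{2\delta_0}$.) Thus choosing $\delta_0$ large enough to absorb all these estimates simultaneously is what delivers both the holomorphy and the bound, completing the proof.
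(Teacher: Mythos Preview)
Your proof is correct and follows essentially the same route as the paper's: bound $|\Theta|$ by $\left\|(\cdot)j_\lambda(k\cdot)\right\|^*_{\wea}\cdot\|u\|_{\wea}$, control the factors via \eqref{AG.12}, \eqref{AG.15}, \eqref{AG.30}, \eqref{AG.34}, and observe that on the sector $\overline\Lambda_{\gamma'}^{(\delta)}$ the exponentials combine to $e^{-(\gamma-\gamma')\Real\lambda}$; then invoke Theorem~\ref{the:AG.3} to get $|\sigma|$ bounded away from zero on a suitably truncated sector.

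Two minor remarks. First, the intermediate $\gamma''$ with $\gamma'<\gamma''<\gamma$ is unnecessary: the paper works directly with $\gamma'$, since Theorem~\ref{the:AG.3} already gives $|\sigma-1|\to 0$ uniformly for $|\lambda|\to\infty$ in $\overline\Lambda_{\gamma'}^{(\delta)}$ (any $\gamma'<\gamma$). Second, your discussion of the ``delicate point near the short edge'' is overcomplicated: on $\overline\Lambda_{\gamma'}^{(\delta_0)}$ one has $\Real\lambda\geqslant -\tfrac12+\delta_0$, hence $|\lambda|\geqslant\Real\lambda\geqslant -\tfrac12+\delta_0$, so choosing $\delta_0$ large forces $|\lambda|$ large everywhere in the truncated sector, and the uniform limit from Theorem~\ref{the:AG.3} applies directly---there is no separate compactness argument needed near the edge.
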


\begin{proof}
Since for every $(\lambda,k)\in \C^+_{-\frac{1}{2}}\times {\Omega}_\alpha^{\cut}$
the function $Rj_\lambda(kR)$ is a vector in the dual space 
$X^*_{\wea}$, the quantity $\Theta(\lambda,k;g)$
is well--defined by Eq. \eqref{AG.40'} and such that 
\beq
\begin{split}
&|\Theta(\lambda,k;g)| \leqslant 
|g|\, \left\|\, \cdot \ j_\lambda(k\cdot)\right\|^*_{\wea} \
\|u(\lambda,k;g;\cdot)\|_{\wea} \\
& \quad\leqslant
|g| \, \left\|\, \cdot \ j_\lambda(k\cdot)\right\|^*_{\wea} \
\|v_0(\lambda,k;g;\cdot)\|_{\wea}
\ \left\{|\sigma(\lambda,k;g)| + |g| \, \|N(\lambda,k;g)\|_\mathrm{HS}\right\}. 
\end{split}
\label{AG.42}
\eeq
The fact that the function $(\lambda,k,g)\mapsto \Theta(\lambda,k;g)$ is holomorphic
in $\C^+_{-\frac{1}{2}}\times {\Omega}_\alpha^{\cut}\times \C$
is then directly implied by Lemma \ref{lemma:B2}. By now taking the majorizations 
\eqref{AG.12}, \eqref{AG.15}, \eqref{AG.30}, and \eqref{AG.34} into account, 
we derive from \eqref{AG.42} the following global bound:
\beq
\begin{split}
&|\Theta(\lambda,k;g)| \leqslant 
\frac{\pi}{2}\,|g| \, A^2_\varepsilon \,\cK \, C(V_*)
\ e^{-\gamma\Real\lambda} \ e^{3\pi |\Imag\lambda|} 
\left(\frac{3}{2}+\frac{1}{\pi(\Real\lambda+\frac{1}{2})}\right)^2 \\
&\quad \times 
\left\{1+[\Phi+\Psi]\left(|g|\,C(V_*)\,A_\varepsilon^2\,M_\gamma(\lambda)\right)\right\},
\end{split}
\label{AG.43}
\eeq
with $M_\gamma(\lambda)$ given by Eq. \eqref{AG.4'}. In view of Eq. \eqref{AG.5},
one then easily checks that for $(\lambda,k)\in\overline{D}^{\,(\delta)}_{\gamma,\alpha}$ 
(for any $\delta >0$), the r.h.s. of \eqref{AG.43} is uniformly majorized by 
$\frac{\pi}{2}|g| A^2_\varepsilon \cK C(V_*)\, e^{\gamma/2}\left(\frac{3}{2}+\frac{1}{\pi \delta}\right)^2
\times\left\{1+[\Phi+\Psi]\left(|g|C(V_*)\,A_\varepsilon^2\,c\,\cK \, e^{\gamma/2} 
\left(1+\frac{1}{2\delta}\right)\right)\right\}$,
which is independent of $\lambda$ and $k$ (here one takes into account the 
regularity properties of the entire functions $\Phi$ and $\Psi$, defined by 
Eqs. \eqref{f:Phi} and \eqref{f:Psi}). This ends the proof of (i).

\vskip 0.1cm

The first part of (ii) is a straightforward 
consequence of Eq. \eqref{AG.40}. The holomorphy property of 
$T(\lambda,k;g)$ and its majorization of the form \eqref{AG.41}
in the truncated sectors $\overline D^{\,(\delta_0)}_{\gamma'\!,\alpha}$
is implied by the fact that $|\sigma(\lambda,k;g)-1|$ tends uniformly to zero
for $\lambda$ going to infinity (see Theorem \ref{the:AG.3}),
and that $|\Theta(\lambda,k;g)|$ satisfies the global bound
\eqref{AG.43}. When using the latter, one now takes into account 
Eq. \eqref{AG.5} with $\gamma$ replaced by $\gamma'$ ($\gamma' <\gamma$),
which finally yields the exponential factor on the r.h.s. of \eqref{AG.41}.
\end{proof}

\section{Watson resummation of the partial wave amplitudes.
Resonances and antiresonances}
\label{se:AG.5}

Since each class of nonlocal potentials $\cN_{\wea}^{\,\gamma}$
is a subclass of $\cN_{\wea}$,
one can apply all the results of scattering theory obtained in 
Subsections \ref{subse:partial-wave} and \ref{subse:aggiunte4.4}
to the case of potentials $V$ in any given class   
$\cN_{\wea}^{\,\gamma}$.

In particular, we shall rely on the expansion
\eqref{f:119} of the scattering amplitude $F(k,\cos\theta ;g)$
in terms of the partial waves
$a_\ell(k;g)=T_\ell(k;g)/k$ (see Eq. \eqref{f:118}), 
whose finiteness at $k=0$ is a consequence of the threshold behavior \eqref{f:120}
(see Proposition \ref{pro:k2}). 
A physically important related function, introduced in \eqref{f:118}, is the 
phase--shift $\delta_\ell(k;g)$ (see also Eqs. \eqref{f:84}, \eqref{f:85}, and \eqref{f:86}).

Then, in Subsection \ref{subse:aggiunte4.3} an analytic interpolation $T(\lambda,k;g)$
of the sequence $\{T_\ell(k;g)\}_{\ell=0}^\infty$ has been defined 
as a meromorphic function in 
$\C^+_{-\frac{1}{2}}\times {\Omega}_\alpha^{\cut}\times \C$, whose various properties
have been listed in Theorem \ref{the:AG.8}. 

In view of the exponential decrease properties of $a_\ell(k;g)$ (resp.,
$a(\lambda,k;g)$) for $\ell$ (resp., $\Real\lambda$) tending to infinity,
specified in formulae \eqref{f:117} and \eqref{f:118} (resp., \eqref{AG.41}), 
we can safely apply the Watson resummation method
to expansion \eqref{f:119}, written for any fixed values of 
$k$ in $\R^+$, $g$ real and $\theta$ in the interval $0<\theta\leqslant\pi$. It yields:
\beq
\sum_{\ell=0}^\infty(2\ell+1) \, a_\ell(k;g) \, P_\ell(\cos\theta)=
\frac{\rmi}{2}\int_C\frac{(2\lambda+1)\,a(\lambda,k;g)\,
P_{\lambda}(-\cos\theta)}{\sin\pi\lambda}\,\rmd\lambda,
\label{p:1}
\eeq
where the path $C$ encircles the positive real semi--axis 
in the $\lambda$--plane (see Fig. \ref{fig:AGG.1}a).
This path must be chosen with some care in order to include only 
the singularities of the integrand on the
r.h.s. of formula \eqref{p:1} which are the poles 
generated by the zeros of $\sin\pi\lambda$: other
singularities in the first and in the fourth quadrants 
of the $\lambda$--plane, but close to
the real semi--axis, must be avoided.

We now introduce for every $(\gamma',\delta)$ such that $0<\gamma' <\gamma$ and
$0<\delta < \frac{1}{2}$, the path $\Gamma = \Gamma_{\gamma'}^{(\delta)}$,
whose support is the boundary of the truncated angular sector
$\Lambda^{(\delta)}_{\gamma'}$ (see Eq. \eqref{AG.5})
and whose orientation is given by continuous
distortion from $C$ to $\Gamma$ in the $\lambda$--plane (see Fig. \ref{fig:AGG.1}b).
According to Theorem \ref{the:AG.8}, the number $N$ of
poles $\lambda=\lambda_n(k,g)$ of $T(\lambda,k;g)$ which are contained in
$\Lambda^{(\delta)}_{\gamma'}$ is \emphsl{finite},
since all these poles must be confined in the bounded
region $\Lambda^{(\delta)}_{\gamma'}\setminus\Lambda^{(\delta_0)}_{\gamma'}$. 
Then, in view of the exponentially decreasing majorization 
\eqref{AG.41} on $a(\lambda,k;g)=T(\lambda,k;g)/k$ and of 
the following bound on the Legendre function
(see \cite[p. 709, formula II.107]{Bros5}):
\beq
|P_\lambda(\cos\theta)| \leqslant 
C(\cos\theta) \ e^{\pi|\Imag\lambda|} \qquad (0 \leqslant \theta < \pi),
\label{p:3}
\eeq
which is compensated by $|\sin \pi \lambda|^{-1}$,
the integration contour $C$ in Eq. \eqref{p:1} 
can be legitimately replaced by $\Gamma$,
provided the contributions of the poles $\lambda_n(k;g)$ be taken into account via the
residue theorem. We thus obtain: 
\beq
\begin{split}
& \frac{\rmi}{2}\int_C
\frac{(2\lambda+1)\,a(\lambda,k;g)\,P_{\lambda}(-\cos\theta)}{\sin\pi\lambda}\,\rmd\lambda \\
&\quad =\frac{\rmi}{2}
\int_\Gamma\frac{(2\lambda+1)\,a(\lambda,k;g)\,P_{\lambda}(-\cos\theta)}{\sin\pi\lambda}\,\rmd\lambda
-\pi\sum_{n=1}^N\frac{\rho_n(k;g)\,P_{\lambda_n}(-\cos\theta)}{\sin\pi\lambda_n(k;g)}.
\end{split}
\label{p:2}
\eeq
According to the analysis of Subsection \ref{subsubse:some},
the poles $\lambda_n(k;g)$ may lie  
either in the upper or in the lower half--plane. 
In the generic case these poles can be considered as first order poles
(see also the considerations on the spectrum of the resolvent
and the physical arguments given by R.G. Newton in \cite{Newton3};
see Sections 9.1, p. 240 and 9.3, p. 257).
In \eqref{p:2}, the factors $\rho_n(k;g)$ are the
corresponding residues of the function $[(2\lambda+1)a(\lambda,k;g)]$.

\noindent
The previous analysis can thus be summarized  in the following

\skd

\begin{theorem}
\label{the:28}
For every nonlocal potential $V\in\cN_{\wea}^{\,\gamma}$ 
the following representation of the total scattering amplitude holds:
\beq
F(k,\cos\theta;g) = 
\frac{\rmi}{2}\int_\Gamma\frac{(2\lambda+1)\,a(\lambda,k;g)\,
P_{\lambda}(-\cos\theta)}{\sin\pi\lambda}\,\rmd\lambda
-\pi \sum_{n=1}^N\frac{\rho_n(k;g)\,P_{\lambda_n}(-\cos\theta)}
{\sin\pi\lambda_n(k;g)},
\label{p:4}
\eeq
for $0<\theta\leqslant\pi$, $k\in\R^+$, and $\Gamma$ denotes any choice 
$\Gamma = \Gamma_{\gamma'}^{(\delta)}$ such that  
$\gamma' <\gamma$ and $0<\delta < \frac{1}{2}$.
\end{theorem}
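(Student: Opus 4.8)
The plan is to derive the representation \eqref{p:4} from the already-established partial wave expansion \eqref{f:119} by the standard Watson–Sommerfeld contour-deformation argument, making careful use of the analyticity and decay properties proved in Theorem \ref{the:AG.8}. First I would recall that for $k\in\R^+$ and $g$ real the partial wave series \eqref{f:119} converges to $F(k,\cos\theta;g)$ in the ellipse $\cE_{\beta(k)}$ (Proposition \ref{pro:k1} and the discussion following it), with the partial waves $a_\ell(k;g)=T_\ell(k;g)/k$ satisfying the exponential bound \eqref{f:117}. The function $T(\lambda,k;g)$ of Theorem \ref{the:AG.8} interpolates the sequence $\{T_\ell(k;g)\}$; hence $a(\lambda,k;g)=T(\lambda,k;g)/k$ agrees with $a_\ell(k;g)$ at integers. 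The first step is therefore to write \eqref{f:119} as the contour integral \eqref{p:1}, where $C$ is a contour encircling the positive real $\lambda$-axis in the clockwise sense; the factor $(\sin\pi\lambda)^{-1}$ produces simple poles at $\lambda=\ell$ with residue $(-1)^\ell/\pi$, and since $P_\lambda(-\cos\theta)|_{\lambda=\ell}=(-1)^\ell P_\ell(\cos\theta)$, the residue sum reproduces exactly \eqref{f:119}. Here one must choose $C$ to lie within the sector $\Lambda^{(\delta)}_{\gamma'}$ (hugging the real axis closely enough to exclude the finitely many poles $\lambda_n(k;g)$ of $T$ that sit in the first and fourth quadrants near the real semi-axis), which is possible precisely because, by Theorem \ref{the:AG.8}(ii), $T(\lambda,k;g)$ is holomorphic in the truncated sector $\overline D^{\,(\delta_0)}_{\gamma'\!,\alpha}$ for suitable $\delta_0$, so its poles in $\Lambda^{(\delta)}_{\gamma'}$ are confined to the bounded region $\Lambda^{(\delta)}_{\gamma'}\setminus\Lambda^{(\delta_0)}_{\gamma'}$ and thus finite in number.

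The second step is the contour deformation from $C$ to $\Gamma=\Gamma^{(\delta)}_{\gamma'}$, the boundary of $\Lambda^{(\delta)}_{\gamma'}$. I would write $C$ and $\Gamma$ as homotopic in the $\lambda$-plane through a family of contours sweeping across the region between them, and apply the residue theorem: the difference between the two contour integrals equals $-2\pi\rmi$ times the sum of residues of the integrand at the poles $\lambda_n(k;g)$ crossed in the deformation. Writing $\rho_n(k;g)$ for the residue of $(2\lambda+1)a(\lambda,k;g)$ at $\lambda_n$ (assuming, as stated, that these are simple poles, generically the case), each such residue contributes $\dfrac{\rmi}{2}\cdot(-2\pi\rmi)\cdot\dfrac{\rho_n(k;g)P_{\lambda_n}(-\cos\theta)}{\sin\pi\lambda_n(k;g)}\cdot\dfrac{1}{-1}$, i.e. the term $-\pi\,\rho_n(k;g)P_{\lambda_n}(-\cos\theta)/\sin\pi\lambda_n(k;g)$ appearing in \eqref{p:2} and \eqref{p:4}. (The sign and normalization bookkeeping has to be done once carefully, paying attention to the orientation of $\Gamma$ obtained by continuous distortion from $C$ as stipulated in the statement.) Combining this with the rewriting of the left-hand side of \eqref{p:1} as $F(k,\cos\theta;g)$ via \eqref{f:119}, one obtains exactly \eqref{p:4}.

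The step requiring genuine justification — and the one I expect to be the main obstacle — is the legitimacy of the deformation itself, namely showing that the integrand is integrable along $\Gamma$ and, more importantly, that the arcs at infinity (the portions of the sweeping contours with $\Real\lambda\to+\infty$) give a vanishing contribution, so that the two contour integrals genuinely differ only by the residues. This is where the exponentially decreasing bound \eqref{AG.41}, $|T(\lambda,k;g)|\leqslant {\bf c}_{\gamma'\!,g}\,e^{-(\gamma-\gamma')\Real\lambda}$ valid on $\overline D^{\,(\delta_0)}_{\gamma'\!,\alpha}$, is essential: on $\Gamma$ one has $|\Imag\lambda|\leqslant\frac{\gamma'}{3\pi}(\Real\lambda+\frac12)$, so the Legendre bound \eqref{p:3}, $|P_\lambda(-\cos\theta)|\leqslant C(\cos\theta)e^{\pi|\Imag\lambda|}$, grows at most like $e^{(\gamma'/3)\Real\lambda}$, while $|\sin\pi\lambda|^{-1}$ is bounded on $\Gamma$ away from the integers (and on the part of $\Gamma$ near the real axis one must route around $\lambda=0$ appropriately), and $|2\lambda+1|$ is only polynomial; since $\gamma'/3<\gamma-\gamma'$ fails in general — so one must instead observe that the \emph{net} exponent is $-(\gamma-\gamma')\Real\lambda+\frac{\gamma'}{3}\Real\lambda$, which is negative provided $\gamma'$ is chosen with $\frac{\gamma'}{3}<\gamma-\gamma'$, i.e. $\gamma'<\frac{3\gamma}{4}$; any admissible $\gamma'<\gamma$ close enough to the allowed range works, and in any case for the purpose of the theorem one is free to pick $\gamma'$ small. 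With such a choice the integrand decays exponentially along $\Gamma$, the arcs at infinity vanish, and Cauchy's theorem applies. I would then remark that, since $F(k,\cos\theta;g)$ on the left is independent of the auxiliary parameters $\gamma'$ and $\delta$, the identity \eqref{p:4} holds for every admissible pair $(\gamma',\delta)$ with $\gamma'<\gamma$ and $0<\delta<\frac12$, as asserted, completing the proof.
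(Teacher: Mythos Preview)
Your overall strategy is exactly the paper's: rewrite the partial wave series as the contour integral \eqref{p:1} over $C$, then deform $C$ to $\Gamma=\Gamma^{(\delta)}_{\gamma'}$ picking up the finitely many residues at the poles $\lambda_n(k;g)$, with the finiteness coming from Theorem \ref{the:AG.8}(ii). The structure is fine.

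There is, however, a genuine gap in your decay estimate along $\Gamma$, and it prevents you from proving the theorem as stated. You treat $|\sin\pi\lambda|^{-1}$ as merely bounded on $\Gamma$ and then try to absorb the Legendre growth $e^{\pi|\Imag\lambda|}$ from \eqref{p:3} by using the sectorial bound $|\Imag\lambda|\leqslant\frac{\gamma'}{3\pi}(\Real\lambda+\tfrac12)$. That forces the constraint $\gamma'/3<\gamma-\gamma'$, i.e.\ $\gamma'<3\gamma/4$, whereas the theorem claims the representation for \emph{every} $\gamma'<\gamma$. Your final remark that ``$F$ is independent of $\gamma',\delta$'' does not rescue this: to even write down the right-hand side of \eqref{p:4} for a given $\gamma'$ you must first know the integral along $\Gamma^{(\delta)}_{\gamma'}$ converges, which is precisely what your estimate fails to deliver in the range $3\gamma/4\leqslant\gamma'<\gamma$.

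The fix is the one the paper uses and that you overlooked: on the oblique rays of $\Gamma$ one has $|\Imag\lambda|\to\infty$, and there $|\sin\pi\lambda|\geqslant\sinh(\pi|\Imag\lambda|)\sim\tfrac12 e^{\pi|\Imag\lambda|}$. Hence $e^{\pi|\Imag\lambda|}/|\sin\pi\lambda|$ is uniformly bounded along $\Gamma$: the Legendre growth is \emph{exactly compensated} by the decay of $(\sin\pi\lambda)^{-1}$. What remains is the factor $e^{-(\gamma-\gamma')\Real\lambda}$ from \eqref{AG.41} (times the harmless polynomial $|2\lambda+1|$), which gives absolute convergence of the background integral and vanishing of the arcs at infinity for every $\gamma'<\gamma$. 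With this correction your argument goes through and matches the paper's.
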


\subsection{Extension of representation $\boldsymbol{\protect\eqref{p:4}}$ 
in the two complex variables $\boldsymbol{k}$ and $\boldsymbol{\cos\theta}$.} 
\label{subse:extension}

Formula (II.107) of \cite{Bros5} provides us with the following majorization
on $P_\lambda(\cos\theta)$, which is valid for all $\cos\theta$ in the cut--plane 
$\C\setminus ]-\infty,-1]$ and $\lambda\in \C^+_{-\frac{1}{2}}$:
\beq 
|P_\lambda(\cos\theta)| \leqslant 
C(\cos\theta) \, e^{\pi|\Imag\lambda|} \, 
e^{|\Imag \theta| \Real\lambda}, 
\label{p:6}
\eeq
where $C(\cos\theta)$ denotes a suitable locally bounded function.
Then, in view of the latter and of bound \eqref{AG.41}, the 
integrability condition of the background integral of 
\eqref{p:4} on a given path $\Gamma_{\gamma'}^{(\delta)}$ is: 
\beq
|\Imag \theta| < \gamma -\gamma',
\label{p:7}
\eeq
which means that the corresponding integral representation of 
$F(k,\cos\theta; g)$ is valid 
and defines $F$ as an analytic function of the two variables 
$(k,\cos\theta)$ in the domain
$\Omega_\alpha^{\cut}\times \{\cE_{\gamma-\gamma'}\setminus [1,+\infty[\}$ 
($\cE_{\gamma-\gamma'}$ denoting the ellipse with foci $+1$ and
$-1$ and major semi--axis $\cosh (\gamma-\gamma')$).
Note that the maximal ellipse $\cE_\gamma$ 
of analyticity is obtained for a choice of
$\gamma'$ arbitrarily small, namely for the choice 
of the original path $\Gamma=C$.

Finally, by using this path $C$, it can be seen that  
$F(k,\cos\theta; g)$ \emphsl{can be analytically continued in 
the product domain} $\Omega_\alpha^{\cut}\times\cE_{\gamma}$. 
This is based on the following argument:

\vskip 0.1cm

(i) As noticed in \cite[p. 7]{Newton4}, the discontinuity of the
integral of \eqref{p:1} across the cut 
$\cos\theta\in [1,\cosh\gamma[$ is proportional to  
$\int_C (2\lambda +1) \, a(\lambda,k;g) \, P_\lambda(\cos\theta)\, \rmd\lambda$,
which vanishes in view of the Cauchy theorem. 

\vskip 0.1cm

(ii) For $\cos\theta$ tending to $1$, the limit of the integral
\eqref{p:1} is infinite since $\lim P_\lambda(-z) = \infty$
for $z$ tending to one. However, since $P_\lambda(-z)$ is bounded 
by a multiple of $\ln (z-1)$, a similar bound holds for the 
integral in the neighborhood of $z=1$; so this point cannot be an 
isolated singularity for the holomorphic function $F(k,z;g)$.

\begin{remark}
\label{rem:6}
In the applications of the CAM method to high energy physics, 
it is essential that the path $\Gamma$ of the
so--called \emphsl{background integral} of the Watson--Regge 
representation \eqref{p:4} can be taken along the imaginary axis at
$\Real\lambda=-\frac{1}{2}$. Of course, this requires that the region 
of the $\lambda$--plane in which $a(\lambda,k;g)$ is decreasing uniformly 
with respect to $\Real \lambda$ is the full half--plane 
$\C^+_{-\frac{1}{2}}$; this condition is fulfilled if 
$ F(k,\cos\theta;g)$ is analytic in a cut--plane of $\cos\theta$ for  
each $k$ fixed (i.e., in particular if the Mandelstam representation is
satisfied; for example, this is the case for the scattering 
amplitude of the theory of Yukawa--type local potentials).

Indeed, if such a property is valid, use can be made of the 
following asymptotic behaviour of the Legendre function 
$P_\lambda(z) \ (z=\cos\theta)$ as $|z|\to +\infty$, 
for $\Real\lambda \geqslant -\frac{1}{2}$ (see \cite[p. 6, formula 2.5]{Newton4}):
\beq
\label{p:5}
P_\lambda(z)\simeq\pi^{-1/2} \, 2^\lambda \,
z^\lambda \, \frac{\Gamma(\lambda+\frac{1}{2})}{\Gamma(\lambda+1)}
\qquad (z\in\C\setminus(-\infty,-1]),
\eeq
which implies that the background integral 
is of the order $z^{-1/2}$ as $|z|\to+\infty$.
Assuming that the number of poles of the scattering amplitude is finite 
(as in the case of Yukawian potentials), one obtains the leading term in the 
asymptotic behaviour of the scattering amplitude as $|z|\to +\infty$
from the pole term with the largest real part. 

However, in the present framework, we are working in the
physical region of $\cos\theta$ ($-1\leqslant\cos\theta\leqslant 1$),
and we are not interested in the asymptotic behaviour of the 
scattering amplitude for large momentum transfer; accordingly, 
we shall fully exploit representation \eqref{p:4} with its path integral 
along $\Gamma^{(\delta)}_{\gamma'} $, as shown in Fig. \ref{fig:AGG.1}b.
\end{remark}

\begin{figure}[tb]
\begin{center}
\leavevmode
\psfig{file=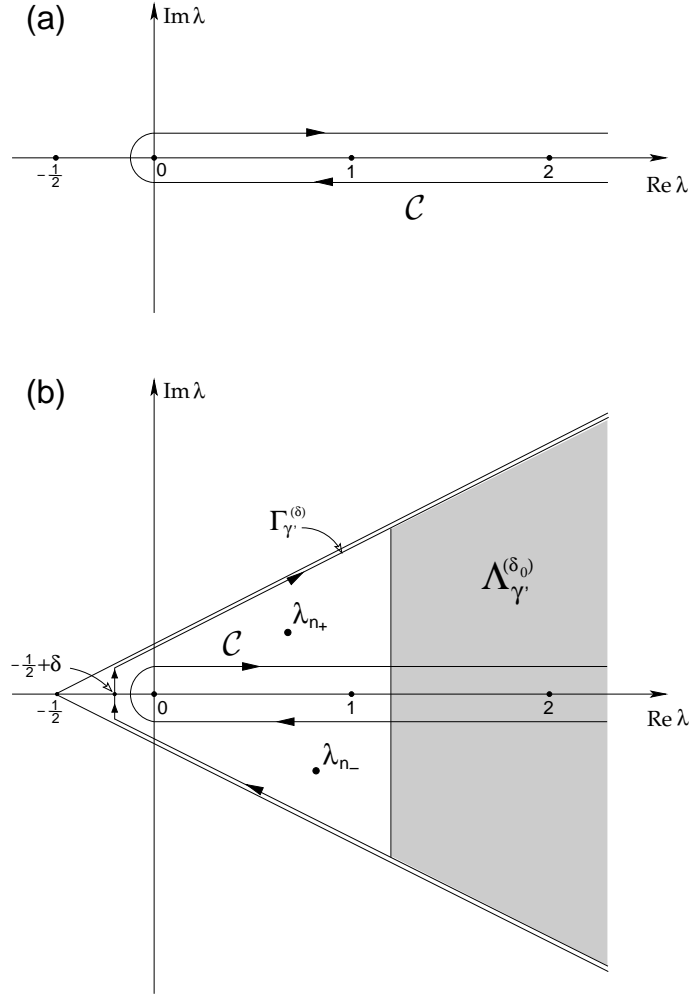,width=9cm}
\caption{\label{fig:AGG.1}
(a): the integration path $\cC$ in formula \eqref{p:1}.
(b): the integration path $\Gamma =\Gamma_{\gamma'}^{(\delta)}$
in formulae \eqref{p:2} and \eqref{p:4}. The grey sector 
$\Lambda^{(\delta_0)}_{\gamma'}$ represents a region
without poles for the function $a(\lambda,k;g)$.
Typical poles, namely $\lambda_{n_+}$ and $\lambda_{n_-}$, 
are indicated in the first and fourth quadrants, respectively.}
\end{center}
\end{figure}

\subsection{Analysis of resonances and antiresonances as contributions of poles 
in the $\boldsymbol{\lambda}$--plane.} % 5.2
\label{subse:analysis}

We shall now apply the previous Watson--type representation \eqref{p:4}
of the total scattering amplitude $F(k,\cos\theta;g)$
to the computation of the set of partial waves $a_\ell (k;g)$, $(\ell=0,1,\ldots)$,
which are defined by the standard inversion formula of expansion \eqref{f:119}: 
\beq
a_\ell(k;g) =\frac{1}{2}\int_{-1}^{+1} F(k,\cos\theta;g) \, P_\ell (\cos\theta) \,\rmd(\cos\theta).
\label{p:8'}
\eeq
For this purpose, we shall use the basic projection formula 
(see \cite[Vol. 1, p. 170, Eq. (7)]{Bateman}):
\beq
\frac{1}{2}\int_{-1}^{1}P_\ell(\cos\theta)\,P_\lambda(-\cos\theta)\,\rmd(\cos\theta)
=\frac{\sin\pi\lambda}{\pi(\lambda-\ell)(\lambda+\ell+1)}.
\label{p:7'}
\eeq
By plugging the expression \eqref{p:4} of $F$ in \eqref{p:8'} and applying \eqref{p:7'}
to the various terms, one obtains:
\beq
a_\ell(k;g) =
\frac{\rmi}{2\pi}\int_\Gamma \frac{(2\lambda + 1)\,a(\lambda,k;g)}
{(\lambda-\ell)(\lambda +\ell +1)}\, \rmd\lambda
-\sum_{n=1}^N \frac{\rho_n(k;g)}{(\lambda_n(k;g)-\ell)(\lambda_n(k;g) +\ell +1)}.
\label{p:9'}
\eeq
In the r.h.s. of the latter, we distinguish the so--called 
``background integral" over $\Gamma$ (which is always convergent in view of the
exponential decrease property \eqref{AG.41}) from the individual contributions 
of the poles $\lambda=\lambda_n(k;g)$ of $a(\lambda,k;g)$. 

We will show that under certain simple assumptions these poles 
can be seen to induce properties of the partial waves $a_\ell(k;g)$,
which are characteristic of \emph{resonances} and \emph{antiresonances};
these properties are:

\vskip 0.1cm

\noindent
(i) the rapid variation in the momentum variable $k$
(or the energy $E= k^2$), 
including the passage through $\frac{\pi}{2}$ ($\mathrm{mod.} \pi$),
of the phase--shift function $k \mapsto \delta_\ell(k;g) $, which satisfies 
(in view of \eqref{f:86}):
\beq
a_\ell(k;g) = \frac{e^{\rmi\delta_\ell(k;g)}\sin\delta_\ell(k;g)}{k}
\qquad(k\in\R^+),
\label{p:10'}
\eeq
and therefore:
\beq 
\delta_\ell(k;g) = \Arg a_\ell(k;g) \ \mathrm{mod.}\ \pi.
\label{p:11'}
\eeq
Resonances (resp., antiresonances) are characterized by
the upward (resp., downward) passage of the phase--shift through $\pm\frac{\pi}{2}$
at a certain value $ k=k_\mathrm{r}$ (resp., $k=k_\mathrm{ar}$) 
at which (in view of \eqref{p:10'}) $a_\ell(k;g) = \rmi/k$.

In the basic literature on the subject (see, e.g., \cite[Chapter 2, Subsection 2.11 (c)]{Nussenzveig}), 
the quantity
$2 \, {\partial\delta_\ell/\partial E}$ ($=\frac{1}{k}\times{\partial\delta_\ell/\partial k}$), 
whose positive or negative sign plays a role in the previous description of
resonances or antiresonances, has been interpreted in terms of the  
\emph{time--delay} or \emph{time--advance} that the incident wave--packet 
undergoes in the scattering process, in a sense which has been introduced by Eisenbud. 

\vskip 0.1cm

(ii) The production of a ``bump" around $k= k_\mathrm{r}$ or $k_\mathrm{ar}$ 
in the plot of the function $k\mapsto |a_\ell(k;g)|$
and therefore of the cross--section $\sigma_\mathrm{c}(k;g)$, since one can write
\beq
\sigma_\mathrm{c} (k;g)= 4\pi \,(2\ell +1) \,|a_\ell(k;g)|^2
+ 4\pi \sum_{\ell'\neq \ell} (2\ell' + 1) \, |a_{\ell'}(k;g)|^2,
\label{p:12'}
\eeq
the sum at the r.h.s. of the latter being subdominant near 
$k= k_\mathrm{r} $ or $k_\mathrm{ar}$.

\vskip 0.3cm

\noindent
\textbf{Assumptions.} We shall concentrate on the function $a_\ell(k;g)$ for given fixed values of $\ell$
and $g$, and assume that among the various poles $\lambda=\lambda_n(k;g)$ of  
the meromorphic function $a(\lambda,k;g)$ which contribute to the sum on the r.h.s. of 
Eq. \eqref{p:9'}, there exists a distinguished pole, denoted simply by $\lambda(k;g)$,
with the corresponding residue $\rho(k;g) =|\rho(k;g)| e^{\rmi\varphi(k;g)}$.
We assume that the corresponding analytic function $k\mapsto \lambda(k;g)$
satisfies the following properties:

\vskip 0.2cm

There exists a finite interval $I =\{k\,:\,0<k_{\min} < k < k_{\max}\}$ such that for $k\in I$:
\begin{itemize}
\item[(a)] The function $k\mapsto\alpha(k;g)=\Real\lambda(k;g)$ is an increasing function 
such that for a certain value $k=k_\ell\in I$ one has: $\alpha(k_\ell;g)=\ell$. \\[-5pt]
\item[(b)] The function  $k\mapsto\beta(k;g)=\Imag\lambda(k;g)$ is such that  
$0<|\beta(k;g)|\ll 1$ and 
$|{(\partial\beta/\partial k)}(k;g)| \ll {(\partial\alpha/\partial k)}(k;g)$. \\[-5pt]
\item[(c)] ``One--pole dominance": $|a_\ell(k;g)-\widehat{a}_\ell(k;g)|\ll 1$, where we have put: 
\beq
\widehat{a}_\ell(k;g)= 
\frac{-\rho(k;g)}{[\lambda(k;g)-\ell]\,[\lambda(k;g)+\ell+1]}.
\label {p:13"}
\eeq
\end{itemize}
We shall then consider that the unitarity relation 
\beq
k \, |a_\ell(k;g)|^2 - \Imag a_\ell(k;g) =0,
\label{p:13'}
\eeq
(implemented by the parametrization \eqref{p:10'}) is 
approximately satisfied by this one--pole dominant contribution $\widehat{a}_\ell(k;g)$ 
itself. Applying this approximation yields the following relation between 
the modulus and the argument $\varphi$ of the residue $\rho$:
\beq
|\rho| = \frac{1}{k} 
\left\{(\sin\varphi)\,\left[-(\alpha-\ell)(\alpha+\ell +1)+\beta^2\right]
+(\cos\varphi)\,\beta\,(2\alpha +1)\right\},
\label {p:14'}
\eeq
which, in particular, yields for $k=k_\ell$ (in view of (a)):
\beq
|\rho(k_\ell;g)| = 
\frac{(\cos\varphi)\,\beta(k_\ell)\,(2\ell +1)}{k_\ell} + O\left([\beta(k_\ell)]^2\right),
\label {p:15'}
\eeq
and therefore, in view of assumption (c):
\beq
|a_\ell(k_\ell;g)|\,\approx\,
|\widehat{a}_\ell(k_\ell;g)| \approx \frac{|\cos\varphi(k_\ell;g)|}{k_\ell}. 
\label {p:15"}
\eeq

\subsubsection{Variation of the phase--shift near $\boldsymbol{k=k_\ell}$.}
\label{subsubse:variation}

In view of assumption (c) and of \eqref{p:11'}, and 
by taking the arguments of both sides of Eq. \eqref{p:13"}, one then obtain:
\beq
\delta_\ell (k;g) \approx 
\varphi(k;g) \pm \pi -\Arg(\lambda(k;g)-\ell)-\Arg(\lambda(k;g)+\ell+1).
\label {p:16'}
\eeq
A simple geometrical analysis, making an essential use of assumption (a),
shows that: 

\vskip 0.1cm

(1) \ If $\beta(k;g)$ is positive, the function $k \mapsto\delta_\ell (k;g)$ 
admits an upward variation from a value of the form $(\varphi(k;g) +\varepsilon)$ to 
$(\varphi(k;g)+\pi-\varepsilon)$ for $k$ varying on a short interval
$[k_-,k_+]$ centered at $k_\ell$. (In view of the assumption 
$\beta(k;g) \ll 1$, the size of this interval  
is such that: $\ell-1 < \alpha(k_-)<\ell<\alpha(k_+)< \ell+1$,
provided $\varepsilon$ is chosen sufficiently small).
We thus conclude that in a ``generic way" (i.e., except if 
$\varphi(k;g) = \pm\frac{\pi}{2}$), the interval $[k_-,k_+]$ 
will contain a value $k=k_\mathrm{r}$ at which
$\delta_\ell (k_\mathrm{r};g)= \pm \frac{\pi}{2}$,
which therefore exhibits the typical behaviour of a resonance
at $k=k_\mathrm{r}$.

\vskip 0.1cm

(2) \ Similarly, if $\beta(k;g)$ is negative, the function  
$k \to \delta_\ell (k;g)$ admits a downward variation from 
$(\varphi(k;g)-\varepsilon)$ to $(\varphi(k;g)-\pi+\varepsilon)$ 
for $k$ varying on a short interval $[k_-,k_+]$ centered at  
$k_\ell$. The latter also contains (in a generic way) a value 
$k=k_\mathrm{ar}$ at which $\delta_\ell (k_\mathrm{ar};g)= \pm\frac{\pi}{2}$, 
which then exhibits the typical behaviour of an antiresonance
at $k=k_\mathrm{ar}$.

\vskip 0.3cm

\noindent
\textbf{Bump for the cross--section near $\boldsymbol{k=k_\ell}$.} \
In both cases (1) and (2) of the previous analysis, one has, 
in view of \eqref{p:10'}:
\beq
|a_\ell(k_{\mathrm{r},\mathrm{ar}})| = \frac{1}{\ k_{\mathrm{r},\mathrm{ar}}},
\label{p:17'}
\eeq
(where $k_{\mathrm{r},\mathrm{ar}} = k_\mathrm{r}$ or $k_\mathrm{ar}$),
and therefore the plot of the partial wave $a_\ell(k;g)$ exhibits a bump 
around $k=k_{\mathrm{r},\mathrm{ar}}$ which is tangent to the upper 
limiting curve $|a_\ell^{\max}(k;g)|= {1/k}$. One can also notice the 
difference between \eqref{p:17'} and the value \eqref{p:15"} of
$|\widehat{a}_\ell(k_\ell;g)|$, which (if $|\cos\varphi(k_\ell;g)|\neq 1$)
expresses the rapid variation of the phase--shift between $k_\ell$ and 
$k_{\mathrm{r},\mathrm{ar}}$. Finally, in view of \eqref{p:12'}, the plot of 
the cross--section $\sigma_\mathrm{c}(k;g)$ will also present a bump in an interval
of the momentum variable containing the values $k_\ell$ and 
$k_{\mathrm{r},\mathrm{ar}}$.

\subsection{Connection between descriptions of phenomena in the 
$\boldsymbol{\lambda}$--plane and in the $\boldsymbol{k}$--plane.} % 5.3
\label{subse:connection}

The description of resonances and antiresonances of a given partial wave 
$a_\ell(k;g)$, which has been given above, appeared to be completely symmetric. 
In both cases indeed, it was based on the assumption of
a dominant one--pole approximation of the partial scattering function
$a(\lambda,k;g)$ in the complex $\lambda$--plane,
such that the dominant pole $\lambda=\lambda(k;g)$ be located at a very small
distance $|\beta(k;g)|$ from the real axis; the two cases are distinguished 
from each other by the sign of the function $k \mapsto \beta(k;g)$.

\vskip 0.1cm

We are now going to show that, in spite of the previous apparent symmetry,
these two cases necessarily correspond to completely different types
of analyticity properties of the dominant function $k\mapsto\widehat{a}_\ell(k;g)$ 
in the complex $k$--plane.

\vskip 0.3cm

\noindent
(1) \emph{The case of resonances.}

In the r.h.s. of \eqref{p:13"}, it is the factor 
$\lambda(k;g)-\ell = \alpha(k;g) -\ell +{\rmi} \beta (k;g)$
at the denominator which is responsible for the rapid variation of $\delta_\ell(k;g)$  
near $k=k_\ell$, and whose vanishing in the complex $k$--plane must be analysed. 
Since the function $(k\ \mathrm{real})\mapsto \lambda(k;g)$
is holomorphic in a small complex neighbourhood $\cV$ of $k\ell$, one can postulate
the validity (in $\cV$) of the following first--order Taylor approximation
$\lambda(k;g)\approx\ell+\rmi\beta(k_\ell;g)+
\frac{\partial\lambda}{\partial k}(k_\ell;g)\ (k-k_\ell)$,
which therefore yields:
\beq
\lambda(k;g)-\ell \approx 
\frac{\partial\lambda}{\partial k}(k_\ell;g) \ [(k-k_\ell) +\rmi\gamma],
\label{p:11}
\eeq
where we have put:
\beq
\gamma = \frac{\beta(k_\ell;g)}{\frac{\partial\lambda}{\partial k}(k_\ell;g)}.
\label{p:12"}
\eeq
Since it was supposed that 
$\left|\frac{\partial\beta}{\partial k}\right| \ll \frac{\partial\alpha}{\partial k}$,
we can say that 
$\frac{\partial\lambda}{\partial k}(k_\ell;g)\approx\frac{\partial\alpha}{\partial k}$, 
which is real and positive, and (in view of \eqref{p:12"}) this positivity property is 
also true for $\gamma$. In view of \eqref{p:11}, the expression
\eqref{p:13"} of $\widehat{a}_\ell(k;g)$ therefore factorizes a pole located at
$k=k_\ell-\rmi\gamma$, and $\gamma$ can thus be related to the standard \emph{width} 
parameter of the Breit--Wigner one--pole approximation.

\vskip 0.2cm

\noindent
\emphsl{The shape of the bump for the cross--section:} 

\vskip 0.1cm

If one first considers the case when the residue $\rho(k;g)$ is real 
(for $k$ real), so that, in view of \eqref{p:14'}:
$\rho = \frac{\beta(2\alpha +1)}{k}>0$, we can see that the expression  
\eqref{p:13"} of $\widehat{a}_\ell(k;g)$ can be approximated by
\beq
\frac{-\beta}{k\,(\alpha-\ell +\rmi\beta)}
\approx 
\frac{-\gamma}{k\,(k-k_\ell +\rmi\gamma)}.
\label{p:13}
\eeq
This approximation, which is valid for $k$ varying in a suitable interval
centered at $k_\ell$, gives the standard Lorentzian contribution to the 
cross--section, namely: 
\beq
\widehat{\sigma}_\ell(k;g) \approx 
\frac{4\pi\gamma^2}{k^2[(k-k_\ell)^2 +\gamma^2]}.
\label{p:14-ex}
\eeq
In the general case, the residue is complex and satisfies Eq. \eqref{p:14'}.
Then, by taking Eqs. \eqref{p:11} and \eqref{p:12"} into account,
expression \eqref{p:13"} now yields a contribution
to the cross--section which is of the following form:
\beq
\widehat{\sigma}_\ell(k;g) \approx
\frac{4\pi [\sin\varphi(k;g) \ (k-k_\ell)-\gamma\,\cos\varphi(k;g)]^2}
{k^2\,[(k-k_\ell)^2 +\gamma^2]}.
\label{p:14}
\eeq
This asymmetric contribution corresponds to 
the generalized form of a Breit--Wigner one--pole model, 
when the unitary partial wave $S_\ell(k)$ includes an additional 
phase function $\varphi(k)$, namely:
\beq
S_\ell(k)= e^{2\rmi\varphi(k)}\ \frac{k-k_0-\rmi\gamma}{k-k_0+\rmi\gamma}.
\label{p:15}
\eeq

\noindent
(2) \emph{The case of antiresonances.}

In view of the apparent symmetric treatment of resonances and antiresonances 
that we have given above, one would be tempted to apply again the previous 
Taylor expansion argument to the analysis of the factor $[\lambda(k;g) -\ell]$ 
when $\beta(k;g)$ is negative. However, the analog of formula \eqref{p:11} 
would exhibit a pole in the upper half--plane at $k=k_\ell-\rmi\gamma$, 
with $\gamma <0$, corresponding to the real value $\lambda =\ell$.
But such a result is contradictory with the constraint imposed by the 
Wronskian Lemma (see, after Lemma \ref{lemma:vii'vii'}, the paragraph (a)
in ``Application: regions of $(\lambda,k)$--space free of singularities").
This impossibility of having singular pairs $(\lambda,k)$ at $\lambda$ real and
$\Real k>0$, $\Imag k >0$, which was known to hold in general for local potentials,
has indeed be extended here to the large classes $\cN^{\,\gamma}_{\wea}$
of nonlocal potentials. 

\vskip 0.2cm

In order to have a full account of antiresonances associated with dominant poles 
of $a(\lambda,k;g)$ that may be produced by the theory of nonlocal potentials,
one is thus faced to imagine the following type of mathematical model:
construct a holomorphic function $k\mapsto\lambda(k;g)$ 
satisfying the three previous assumptions (a), (b), and (c) with 
$\beta(k;g) = \Imag \lambda(k;g) <0$ for $k \in [k_-,k_+]$,
and such that, in addition, $\Imag \lambda(k;g)$ remains strictly negative 
when $k$ varies in the upper half--plane. 

\vskip 0.2cm

As a tutorial model, one can propose the following function:
\beq
\lambda(k) =\ell +\rmi\,\frac{\beta_0}{2}\left[1+\frac{e^{c(k-k_\ell)^2}}{1-\rmi(k-k_\ell)}\right]
\qquad (\beta_0<0;\ \  \ell = {\rm fixed}\ \ {\rm integer}).
\label{p:16}
\eeq
In fact, one can check that if $\Real k $ varies in some positive interval  
$I$ depending on $c$ ($c>0$) and centered at $k_\ell$, then:
\begin{itemize}
\item[(i)] for real $k$, ${\partial\alpha/\partial k}$ is a positive function, and $\beta(k)$ remains 
of the order of $\beta_0$; \\[-5pt]
\item[(ii)] for all $k= \Real k + \rmi\gamma$, with $\gamma>0$, one has $\Imag\lambda(k)<0$.
\end{itemize}

\vskip 0.1cm

Then, if we choose as a dominant--pole contribution the function 
$\widehat{a}_\ell(k)$ associated with $\lambda (k)$ by formula \eqref{p:13"},
it satisfies the characteristic phase--shift properties of an antiresonance 
near $k=k_\ell$; thus, according to the previous general analysis,
its corresponding contribution to the cross--section
produces a bump near $k=k_\ell$, but the mathematical 
description of this bump involves a Gaussian--type behaviour 
instead of the familiar Lorentzian behaviour,
as it is produced by the Breit--Wigner--type pole in the case of a resonance. 

\vskip 0.2cm

Of course, the value of the previous example is limited to 
the description of local phenomena related to a given partial wave amplitude
$a_\ell(k)$. As a matter of fact, one should expect that, in the same way as a 
``Regge trajectory" $\lambda=\lambda(k)$ with $\beta >0$ is able to describe 
a sequence of resonances, whose angular momentum $\ell$ increases with $k$, 
a similar ``image--trajectory" $\lambda=\lambda(k)$ with $\beta <0$ might 
correspondingly describe a sequence of antiresonances, whose angular momentum 
$\ell$ would also increase with $k$. The expectation of an alternating sequence
of resonances and antiresonances associated with increasing values of 
$\ell$ and $k$ seems indeed suggested by the phenomenological study
of various nuclear scattering processes (see \cite{DeMicheli1} and   
\cite{DeMicheli2}). A strong hope exists that the theory of nonlocal potentials
(in particular within the classes that have been studied in the present paper)
may be able to produce such type of coupled trajectories belonging respectively
to the first and to the fourth quadrant in the $\lambda$--plane,
a possibility which was forbidden by the usual theory of local potentials.

\vfill\eject

\renewcommand{\thesubsection}{A.\Roman{subsection}}

\appendix

\section{Bounds on the ``angular--momentum Green function"}
\label{appendix:a}

\subsection{Discrete angular momentum analysis} % Former I
\label{subappendix:a.discrete}

We shall use the partial wave expansion of the (free--Hamiltonian) Green function:
\begin{align}
g(\cos \theta,k;R,R') &\doteq \frac{1}{4\pi}\,\frac{e^{\rmi k|\bR-\bR'|}}{|\bR-\bR'|}
=\frac{1}{4\pi}\,\frac{e^{\rmi k(R^2 +{R'}^2 -2RR'\cos\theta)^{1/2}}}
{(R^2 +{R'}^2 -2RR'\cos\theta)^{1/2}}, \label{a:1} \\
\intertext{namely:}
g(\cos\theta,k;R,R') &= -\sum_{\ell=0}^{\infty}\ (2\ell+1)
\ \frac{G_\ell(k;R,R')}{4\pi RR'} \ P_\ell(\cos\theta), \label{a:2}
\end{align}
in which the set of coefficients $G_{\ell}(k;R,R')$, called
``angular--momentum Green function" are given by:
\beq
G_\ell(k;R,R')=-2\pi RR' \int_{-1}^{+1}
g(\cos \theta,k;R,R')\ P_\ell(\cos\theta)\, \rmd\cos\theta.
\label{a:3}
\eeq
Relations \eqref{f:27d}: i.e.,
$G_\ell(k;R,R')=-\rmi kRR'j_\ell[k\min (R,R')] \ h_\ell^{(1)}[k\max (R,R')]$,
(implied by the fact that $G_\ell$ satisfies Bessel equations
with appropriate boundary conditions separately with respect to $R$ and $R'$)
will not be used here, since we shall obtain relevant bounds on
$G_\ell$ by direct use of formula \eqref{a:3}.

In formula \eqref{a:3}, $k$ may be real or complex, namely the functions
$G_\ell$ are defined for all $R>0$, $R'>0$ as entire functions of $k$, and one has:
\beq
4\pi\, |g(\cos \theta,k;R,R')|=
\frac{e^{-\Imag k\ [(R-R')^2 +2RR'(1- \cos\theta)]^\frac{1}{2}}}
{[(R-R')^2 +2RR'(1- \cos\theta)]^\frac{1}{2}}.
\label{a:4}
\eeq
For $\Imag k \geqslant 0$, the latter is uniformly bounded by
$[2RR'(1- \cos\theta)]^{-\frac{1}{2}}$, while for $\Imag k < 0$ 
it is uniformly bounded by
$\frac{e^{|\Imag k|\ (R+R')}}{[2RR'(1-\cos\theta)]^\frac{1}{2}}$.
From \eqref{a:3} one thus obtains the following global majorization:
\beq
\begin{split}
& |G_\ell(k;R,R')|\leqslant  \frac{1}{2}(RR')^\frac{1}{2}
\max\left(1, e^{-\Imag k (R+R')}\right)\int_{-1}^{+1} \frac{|P_\ell(t)|}{\sqrt{2(1-t)}}\,\rmd t \\
& \quad\leqslant
\frac{1}{2} \max\left(1, e^{-\Imag k (R+R')}\right)
\left(\frac{\pi RR'}{2\ell +1}\right)^\frac{1}{2}.
\end{split}
\label{a:5}
\eeq
For writing the rightmost inequality of \eqref{a:5},
we have used the Martin inequality (see \cite{Martin2})
$|P_\ell(\cos \theta)|< \min\left(1,2\,[\ell\pi\sin\theta]^{-1/2}\right)$.
(Note that for $k$ real, bound \eqref{a:5} itself lies in \cite{Martin2}).

\vskip 0.2cm

We shall now derive an alternative bound on $G_\ell$, which exhibits a decrease
property with respect to $|k|$ ($k\in \C$).
Let us rewrite Eq. \eqref{a:3} as follows, by introducing the change of integration variable
$u= [R^2 +{R'}^2 -2RR'\cos\theta]^\frac{1}{2}$, i.e.,
$\cos\theta (u)= \frac{R^2+{R'}^2 -u^2}{2RR'}$:
\beq
G_\ell(k;R,R')=\frac{1}{2} \int_{|R-R'|}^{R+R'} e^{\rmi ku}\ P_\ell(\cos\theta(u))\, \rmd u.
\label{a:7}
\eeq
We now have (by using integration by parts):
\beq
\begin{split}
& \rmi k G_\ell(k;R,R')=\frac{1}{2} \int_{|R-R'|}^{R+R'}
\frac{\rmd}{\rmd u}[e^{\rmi ku}]\ P_\ell(\cos\theta(u))\, \rmd u \\
&\quad=\frac{1}{2 RR'} \int_{|R-R'|}^{R+R'} e^{\rmi ku}\ P'_\ell(\cos\theta(u))\, u\, \rmd u \,
+\frac{1}{2}\left[e^{\rmi k(R+R')}- (-1)^{\ell}e^{\rmi k|R-R'|}\right],
\end{split}
\label{a:8}
\eeq
(where we have used the fact that $P_\ell(1) =1$, $P_\ell(-1) =(-1)^\ell$).

\vskip 0.3cm

From the integral representation
$P_\ell(\cos\theta)= \frac{1}{\pi}\int_0^{\pi}(\cos\theta +\rmi\sin \theta\cos\alpha)^\ell\,\rmd \alpha$, 
one readily obtains the following bound
\beq
|P'_\ell(\cos\theta)|\leqslant \frac{\ell}{|\sin \theta|}.
\label{a:9}
\eeq
The latter allows one to give a majorization for the r.h.s. of Eq. \eqref{a:8}, which yields:
\beq
|k| \, |G_\ell(k;R,R')|\leqslant
\, \max \left(1, e^{-\Imag k (R+R')}\right) \left[1 + \frac{\ell}{2}
\int_{|R-R'|}^{R+R'} \frac{u \,\rmd u}{RR' \, |\sin\theta(u)|}\right].
\label{a:10}
\eeq
Since
$2RR' \sin \theta(u) = [(R+R'+u)(R+R'-u)(u+R-R')(u+R'-R)]^\frac{1}{2}$,
one then gets the following majorization:
\beq
\frac{\ell}{2}\int_{|R-R'|}^{R+R'} \frac{u\,\rmd u}{RR'\, |\sin\theta|} \leqslant
\ell\int_{|R-R'|}^{R+R'} \frac{\rmd u}{(R+R'-u)^\frac{1}{2}\,(u-|R-R'|)^\frac{1}{2}} = \ell \pi,
\label{a:11}
\eeq
and therefore, from \eqref{a:10}:
\beq
|G_\ell(k;R,R')|\leqslant
\, \max \left(1, e^{-\Imag k (R+R')}\right) \left(\frac{1 + \ell\pi}{|k|}\right).
\label{a:12}
\eeq
As a result of \eqref{a:5} and \eqref{a:12}, we can thus write the following
global uniform bound, which exhibits decrease properties with respect to both
variables $\ell$ and $|k|$ when they go to infinity:
\beq
|G_\ell(k;R,R')|\leqslant\max\left(1, e^{-\Imag k (R+R')}\right)
[1+R]^\frac{1}{2}\,[1+R']^\frac{1}{2}
\min\left(\frac{\ell\pi +1}{|k|}, \frac{1}{2}\sqrt{\frac{\pi}{2\ell +1}}\,\right).
\label{a:13}
\eeq

\subsection{Complex angular momentum analysis} % Former A II
\label{subappendix:a.complex}

We shall now introduce a function $G(\lambda,k; R,R')$, called
the complex--angular--momentum Green function, defined for
all complex $\lambda$ in the half--plane
${\C^+_{-\frac{1}{2}}}= \left\{\lambda\,:\, \Real\lambda > -\frac{1}{2}\right\}$,
such that for all positive integers $\ell$, one has:
$G_\ell(k;R,R')=G(\ell,k; R,R')$. For every $R$, $R'$ ($R>0$, $R'>0$),
the function $G$ will be uniquely defined as a holomorphic function
of $(\lambda,k)$ in the product $\C^+_{-\frac{1}{2}}\times\widehat{\C}$,
where $\widehat{\C}$ denotes the universal covering of $\C\setminus\{0\}$.
The uniqueness of this interpolation of $G_\ell$ will be ensured by the fact
that for $k=\rmi\kappa$, $\kappa>0$, $G(\lambda,\rmi\kappa; R,R')$ is a 
Carlsonian interpolation, which will thus allow us to specify the ``basic" 
first--sheet ${\C^{\cut}}\doteq\C\setminus(-\infty, 0]$ of $G$.
Our purpose now is the derivation of uniform bounds for
$|G(\lambda,k; R,R')|$ in 
$\left\{(\lambda,k)\in\C^+_{-\frac{1}{2}}\times\C^{\cut}\right\}$.

\skt

\noindent
\textbf{(1) Analysis for $\boldsymbol{k=\rmi\kappa}$, $\boldsymbol{\kappa>0}$.}

\sku

Let $z_0 \equiv z_0(R,R')= \frac{1}{2}\left(\frac{R}{R'}+\frac{R'}{R}\right)$. 
For $k=\rmi\kappa$, $\kappa>0$, the Green function $g$ (see \eqref{a:1}) can be 
conveniently rewritten as follows in terms of the complex variable $z=\cos\theta$:
\beq
g(z,\rmi\kappa;R,R') =  \frac{1}{4\pi}\,
\frac{e^{-\kappa (2RR')^\frac{1}{2}(z_0-z)^\frac{1}{2}}}{(2RR')^\frac{1}{2}\,(z_0-z)^\frac{1}{2}}.
\label{a:14}
\eeq
Since the function $u(R,R';z)\doteq (2RR')^\frac{1}{2}(z_0-z)^\frac{1}{2}$ (specified 
as being positive for $z$ real, $z< z_0$) is such that $\Real  u(R,R';z) \geqslant 0$ for z
varying in the (closed) cut--plane ${\C^{\cut}_{z_0}} \doteq\C\setminus [z_0,+\infty[$, 
the following uniform bound holds:
\beq
\mathrm{For} \  z\in\C^{\cut}_{z_0}, \qquad
|g(z,\rmi\kappa;R,R')|\leqslant \frac{1}{4\pi}
\frac{1}{(2RR')^\frac{1}{2}\,|z_0-z|^\frac{1}{2}}.
\label{a:15}
\eeq
It follows that, as a holomorphic function of $z$,
$g(z,\rmi\kappa;R,R')$ satisfies the conditions
of the Froissart--Gribov theorem (or Laplace--transformation on the
one--sheeted hyperboloid in the sense of \cite{Bros4}). Therefore, there
exists a function $G(\lambda,\rmi\kappa; R,R')$, holomorphic in the half--plane
$\left\{\lambda \in {\C^+_{-\frac{1}{2}}}\right\}$ such that for all integers $\ell$ 
($\ell\geqslant 0$), one has $G(\ell,\rmi\kappa;R,R')=G_\ell(\rmi\kappa;R,R')$.
Moreover, this function $G$ is given by the following integral
in terms of the discontinuity $\Delta g$ of $g$ across the cut
$z\in [z_0,+\infty[$, namely 
$\Delta g (z,\rmi\kappa;R,R')=\frac{1}{4\pi}\frac{\cos\left[\kappa(2RR')^\frac{1}{2}(z-z_0)^\frac{1}{2}\right]}
{(2RR')^\frac{1}{2}(z-z_0)^\frac{1}{2}}$:
\beq
G(\lambda,\rmi\kappa;R,R') = -\frac{RR'}{2}\int_{z_0(R,R')}^{+\infty}
\frac{\cos\left(\kappa \, (2RR')^\frac{1}{2}(z-z_0)^\frac{1}{2}\right)}{(2RR')^\frac{1}{2}(z-z_0)^\frac{1}{2}}
\,Q_{\lambda}(z)\,\rmd z.
\label{a:16}
\eeq
In this equation, $Q_{\lambda}$ denotes the second--kind Legendre function,
and we note that the integral is convergent for all $\lambda$ in ${\C^+_{-\frac{1}{2}}}$.

\skt

\noindent
\emph{Bounds on $G(\lambda,i\kappa;R,R')$:}

\sku

{(a)} \ In view of \eqref{a:16}, we have:
\beq
|G(\lambda,\rmi\kappa;R,R')|\leqslant \frac{(RR')^\frac{1}{2}}{2\sqrt 2}
\int_{z_0(R,R')}^{+\infty}
\frac{|Q_{\lambda}(z)|}{[z-z_0(R,R')]^\frac{1}{2}}\,\rmd z.
\label{a:17}
\eeq
Then, by using the following integral representation of $Q_{\lambda}$
(see, e.g., \cite[formula (III--11)]{Bros4}):
\beq
Q_{\lambda}(z) =\frac{1}{\pi} 
\int_{v\doteq \cosh^{-1}\!z}^{+\infty}
e^{-(\lambda + \frac{1}{2})w} \ [2(\cosh w- \cosh v)]^{-\frac{1}{2}} \,\rmd w,
\label{a:18}
\eeq
and the relation 
$\cosh^{-1}\! z_0(R,R')= \left|\ln \frac{R}{R'}\right|$, we obtain:
\beq
\int_{z_0(R,R')}^{+\infty}\frac{|Q_{\lambda}(z)|}{[z-z_0(R,R')]^\frac{1}{2}}\,\rmd z
\leqslant \frac{1}{\pi}\int_{\left|\ln \frac{R}{R'}\right|}^{+\infty}
e^{-(\Real \lambda + \frac{1}{2})w}\,\rmd w
\int_{z_0(R,R')}^{\zeta\doteq \cosh w}
\frac{\rmd z}{\{2(\zeta-z)\,[z-z_0(R,R')]\}^\frac{1}{2}}.
\label{a:19}
\eeq
But, since the subintegral in the r.h.s of \eqref{a:19} is
equal to the constant $\frac{\pi}{\sqrt 2}$, we obtain, in view of \eqref{a:17}
and \eqref{a:18}:
\beq
|G(\lambda,\rmi\kappa;R,R')|\leqslant
\frac{(RR')^\frac{1}{2}}{2(2\Real \lambda +1)} \
\left[\min\left(\frac{R}{R'},\frac{R'}{R}\right)\right]^{(\Real \lambda + \frac{1}{2})}.
\label{a:20}
\eeq

\skd

{(b)} \, We shall now derive an alternative bound on $G$, which exhibits a decrease
property with respect to $\kappa$.
By making use of the integration variable 
$\widehat u =\widehat u(z) = (2RR')^\frac{1}{2}(z-z_0)^\frac{1}{2}$ and of the inverse mapping
$z=z(\widehat u) = z_0 + \frac{\widehat{u}^2}{2RR'}$, we rewrite Eq. \eqref{a:16} as follows:
\beq
G(\lambda,\rmi\kappa;R,R') = -\frac{1}{2}
\int_{0}^{+\infty}\cos\kappa\widehat u \ Q_{\lambda}(z(\widehat u)) \,\rmd\widehat{u}.
\label{a:21}
\eeq
We then have:
\beq
\kappa \, G(\lambda,\rmi\kappa;R,R')= -\frac{1}{2}\int_{0}^{+\infty}
\frac{\rmd}{\rmd\widehat{u}}[\sin\kappa\widehat u] \ Q_{\lambda}(z(\widehat u)) \,\rmd\widehat{u} 
= \frac{1}{2}\int_{0}^{+\infty}
\sin\kappa\widehat{u} \ \, Q'_{\lambda}(z(\widehat u))\ \frac{\widehat{u}}{RR'}\,\rmd\widehat{u}.
\label{a:22}
\eeq
From \eqref{a:18} (and making use of a partial integration procedure)
we can deduce the following integral representation for $Q'_{\lambda}$:
\beq
Q'_{\lambda}(z) =
-\frac{\lambda + \frac{1}{2}}{\pi} \int_{z}^{+\infty}
\frac{e^{-(\lambda + \frac{1}{2})w} \ \rmd\zeta}{[2(\zeta -z)]^\frac{1}{2}\,(\zeta^2 -1)}
-\frac{1}{\pi} \int_{z}^{+\infty}
\frac{e^{-(\lambda + \frac{1}{2})w}\ \zeta\,\rmd\zeta}
{[2(\zeta -z)]^\frac{1}{2}\ (\zeta^2 -1)^\frac{3}{2}},
\label{a:23}
\eeq
in which $w= \cosh^{-1}\!\zeta$. By taking the latter into account in Eq. \eqref{a:22}
and inverting the integrations over $z$ and $\zeta$, one obtains:
\beq
\begin{split}
& \kappa \, G(\lambda,\rmi\kappa;R,R')
= -\frac{\lambda + \frac{1}{2}}{2\pi} \int_{z_0}^{+\infty}
\frac{e^{-(\lambda + \frac{1}{2})w}}{\zeta^2 -1}\,\rmd\zeta
\int_{z_0}^{\zeta}\frac{\sin\left(\kappa\,(2RR')^\frac{1}{2}(z-z_0)^\frac{1}{2}\right)}{[2(\zeta -z)]^\frac{1}{2}}
\,\rmd z \\
&\quad -\frac{1}{2\pi} \int_{z_0}^{+\infty}
\frac{e^{-(\lambda + \frac{1}{2})w}}{(\zeta^2 -1)^\frac{3}{2}}\ \zeta\,\rmd\zeta
\int_{z_0}^{\zeta}\frac{\sin\left(\kappa\,(2RR')^\frac{1}{2}(z-z_0)^\frac{1}{2}\right)}{[2(\zeta -z)]^\frac{1}{2}}
\,\rmd z.
\end{split}
\label{a:24}
\eeq
A uniform bound for the first term on the r.h.s. of Eq. \eqref{a:24}
is obtained by simply majorizing the sine--function by one. In fact,
this term is majorized in the whole half--plane $\{\lambda \in {\C^+_{-\frac{1}{2}}}\}$ by:
\beq
\frac{\Real \lambda + \frac{1}{2}}{2\pi} \int_{z_0(R,R')}^{+\infty}
\frac{\{2[\zeta -z_0(R,R')]\}^\frac{1}{2}}{(\zeta^2 -1)}\,\rmd\zeta \
\leqslant\ A_1 \times  (2\Real \lambda +1),
\label{a:25}
\eeq
where $A_1$ is a numerical constant (independent of $k$, $R$ and $R'$).
For obtaining a uniform bound for the second term on the r.h.s. of Eq. \eqref{a:24}
it is necessary to majorize the sine--function:  (a) by one in the range
$z\geqslant z_1\doteq z_0 + (2RR'\kappa^2)^{-1}$, and (b) by
$\kappa (2RR')^\frac{1}{2} (z-z_0)^\frac{1}{2}$ in the range
$z_0\leqslant z \leqslant z_1$.
One is then led to introduce a partition of the integration
region into three subregions of the $(z,\zeta)$--plane, namely
$R_1 = \{\zeta\geqslant z_1,\, z_0\leqslant z\leqslant z_1\}$,
$R_2 = \{z_0\leqslant\zeta\leqslant z_1,\, z_0\leqslant z\leqslant\zeta\}$,
and $R_3 = \{\zeta\geqslant z_1,\, z_1\leqslant z\leqslant\zeta\}$.
While the integrals in $R_1$ and $R_2$ yield uniform majorizations
by numerical constants, the integral in $R_3$ is majorized by an expression of
the form  $[a_1+a_2 \max\left(\ln\,(RR'\kappa^2)^{-1},0\right)]$
($a_1$ and $a_2$ being numerical constants).
By taking all these estimates into account for the r.h.s. of \eqref{a:24},
one obtains a majorization of the following form in the half--plane 
$\left\{\lambda\in\C^+_{-\frac{1}{2}}\right\}$:
\beq
\begin{split}
& |G(\lambda,\rmi\kappa;R,R')| \leqslant
\frac{A_1 (2\Real\lambda+1)+A_2+A_3\ln\left(1+(RR'\kappa^2)^{-1}\right)}{\kappa} \\
& \quad\leqslant C \,\frac{2(\Real\lambda+1)+\ln\left(1+\kappa^{-2}\right)}{\kappa}
\left[1 + \ln\left(1+\frac{1}{R}\right)\right]
\left[1 + \ln \left(1+\frac{1}{R'}\right)\right],
\end{split}
\label{a:26}
\eeq
where $A_1$, $A_2$, $A_3$, and $C$ are suitable numerical constants.

As a result of \eqref{a:20} and \eqref{a:26}, we can thus write the following
global uniform bound, which exhibits decrease properties with respect to both
variables $\lambda$ and $\kappa$ when they go to infinity:
\beq
\begin{split}
& |G(\lambda,\rmi\kappa;R,R')| \leqslant
\left[1 + \ln \left(1+\frac{1}{R}\right) +\sqrt{R}\right]
\ \left[1 + \ln \left(1+\frac{1}{R'}\right) +\sqrt{R'}\right] \\
& \quad\times
\min\left[\frac{1}{2(2\Real\lambda +1)}, \
C\,\frac{2(\Real \lambda +1) +\ln \left(1+\kappa^{-2}\right)}{\kappa}\right].
\end{split}
\label{a:27}
\eeq

\skt

\noindent
\emph{Bounds on $\frac{\partial}{\partial R}\,G(\lambda,\rmi\kappa;R,R')$:}

\sku

We shall obtain a relevant expression for this derivative of $G$  
by computing the derivative of the double integral on the r.h.s. 
of Eq. \eqref{a:24} with respect to $R$. Using the fact that
the successive integrands of the latter in the variables $\zeta$ and $z$
vanish at their common threshold $z_0(R,R')$, we obtain the following 
integral representation (in which $w= \cosh^{-1}\!\zeta$):
\beq
\begin{split}
& 2\pi \, \frac{\partial}{\partial R}\, G(\lambda,\rmi\kappa;R,R') =
\left(\frac{R'}{2R}\right)^\frac{1}{2}
\int_{z_0(R,R')}^{+\infty}\rmd\zeta \
e^{-(\lambda + \frac{1}{2})w}\,
\left[\frac{\lambda + \frac{1}{2}}{(\zeta^2 -1)} +
\frac{\zeta}{(\zeta^2 -1)^\frac{3}{2}} \right] \\
& \quad\times\!\!\int_{z_0(R,R')}^{\zeta} \!\! \rmd z \
\frac{\cos\left(\kappa\,(2RR')^\frac{1}{2}(z-z_0)^\frac{1}{2}\right)}{[2(\zeta -z)]^\frac{1}{2}}
\left\{\!\frac{\left[\frac{R}{R'}-\frac{R'}{R}\right]}{2[z-z_0(R,R')]^\frac{1}{2}} 
-[z-z_0(R,R')]^\frac{1}{2} \!\right\}\!.
\end{split}
\label{a:128}
\eeq
By proceeding as for the bound \eqref{a:20} on $G$, we now deduce the
following bound from \eqref{a:128}:
\beq
\begin{split}
& \left|\frac{\partial}{\partial R}\ G(\lambda,\rmi\kappa;R,R')\right| \\
& \quad\leqslant\frac{1}{4\sqrt{2}}\left(\frac{R'}{2R}\right)^\frac{1}{2}
\left|\frac{R}{R'}-\frac{R'}{R}\right|\int_{z_0(R,R')}^{+\infty}
e^{-(\Real\lambda + \frac{1}{2})w} \,
\left[\frac{|\lambda + \frac{1}{2}|}{(\zeta^2 -1)} + \frac{\zeta}{(\zeta^2-1)^\frac{3}{2}}\right]\,\rmd\zeta \\ 
& \qquad+\frac{1}{4\sqrt 2}\left(\frac{R'}{2R}\right)^\frac{1}{2}
\int_{z_0(R,R')}^{+\infty}
\! e^{-(\Real\lambda + \frac{1}{2})w}
\left[\frac{|\lambda + \frac{1}{2}|}{(\zeta^2 -1)}+\frac{\zeta}{(\zeta^2 -1)^\frac{3}{2}}\right] 
[\zeta-z_0(R,R')]\,\rmd\zeta.
\end{split}
\label{a:129}
\eeq
By using the inequality $\zeta-z_0 \leqslant\zeta -1$ one readily
obtains that the latter integral in \eqref{a:129} is convergent
and bounded by a ($\lambda$--dependent) constant in the whole half--plane 
$\C^+_{-\frac{1}{2}}$. By now making the change of variable 
$u=\sinh w=\sqrt{\zeta^2-1}$ in the former integral of \eqref{a:129},
one also sees that this integral can be majorized  
(for $\lambda\in\C^+_{-\frac{1}{2}}$) by
$\left(\left|\lambda + \frac{1}{2}\right| +1\right)
\times \int_{\frac{1}{2}\left|\frac{R}{R'}-\frac{R'}{R}\right|}^\infty u^{-2} \,\rmd u$. 
As a result, one can replace the inequality \eqref{a:129} by a simple majorization of 
the following form:
\beq
\left|\frac{\partial}{\partial R}\,G(\lambda,i\kappa;R,R')\right|\leqslant
c_1(\lambda)\times\left(\frac{R'}{R}\right)^\frac{1}{2},
\label{a:130}
\eeq
which is valid for all $k=\rmi\kappa$ ($\kappa>0$) and $\lambda \in \C^+_{-\frac{1}{2}}$.

\skt

\noindent
\textbf{(2) Analytic continuation in $\boldsymbol{k}$.}

\vskip 0.1cm

We start from the definition \eqref{a:16} of
$G$, in which we insert the integral representation
\eqref{a:18} of $Q_{\lambda}$ and then invert the integrations over
$z$ and $\zeta = \cosh w$. We thus obtain:
\beq
G(\lambda,\rmi\kappa;R,R') = -\frac{(RR')^\frac{1}{2}}{2\sqrt{2}\pi} \!\!
\int_{z_0(R,R')}^{+\infty} \!
\frac{e^{-(\lambda + \frac{1}{2})w}}{(\zeta^2 -1)^\frac{1}{2}}\,\rmd\zeta
\!\int_{z_0}^{\zeta}\!
\frac{\cos\left(\!\kappa\,(2RR')^\frac{1}{2}(z-z_0)^\frac{1}{2}\!\right)}
{\{2(\zeta-z)\,[z-z_0(R,R')]\}^\frac{1}{2}}\,\rmd z.
\label{a:28}
\eeq
We now wish to define the analytic continuation of this double integral
with respect to the complex variable $k$ by putting $k=\rmi\kappa e^{-\rmi\phi}$;
$\phi$ will be taken in the interval $-\frac{\pi}{2}\leqslant\phi\leqslant\frac{3\pi}{2}$ 
so that $G$ be defined in the ``basic first sheet" ${\C^{\cut}}$ of the $k$--plane.

\vskip 0.3cm

\noindent
(a) For $|\phi|\leqslant\frac{\pi}{2}$, this analytic continuation of
$G(\lambda,k;R,R')= G(\lambda, \rmi\kappa e^{-\rmi\phi};R,R')$
is well--defined by shifting in $\C^2$ the integration
region from its initial situation at $k=\rmi\kappa$, namely
$\Gamma_0\doteq \{(\zeta,z)\,:\, z_0 = z_0(R,R')\leqslant z\leqslant \zeta <+\infty\}$
to the set
$\Gamma_{\phi}\doteq \{(\zeta,z)\,:\, \zeta- z_0= |\zeta-z_0| e^{2\rmi\phi},
z- z_0= |z-z_0| e^{2\rmi\phi};\, z_0\leqslant |z|\leqslant |\zeta| <+\infty\}$.
The corresponding rotation of angle $\phi$ of $(z-z_0)^\frac{1}{2}$ will then cancel
the rotation of angle $-\phi$ of $\kappa$ in the cosine--factor under the
integral on the r.h.s. of \eqref{a:28}, so that this factor can always be
bounded by one. It follows that one obtains a majorization for the analytic continuation
at $k= \rmi\kappa e^{-\rmi\phi}$ of the r.h.s. of
\eqref{a:28} which involves the same subintegral over $z$ as in
\eqref{a:19} (equal to the constant $\frac{\pi}{\sqrt 2}$), namely:
\beq
|G(\lambda,k; R,R')|\leqslant
\frac{(RR')^\frac{1}{2}}{4}
\int_{\gamma_{\phi}(z_0(R,R'))}
\left|e^{-(\lambda +\frac{1}{2})(w+\rmi\varphi)}\right|\, |\rmd (w+i\varphi)|;
\label{a:29}
\eeq
in \eqref{a:29}, $\gamma_{\phi}(z_0(R,R'))$ is the image of
$\widehat{\gamma}_{\phi} \doteq\{\zeta= z_0(R,R') + \rho e^{2\rmi\phi},\, \rho\in [0,+\infty)\}$
by the mapping $\zeta \mapsto \widehat{w} \doteq w +\rmi\varphi =\cosh^{-1} \zeta$.
One can check that in the path $\gamma_{\phi}(z_0(R,R'))$ the variables $|\varphi|$ and 
$w$ vary respectively in the intervals $|\varphi|\in [0,2\phi]$ and $w\in [w_0(R,R'), +\infty)$, 
where $w_0(R,R')$ is positive and such that: 
\begin{itemize}
\item[(i)] if $|\phi|\leqslant\frac{\pi}{4}$,\
$\cosh w_0= z_0(R,R') = \frac{1}{2}\left(\frac{R}{R'} + \frac{R'}{R}\right)$,\ i.e.:
\beq
e^{-w_0(R,R')}= \min \left(\frac{R}{R'},\frac{R'}{R}\right); 
\label{a:29'}  
\eeq
\item[(ii)] if $\frac{\pi}{4} \leqslant |\phi|\leqslant \frac{\pi}{2}$, \
$\cosh^2 w_0(R,R')= z_0^2(R,R') \sin^2 2\phi + \cos^2 2\phi$, which yields: 
$\sinh w_0(R,R')=\frac{1}{2}\left| \frac{R}{R'} - \frac{R'}{R}\right|\,\sin 2|\phi|$,
and thereby:
\beq
e^{-w_0(R,R')}\leqslant (\sin 2|\phi|)^{-1}\,\min\left(\frac{R}{R'},\frac{R'}{R}\right). 
\label{a:29"}  
\eeq
\end{itemize}
Note that in case (i), $\gamma_{\phi}(z_0(R,R'))$ 
defines $|\varphi|$ as an increasing function of $w$ in 
the interval $[w_0(R,R'), +\infty)$, while in case (ii),
$\gamma_{\phi}(z_0(R,R'))$ is tangent at the line $w= w_0(R,R')$ 
at some point $\varphi_0$ (with $0\leqslant\varphi_0\leqslant\frac{\pi}{2}$). \\
By taking these geometrical facts into account, one then deduces from \eqref{a:29}
a majorization of the following form, which is valid for all
$k$ in the closed upper half--plane: 
\beq
|G(\lambda,k;R,R')|\leqslant
\frac{\sqrt{RR'}}{4}
\int_{\gamma_{\phi}(z_0)}\!\!\!
e^{-(\Real\lambda + \frac{1}{2})w + (\Imag \lambda) \varphi}\, |\rmd (w+\rmi\varphi)|
\leqslant  c(\lambda,k) \, \sqrt{RR'} \,
e^{-(\Real\lambda + \frac{1}{2})w_0(R,R')},
\label{a:30}
\eeq
where:
\beq 
c(\lambda,k)= c \,\max\left(e^{2\Imag\lambda\,\phi(k)},1\right) \
\left(1+\frac{1}{2\Real\lambda+1}\right).
\label{a:30"}
\eeq
In the latter, $c$ is a numerical constant and $\phi(k)=-\Arg(-\rmi k)$; 
more precisely, $k= \rmi\kappa e^{-\rmi\phi}$, with $\phi=\phi(k)$ such that 
$|\phi|\leqslant\frac{\pi}{2}$.

Moreover, by taking Eqs. \eqref{a:29'} and \eqref{a:29"} into account, we see that 
\eqref{a:30} implies the following majorization, which is valid globally in the set 
$\{(\lambda,k)\,:\,\lambda\in\C^+_{-\frac{1}{2}};\Imag k>0\}$:
\beq
|G(\lambda,k; R,R')|\leqslant
c(\lambda,k) \, [\Phi(k)]^{-(\Real\lambda+\frac{1}{2})} \ 
(RR')^\frac{1}{2}\left[\min \left(\frac{R}{R'},\frac{R'}{R}\right)\right]^{\Real\lambda+\frac{1}{2}},
\label{a:30'}
\eeq
in which we have put:
\begin{align}
\Phi(k) &= 1 \quad \mathrm{if} \quad |\phi(k)|\leqslant \frac{\pi}{4}, \label{a:31} \\
\intertext{and}
\Phi(k) &= (\sin 2|\phi(k)|) \quad \mathrm{if} \quad \frac{\pi}{4}\leqslant |\phi(k)| < \frac{\pi}{2}.
\label{a:31'}
\end{align}
We also notice that, since $w_0(R,R') \geqslant 0$, majorization \eqref{a:30} also yields
\emph{for all $k$ in the closed upper half--plane} $\Imag k \geqslant 0$:
\beq
|G(\lambda,k; R,R')|\leqslant
c(\lambda,k)\ (RR')^\frac{1}{2}
\leqslant c \, e^{\pi|\Imag \lambda|}
\left(1+\frac{1}{2\Real\lambda+1}\right)\, (RR')^\frac{1}{2}. 
\ \ \ \
\label{a:31"}
\eeq
By performing the same contour distortion argument on the integral 
\eqref{a:128} for defining the analytic continuation in $k$  of the function 
$\frac{\partial G}{\partial R}(\lambda,\rmi\kappa; R,R')$, and by proceeding 
as for the derivation of bound \eqref{a:130}, we obtain an extension of the 
latter to the full half--plane $\Imag k>0$, which is of the following form:
\beq
\left|\frac{\partial G}{\partial R}(\lambda,k; R,R')\right|\leqslant
\widehat{c}_1(\lambda,k)\ \left(\frac{R'}{R}\right)^\frac{1}{2}.
\label{a:131"}
\eeq

\vskip 0.3cm

\noindent
(b) For $\frac{\pi}{2}< \phi\leqslant \frac{3\pi}{2}$, the analytic continuation
of $G(\lambda, \rmi\kappa e^{-\rmi\phi};R,R')$ may be pursued, but the
``rotated cycle" $\Gamma_{\phi}$ now acquires an additional part
whose support is the real set
$\Gamma_\mathrm{r}\doteq \{(\zeta,z)\,:\, -1\leqslant \zeta \leqslant z\leqslant z_0(R,R')\}$;
in fact, the inclusion of $\Gamma_\mathrm{r}$ is more easily seen in the
$\widehat{w}$--plane, since for $\phi\geqslant \frac{\pi}{2}$, the contour
$\gamma_{\phi}(z_0(R,R'))$ may be distorted so as to contain
the broken line $[v_0,0]\cup [0,-2\rmi\pi]\cup [-2\rmi\pi,-2\rmi\pi +v_0]$,
completed by an infinite branch whose asymptot is the line
$\varphi = 2\phi$ (i.e., the image of $\widehat{\gamma}_{\phi}$ from a second sheet).
In this new situation, the bound that one obtains for the analytic continuation
at $k= \rmi\kappa e^{-\rmi\phi}$ of the r.h.s. of \eqref{a:28} still contains
the constant subintegral over $z$ (equal to $\frac{\pi}{\sqrt{2}}$), but the latter 
is now obtained after a majorization of the cosine--factor by
$\cosh [\kappa (R+R') \cos\phi] = \cosh [\Imag k (R+R')]$. 
Moreover, since the range of values of $\varphi$ in this analytic continuation of
\eqref{a:28} admits $3\pi$ as its maximal value, the latter majorization \eqref{a:31} 
must now be replaced by
\beq
|G(\lambda,k; R,R')|\leqslant c \, (RR')^\frac{1}{2}
\cosh [\Imag k (R+R')] \max\left(e^{3\pi\Imag\lambda}, 1\right) \,
\left(1+\frac{1}{2\Real\lambda+1}\right),
\label{a:32}
\eeq
which is valid for all $k$ in the lower half--plane (of the
basic first sheet) and $\lambda \in \C^+_{-\frac{1}{2}}$.
(For simplicity, we have used the same constant $c$ in
\eqref{a:32} as in \eqref{a:30} and \eqref{a:31}, being 
not concerned with the best values of these constants).

\subsection{Complements on Bessel and Hankel functions} % Former A III
\label{subappendix:a.complements}

\noindent
(a) \emph{Bounds on the spherical Bessel and Hankel functions 
$j_\ell $ and $h_\ell^{(1)}$ for $\ell$ integer $(\ell\geqslant 0)$:}

\vskip 0.2cm

The following inequalities have been established in \cite{Newton2,Abramowitz}; 
for all $k\in \C$ and $R\geqslant 0$, there hold: 
\begin{align}
\left|kRj_\ell(kR)\right| &\leqslant 
c_\ell \left(\frac{|k|R}{1+|k|R}\right)^{(\ell+1)}
e^{R\,|\Imag k|}, \label{a:0} \\
\left|kRh_\ell^{(1)}(kR)\right| &\leqslant 
c_\ell^{(1)} \left(\frac{1+|k|R}{|k|R}\right)^\ell
e^{-R\,\Imag k}, \label{a:0'}
\end{align}
where $c_\ell$ and $c_\ell^{(1)}$ are constants whose dependence on 
$\ell$ is not exploited here.

\skt

\noindent
(b) \emph{The derivatives of the spherical Bessel and Hankel functions:} 

\sku

Starting from the following relation (see \cite[Vol. 2, p. 11, formula (50)]{Bateman}):
\beq
\frac{\rmd}{\rmd z}\left[z^\nu J_\nu(z)\right]=z^\nu J_{\nu-1}(z) \qquad (\nu\in\C),
\label{a:1b}
\eeq
and recalling that
\beq
j_\ell(z)=\sqrt{\frac{\pi}{2z}}\, J_{\ell+1/2}(z), 
\label{a:1b'}
\eeq
one obtains:
\beq
\frac{\rmd}{\rmd z}\left[z j_\ell(z)\right]=-\ell \, j_\ell(z)+ z \, j_{\ell-1}(z),
\label{a:2b}
\eeq
which yields, in view of \eqref{a:0}:
\beq
\left|\frac{\rmd}{\rmd R}\left[Rj_\ell(kR)\right]\right| \, \leqslant \, 
c'_\ell \,\left(\frac{|k|R}{1+|k|R}\right)^\ell \,e^{R\,|\Imag k|}, \label{a:3b}
\eeq
where $c'_\ell = \ell \, c_\ell+c_{\ell-1}$. This bound is valid 
for all $k\in \C$ and $R\geqslant 0$.

By using a similar formula for the Hankel functions $H_\nu^{(1)}(z)$,
namely (see, e.g., \cite[p. 361, Eq. (9.1.30)]{Abramowitz}):
\beq
\frac{\rmd}{\rmd z}\left[z^\nu H_\nu^{(1)}(z)\right]=z^\nu H_{\nu-1}^{(1)}(z) \qquad (\nu\in\C),
\label{a:4b}
\eeq
together with the relation
$h_\ell^{(1)}(z)=\sqrt{\frac{\pi}{2z}}\,H_{\ell+1/2}^{(1)}(z)$, one also obtains:
\beq
\frac{\rmd}{\rmd z}\left[z h_\ell^{(1)}(z)\right]=-\ell \, h_\ell^{(1)}(z)+ z \, h_{\ell-1}^{(1)}(z).
\label{a:5b}
\eeq
In view of \eqref{a:0'}, the latter equality yields the following bound, which is valid 
for all $k\in \C$ and $R\geqslant 0$:
\beq
\left|\frac{\rmd}{\rmd R}\left[Rh_\ell^{(1)}(kR)\right]\right| \, \leqslant \,  
{c'_\ell}^{(1)}\left(\frac{1+|k|R}{|k|R}\right)^{\ell+1}
e^{-R\Imag k}, 
\label{a:6b}
\eeq
where ${c'_\ell}^{(1)} = \ell \, c_\ell^{(1)} +c_{\ell-1}^{(1)}$. 

\skt

\noindent 
(c) \emph{The Sommerfeld condition for the spherical Hankel functions:} 

\sku

We now want to prove the following property:

\vskip 0.1cm

\noindent
\emphsl{For all $k$ such that $k\neq 0$, there holds the following behaviour in the limit $R\to +\infty$}:
\beq
e^{R\Imag k}
\left|\frac{\rmd}{\rmd R}\left[R h_\ell^{(1)}(kR)\right]- \rmi kR h_{\ell}^{(1)}(kR)\right|
=\frac{1}{|k|^3} \ O\!\left(\frac{1}{R^2}\right).
\label{a:8'}
\eeq
Formula \eqref{a:8'} derives from the following representation of the
Hankel functions \cite[p. 117]{Sommerfeld} by an asymptotic series (in the
sense of Poincar\'e):
\beq
H_\nu^{(1)}(\rho)=\sqrt{\frac{2}{\pi\rho}}\,e^{\rmi[\rho-(\nu+\frac{1}{2})\frac{\pi}{2}]}
\sum_{m=0,1,2,\ldots}\frac{(\nu,m)}{(-2\rmi\rho)^m},
\label{a:4'}
\eeq
where:
\beq
(\nu,m)=\frac{(4\nu^2-1)(4\nu^2-9)\cdots
(4\nu^2-\{2m-1\}^2)}{2^{2m} \, m!}, \qquad (\nu,0)=1.
\label{a:5'}
\eeq
This series reduces exceptionally to a finite sum whenever the subscript
$\nu$ takes a half--integral value. In fact, one can easily verify that if 
$\nu = \ell +\frac{1}{2}$ the symbols $(\nu,m)$ are zero 
for all integers $m$ such that $m>\ell$. In these cases,
the series \eqref{a:4'} represents exactly the Hankel functions.
In view of the relation
$R h_\ell^{(1)}(kR)=\sqrt{\frac{\pi}{2}}\sqrt{\frac{R}{k}} H_{\ell+1/2}^{(1)}(kR)$,
we deduce from \eqref{a:4'} the following representation:
\beq
R h_\ell^{(1)}(kR)=\sqrt{\frac{\pi}{2}}\sqrt{\frac{R}{k}} \, H_{\ell+1/2}^{(1)}(kR)=
\frac{1}{k}e^{\rmi[kR-(\ell+1)\frac{\pi}{2}]}
\sum_{0\leqslant m\leqslant \ell}\frac{\left(\ell+\frac{1}{2},m\right)}{(-2\rmi kR)^m}.
\label{a:6'}
\eeq
Then we get by a direct computation
\beq
\begin{split}
& \frac{\rmd}{\rmd R}\left[R h_\ell^{(1)}(kR)\right]
= \rmi k\left\{\frac{1}{k}\,e^{\rmi[kR-(\ell+1)\frac{\pi}{2}]}
\!\!\sum_{0\leqslant m\leqslant \ell}\frac{\left(\ell+\frac{1}{2},m\right)}{(-2\rmi kR)^m}\right\}
-\frac{1}{k}\,e^{\rmi[kR-(\ell+1)\frac{\pi}{2}]}
\!\!\sum_{1\leqslant m\leqslant \ell}\frac{m\left(\ell+\frac{1}{2},m\right)}
{(-2\rmi kR)^{m+1}} \\
& \quad= \rmi kR h_\ell^{(1)}(kR) 
+ \frac{1}{k}\,e^{\rmi[kR-(\ell+1)\frac{\pi}{2}]} \
\frac{P^{(\ell-1)}(kR)}{(kR)^{\ell +1}},
\end{split}
\label{a:7'}
\eeq
in which $P^{(\ell-1)}$ denotes a polynomial of degree $\ell-1$
whose all coefficients are different from zero. One readily checks that the
latter yields limit \eqref{a:8'} (for all $k$ such that $k\neq 0$).

\newpage

\noindent
(d) \emph{Bounds on the spherical Bessel functions $j_{\lambda}(z)$ for
$\lambda\in \C^+_{-\frac{1}{2}}$ and $z\in \C^{\cut}$}:

\sku

We recall that
$j_\lambda(z)=\sqrt{\frac{\pi}{2z}}\,J_{\lambda+1/2}(z)$,
and start from the following integral representation of the Bessel function $J_\lambda(z)$ \cite{Tichonov},
which is valid for $z \in \R^+$ and $\Real \lambda >0$:
\beq
J_\lambda(z)
=\frac{1}{2\pi}\int_{-\pi}^\pi e^{-\rmi z\sin\varphi+\rmi\lambda\varphi}\,\rmd\varphi
-\frac{\sin\pi\lambda}{\pi}\int_0^{+\infty} e^{-z\sinh\xi-\lambda\xi}\,\rmd\xi.
\label{a:33}
\eeq
This representation defines $J_\lambda(z)$ as the sum of two holomorphic functions
$J^{(1)}$ and $J^{(2)}$ of $\lambda$ and $z$.
As an integral on the interval $[-\pi,\pi]$, the first term on the r.h.s. of
\eqref{a:33} defines $J^{(1)}$ as an entire function of $(\lambda,z)$ satisfying 
the following global bound in $\C^2$:
\beq
\left|J^{(1)}(\lambda, z)\right|\,\leqslant\, e^{|\Imag z|}\ e^{\pi|\Imag \lambda|}.
\label{a:34}
\eeq
Consider now the function $J^{(2)}(\lambda,z)$ to be defined 
by the second term on the r.h.s. of \eqref{a:33}. For
$(\lambda,z)\in \C^+\times \C^+$, the integral in that term 
is easily majorized by a convergent integral 
(thanks to the minoration $\sinh \xi > \xi$ in the exponential under the integral),
which shows that $J^{(2)}$ is well--defined and analytic in this set and such that: 
\beq
\left|J^{(2)}(\lambda, z)\right|
\leqslant\frac{e^{\pi|\Imag \lambda|}}{\pi (\Real \lambda +\Real z)}
\leqslant\frac{e^{\pi|\Imag \lambda|}}{\pi \, \Real \lambda}.
\label{a:35}
\eeq
We now obtain an analytic continuation of $J^{(2)}$ 
and an extension of the previous bound for  
$(\lambda,z)\in \C^+\times \C^{\cut}$
by distorting the integration path $\gamma_0 = [0,+\infty[$ of
the second integral of \eqref{a:33} into any path
$\gamma_\phi$ whose support is the following set:
$\{\xi\in [0,-\rmi\phi]\}\cup \{\xi= -\rmi\phi+\beta\,:\, \beta\geqslant 0\}$, with 
$|\phi|\leqslant \frac{\pi}{2}$. The corresponding integral can then be replaced by
\beq
\int_0^{\phi} e^{\rmi z \sin u +\rmi\lambda u}\,\rmd u  + e^{\rmi\phi}
\int_0^{+\infty} e^{-z(\cos\phi \sinh\beta -\rmi\sin\phi \cosh\beta)}
\, e^{-\lambda \beta}\,\rmd\beta, 
\label{a:35'}
\eeq
which can be bounded in modulus by
\beq
e^{|\Imag \lambda|\,|\phi|}\left[|\phi|\,e^{|\Imag z|\,|\sin \phi|} +
\int_0^\infty e^{-(\Real z \cos\phi +\Imag z\sin\phi)\sinh \beta} 
\, e^{-(\Real \lambda)\beta}\,\rmd\beta \right],
\label{a:36}
\eeq
in the half--plane  $\varepsilon(\phi) \Imag z >0$ ($\varepsilon (\phi)$ 
denoting the sign of $\phi$). In the sector with equation 
$\Real z +\Imag z\tan\phi >0$ of this half--plane, 
the previous integral can then be majorized by 
$(\Real\lambda)^{-1}$ (for any value of $\phi$).
For $z$ varying in $\C^{\cut}$, one then obtains a global 
bound for the expression \eqref{a:36} with the choice $\phi =\pm\frac{\pi}{2}$,
which is equal to 
$e^{|\Imag \lambda|\frac{\pi}{2}}\left[\frac{\pi}{2}e^{|\Imag z|}+\frac{1}{\Real\lambda}\right]$. 
It then yields:
\beq
\left|J^{(2)}(\lambda, z)\right|
\leqslant \, e^{|\Imag \lambda|\frac{3\pi}{2}}\left[\frac{1}{2}\,e^{|\Imag z|} +
\frac{1}{\pi\Real\lambda}\right].
\label {a:37}
\eeq
By now putting together the bounds \eqref{a:34}, \eqref{a:35}, and \eqref{a:37},
we obtain the following majorizations for the function
$\sqrt{kR} \, j_\lambda(kR)=\sqrt{\frac{\pi}{2}}\,J_{\lambda+1/2}(kR)$: \\[+4pt]
\null\hspace{0.3cm} for $(\lambda,k)\in{\C^+_{-\frac{1}{2}}}\times {\C^+}$,
\beq
\left|\sqrt{kR} \, j_\lambda(kR)\right| \leqslant \sqrt{\frac{\pi}{2}}
\ e^{|\Imag k|R}\ e^{\pi|\Imag \lambda|}\left[1 + \frac{1}{\pi(\Real \lambda +\frac{1}{2})}\right];
\label{a:40}
\eeq
\null\hspace{0.3cm} for $(\lambda,k)\in \C^+_{-\frac{1}{2}}\times \C^{\cut}$,
\beq
\left|\sqrt{kR} \, j_\lambda(kR)\right| \leqslant \sqrt{\frac{\pi}{2}}
\ e^{|\Imag k|R}\ e^{\frac{3\pi}{2}|\Imag \lambda|}
\left[\frac{3}{2}+ \frac{1}{\pi(\Real\lambda +\frac{1}{2})}\right].
\label{a:41}
\eeq

\renewcommand{\thesubsection}{B.\Roman{subsection}}

\section{Continuity and holomorphy properties of
vector--valued and operator--valued functions}
\label{appendix:b}

We recall some general facts about continuous and
holomorphic functions taking their values in complete normed spaces
on the field of complex numbers,
denoted by $A$ (resp., $B$ or $C$) in the following. The norm of an element $a$
of $A$ is denoted by $\|a\|_A $, or simply $\|a\|$ if there is no ambiguity.
$D$ will denote a given domain either in ${\R}^m $ or in 
${\C}^m$. The real or complex variables whose range is $D$ are
called $\z=(\z_1,\ldots,\z_m)$, and we shall consider vector--valued functions
$\z \mapsto a[\z]$ such that for all $\z \in D$, $a[\z]$
belongs to $A$ and $\|a[\z]\|$ is uniformly bounded on
every compact subset of $D$.  By definition,
the function $\z \mapsto a[\z]$ is continuous in $D$ if
$\lim_{h \to 0} \left\|a[\z+h]-a[\z]\right\| =0$
for every $\z \in D$ and $h$
varying in a neighborhood $V_{\z}$ of zero
such that $\z+V_{\z}\subset D$.
When $\z$ is complex, the function $\z \mapsto a[\z]$ is holomorphic in $D$ if
there exist $m$ functions $\z \mapsto \dot{a}_j[\z]$ defined and
continuous in $D$ with values in $A$, called the partial derivatives of $a[\z]$,
such that:
$\lim_{{\rm h} \to 0} \left\|\frac{a[\z+h_j]-a[\z]}{{\rm h}}
- \dot{a}_j[\z]\right\| =0$
for every $\z \in D$ and $h_j=(0,\ldots,0,{\rm h},0,\ldots,0)$, 
${\rm h}$ being the $j^{\rm th}$--component of $h_j$; 
the increments $h_j$, $(1\leqslant j\leqslant m)$, are supposed to
vary in a neighborhood $V_{\z}$ of zero, which is chosen 
such that $\z+V_{\z}\subset D$.
These definitions of ``continuity" and ``holomorphy" (or ``analyticity")
for vector--valued functions, (also called more precisely
``strong continuity" and ``strong holomorphy or analyticity") 
will be used directly in the following survey of
various product operations given in Subsection \ref{subappendix:b.products} and of the vector--valued
functional interpretation of functions depending on real or complex 
parameters that we give in Subsection \ref{subappendix:b.passage}. For completeness, we shall also
summarize in Subsection \ref{subappendix:b.complement} the various (equivalent) 
``weak" and ``strong" characterizations of ``vector--valued holomorphy", 
by presenting them directly in the several--variable case. 

\vskip 0.2cm

\noindent
From the previous definitions and
by using the norm inequality in $A$, one readily checks that
any sum of continuous (resp., holomorphic) vector--valued functions
is a continuous (resp., holomorphic) vector--valued function. This
property extends to uniformly convergent series of vector--valued functions:

\skd

\begin{lemma}
\label{lemma:B0}
Let $\{\z \mapsto a_n[\z];\, n\in \N \}$ be a sequence of vector--valued functions
in $D$ taking their values in the complete normed space $A$ and such that
for all $\z$ in $D$ there holds a majorization of the form:
$\|a_n[\z]\|\leqslant u_n$, where the sequence $\{u_n\}_{n=0}^\infty$
is such that $M = \sum_{n=0}^{\infty} u_n < \infty$.
Then there exists a vector--valued function $\z \mapsto s[\z]$
taking its values in $A$, such that for all $\z$ in $D$ one has
$s[\z] =\sum_{n=0}^{\infty} a_n[\z]$, as the sum of a convergent series in $A$,
with $\|s[\z]\| \leqslant M$. Moreover,
\begin{itemize}
\item[\rm (i)] if the functions $a_n[\z]$ are continuous in $D$, then $s[\z]$ is
continuous in $D$; \\[-4pt]
\item[\rm (ii)] if $\z$ is complex and if the functions $a_n[\z]$ are holomorphic 
in $D$, then $s[\z]$ is holomorphic in $D$.
\end{itemize}
\end{lemma}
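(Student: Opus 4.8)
The statement is a standard Weierstrass-type theorem for vector-valued functions, and the plan is to mimic the classical scalar argument, using only the norm inequality in $A$ and the definitions of strong continuity and strong holomorphy recalled just above. First I would establish pointwise convergence: for each fixed $\z\in D$, the partial sums $S_n[\z]=\sum_{j=0}^n a_j[\z]$ form a Cauchy sequence in $A$, since for $p<q$ one has $\|S_q[\z]-S_p[\z]\|\leqslant\sum_{j=p+1}^q u_j$, and the tail of the convergent positive series $\sum u_n$ is arbitrarily small. Completeness of $A$ then gives a limit $s[\z]\in A$, and passing to the limit in $\|S_n[\z]\|\leqslant\sum_{j=0}^n u_j\leqslant M$ yields $\|s[\z]\|\leqslant M$. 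This defines the function $\z\mapsto s[\z]$ with the claimed uniform bound.

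For (i), I would run the usual $\varepsilon/3$ argument. Fix $\z_0\in D$ and $\varepsilon>0$; choose $N$ with $\sum_{j>N}u_j<\varepsilon/3$, so that $\|s[\z]-S_N[\z]\|\leqslant\varepsilon/3$ for \emph{all} $\z\in D$ (this is where the uniform majorization by a $\z$-independent summable sequence is essential). Since $S_N$ is a finite sum of functions continuous in $D$, it is continuous at $\z_0$, so there is a neighborhood on which $\|S_N[\z]-S_N[\z_0]\|<\varepsilon/3$. The triangle inequality $\|s[\z]-s[\z_0]\|\leqslant\|s[\z]-S_N[\z]\|+\|S_N[\z]-S_N[\z_0]\|+\|S_N[\z_0]-s[\z_0]\|$ then gives strong continuity of $s$ at $\z_0$.

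For (ii), with $\z$ complex, the natural candidates for the partial derivatives are $\dot s_k[\z]=\sum_{n=0}^\infty\dot a_{n,k}[\z]$, where $\dot a_{n,k}$ denotes the $k$-th partial derivative of $a_n$. The obstacle here is that the hypothesis only bounds $\|a_n[\z]\|$, not $\|\dot a_{n,k}[\z]\|$, so I cannot apply Lemma~\ref{lemma:B0}(i) directly to the differentiated series. The clean way around this is to use a vector-valued Cauchy integral formula: on a small closed polydisc around any $\z_0\in D$, each $a_n$ is strongly holomorphic, hence given by a Cauchy integral over the distinguished boundary, and the difference quotient in the $k$-th variable can be written as a Cauchy integral of $a_n$ against a kernel that is uniformly bounded (in $\z$ near $\z_0$ and on the integration cycle) by a constant $c$ depending only on the polyradii. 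This produces a bound $\|\dot a_{n,k}[\z]\|\leqslant c\,u_n$ on a slightly smaller polydisc, with the \emph{same} summable sequence $u_n$ up to the constant $c$; similarly the difference quotients $\bigl\|\frac{a_n[\z+h_k]-a_n[\z]}{\mathrm h}-\dot a_{n,k}[\z]\bigr\|$ are dominated by $c'u_n$ and tend to $0$ with $\mathrm h$ for each $n$. Applying part (i) to the series $\sum_n\dot a_{n,k}$ shows $\dot s_k$ is continuous in $D$, and an $\varepsilon/3$ argument — splitting $\frac{s[\z+h_k]-s[\z]}{\mathrm h}-\dot s_k[\z]$ into a finite-sum part (holomorphic, so its difference quotient converges) plus two tails each $\leqslant C\sum_{j>N}u_j$ — shows $s$ has partial derivatives $\dot s_k$ at $\z_0$. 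Hence $s$ is strongly holomorphic in $D$. (If one prefers to avoid the Cauchy-integral detour, one may instead invoke the weak characterization of holomorphy summarized in Subsection~\ref{subappendix:b.complement}: for every continuous linear functional $\ell$ on $A$, $\ell\circ s=\sum_n\ell\circ a_n$ is a uniformly convergent series of scalar holomorphic functions, hence holomorphic, and $s$ is locally bounded, so $s$ is strongly holomorphic; this shortens the argument considerably, at the cost of relying on that characterization.) The only genuine subtlety is thus the passage from a bound on the functions to a bound on their derivatives, handled either by the Cauchy estimates or by the weak-holomorphy principle.
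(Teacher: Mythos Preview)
Your proposal is correct and follows essentially the same route as the paper: pointwise convergence via Cauchy sequences and completeness, an $\varepsilon/3$ argument for continuity, and for holomorphy the paper likewise invokes the vector-valued Cauchy estimates (developed later in Subsection~\ref{subappendix:b.complement}, see \eqref{b:rem}) to get uniform bounds $\|\dot a_{n,j}[\z]\|\leqslant c\,u_n$ and $\|(a_n)_{h_j}[\z]\|\leqslant c\,u_n$ with $c$ independent of $h$, then runs the $\varepsilon/3$ splitting on $\frac{s_N[\z+h_j]-s_N[\z]}{\mathrm h}-(\dot s_N)_j[\z]$. Your parenthetical alternative via the weak characterization of holomorphy is a legitimate shortcut the paper does not take here, though the paper does set up that equivalence in the same appendix.
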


\begin{proof}
For all $\z$, the series with general term
$\|a_n[\z]\|$ is dominated by the series with general term $u_n$
and therefore convergent. Then the norm inequality
$\|\sum_{n=N_1}^{N_2} a_n[\z]\| \leqslant \sum_{n=N_1}^{N_2} \|a_n[\z]\|$ 
and the completeness property of $A$ imply the convergence
in $A$ of the sequence
$\{s_N[\z]\doteq \sum_{n=0}^{N} a_n[\z]; \ N\in \N\} $
to a vector $s[\z]$ such that $\|s[\z]\| \leqslant M$.
Moreover,

\vskip 0.1cm

(i) For any given $\varepsilon$, let $N_{\varepsilon}$ be such that
$\sum_{p=N_{\varepsilon}}^{\infty} u_p \leqslant \frac{\varepsilon}{3}$.
If the functions $a_n[\z]$ are all continuous, for any given
$\z$ in $D$ there exists a neighborhood ${\cV}_{\z,\varepsilon}$ of
zero such that $\|s_{N_{\varepsilon}}[\z+h]-s_{N_{\varepsilon}}[\z]\|
\leqslant\frac{\varepsilon}{3}$ for all $h\in {\cV}_{\z,\varepsilon}$.
Then by writing the norm inequality
$\|s[\z+h]-s[\z]\| \leqslant \|s[\z+h]-s_{N_{\varepsilon}}[\z+h]\|+
\|s_{N_{\varepsilon}}[\z+h]-s_{N_{\varepsilon}}[\z]\|+
\|s_{N_{\varepsilon}}[\z]-s[\z]\|$, one sees that each term on the
r.h.s. of this inequality is bounded by $\frac{\varepsilon}{3}$,
and therefore the l.h.s. is majorized by $\varepsilon$, which proves the
continuity of the function $\z \mapsto s[\z]$.

\vskip 0.1cm

(ii) If $\z$ is complex and if the functions
$a_n[\z]$ are all holomorphic, one uses a similar
$N_{\varepsilon}$--argument with the holomorphic functions
$\frac{s_N[\z+h_j]-s_N[\z]}{{\rm h}} -{(\dot s_N)}_j[\z]$ (instead of
$s_N[\z+h]-s_N[\z]$), after having proved that
the series of vector--valued functions with general terms
$(a_n)_{h_j}[\z]\doteq \frac{a_n[\z+h_j]-a_n[\z]}{{\rm h}}$ 
and ${(\dot a_n)}_j[\z]$are uniformly majorized (term by term) 
by $c \, u_n$, where $c$ is some constant independent of $h$. The proof of
the latter relies on Cauchy--type inequalities for 
vector--valued holomorphic functions, which are given in 
Subsection \ref{subappendix:b.complement}
(see our argument after formula \eqref{b:rem}).
\end{proof}

\subsection{Products which preserve continuity and holomorphy} % Former B I
\label{subappendix:b.products}

Being given three complete normed spaces $A,B,C$, we shall denote
by $\Pi$ any mapping $(a,b) \mapsto c=\Pi (a,b)$ from the direct product
$A \times B$ into $C$, which is bilinear
with respect to the two variables $a$ and $b$ and bicontinuous in the
following sense; for all pairs $(a,b)$, there holds:
\beq
\|\Pi(a,b)\|_C \leqslant \|a\|_A \times \|b\|_B.
\label{b:1}
\eeq
(A general constant factor, different from one, could be inserted
on the r.h.s. of the latter, but it would be of no use in the applications
and can always be avoided by a suitable rescaling of the norms).
Then we have:

\skd

\begin{lemma}
\label{lemma:B1}
Being given any bilinear and bicontinuous mapping $\Pi$
from $A\times B$ into $C$:
\begin{itemize}
\item[\rm (i)] if $a[\z]$ and $b[\z]$ (are continuous functions
in $D$, respectively vector--valued in $A$ and $B$, then
the function $\z \mapsto \Pi (a[\z],b[\z])$
is continuous in $D$, as a vector--valued function with values in $C$. \\[-5pt]
\item[\rm (ii)] If $\z$ is complex and if $a[\z]$ and $b[\z]$ are
holomorphic functions in $D$, respectively vector--valued in
$A$ and $B$,  then the function $\z \mapsto \Pi (a[\z],b[\z])$
is holomorphic in $D$, as a vector--valued function with values in $C$.
\end{itemize}
\end{lemma}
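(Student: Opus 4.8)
\textbf{Proof plan for Lemma \ref{lemma:B1}.}

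The plan is to treat (i) and (ii) in turn, in each case reducing the statement about the product to the already-established facts that sums of continuous (resp.\ holomorphic) vector-valued functions are continuous (resp.\ holomorphic) and to the bicontinuity inequality \eqref{b:1}. First, for (i): fix $\z\in D$ and a neighborhood $V_\z$ of zero with $\z+V_\z\subset D$, on which $\|a[\z+h]\|_A$ and $\|b[\z+h]\|_B$ are uniformly bounded, say by a constant $N$ (this uses the standing assumption that the norms are locally bounded together with continuity, which forces local boundedness near $\z$). Then I would write the standard ``telescoping'' identity
\beq
\Pi(a[\z+h],b[\z+h]) - \Pi(a[\z],b[\z])
= \Pi\bigl(a[\z+h]-a[\z],\, b[\z+h]\bigr) + \Pi\bigl(a[\z],\, b[\z+h]-b[\z]\bigr),
\label{b:plan1}
\eeq
which is legitimate by bilinearity of $\Pi$. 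Applying \eqref{b:1} to each term gives
$\|\Pi(a[\z+h],b[\z+h]) - \Pi(a[\z],b[\z])\|_C \leqslant N\,\|a[\z+h]-a[\z]\|_A + \|a[\z]\|_A\,\|b[\z+h]-b[\z]\|_B$, and the right-hand side tends to zero as $h\to 0$ by continuity of $a$ and $b$ at $\z$. Since $\z$ was arbitrary, $\z\mapsto\Pi(a[\z],b[\z])$ is continuous in $D$.

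For (ii), I would first observe that holomorphy implies continuity (the partial derivatives are required continuous, and the difference quotients converge, so $a[\z]$ and $b[\z]$ are in particular continuous), so by part (i) the candidate function $\z\mapsto\Pi(a[\z],b[\z])$ is already continuous in $D$; it remains to exhibit its partial derivatives. I expect the natural guess to be the Leibniz formula
\beq
{\bigl(\Pi(a,b)\bigr)\dot{}}_{\,j}[\z] \doteq \Pi\bigl(\dot a_j[\z],\, b[\z]\bigr) + \Pi\bigl(a[\z],\, \dot b_j[\z]\bigr),
\label{b:plan2}
\eeq
each summand of which is continuous in $D$ by part (i) applied to the pairs $(\dot a_j,b)$ and $(a,\dot b_j)$. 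To verify that \eqref{b:plan2} is indeed the $j$-th partial derivative, divide \eqref{b:plan1} (now with $h$ replaced by the one-component increment $h_j$ with $j$-th entry $\mathrm h$) by $\mathrm h$ and subtract \eqref{b:plan2}; regrouping gives
\beq
\Pi\!\left(\frac{a[\z+h_j]-a[\z]}{\mathrm h} - \dot a_j[\z],\, b[\z+h_j]\right)
+ \Pi\!\left(\dot a_j[\z],\, b[\z+h_j]-b[\z]\right)
+ \Pi\!\left(a[\z],\, \frac{b[\z+h_j]-b[\z]}{\mathrm h} - \dot b_j[\z]\right),
\eeq
and \eqref{b:1} bounds the $C$-norm of this by $N\,\bigl\|\tfrac{a[\z+h_j]-a[\z]}{\mathrm h}-\dot a_j[\z]\bigr\|_A + \|\dot a_j[\z]\|_A\,\|b[\z+h_j]-b[\z]\|_B + \|a[\z]\|_A\,\bigl\|\tfrac{b[\z+h_j]-b[\z]}{\mathrm h}-\dot b_j[\z]\bigr\|_B$, which tends to $0$ as $\mathrm h\to 0$ by the holomorphy of $a$ and $b$ (the middle term also uses continuity of $b$). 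Hence the required limit holds and the partial derivatives \eqref{b:plan2} exist and are continuous, so $\z\mapsto\Pi(a[\z],b[\z])$ is holomorphic in $D$.

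I do not anticipate a serious obstacle here; the argument is the vector-valued analogue of the elementary proof that a product of continuous (resp.\ differentiable) functions is continuous (resp.\ differentiable), with ordinary multiplication replaced by $\Pi$ and the submultiplicative estimate \eqref{b:1} standing in for $|xy|=|x||y|$. The only point requiring a little care is the local uniform boundedness of $\|b[\z+h]\|_B$ used to extract the constant $N$: it follows because a continuous vector-valued function on $D$ is bounded on a neighborhood of each point (one may shrink $V_\z$ so that $\|b[\z+h]-b[\z]\|_B<1$ there, whence $\|b[\z+h]\|_B<\|b[\z]\|_B+1$). For the holomorphic case one could alternatively invoke Cauchy-type estimates as in Subsection \ref{subappendix:b.complement}, but the direct continuity argument suffices and keeps the proof self-contained.
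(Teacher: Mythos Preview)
Your proof is correct and follows essentially the same approach as the paper: a telescoping identity via bilinearity of $\Pi$, followed by termwise application of the bicontinuity bound \eqref{b:1}. The only cosmetic difference is in the three-term decomposition for (ii): the paper isolates the cross term $\Pi(a[\z+h_j]-a[\z],\,b[\z+h_j]-b[\z])/{\rm h}$ and bounds it by $|{\rm h}|$ times a product of difference quotients, whereas you instead produce the term $\Pi(\dot a_j[\z],\,b[\z+h_j]-b[\z])$ and invoke continuity of $b$; both regroupings work equally well.
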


\begin{proof}
(i) In view of the bilinearity of $\Pi$, of the norm inequality
in $C$, and of \eqref{b:1}, we have:
\beq
\begin{split}
& \|\Pi( a[\z +h],b[\z+h])-\Pi (a[\z],b[\z]) \|_{C} \\
& \quad\leqslant \|\Pi(a[\z +h],(b[\z+h]-b[\z]))\|_{C}
+\|\Pi( (a[\z +h]-a[\z]),b[\z]) \|_{C} \\
& \quad\leqslant \| a[\z +h]\|_{A}\times \|b[\z+h]-b[\z]\|_{B} +
\|a[\z +h]-a[\z]\|_{A}\times \|b[\z]\|_{B}.
\end{split}
\label{b:2}
\eeq
For every $\z\in D$, $h$ is submitted to vary in such a sufficiently
small neighborhood of zero that $\z+h$ remains in a
compact subset of $D$, so that
$\|a[\z+h]\|_{A}$ remains uniformly bounded. Then in view of the
continuity of $a[\z]$ and $b[\z]$, the last member of \eqref{b:2}
tends to zero with $h$, which implies the continuity of
$\Pi (a[\z],b[\z])$ in $D$.

\vskip 0.2cm

(ii) Let $\dot{a}_j[\z]\in A$
and $\dot{b}_j[\z]\in B$
denote respectively the partial derivatives of $a[\z]$
and $b[\z]$ with respect to $\z_j$ in the complex domain $D$.
We shall then show that the function
$\Pi (a[\z],b[\z])$
admits a partial derivative with respect to $\z_j$ in $D$, 
which is equal to
$ \Pi(a[\z],\dot{b}_j[\z]) +
\Pi(\dot{a}_j[\z],b[\z])$. In fact, 
in view of the bilinearity
of $\Pi$ and of the norm inequality in $C$, we can write:
\beq
\begin{split} 
& \left\|\frac{\Pi(a[\z +h_j],b[\z+h_j]) -\Pi(a[\z],b[\z])}{{\rm h}}
-(\Pi(a[\z],\dot{b}_j[\z]) + \Pi(\dot{a}_j[\z],b[\z]))\right\|_{C} \\
& \quad\leqslant \left\|\frac{\Pi(a[\z +h_j],b[\z]) - \Pi(a[\z],b[\z])}{{\rm h}}
-\Pi(\dot{a}_j[\z],b[\z])\right\|_{C} \\
& \qquad +\left\|\frac{\Pi(a[\z],b[\z+h_j]) - \Pi(a[\z],b[\z])}{{\rm h}}
-\Pi(a[\z],\dot{b}_j[\z])\right\|_{C} \\
& \qquad +\left\|\frac{\Pi((a[\z +h_j]-a[\z]),(b[\z+h_j]-b[\z]))}{{\rm h}}\right\|_{C}.
\end{split}
\label{b:3}
\eeq
By using again the bilinearity of $\Pi$ and applying
the bicontinuity inequality \eqref{b:1}
to each term of the r.h.s. of
\eqref{b:3}, we can majorize the latter by
\beq
\begin{split}
& \left\|\frac{a[\z +h_j] - a[\z]}{{\rm h}} - \dot{a}_j[\z]\right\|_{A}
\times \|b[\z]\|_{B} +\|a[\z]\|_{A} \times
\left\|\frac{b[\z +h_j] - b[\z]}{{\rm h}} - \dot{b}_j[\z]\right\|_{B} \\
& \quad +|{\rm h}| \times \left\|\frac{a[\z +h_j] - a[\z]}{{\rm h}}\right\|_{A} \times
\left\|\frac{b[\z +h_j] - b[\z]}{{\rm h}} \right\|_{B}.
\end{split}
\label{b:4}
\eeq
Then, for any given $\z$ in $D$, each of the three terms of
\eqref{b:4} tends to zero with ${\rm h}$, in view of the hypothesis that
$a[\z]$ and $b[\z]$ are holomorphic vector--valued functions in $D$
whose values are bounded in the norm in every compact subset of $D$;
this implies that for each $j$, ($1\leqslant j\leqslant m$), the l.h.s.
of \eqref{b:3} tends to zero with ${\rm h}$, and therefore that
the function $\Pi(a[\z],b[\z])$ is holomorphic in $D$.
\end{proof}

\noindent
An immediate corollary of the previous lemma is obtained
by taking $C=A$, and defining $B$ as the space $\cL(A)$
of bounded linear operators $\{L: a \mapsto L(a);\, a\in A\}$ on $A$
equipped with the usual norm $\|L\| = \sup_{a\in A} \frac{\|L(a)\|_A}{\|a\|_A}$.
The inequality
\beq
\|L(a)\|_{A} \leqslant \|L\|\times \|a\|
\label{b:4'}
\eeq
plays the role of \eqref{b:1} and there holds

\skd

\begin{lemma}
\label{lemma:B1'}
Let $a[\z]$ and $L[\z]$ denote functions in $D$ which are
respectively vector--valued in $A$ and
$\cL(A)$, and let $\z \mapsto L(a)[\z]=L[\z](a[\z])$
denote the image function which is vector--valued in $A$. Then:
\begin{itemize}
\item[(i)] if $a[\z]$ and $L[\z]$ are continuous in $D$, $L(a)[\z]$ is
continuous in $D$; \\[-5pt]
\item[(ii)] if $\z$ is complex and if $a[\z]$ and $L[\z]$ are
holomorphic in $D$, $L(a)[\z]$ is holomorphic in $D$.
\end{itemize}
\end{lemma}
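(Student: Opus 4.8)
The plan is to prove Lemma \ref{lemma:B1'} as a direct corollary of Lemma \ref{lemma:B1}, by identifying the evaluation map $(L,a)\mapsto L(a)$ as a particular instance of a bilinear bicontinuous map $\Pi$. First I would set $B=\cL(A)$, the space of bounded linear operators on $A$ equipped with the operator norm, and $C=A$, and define $\Pi:\cL(A)\times A\to A$ by $\Pi(L,a)=L(a)$. Linearity in $a$ is the linearity of each operator $L$; linearity in $L$ is immediate from the pointwise definition of the vector-space structure on $\cL(A)$, i.e. $(L_1+L_2)(a)=L_1(a)+L_2(a)$ and $(\mu L)(a)=\mu\,L(a)$. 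The bicontinuity inequality required in \eqref{b:1} is precisely the operator-norm inequality \eqref{b:4'}, namely $\|L(a)\|_A\leqslant\|L\|\,\|a\|_A$, which holds with constant one by the very definition of $\|L\|$ as a supremum.

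Next I would invoke Lemma \ref{lemma:B1} with these choices of $A$, $B=\cL(A)$, $C=A$, and $\Pi$ as above. The vector-valued function $\z\mapsto L[\z]$ is by hypothesis continuous (resp.\ holomorphic) in $D$ with values in $\cL(A)$, and $\z\mapsto a[\z]$ is continuous (resp.\ holomorphic) with values in $A$; moreover both have norms locally bounded on compact subsets of $D$, as required in the standing assumptions of Appendix \ref{appendix:b}. Part (i) of Lemma \ref{lemma:B1} then yields that $\z\mapsto\Pi(L[\z],a[\z])=L[\z](a[\z])=L(a)[\z]$ is continuous in $D$, and part (ii) yields that it is holomorphic in $D$ when $\z$ is complex. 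This gives exactly statements (i) and (ii) of Lemma \ref{lemma:B1'}.

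The only point that needs a word of care — and it is the one mild obstacle — is checking that $\cL(A)$ is itself a complete normed space over $\C$, since Lemma \ref{lemma:B1} is stated for such spaces; this is the standard fact that the operator norm is complete whenever the target $A$ is complete, and it can simply be cited. I would also remark that the local boundedness hypothesis on $\|L[\z]\|$ is automatic from continuity on $D$ (continuous scalar functions are locally bounded), so no extra assumption is smuggled in. With these observations the proof reduces to a single sentence applying the preceding lemma, and no further calculation is needed.

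\begin{proof}
We apply Lemma \ref{lemma:B1} with the target space $C=A$ and the second factor $B=\cL(A)$, the Banach space of bounded linear operators on $A$ with the operator norm (which is complete since $A$ is). Define the mapping $\Pi:\cL(A)\times A\to A$ by $\Pi(L,a)=L(a)$. It is bilinear: linearity in $a$ is the linearity of $L$, and linearity in $L$ follows from the pointwise vector-space operations on $\cL(A)$. It is bicontinuous in the sense of \eqref{b:1}, since by definition of the operator norm one has $\|\Pi(L,a)\|_A=\|L(a)\|_A\leqslant\|L\|\,\|a\|_A$, which is exactly inequality \eqref{b:4'}. Now $\z\mapsto L[\z]$ and $\z\mapsto a[\z]$ are, by hypothesis, continuous (resp., holomorphic when $\z$ is complex) vector--valued functions in $D$ with values in $\cL(A)$ and $A$ respectively, and their norms are locally bounded on compact subsets of $D$ (for $L[\z]$ this is automatic from continuity). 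Hence Lemma \ref{lemma:B1}(i) implies that $\z\mapsto\Pi(L[\z],a[\z])=L[\z](a[\z])=L(a)[\z]$ is continuous in $D$, which proves (i); and Lemma \ref{lemma:B1}(ii) implies that this same function is holomorphic in $D$ when $\z$ is complex, which proves (ii).
\end{proof}
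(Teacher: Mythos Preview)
Your proof is correct and follows exactly the approach the paper indicates: the text preceding Lemma~\ref{lemma:B1'} explicitly presents it as an immediate corollary of Lemma~\ref{lemma:B1} obtained by taking $C=A$, $B=\cL(A)$, and using the operator-norm inequality \eqref{b:4'} in the role of \eqref{b:1}. The only cosmetic discrepancy is the ordering of the factors (the paper's $\Pi$ is a map $A\times B\to C$, so with $B=\cL(A)$ one would write $\Pi(a,L)=L(a)$ rather than $\Pi(L,a)=L(a)$), but this is immaterial.
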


We shall now give applications of Lemma \ref{lemma:B1} to
particular structures which are relevant at several places 
of this paper.

\vskip 0.2cm

\noindent
\textbf{(1)} Let $A$ be the Hilbert space
$X_{\mu}\doteq L_2([0,\infty), \mu(R)\,\rmd R)$,
where $\mu$ denotes a given continuous and strictly positive
function on the interval $[0,\infty)$.
For any element $x =x(R)$ in $X_{\mu}$, we put:
\beq
\|x\|_{\mu}\doteq
\left[\int_0^\infty |x(R)|^2 \,\mu(R) \,\rmd R\right]^\frac{1}{2}.
\label{b:5}
\eeq
Let $B$ be the dual space $X_\frac{1}{\mu}$ of $X_{\mu}$,
$C = \C$, and $\Pi$ the bilinear form which associates with each pair
$(x,y)\in X_{\mu}\times X_\frac{1}{\mu}$ the ``quasi--scalar product"
\beq
\langle y,x\rangle \doteq
\int_0^\infty y(R) \, x(R)\,\rmd R.
\label{b:6}
\eeq
Note that it differs from the usual scalar
product which is sesquilinear,
but since $X_{\mu}$
is stable under the operation  $x\mapsto \overline{x}$,
where $\overline{x}$ denotes the complex conjugate function 
$R\mapsto \overline{x(R)}$ of $R \mapsto x(R)$, there still 
holds the Schwarz inequality:
\beq
 |\langle y,x \rangle| \leqslant \|x\|_{\mu} \times \|y\|_\frac{1}{\mu},
\label{b:7}
\eeq
which appears as a
bicontinuity inequality of the type \eqref{b:1}.
We can then state as a special case of Lemma \ref{lemma:B1}:

\skd

\begin{lemma}
\label{lemma:B2}
{\rm (i)} If $x[\z]$ and $y[\z]$ are continuous functions
in $D$, respectively vector--valued in
$X_{\mu}$ and $X_\frac{1}{\mu}$, the quasi--scalar--product--function
$\z \mapsto \langle y[\z],x[\z]\rangle$ is continuous in $D$.

{\rm (ii)} If $\z$ is complex and if $x[\z]$ and $y[\z]$ are
holomorphic functions in $D$, respectively vector--valued in
$X_{\mu}$ and $X_\frac{1}{\mu}$, the function
$\z \mapsto \langle y[\z],x[\z]\rangle$ is holomorphic in $D$.
\end{lemma}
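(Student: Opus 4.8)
\textbf{Plan for the proof of Lemma \ref{lemma:B2}.}
The statement is nothing but the specialization of Lemma \ref{lemma:B1} to the concrete choice $A = X_\mu$, $B = X_{1/\mu}$, $C = \C$, and $\Pi$ equal to the quasi--scalar product $(x,y)\mapsto\langle y,x\rangle$ defined by \eqref{b:6}. So the plan is simply to verify that this particular $\Pi$ fits into the abstract framework of Lemma \ref{lemma:B1}, and then invoke that lemma. Concretely, I would proceed as follows.

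First I would recall that $X_\mu$ and its dual $X_{1/\mu}$ are complete normed spaces over $\C$ (they are Hilbert spaces, with norms \eqref{b:5} and the analogous weighted norm), so the abstract hypotheses on $A$, $B$, $C$ in Lemma \ref{lemma:B1} are met. Second, I would check bilinearity of $\Pi$: the map $(x,y)\mapsto\langle y,x\rangle = \int_0^\infty y(R)\,x(R)\,\rmd R$ is manifestly linear in each argument separately (note that the pairing \eqref{b:6} is bilinear, not sesquilinear, precisely so that it matches the bilinearity required in \eqref{b:1}). Third, I would verify the bicontinuity inequality: this is exactly the Schwarz inequality \eqref{b:7}, $|\langle y,x\rangle|\leqslant\|x\|_\mu\,\|y\|_{1/\mu}$, which is of the form \eqref{b:1} with $\|\cdot\|_C$ the modulus on $\C$. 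Since $X_\mu$ is stable under complex conjugation $x\mapsto\overline x$, the usual Hilbert--space Schwarz inequality applied to the (sesquilinear) scalar product $\langle\overline y,x\rangle$ yields \eqref{b:7} verbatim, as already noted in the text preceding the lemma.

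With these three points in hand, Lemma \ref{lemma:B1}(i) immediately gives statement (i): if $\z\mapsto x[\z]$ and $\z\mapsto y[\z]$ are continuous with values in $X_\mu$ and $X_{1/\mu}$ respectively, then $\z\mapsto\Pi(x[\z],y[\z])=\langle y[\z],x[\z]\rangle$ is continuous with values in $\C$; and Lemma \ref{lemma:B1}(ii) gives statement (ii) in the same way for the holomorphic case when $\z$ is complex. There is essentially no obstacle here: the only thing to be slightly careful about is the bilinear (as opposed to sesquilinear) nature of the pairing, which is why the lemma is stated with $\langle y[\z],x[\z]\rangle$ rather than the ordinary scalar product — one must not conjugate $y[\z]$ or $x[\z]$, since conjugation is not a holomorphic operation. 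Once this is observed the proof is a one-line reduction to Lemma \ref{lemma:B1}.
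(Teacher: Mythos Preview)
Your proposal is correct and matches the paper's approach exactly: the paper introduces Lemma~\ref{lemma:B2} with the words ``We can then state as a special case of Lemma~\ref{lemma:B1}'' and gives no further proof, precisely because the setup in paragraph~(1) preceding the lemma has already recorded the bilinearity of the pairing \eqref{b:6} and the Schwarz inequality \eqref{b:7} as the required bicontinuity estimate \eqref{b:1}. Your explicit verification of these points and your remark on why the bilinear (rather than sesquilinear) pairing is needed for the holomorphic case are exactly the content the paper leaves implicit.
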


\vskip 0.2cm

\noindent
\textbf{(2)} We take for $A$ the Hilbert space $\widehat{X}_{\mu}$
(called ``HS--kernel space") of kernels
$K(R,R')$ on $X_{\mu}$ equipped with the Hilbert--Schmidt--type norm
\beq
\|K\|_{(\mu)}  \doteq
\left[\int_0^{+\infty} \rmd R
\int_0^{+\infty} \rmd R' \,\frac{\mu(R)}{\mu(R')}\,
|K(R,R')|^2\right]^\frac{1}{2}.
\label{b:8}
\eeq
As it can be seen by applying Schwarz's inequality, this definition
of the HS--norm of $K$ ensures that
the linear operator defined by the formula
\beq
(K x)(R) = \int_0^{+\infty} K(R,R') \ x(R') \,\rmd R', \nonumber
\eeq
associates with every element
$x$ of $X_{\mu}$ an element $Kx$ of $X_{\mu}$.
As in \textbf{(1)}, we take for $B$ the dual space
$\widehat{X}_\frac{1}{\mu} = L_2([0,\infty)\times [0,\infty),\,\mu^{-1}(R)\,\mu(R')\,\rmd R\, \rmd R')$
of $\widehat{X}_{\mu}$, $C=\C$, and we choose for $\Pi$
the corresponding quasi--scalar--product of pairs
$(K,K')\in\widehat{X}_{\mu}\times\widehat{X}_\frac{1}{\mu}$:
\beq
\prec\! K',K \!\succ \ \doteq  
\int_0^\infty \!\rmd R\int_0^{+\infty}\!\rmd R'\ K'(R,R') \, K(R,R'),
\label{b:9}
\eeq
which also satisfies Schwarz's inequality:
\beq
|\prec\! K',K\!\succ| \leqslant
\|K\|_{(\mu)} \times
\|K'\|_{\left(\frac{1}{\mu}\right)}.
\label{b:11}
\eeq
Note that by introducing the transposed kernel $K'_\mathrm{t}$ of $K'$,
which is such that $K'_\mathrm{t} \in \widehat X_{\mu}$,
one can rewrite the previous formulae \eqref{b:9} and \eqref{b:11}
in terms of the \emph{trace} formalism, namely:
\beq
\prec\! K',K\!\succ \ = \Tr\left[K K'_\mathrm{t}\right], \qquad
|\Tr\left[KK'_\mathrm{t}\right]| \leqslant \|K\|_{(\mu)} \times
\|K'_\mathrm{t}\|_{(\mu)}.
\label{b:11'}
\eeq
Then, by specializing Lemma \ref{lemma:B1} to the present case, we obtain

\skd

\begin{lemma}
\label{lemma:B3}
{\rm (i)} If $K[\z]$ and $K'[\z]$ are continuous {\rm HS}--operator--valued functions
in $D$, taking their values respectively in
$\widehat{X}_{\mu}$ and $\widehat{X}_\frac{1}{\mu}$,
the quasi--scalar--product--function
$\z \mapsto \ \prec\! K'[\z],K[\z]\!\succ \ = \Tr\left[K[\z] K'_\mathrm{t}[\z]\right]$
is continuous in $D$.

{\rm (ii)} If $\z$ is complex and
if $K[\z]$ and $K'[\z]$ are holomorphic {\rm HS}--operator--valued functions
in $D$, taking their values respectively in
$\widehat{X}_{\mu}$ and $\widehat{X}_\frac{1}{\mu}$,
then the function $\z \mapsto \ \prec\! K'[\z],K[\z] \!\succ \ = \Tr\left[K[\z] K'_\mathrm{t}[\z]\right]$
is holomorphic in $D$.
\end{lemma}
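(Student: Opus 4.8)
\textbf{Plan of proof for Lemma \ref{lemma:B3}.}
The statement is precisely the specialization of Lemma \ref{lemma:B1} to the triple of complete normed spaces $A=\widehat{X}_{\mu}$, $B=\widehat{X}_{1/\mu}$, $C=\C$, with $\Pi$ taken to be the quasi--scalar--product $\prec\!\cdot,\cdot\!\succ$ defined in \eqref{b:9}. So the plan is first to verify that $\Pi$ fits the hypotheses of Lemma \ref{lemma:B1}, namely that it is bilinear and bicontinuous in the sense of \eqref{b:1}, and then to invoke Lemma \ref{lemma:B1} verbatim. Bilinearity of $(K,K')\mapsto \prec\!K',K\!\succ$ is immediate from the defining double integral \eqref{b:9}, since the integrand is linear separately in $K$ and in $K'$ and the integration is a linear operation; one only has to note that the integral converges absolutely for $K\in\widehat{X}_{\mu}$, $K'\in\widehat{X}_{1/\mu}$, which is exactly what the Schwarz inequality \eqref{b:11} guarantees. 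The bicontinuity inequality required by \eqref{b:1}, i.e. $|\prec\!K',K\!\succ|\leqslant\|K\|_{(\mu)}\,\|K'\|_{(1/\mu)}$, is nothing but \eqref{b:11}, which has already been recorded (it follows from the ordinary Schwarz inequality applied to the product $K(R,R')\cdot K'(R,R')$ with the weight $\mu(R)/\mu(R')$ distributed as $\sqrt{\mu(R)/\mu(R')}$ on the first factor and its reciprocal on the second).

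Once these two facts are in place, the proof consists of two sentences. For part (i): if $\z\mapsto K[\z]$ and $\z\mapsto K'[\z]$ are continuous vector--valued functions on $D$ taking values in $\widehat{X}_{\mu}$ and $\widehat{X}_{1/\mu}$ respectively, then by Lemma \ref{lemma:B1}(i) the scalar--valued function $\z\mapsto\Pi(K[\z],K'[\z])=\ \prec\!K'[\z],K[\z]\!\succ$ is continuous on $D$; the identification with $\Tr[K[\z]K'_\mathrm{t}[\z]]$ is just the rewriting \eqref{b:11'} in the trace formalism. For part (ii): if moreover $\z$ is complex and both functions are holomorphic on $D$, then Lemma \ref{lemma:B1}(ii) gives that $\z\mapsto\ \prec\!K'[\z],K[\z]\!\succ\ =\Tr[K[\z]K'_\mathrm{t}[\z]]$ is holomorphic on $D$.

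\textbf{Expected main obstacle.} There is essentially no hard step here; the content of the lemma is entirely front--loaded into Lemma \ref{lemma:B1} and the Schwarz inequality \eqref{b:11}. The only point deserving a word of care is the bookkeeping identity between the quasi--scalar--product $\prec\!K',K\!\succ$ of \eqref{b:9} and the trace $\Tr[K K'_\mathrm{t}]$ of \eqref{b:11'}: one should check that the transposition $K'\mapsto K'_\mathrm{t}$ indeed maps $\widehat{X}_{1/\mu}$ into $\widehat{X}_{\mu}$ (so that the trace is well--defined as a Hilbert--Schmidt pairing in the sense used throughout the paper), which is immediate from the symmetric form of the weight $\mu(R)/\mu(R')$ under the exchange $R\leftrightarrow R'$. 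With that remark dispatched, the proof is a direct citation of Lemma \ref{lemma:B1} and nothing more needs to be written out.
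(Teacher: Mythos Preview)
Your proposal is correct and matches the paper's approach exactly: the paper presents Lemma \ref{lemma:B3} with the single introductory sentence ``Then, by specializing Lemma \ref{lemma:B1} to the present case, we obtain'' and gives no further proof, relying precisely on the bilinearity and the Schwarz inequality \eqref{b:11} that you identified as the required bicontinuity \eqref{b:1}. Your additional remarks on the trace identification \eqref{b:11'} and the transposition $K'\mapsto K'_\mathrm{t}$ are accurate and simply make explicit what the paper leaves implicit.
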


\vskip 0.2cm

\noindent
\textbf{(3)} Let A=B=C denote the Hilbert space of HS--kernels
$\widehat{X}_{\mu}$, and $\Pi$ denote the composition of kernels:
for any pair $(K_1,K_2)$ in $\widehat{X}_{\mu}\times\widehat{X}_{\mu}$,
the kernel $K= K_1K_2\doteq K_1 \circ K_2$, defined by
\beq
K(R,R')\doteq \int_0^{+\infty} K_1(R,R'') \ K_2(R'',R')\,\rmd R'',
\label{b:13}
\eeq
belongs to $\widehat{X}_{\mu}$. In fact, the proof of the standard
HS--norm inequality, which corresponds to the choice $\mu=1$ (see, e.g.,
\cite{Smithies} and references therein) can be directly reproduced 
for the case of $\widehat{X}_{\mu}$, with an arbitrary function 
$\mu$ ($\mu>0$), namely;
\beq
\|K\|_{(\mu)} \leqslant
\|K_1\|_{(\mu)}\times \|K_2\|_{(\mu)}.
\label{b:130}
\eeq
(To check it, just introduce the ``renormalized" kernels
$(K_j)_{\rm ren}(R,R')= \sqrt{\frac{\mu(R)}{\mu(R')}} \, K_j(R,R')$, $j=1,2$,
which are such $\|(K_j)_{\rm ren}\|_{(1)} = \|K_j\|_{(\mu)}$).
Since \eqref{b:130} is a bicontinuity inequality of the type
\eqref{b:1}, Lemma \ref{lemma:B1} applies and yields

\begin{lemma}
\label{lemma:B6}
{\rm (i)} If $K_1[\z]$ and $K_2[\z]$
are continuous HS--operator--valued functions
in $D$, with values in $\widehat{X}_{\mu}$, then
the composition--product--function
$K[\z]=K_1[\z] \circ K_2[\z]$
is continuous in $D$ as an operator--valued function
with values in $\widehat{X}_{\mu}$.

{\rm (ii)} If $\z$ is complex, and if $K_1[\z]$ and $K_2[\z]$
are holomorphic HS--operator--valued functions
in $D$, with values in $\widehat{X}_{\mu}$, then
the composition--product--function
$K[\z]= K_1[\z] \circ K_2[\z]$
is holomorphic in $D$ as an operator--valued function
with values in $\widehat{X}_{\mu}$.
\end{lemma}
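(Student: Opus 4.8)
The plan is to specialize Lemma \ref{lemma:B1} with the choices $A=B=C=\widehat{X}_{\mu}$ and $\Pi$ equal to the kernel composition $(K_1,K_2)\mapsto K_1\circ K_2$, so that the entire content of Lemma \ref{lemma:B6} reduces to verifying that these data satisfy the hypotheses of Lemma \ref{lemma:B1}. The only nontrivial hypothesis is the bicontinuity inequality \eqref{b:1}, i.e.\ the Hilbert--Schmidt norm bound \eqref{b:130}: $\|K_1\circ K_2\|_{(\mu)}\leqslant\|K_1\|_{(\mu)}\,\|K_2\|_{(\mu)}$. Bilinearity of the composition is immediate from \eqref{b:13}, and once \eqref{b:130} is in hand, Lemma \ref{lemma:B1} delivers both statements (i) and (ii) verbatim.

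First I would establish \eqref{b:130} by reduction to the classical $\mu=1$ case. Introduce the renormalized kernels $(K_j)_{\mathrm{ren}}(R,R')\doteq\sqrt{\mu(R)/\mu(R')}\,K_j(R,R')$ for $j=1,2$. A direct check from \eqref{b:8} gives $\|(K_j)_{\mathrm{ren}}\|_{(1)}=\|K_j\|_{(\mu)}$, and the multiplicative factors $\sqrt{\mu(R)/\mu(R'')}$ and $\sqrt{\mu(R'')/\mu(R')}$ telescope under the convolution \eqref{b:13}, so that $(K_1\circ K_2)_{\mathrm{ren}}=(K_1)_{\mathrm{ren}}\circ (K_2)_{\mathrm{ren}}$. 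Hence \eqref{b:130} for general $\mu>0$ follows from the standard Hilbert--Schmidt composition inequality $\|L_1\circ L_2\|_{(1)}\leqslant\|L_1\|_{(1)}\,\|L_2\|_{(1)}$ applied to the renormalized kernels; the latter is itself an elementary consequence of the Schwarz inequality in $L^2([0,\infty)\times[0,\infty))$ (it is the case $\mu=1$ treated in \cite{Smithies}). This is the only estimate to carry out, and it is short.

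With \eqref{b:130} verified, the remainder is automatic: Lemma \ref{lemma:B1}(i) applied to $\Pi=\circ$ yields that $\z\mapsto K_1[\z]\circ K_2[\z]$ is continuous in $D$ when $K_1,K_2$ are continuous $\widehat{X}_{\mu}$-valued functions, and Lemma \ref{lemma:B1}(ii) yields holomorphy in $D$ (for complex $\z$) when $K_1,K_2$ are holomorphic; moreover Lemma \ref{lemma:B1} already incorporates the bookkeeping on partial derivatives, so the composed function's partial derivative with respect to $\z_j$ is $K_1[\z]\circ\dot{K}_{2,j}[\z]+\dot{K}_{1,j}[\z]\circ K_2[\z]$. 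I do not anticipate a genuine obstacle here: the statement is a pure corollary of the general bilinear-bicontinuous-product lemma, and the one computational point — checking that the Hilbert--Schmidt norm on $\widehat{X}_{\mu}$ is submultiplicative under kernel composition — is handled cleanly by the renormalization trick. If any care is needed it is only to note that the composition integral \eqref{b:13} is absolutely convergent for a.e.\ $R$, which is itself a byproduct of the same Schwarz-inequality argument used to prove \eqref{b:130}.
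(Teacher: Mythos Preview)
Your proposal is correct and follows essentially the same approach as the paper: the paper also takes $A=B=C=\widehat{X}_{\mu}$ with $\Pi$ the kernel composition, establishes the submultiplicativity \eqref{b:130} via the same renormalization trick $(K_j)_{\mathrm{ren}}(R,R')=\sqrt{\mu(R)/\mu(R')}\,K_j(R,R')$ reducing to the classical $\mu=1$ case, and then invokes Lemma~\ref{lemma:B1} directly. Your write-up is slightly more detailed (explicitly noting the telescoping and the a.e.\ convergence of \eqref{b:13}), but the argument is the same.
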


\vskip 0.2cm

\noindent
\textbf{(4)} Taking $A=\widehat{X}_{\mu}$, $B=\C$, $C=\widehat{X}_{\mu}$, and
the product $\Pi (K, \lambda) = \lambda  K$,
($K\in \widehat{X}_{\mu},\,\lambda\in\C$), which is such that
$\|\lambda K\| = |\lambda| \, \|K\|$, we immediately obtain from
Lemma \ref{lemma:B1} the continuity (resp., holomorphy) property of
any product $\lambda[\z] K[\z]$ of continuous (resp., holomorphic) 
functions $\z\mapsto K[\z]\in \widehat{X}_{\mu}$, $\z \mapsto \lambda[\z]\in \C$.
By combining this property with the result of Lemma \ref{lemma:B6}
applied iteratively to any power
$K^n[\z]\doteq K[\z]\circ\cdots\circ  K[\z]$ ($n$ factors), we obtain

\skd

\begin{lemma}
\label{lemma:B7}
{\rm (i)} If $\z\to K[\z]$ is a continuous {\rm HS}--operator--valued function
in $D$, taking its values in $\widehat{X}_{\mu}$,
then any polynomial function of the form
$\z\mapsto P_n(K)[\z] = \sum_{j=1}^{n} a_j[\z]
\, K^j[\z]$, where the $a_j\!$'s are complex--valued continuous functions in $D$,
is a continuous {\rm HS}--operator--valued function in $D$, with values in
$\widehat{X}_{\mu}$.

{\rm (ii)} If $\z$ is complex, and if $\z \mapsto K[\z]$ is a holomorphic {\rm HS}--operator--valued function
in $D$, taking its values in $\widehat{X}_{\mu}$,
then any polynomial function of the form
$\z\mapsto P_n(K)[\z] = \sum_{j=1}^{n} a_j[\z]\, K^j[\z]$,
where the $a_j\!$'s are holomorphic functions in $D$,
is a holomorphic {\rm HS}--operator--valued function
in $D$, with values in $\widehat{X}_{\mu}$.
\end{lemma}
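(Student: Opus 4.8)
The plan is to deduce Lemma~\ref{lemma:B7} from Lemma~\ref{lemma:B6} and from the elementary scalar--multiplication and finite--sum properties already recorded. The structure of the argument is dictated by the structure of the polynomial $P_n(K)[\z] = \sum_{j=1}^n a_j[\z]\, K^j[\z]$: each summand is a product of a scalar--valued function $\z\mapsto a_j[\z]$ with an iterated composition power $\z\mapsto K^j[\z]$, and the whole thing is then a finite sum of such terms. So I would establish the statement in three layers, first for the powers $K^j[\z]$, then for the scalar multiples $a_j[\z]\,K^j[\z]$, then for the sum.

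First I would handle the powers. By definition $K^j[\z] = K[\z]\circ\cdots\circ K[\z]$ ($j$ factors in $\widehat X_\mu$). Applying Lemma~\ref{lemma:B6} with $K_1[\z]=K_2[\z]=K[\z]$ gives that $K^2[\z]$ is continuous (resp.\ holomorphic) as a $\widehat X_\mu$--valued function whenever $K[\z]$ is. Then, inductively, if $K^{j-1}[\z]$ has the desired regularity, Lemma~\ref{lemma:B6} applied to the pair $(K^{j-1}[\z], K[\z])$ yields the same regularity for $K^{j}[\z] = K^{j-1}[\z]\circ K[\z]$. This takes care of all $j$ with $1\leqslant j\leqslant n$; for completeness one notes $K^1[\z]=K[\z]$ is regular by hypothesis. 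Next I would invoke the product operation in item~\textbf{(4)} of Subsection~\ref{subappendix:b.products}: taking $A=\widehat X_\mu$, $B=\C$, $C=\widehat X_\mu$, and $\Pi(K,\lambda)=\lambda K$ with $\|\lambda K\|=|\lambda|\,\|K\|$, Lemma~\ref{lemma:B1} shows that the pointwise product $\z\mapsto a_j[\z]\,K^j[\z]$ is continuous (resp.\ holomorphic), since both $\z\mapsto a_j[\z]\in\C$ and $\z\mapsto K^j[\z]\in\widehat X_\mu$ are, by hypothesis on the $a_j$ and by the preceding inductive step. Finally, since any finite sum of continuous (resp.\ holomorphic) vector--valued functions is continuous (resp.\ holomorphic) --- this is the remark made right before Lemma~\ref{lemma:B0}, obtained directly from the norm inequality in $\widehat X_\mu$ --- the sum $\sum_{j=1}^n a_j[\z]\,K^j[\z]=P_n(K)[\z]$ inherits the regularity, which is exactly statements (i) and (ii).

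There is really no serious obstacle here; the lemma is a bookkeeping corollary of the multiplicative stability results already proved. The only point requiring a word of care is the induction on the composition powers: one must check that each intermediate power $K^{j-1}[\z]$ indeed lands in $\widehat X_\mu$ (not merely in the space of bounded operators), so that Lemma~\ref{lemma:B6} --- whose hypothesis is that \emph{both} factors take values in $\widehat X_\mu$ --- is applicable at the next step. This is immediate from the submultiplicativity \eqref{b:130} of the $\widehat X_\mu$--norm, which guarantees $\|K^{j}[\z]\|_{(\mu)}\leqslant\|K[\z]\|_{(\mu)}^{j}$ and hence that $K^j[\z]\in\widehat X_\mu$ with locally bounded norm whenever $K[\z]$ has locally bounded norm. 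With that observation in place, the chain of implications closes and the proof is complete. I would write it up in essentially the order above: powers by induction via Lemma~\ref{lemma:B6}, then scalar multiples via item~\textbf{(4)} and Lemma~\ref{lemma:B1}, then the finite sum via the norm inequality in $\widehat X_\mu$.
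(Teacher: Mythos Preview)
Your proposal is correct and follows exactly the paper's own argument: the paper's proof is given in the text of item~\textbf{(4)} immediately preceding the lemma, where it is stated that Lemma~\ref{lemma:B6} applied iteratively yields the regularity of the powers $K^j[\z]$, and that combining this with the scalar--multiplication instance of Lemma~\ref{lemma:B1} gives the result (the finite--sum step being left implicit). Your write--up is if anything slightly more careful than the paper's, in that you make explicit the submultiplicativity \eqref{b:130} ensuring the intermediate powers remain in $\widehat{X}_\mu$.
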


\subsection{Passage from functions depending
continuously or holomorphically of parameters $\boldsymbol{\z}$
to continuous or holomorphic vector--valued functions of $\boldsymbol{\z}$} % Former B II
\label{subappendix:b.passage}

We now introduce for each strictly positive function $\mu$
and each positive number $p$, the functional space
$X_{\mu}^{(p)} \doteq L^p([0,\infty), \mu(R)\,\rmd R)$, 
of all functions $f(R)$ (defined almost everywhere on 
$[0,+\infty)$) with norm
\beq
\|f\|_{\mu,p} \doteq
\left[\int_0^{+\infty} |f(R)|^p \mu (R)\,\rmd R\right]^\frac{1}{p},
\label{b:14}
\eeq
{\bf Spaces $\boldsymbol{\cC(D,\mu,p)}$:}
Keeping the same notations as in \ref{subappendix:b.products},
we introduce ${\cC}(D, {\mu},p)$ as the space of all functions
$(\z,R) \mapsto f(\z; R)$ which are defined on 
$D\times [0,+\infty)$ \emph{for almost every (a.e.) $R$, namely
up to a subset of measure zero in $\{R\in [0,+\infty)\}$}, and 
which enjoy the following property.
For each function $f$, there exists a positive function $M(R)$ in $X_{\mu}^{(p)}$
such that the following uniform majorization holds, for all 
$\z\in D$ and a.e. $R\in [0,+\infty)$:
\beq
|f(\z;R)|\leqslant  M(R).
\label{b:15}
\eeq
It follows from this definition that every function $f$ in
${\cC}(D, {\mu},p)$ defines a vector--valued function
$\z \mapsto f[\z](\cdot)= f(\z;\cdot)$ in $D$, which takes its values in
$X_{\mu}^{(p)}$, since (in view of \eqref{b:15}), one has
for all $\z\in D$: $\|f[\z]\|_{\mu,p} \leqslant \|M\|_{\mu,p}$.
We shall now prove:

\skd

\begin{lemma}
\label{lemma:B8}
{\rm (i)} Let $(\z,R) \mapsto f(\z; R)$ be a function in
${\cC}(D, {\mu},p)$ such that for a.e. $R$, $f(\cdot;R)$ is a continuous
function of $\z$ in $D$. Then there exists a continuous vector--valued
function $\z \mapsto f[\z]$ in $D$ which takes its values in
$X_{\mu}^{(p)}$ and such that $f[\z](R) = f(\z;R)$.

{\rm (ii)} Let $D$ be a domain of the space $\C^m$ 
of the complex variables $\z=(\z_1,\ldots,\z_m)$, and let
$(\z,R) \mapsto f(\z; R)$ be a function in
${\cC}(D, {\mu},p)$ such that for a.e. $R$, $f(\cdot;R)$ is a holomorphic
function of $\z$ in $D$. Then there exists a holomorphic vector--valued
function $\z \mapsto f[\z]$ in $D$ which takes its values in
$X_{\mu}^{(p)}$ and such that $f[\z](R) = f(\z;R)$.
\end{lemma}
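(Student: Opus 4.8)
The plan is to reduce everything to the two classical facts about integrals depending on a parameter: continuity under the integral sign and Morera/Cauchy for holomorphy under the integral sign, both made quantitative by the uniform domination \eqref{b:15}. The key point is that the $X_\mu^{(p)}$-valued statements are \emph{stronger} than the corresponding pointwise-in-$R$ statements, so the domination hypothesis built into the definition of $\cC(D,\mu,p)$ must be used to upgrade pointwise continuity/holomorphy in $\z$ to strong ($X_\mu^{(p)}$-norm) continuity/holomorphy.

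For part (i), I would argue as follows. Fix $\z_0\in D$ and a compact neighbourhood $K\subset D$ of $\z_0$. For $\z\in K$ one has $|f(\z;R)-f(\z_0;R)|^p \leqslant (2M(R))^p$, and $(2M)^p\in X_\mu^{(1)}$, i.e. it is $\mu\,\rmd R$-integrable. For a.e.\ $R$, continuity of $f(\cdot;R)$ gives $f(\z;R)\to f(\z_0;R)$ as $\z\to\z_0$. Hence by the dominated convergence theorem applied along any sequence $\z_n\to\z_0$,
\beq
\|f[\z_n]-f[\z_0]\|_{\mu,p}^p = \int_0^{+\infty} |f(\z_n;R)-f(\z_0;R)|^p\,\mu(R)\,\rmd R \longrightarrow 0, \nonumber
\eeq
and since this holds for every such sequence, $\lim_{\z\to\z_0}\|f[\z]-f[\z_0]\|_{\mu,p}=0$. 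Because $\z_0$ was arbitrary, $\z\mapsto f[\z]$ is continuous in $D$ with values in $X_\mu^{(p)}$, and $f[\z](R)=f(\z;R)$ holds by construction.

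For part (ii), the cleanest route is to establish the partial derivatives directly. It suffices (by the single-variable case applied in each slot $\z_j$ with the other variables frozen) to treat $m=1$; write $\z$ for the single complex variable. For a.e.\ $R$, $f(\cdot;R)$ is holomorphic in $D$, so $\partial_\z f(\z;R)$ exists, and I will check $\partial_\z f(\cdot;R)\in\cC(D,\mu,p)$ — this requires a local domination of $\partial_\z f(\z;R)$, which one gets from the Cauchy integral formula $\partial_\z f(\z;R)=\frac{1}{2\pi\rmi}\oint_{|w-\z|=r} \frac{f(w;R)}{(w-\z)^2}\,\rmd w$ on a disc contained in $D$, yielding $|\partial_\z f(\z;R)|\leqslant M(R)/r$ uniformly on a slightly smaller disc; the same estimate also controls the difference quotients $\frac{f(\z+h;R)-f(\z;R)}{h}$ uniformly in small $h$, again by Cauchy's formula. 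Then, fixing $\z_0$, for a.e.\ $R$ one has $\frac{f(\z_0+h;R)-f(\z_0;R)}{h}\to \partial_\z f(\z_0;R)$ as $h\to 0$, with a common integrable $p$-th-power majorant $(2M(R)/r)^p$; dominated convergence again gives
\beq
\left\|\frac{f[\z_0+h]-f[\z_0]}{h} - (\partial_\z f)[\z_0]\right\|_{\mu,p}\xrightarrow[h\to 0]{} 0, \nonumber
\eeq
so $\z\mapsto f[\z]$ has $X_\mu^{(p)}$-valued partial derivative $(\partial_\z f)[\z]$, which is itself continuous in $D$ by part (i). This is exactly the definition of holomorphy for a vector-valued function recalled at the start of Appendix \ref{appendix:b}.

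The only genuinely delicate point — and the step I would spell out most carefully — is the \emph{local} domination of the difference quotients and of $\partial_\z f$ by a fixed function of $X_\mu^{(p)}$: one must observe that for the Cauchy estimate the radius $r$ can be chosen uniformly over a compact subset of $D$, so that the majorant $2M(R)/r$ is independent of $\z$ on that subset, while measurability of $\partial_\z f(\cdot;R)$ follows from its realization as a limit of difference quotients. Everything else is routine dominated convergence, and the reduction from $m$ variables to one variable per slot is the standard Hartogs-type remark recalled in Subsection \ref{subappendix:b.complement}. (Alternatively, part (ii) can be obtained from part (i) via Morera's theorem — checking $\oint_{\partial T} f[\z]\,\rmd\z = 0$ in $X_\mu^{(p)}$ for small triangles $T$ by Fubini, using $\int_{\partial T}\!\int_0^\infty |f(\z;R)|\,\mu(R)\,\rmd R\,|\rmd\z|<\infty$ — but the direct differentiation argument matches the definition used in this appendix more transparently.)
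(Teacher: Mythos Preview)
Your proof is correct and follows essentially the same route as the paper's: part (i) is identical (dominated convergence with majorant $(2M)^p$), and part (ii) likewise uses the Cauchy integral formula to control $\partial_\z f$ and the difference quotients by $M(R)/r$. The only minor variation is that the paper writes down an explicit Cauchy representation for the remainder, $g_h(\z;R)=\frac{h}{2\pi\rmi}\oint_{\gamma_r}\frac{f(\z';R)}{(\z'-\z)^2(\z'-\z-h)}\,\rmd\z'$, which yields the quantitative bound $|g_h(\z;R)|\leqslant 2|h|M(R)/r^2$ and hence $\|g_h[\z]\|_{\mu,p}\leqslant \frac{2|h|}{r^2}\|M\|_{\mu,p}\to 0$ directly, without a second appeal to dominated convergence; your explicit verification that $(\partial_\z f)[\z]$ is continuous via part (i) is a point the paper leaves implicit.
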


\begin{proof}
(i) $\z$ being fixed in $D$, one considers the family of functions
$R\mapsto f_{\z,h}(R) \doteq [f(\z+h;R)-f(\z;R)]$ which, in view of the
uniform majorization \eqref{b:15} (true for a.e. $R$), satisfy
for all $h$ such that $\z+h \in D$  the following uniform bound:
\beq 
\|f_{\z,h}\|_{\mu,p}^p = \int_0^{+\infty} |f_{\z,h}(R)|^p \, \mu(R)\,\rmd R
\leqslant 2^p \int_0^{+\infty} M(R)^p \, \mu(R) \,\rmd R = 2^p \|M\|_{\mu,p}^p.
\label{b:16}
\eeq
Since, by the continuity assumption, one has $\lim_{h\to 0} f_{\z,h}(R) =0$
for a.e. $R$, it then follows from Lebesgue--Fatou's theorem
that the integral on the l.h.s. of \eqref{b:16}, and therefore
$\|f_{\z,h}\|_{\mu,p}= \|f[\z+h]-f[\z]\|_{\mu,p}$, tends to zero
with $h$. Since this holds for all $\z \in D$,
this proves the continuity in $D$ of the function
$ f[\z](R) = f(\z;R)$ as a vector--valued function 
with values in $X_{\mu}^{(p)}$.

(ii) Let $\z$ be fixed in the complex domain $D$ 
at a ``$j$\emph{--distance}" $r_j(\z)$
from the boundary of $D$ (by $j$\emph{--distance}, we mean the distance 
of $\z$ from the boundary of the section of $D$ by
the complex one--dimensional submanifold passing at $\z$ and parallel
to the $\z_j$--plane). We then introduce
the following family of functions
(labeled by $h_j=(0,\ldots,0,{\rm h},0,\dots,0)$; $1\leqslant j\leqslant m$,
with the condition $\z+ h_j \in D$):
\beq
R\longmapsto g_{h_j}[\z](R) \doteq \left(\frac{f(\z+h_j;R)-f(\z;R)}{{\rm h}}-
\frac{\partial f}{\partial \z_j}(\z;R)\right). \nonumber
\label{derivative}
\eeq
In the latter, the derivative
$\frac{\partial f}{\partial \z_j}$ of the holomorphic function $f$ satisfies
for a.e. $R$ a Cauchy integral representation of the form:
\beq
\frac{\partial f}{\partial \z_j}(\z;R) =
\frac{1}{2\pi\rmi}
\int_{\gamma_r} \frac{f(\z';R)}{(\z'_j-\z_j)^2} \,\rmd\z'_j,
\label{b:17}
\eeq
where
$\gamma_{r}$ denotes the circle centered at $\z_j$ with radius $r$,
and where $\z'$ has all its components $\z'_k$, $k\neq j$, 
respectively equal to $\z_k$.
In view of the uniform upper bound \eqref{b:15}, and since $r$ can be chosen arbitrarily
close to $r_j(\z)$, there holds the following bound (for a.e. $R$):
\beq
\left|\frac{\partial f}{\partial \z_j}(\z;R)\right| \leqslant
\frac{M(R)}{r_j(\z)}.
\label{b:18}
\eeq
Now, by taking $h_j$ such that $|{\rm h}|< r$, we can write a
Cauchy integral representation on $\gamma_r$ for the holomorphic
function $\z \mapsto g_{h_j}[\z](R)$ (for a.e. value of $R$); by combining
Eq. \eqref{b:17} with the usual Cauchy representation for
$ f(\z)$ and $f(\z+h_j)$, one obtains:
\beq
g_{h_j}[\z](R)=
\frac{{\rm h}}{2\pi\rmi}
\int_{\gamma_{r}} \frac{f(\z';R)}{(\z'_j-\z_j)^2(\z'_j-\z_j-{\rm h})} \,\rmd\z'_j.
\label{b:19}
\eeq
By restricting ${\rm h}$ to vary in a neighborhood of zero such as, e.g.,
$\{{\rm h}\in \C\,:\, |{\rm h}| \leqslant \frac{r}{2}\}$, one obtains a uniform
majorization for the r.h.s. of Eq. \eqref{b:19} which yields for a.e. $R$
(since $r$ can be chosen arbitrarily close to $r_j(\z)$): 
\beq
|g_{h_j}[\z](R)|\leqslant \frac{2\,|{\rm h}|\, M(R)}{[r_j(\z)]^2}.
\label{b:20}
\eeq
From \eqref{b:18} and \eqref{b:20} one deduces that:

\vskip 0.1cm

(a) there exists (for each $j$) a vector--valued function
$\z \mapsto \dot{f}_j[\z](R) \doteq \frac{\partial f}{\partial \z_j}(\z;R)$
taking its values in $X_{\mu}^{(p)}$ and such that (for all $\z\in D$):
$\left\|\dot{f}_j[\z]\right\|_{\mu,p} \leqslant \frac{\|M\|_{\mu,p}}{r_j(\z)}$.

\vskip 0.1cm

(b) There holds:
$\left\|g_{h_j}[\z]\right\|_{\mu,p}\leqslant \frac{2|{\rm h}|}{[r_j(\z)]^2}
\|M\|_{\mu,p}$, which proves that the vector--valued function
$\z \mapsto f[\z](R) \doteq f(\z;R) $ is such that
$\left\|\frac{f[\z+h_j]-f[\z]}{{\rm h}}-\dot{f}_j[\z]\right\|_{\mu,p}$ 
tends to zero with ${\rm h}$ for all $\z$ in $D$.
We have thus proved that the function
$\z \mapsto f[\z](R) $ is holomorphic in $D$
as a vector--valued function taking its values in $X_{\mu}^{(p)}$.
\end{proof}

In the text, we shall have to apply
directly Lemma \ref{lemma:B8} for the case $p=2$,
and with a weight--function of the form $\mu(R) =w(R)\,e^{2\alpha R}$,
$w$ being specified in Subsection \ref{subse:properties-L}.

We also need to apply the previous result
to a more involved situation, which is described below in
Lemma \ref{lemma:B10}. For this purpose, we shall first state 
the following property, which appears as a variant of Lemma \ref{lemma:B8}
for the case $p=1$ (we also need this result only for $\mu(R) =1$).

\skd

\begin{lemma}
\label{lemma:B9}
{\rm (i)} Let $(\z,R) \mapsto f(\z;R)$ be a function in
${\cC}(D,1 ,1)$ such that for a.e. $R$, $f(\cdot;R)$ 
is a continuous function of $\z$ in $D$. Then
the integral $I(\z)\doteq \int_0^{+\infty} f(\z;R)\,\rmd R$
is continuous in $D$.

{\rm (ii)} Let $\z$ be complex, $D$ a domain of $\C^m$ and let
$(\z,R) \mapsto f(\z; R)$ be a function in
${\cC}(D,1,1 )$ such that for a.e. $R$, 
$f(\cdot;R)$ is a holomorphic function of $\z$ in $D$. Then
the integral $I(\z)\doteq \int_0^{+\infty} f(\z;R)\,\rmd R$
is holomorphic in $D$.
\end{lemma}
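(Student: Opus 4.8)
The plan is to reduce both statements to the vector-valued framework already established in Lemma \ref{lemma:B8}, combined with the elementary (scalar) bicontinuity pairing of Lemma \ref{lemma:B2}. First I would observe that a function $f$ in $\cC(D,1,1)$ defines, by the very definition of that class, a vector-valued function $\z\mapsto f[\z](\cdot)=f(\z;\cdot)$ in $D$ taking its values in $X_1^{(1)}=L^1([0,+\infty),\rmd R)$: indeed the uniform majorant $M\in L^1$ gives $\|f[\z]\|_{1,1}\leqslant\|M\|_{1,1}$ for all $\z\in D$. Under the additional hypothesis that for a.e.\ $R$ the map $\z\mapsto f(\z;R)$ is continuous (resp., holomorphic), Lemma \ref{lemma:B8} applied with $\mu\equiv1$, $p=1$ tells us that $\z\mapsto f[\z]$ is a continuous (resp., holomorphic) vector-valued function of $\z$ with values in $X_1^{(1)}$.

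Next I would identify the map $g\mapsto\int_0^{+\infty}g(R)\,\rmd R$ on $X_1^{(1)}$ as a bounded (indeed norm-$1$) linear functional, since $\left|\int_0^{+\infty}g(R)\,\rmd R\right|\leqslant\|g\|_{1,1}$. Composition of a continuous (resp., holomorphic) vector-valued function with a bounded linear functional preserves continuity (resp., holomorphy): this is either a one-line check from the definitions in Appendix \ref{appendix:b} (the norm inequality for the functional dominates the increment quotients exactly as in the proofs of Lemmas \ref{lemma:B1}, \ref{lemma:B1'}), or it can be read off as the special case of Lemma \ref{lemma:B2} in which one of the two factors is the constant vector $y\equiv1$ in the dual space $X_{1/\mu}$ with $\mu\equiv1$ (note $1\in X_1=L^2$ is not true, so one uses directly the $L^1$–($L^\infty$) duality rather than the $L^2$ pairing, but the argument is identical). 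Hence $I(\z)=\int_0^{+\infty}f[\z](R)\,\rmd R$ is continuous in $D$ in case (i), and in case (ii), since for each $j$ the partial derivative of the vector-valued function $\z\mapsto f[\z]$ is the element $\dot f_j[\z]\in X_1^{(1)}$ produced in the proof of Lemma \ref{lemma:B8}, the functional-composition rule gives that $I$ is holomorphic with $\partial I/\partial\z_j=\int_0^{+\infty}\dot f_j[\z](R)\,\rmd R$, i.e.\ differentiation under the integral sign is justified.

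I do not anticipate a serious obstacle here; the lemma is genuinely a corollary packaging of Lemma \ref{lemma:B8} and of the continuity of the integration functional. The only point needing a little care is the choice of the correct functional space: one works in $X_1^{(1)}=L^1$ rather than in the Hilbert spaces $X_\mu$ used in the main pairing lemmas, so that the integration functional is honestly bounded by the norm; this is precisely why the statement is flagged as "a variant of Lemma \ref{lemma:B8} for the case $p=1$" with $\mu\equiv1$. Once that identification is made, cases (i) and (ii) follow by the same two-step scheme, the holomorphic case additionally invoking the Cauchy-integral bounds \eqref{b:18}–\eqref{b:20} from the proof of Lemma \ref{lemma:B8} to control the difference quotients uniformly in $R$, after which Lebesgue–Fatou (or dominated convergence) finishes the argument exactly as there.
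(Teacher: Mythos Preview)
Your proposal is correct and follows essentially the same route as the paper: the paper's proof bounds $|I(\z+h)-I(\z)|\leqslant\|f_{\z,h}\|_{1,1}$ (resp.\ $|I(\z+h_j)-I(\z))/{\rm h}-\partial I/\partial\z_j|\leqslant\|g_{h_j}[\z]\|_{1,1}$) and then invokes the estimates from the proof of Lemma~\ref{lemma:B8} to make these norms tend to zero, which is exactly your ``compose the $L^1$-valued holomorphic function with the norm-$1$ integration functional'' argument written out explicitly. Your detour through Lemma~\ref{lemma:B2} and the remark about $1\notin L^2$ are unnecessary---the paper simply uses the trivial bound $|\int g|\leqslant\|g\|_{1,1}$ directly---but this does not affect the correctness.
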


\begin{proof}
(i) One just has to check that
$|I(\z)|\leqslant \|f(\z;\cdot)\|_{1,1}\leqslant \|M\|_{1,1}$ and that
$|I(\z+h)-I(\z)|\leqslant \|f_{\z,h}\|_{1,1}$, 
which tends to zero with $h$ as in Lemma \ref{lemma:B8} (i).

(ii) As in the proof of Lemma \ref{lemma:B8} (ii), one considers the function
$g_{h_j}[\z](R)$ and its majorization \eqref{b:20}, which allows one to check that:
\beq
\left|\frac{I(\z+h_j)-I(\z)}{{\rm h}} - 
\frac{\partial I}{\partial \z_j} (\z)\right| =
\left|\int_0^{+\infty} g_{h_j}[\z](R)\,\rmd R\right| \leqslant
\left\|g_{h_j}[\z]\right\|_{1,1}\leqslant \frac{2\,|{\rm h}|\, \|M\|_{1,1}}{[r_j(\z)]^2}.
\label{b:21}
\eeq
It follows that the l.h.s. of \eqref{b:21} tends to zero with ${\rm h}$,
which proves the holomorphy property of $I(z)$ in $D$.
\end{proof}

\begin{lemma}
\label{lemma:B10}
{\rm (i)} Let $(\z,R,R') \mapsto F(\z; R,R')$ be a function defined
for a.e. $(R,R')$ on $D\times [0,+\infty)\times [0,+\infty)$
by a convergent integral of the following form:
\beq
F(\z;R,R')= \int_0^{+\infty} F_1(\z;R,R'') \ F_2(\z;R'',R') \,\rmd R'',
\label{b:23}
\eeq
under the following assumptions :

\vskip 0.1cm

{\rm (a)} the functions $F_j(\z;R,R')$ $(j=1,2)$, are defined
for a.e. $(R,R')$ on $D\times [0,+\infty)\times [0,+\infty)$
and satisfy uniform bounds $|F_j(\z;R,R')|\leqslant G_j(R,R')$
on this set, such that the integral
$G(R,R')= \int_0^{+\infty} G_1(R,R'') G_2(R'',R')\,\rmd R''$ is convergent
for almost every value of $(R,R')$ and the function $(R,R')\mapsto G(R,R')$
belongs to $\widehat{X}_{\mu}$, (see paragraph \ref{subappendix:b.products}-(2) of this Appendix).

\vskip 0.1cm

{\rm (b)} For a.e. $(R,R')$, the functions $F_1(\cdot;R,R')$ and $F_2(\cdot,R,R')$ are continuous
functions of $\z$ in $D$.

\vskip 0.1cm

\noindent
Then there exists a continuous {\rm HS}--operator--valued
function $\z \mapsto K[\z]$ in $D$ which takes its values in
$\widehat{X}_{\mu}$ and such that $K[\z](R,R') = F(\z;R,R')$.

\vskip 0.1cm

\noindent
{\rm (ii)} Let $\z$ be complex, $D$ a domain of $\C^m$, and
$(\z,R,R') \mapsto F(\z; R,R')$ a function of the form \eqref{b:23}
satisfying the previous conditions (a) together with the following additional 
condition:

\vskip 0.1cm

{\rm (b')} for a.e. $(R,R')$, the functions $F_1(\cdot;R,R')$ and 
$F_2(\cdot;R,R')$ are holomorphic functions of $\z$ in $D$.

\vskip 0.1cm

Then there exists a holomorphic {\rm HS}--operator--valued
function $\z \mapsto K[\z]$ in $D$ which takes its values in
$\widehat{X}_{\mu}$ and such that $K[\z](R,R') = F(\z;R,R')$.
\end{lemma}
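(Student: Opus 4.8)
The plan is to reduce Lemma~\ref{lemma:B10} to the already-established Lemma~\ref{lemma:B8} (case $p=2$) and Lemma~\ref{lemma:B6}, by first checking that the integral~\eqref{b:23} genuinely defines, for each fixed $\z\in D$, an element $K[\z]$ of $\widehat{X}_{\mu}$, and then verifying that $K[\z](R,R')=F(\z;R,R')$ belongs to a class ${\cC}(\widehat D,\widehat\mu,2)$ to which Lemma~\ref{lemma:B8} applies with the underlying measure space being $[0,+\infty)\times[0,+\infty)$ equipped with $\mu(R)\mu(R')^{-1}\,\rmd R\,\rmd R'$. First I would observe that assumption~(a) gives, for a.e. $(R,R')$ and all $\z\in D$, the uniform majorization
\beq
|F(\z;R,R')| \leqslant \int_0^{+\infty} |F_1(\z;R,R'')|\,|F_2(\z;R'',R')|\,\rmd R''
\leqslant \int_0^{+\infty} G_1(R,R'')\,G_2(R'',R')\,\rmd R'' = G(R,R'),
\eeq
and since $G\in\widehat{X}_{\mu}$ by hypothesis, the function $(R,R')\mapsto G(R,R')$ plays exactly the role of the dominating function $M$ in the definition of the space ${\cC}(D,\widehat\mu,2)$ associated to the ``doubled'' measure. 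This simultaneously shows that the integral defining $F(\z;R,R')$ converges absolutely for a.e. $(R,R')$ and that $\|F(\z;\cdot,\cdot)\|_{(\mu)}\leqslant\|G\|_{(\mu)}$, so $K[\z]\in\widehat{X}_{\mu}$ for each $\z$.

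The second step is to establish the pointwise regularity of $\z\mapsto F(\z;R,R')$ for a.e. $(R,R')$. In case~(i), fix $(R,R')$ in the full-measure set where both $F_1(\cdot;R,R'')$ and $F_2(\cdot;R'',R')$ are continuous in $\z$ (for a.e. $R''$) and where $G(R,R')<\infty$. For $\z,\z+h\in D$ the integrand $F_1(\z;R,R'')F_2(\z;R'',R')$ converges pointwise in $R''$ to $F_1(\z;R,R'')F_2(\z;R'',R')$ as $h\to 0$, and it is dominated uniformly by the integrable function $R''\mapsto G_1(R,R'')G_2(R'',R')$; hence by dominated convergence $\z\mapsto F(\z;R,R')$ is continuous at $\z$. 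Thus $F\in{\cC}(D,\widehat\mu,2)$ with the stated continuity-in-$\z$ property, and Lemma~\ref{lemma:B8}(i) (applied verbatim with the product measure space in place of $[0,+\infty)$) yields a continuous $\widehat{X}_{\mu}$-valued function $\z\mapsto K[\z]$ with $K[\z](R,R')=F(\z;R,R')$. In case~(ii), by the same dominated-convergence argument applied to difference quotients (using that for a.e. $R''$ the maps $F_j(\cdot;R,R'')$ are holomorphic and satisfy the Cauchy-type derivative bounds used inside the proof of Lemma~\ref{lemma:B8}(ii)), one checks that $\z\mapsto F(\z;R,R')$ is holomorphic for a.e. $(R,R')$ with $\partial F/\partial\z_j$ obtained by differentiating under the integral sign; this again places $F$ in the relevant class ${\cC}(D,\widehat\mu,2)$ with holomorphic $\z$-dependence, so Lemma~\ref{lemma:B8}(ii) delivers the desired holomorphic $\widehat{X}_{\mu}$-valued function $\z\mapsto K[\z]$.

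I do not anticipate a genuine obstacle here, since the whole statement is essentially a packaging of Lemma~\ref{lemma:B8} around the composition product~\eqref{b:13}; the one point requiring minor care is the justification of differentiation under the integral sign in case~(ii), i.e. that for a.e. $(R,R')$ the function $\z\mapsto F(\z;R,R')$ is holomorphic and its partial derivatives are dominated by a fixed function of $(R,R')$ in $\widehat{X}_{\mu}$. This is handled exactly as in step~(b) of the proof of Lemma~\ref{lemma:B8}(ii): one writes the Cauchy representation~\eqref{b:17} for each $F_j$, uses the uniform bounds $|F_j(\z;R,R')|\leqslant G_j(R,R')$ to bound the derivatives by $G_j/r_j(\z)$, and then the convolution $G_1\!\star\! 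G_2 = G$ controls the derivative of the product integral. Alternatively — and this is the cleanest route if one prefers to avoid re-deriving the $\widehat{X}_\mu$-valued Cauchy estimates — one can note that $F_1$ and $F_2$ define, by Lemma~\ref{lemma:B8}, holomorphic $\widehat X_\mu$-valued (resp. $\widehat X_{1/\mu}$-valued, after the standard renormalization $(K_j)_{\rm ren}(R,R')=\sqrt{\mu(R)/\mu(R')}\,K_j(R,R')$) functions of $\z$, and then apply Lemma~\ref{lemma:B6} to the composition product, recovering $K[\z]=F_1[\z]\circ F_2[\z]$; the uniform bound by $G$ then gives the $\widehat X_\mu$-membership directly. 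Either way the proof is a two-line assembly of previously proved lemmas.

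\begin{proof}
We first treat existence and the bound in $\widehat X_\mu$. For each $\z\in D$, assumption~(a) gives, for a.e. $(R,R')$,
\beq
|F(\z;R,R')|\leqslant \int_0^{+\infty}|F_1(\z;R,R'')|\,|F_2(\z;R'',R')|\,\rmd R''
\leqslant \int_0^{+\infty}G_1(R,R'')\,G_2(R'',R')\,\rmd R'' = G(R,R'),
\eeq
so the defining integral \eqref{b:23} converges absolutely a.e., and since $G\in\widehat X_\mu$ by hypothesis, the kernel $K[\z]\doteq F(\z;\cdot,\cdot)$ belongs to $\widehat X_\mu$ with $\|K[\z]\|_{(\mu)}\leqslant\|G\|_{(\mu)}$ uniformly in $\z$. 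In other words, writing $\widehat\mu$ for the measure $\mu(R)\mu(R')^{-1}\,\rmd R\,\rmd R'$ on $[0,+\infty)\times[0,+\infty)$, the function $(\z,R,R')\mapsto F(\z;R,R')$ belongs to the class ${\cC}(D,\widehat\mu,2)$, the majorant being $G$.

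(i) Fix $(R,R')$ in the full-measure set on which $G(R,R')<+\infty$ and on which, for a.e. $R''$, the functions $F_1(\cdot;R,R'')$ and $F_2(\cdot;R'',R')$ are continuous on $D$ (condition~(b)). For $\z\in D$ and $h$ with $\z+h\in D$, the integrand $R''\mapsto F_1(\z+h;R,R'')F_2(\z+h;R'',R')$ converges pointwise as $h\to0$ to $F_1(\z;R,R'')F_2(\z;R'',R')$ and is dominated by the integrable function $R''\mapsto G_1(R,R'')G_2(R'',R')$; by Lebesgue's dominated convergence theorem, $F(\z+h;R,R')\to F(\z;R,R')$. Hence $\z\mapsto F(\z;R,R')$ is continuous on $D$ for a.e. $(R,R')$. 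Lemma~\ref{lemma:B8}(i), applied with $p=2$ and the product measure space $([0,+\infty)^2,\widehat\mu)$ in place of $([0,+\infty),\mu(R)\,\rmd R)$, then yields a continuous $\widehat X_\mu$-valued function $\z\mapsto K[\z]$ with $K[\z](R,R')=F(\z;R,R')$.

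(ii) Assume now $\z\in\C^m$, $D\subset\C^m$, and condition~(b'). By Lemma~\ref{lemma:B8}(ii) applied to $F_1$ and to $F_2$ separately (again with the doubled measure space), the maps $\z\mapsto F_1[\z]$ and $\z\mapsto F_2[\z]$ are holomorphic $\widehat X_\mu$-valued functions on $D$. After the renormalization $(F_j)_{\rm ren}(R,R')\doteq\sqrt{\mu(R)/\mu(R')}\,F_j(\z;R,R')$, which preserves holomorphy and the HS--norm, the composition product \eqref{b:13} is exactly the kernel $F(\z;\cdot,\cdot)$ renormalized, and Lemma~\ref{lemma:B6}(ii) shows that $\z\mapsto F_1[\z]\circ F_2[\z]$ is a holomorphic $\widehat X_\mu$-valued function on $D$. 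Since this function coincides a.e. with $(R,R')\mapsto F(\z;R,R')$ and the latter is uniformly bounded in $\widehat X_\mu$ by $\|G\|_{(\mu)}$ as shown above, we obtain a holomorphic HS--operator--valued function $\z\mapsto K[\z]$ on $D$, taking its values in $\widehat X_\mu$, with $K[\z](R,R')=F(\z;R,R')$. This completes the proof.
\end{proof}
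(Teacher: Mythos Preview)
Your treatment of part~(i) is correct and matches the paper's approach: you use dominated convergence (which is precisely what Lemma~\ref{lemma:B9} packages) to obtain continuity of $\z\mapsto F(\z;R,R')$ for a.e.\ $(R,R')$, and then invoke Lemma~\ref{lemma:B8}(i) on the doubled measure space.

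Your proof of part~(ii), however, contains a genuine gap. You apply Lemma~\ref{lemma:B8}(ii) to $F_1$ and $F_2$ \emph{separately} in order to view them as holomorphic $\widehat X_\mu$--valued functions, and then compose via Lemma~\ref{lemma:B6}. But Lemma~\ref{lemma:B8} requires the majorants $G_1,G_2$ themselves to belong to $\widehat X_\mu$, and this is \emph{not} part of hypothesis~(a): only the composed kernel $G=G_1\!\star\! G_2$ is assumed to lie in $\widehat X_\mu$. In the paper's principal application (Theorem~\ref{theorem:4}), one has $G_1(R,R')=|V_\ell(R,R')|$ and $G_2(R,R')=h_M(\ell,K)\,e^{\alpha(R+R')}M(R)M(R')$; the latter is certainly not in $\widehat X_\mu$ (the integral $\int w(R)e^{4\alpha R}M^2(R)\,\rmd R$ diverges), so your route fails precisely where the lemma is needed.

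The fix is the one you yourself sketched in your planning but then abandoned: keep $(R,R')$ fixed, observe that the integrand $R''\mapsto F_1(\z;R,R'')F_2(\z;R'',R')$ lies in $\cC(D,1,1)$ with $L^1$--majorant $R''\mapsto G_1(R,R'')G_2(R'',R')$, and apply Lemma~\ref{lemma:B9}(ii) to conclude that $\z\mapsto F(\z;R,R')$ is holomorphic for a.e.\ $(R,R')$. Only then do you invoke Lemma~\ref{lemma:B8}(ii) on the doubled space, with the single majorant $G\in\widehat X_\mu$. This is exactly the paper's argument, and it never requires $G_1$ or $G_2$ individually to have finite HS--norm.
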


\begin{proof}
The function $R''\mapsto F_{R,R'}(\z;R'')\doteq F_1(\z;R,R'') \, F_2(\z;R'',R')$, 
which is defined for a.e. $(R,R')$, continuous in $\z$ in case (i), 
holomorphic in $\z$ in case (ii), is uniformly bounded by
the function in $L^1$: $R''\mapsto G_{R,R'}(R'')\doteq G_1(R,R'') G_2(R'',R')$.
Then Lemma \ref{lemma:B9} entails that, for a.e. $(R,R')$, 
$F(\z;R,R')\doteq \int_0^{+\infty} F_{R,R'}(\z;R'')\,\rmd R''$
is continuous (resp., holomorphic) with respect to $\z$
and uniformly bounded by $G(R,R')= \int_0^{+\infty} G_{R,R'}(R'')\,\rmd R''$.
Since the function $(R,R')\mapsto G(R,R')$ belongs to $\widehat{X}_{\mu}$,
it follows that $F(\z;R,R')$ satisfies conditions which are similar to those of
Lemma \ref{lemma:B8} for $p=2$ (with either continuity or
holomorphy properties in $\z$ according to the respective cases (i) or (ii)),
\emph{up to the replacement} of the integration space
$\{R\in \R^+\}$ by $\{(R,R')\in \R^+\times \R^+\}$
and of the Hilbert space $X_{\mu}$ by $\widehat{X}_{\mu}$.
The results (i) and (ii) then correspond directly to
those of Lemma \ref{lemma:B8}.
\end{proof}

\subsection{Complement on the various criteria of 
vector--valued analyticity in several complex variables} % Former B III
\label{subappendix:b.complement}

We shall first recall the equivalence between
several criteria of analyticity for the case of 
numerical functions of several complex variables
defined in a domain of $\C^m$.
These criteria are:
\begin{itemize}
\item[(1)] \, ``differentiability criterium":
existence of partial derivatives with respect to the
complex variables $\z_j$ ($1\leqslant j \leqslant m$) 
at all points of $D$;
\item[(2)] \, solution of the system of Cauchy--Riemann equations in $D$; 
\item[(3)] \, ``Cauchy integral criterium" in $D$ and the associated
Cauchy integral representations for the function and
all its successive (partial) derivatives, implying
corresponding Cauchy inequalities;
\item[(4)] \, convergence of the Taylor series in an
appropriate complex neighborhood of each point of $D$.
\end{itemize}

\noindent
For each of these characteristic properties, there is a corresponding 
``weak criterium of analyticity" for the vector--valued functions
$\z\mapsto a[\z]$ of several complex variables taking their values in the 
complete normed space $A$; it consists in stating that
for every element $\varphi$ of the dual space $A'$ of $A$, 
the numerical ``scalar--product" function
$\z\mapsto \langle\varphi,a[\z]\rangle$ satisfies the corresponding 
analyticity criterium.

\vskip 0.2cm

\noindent
Then, it turns out that each weak criterium is equivalent to
a ``strong criterium of analyticity", which involves either the notion
of limit or that of integral in the sense of the norm in $A$.
 
In particular, all the results that have been derived in this
Appendix have made use of the differentiability criterium, 
which postulates the existence of partial derivatives 
$\z\mapsto \dot a_j[\z]$ of $\z \mapsto a[\z]$ as vector--valued 
functions obtained in the sense of strong limits in $A$.
The fact that it is implied by the corresponding
weak criterium (1) for the numerical functions
$\z\mapsto \langle\varphi,a[\z]\rangle$ (for all $\varphi \in A'$)
is obtained by a direct adaptation of the Dunford theorem 
(see, e.g., \cite[p. 128]{Yosida}) to the several variable
case. The main ingredient of this ``weak to strong passage" consists 
in the use of the ``maximum boundedness theorem" (through its corollary
called ``Resonance Theorem" in \cite{Yosida}) and of the completeness 
property of $A$.

\noindent
The importance of another ``weak to strong passage"
concerns the Cauchy integral criterium (3), since in particular 
the latter allows one to give a direct proof of Cauchy--type inequalities, 
which majorize the norms $\|\dot a_j[\z]\|$ of the partial derivatives at $\z$
in terms of the maximum of $\|a[\z']\|$ in a neighborhood of $\z$,
and which also imply the strong convergence of the Taylor series
in a complex neighborhood of $\z$.

\vskip 0.2cm

\noindent
Before giving a further description of the ``strong Cauchy
integral criterium", and in order to make clear how 
the latter results from the ``strong differentiability 
criterion" through the corresponding implications
for the weak criteria, let us recall how the 
implication $(1) \Longrightarrow (3)$
is obtained for numerical functions of
several complex variables.

\vskip 0.2cm

\noindent
$(1) \Longrightarrow (2)$: choosing the increment
${\rm h}$ either real or purely imaginary in the following
definition, one checks that the existence of
continuous  partial derivatives 
$\frac{\partial f}{\partial \z_j}(\z)$ for $f(\z)$,  
namely
$\lim_{\{{\rm h} \to 0\,{\rm in}\, \C\}} 
\left\|\frac{f(\z+h_j)-f(\z)]}{{\rm h}}- \dot{f}_j(\z)\right\| =0$
for every $\z \in D$ and $h_j=(0,\ldots,0,{\rm h},0,\ldots,0)$, 
implies that $\widehat{f}(x,y) \doteq f(\z)$ (with $\z_j= x_j+\rmi y_j$;
$1\leqslant j\leqslant m$) satisfies the system of Cauchy--Riemann equations in $D$: 
$\frac{\partial\widehat{f}}{\partial x_j}(x,y)=-\rmi \frac{\partial\widehat{f}}{\partial y_j}(x,y)$
or, by passing formally to the variables $(\z_j,\,\overline{\z}_j=x_j-\rmi y_j)$: 
$\partial\widehat{f}/\partial\overline{\z}_j(x,y)=0$.

\vskip 0.2cm

\noindent
$(2) \Longrightarrow (3)$:
considering $\widehat{f}(x,y)$ as a $0$--form in a domain of
$\R^{2m} \equiv \C^m$, and introducing the differential $1$--form
$\rmd\widehat{f}(x,y) = \sum_{1\leqslant j\leqslant m} 
\frac{\partial \widehat f}{\partial x_j}(x,y)\,\rmd x_j
+ \frac{\partial \widehat f}{\partial y_j}(x,y)\,\rmd y_j$
or, equivalently,
$\rmd\widehat f = \sum_{1\leqslant j\leqslant m} 
\frac{\partial \widehat f}{\partial \z_j}\,\rmd\z_j
+ \frac{\partial \widehat f}{\partial \overline \z_j}(x,y)\,\rmd\overline \z_j$, 
the Cauchy--Riemann system can be equivalently written as:
$\rmd\left(\widehat{f}(x,y)\,\rmd\z_1\wedge\cdots\wedge\rmd\z_m\right)=0$. 
Then, in view of Stokes' theorem, the latter is equivalent to
the fact that the following Cauchy--type integral formula:
\beq
\int_\Gamma \widehat{f}(x,y)\, \rmd\z_1\wedge\cdots\wedge\rmd\z_m=0,
\label{b:24}
\eeq
holds for every $m$--real--dimensional integration cycle
$\Gamma$ of the form $\Gamma = \partial \Delta$, where
$\Delta$ can be any $(m+1)$--cycle whose support is contained in
$D$. An usual and convenient choice for $\Gamma$ is the
``distinguished boundary" of a polydisk--type domain,
namely $\Gamma= \gamma_1\times \cdots \times \gamma_m$,
where each $\gamma_j$ ($1\leqslant j\leqslant m$) is the boundary of
a domain $\delta_j$ homeomorphic to a disk and such that
the polydisk--type domain $\widehat\Delta\doteq 
\delta_1\times\cdots\times\delta_m$ be contained in $D$.
As in the case of one complex variable, 
Eq. \eqref{b:24} implies the corresponding integral
representation  
\beq
f(\z) = \left(\frac{1}{2\pi\rmi}\right)^m
\int_{\gamma_1\times \cdots\times \gamma_m}
\frac{f(\z')\ \rmd\z'_1\wedge \cdots \wedge \rmd\z'_m}{(\z'_1-\z_1)\cdots (\z'_m-\z_m)},
\label{b:25}
\eeq
which is valid for every point $\z=(\z_1,\ldots,\z_m)$ in $\widehat\Delta$.
Moreover, there also holds integral representations ``of partial type", namely
with respect to any subset $J$ ($J\subset \{1,2,\ldots,m\}$) 
of variables $\z'_j$ integrated on the $1$--cycle $\gamma_j$
enclosing $\z_j$, as those used in Eqs. \eqref{b:17} and \eqref{b:19}.
(This corresponds to exploiting the 
Cauchy--Riemann system under the form: 
$\left.\rmd\left(\widehat{f}(x,y)\, \wedge_{\{j\in J\}} 
\rmd\z_j\right)\right|_{\{\z_k=0, \forall k\notin J\}}=0$, 
i.e., by Stokes' theorem: $\int_{\{\prod_{j\in J}\gamma_j\}}
\widehat{f}(x,y)\,\wedge_{\{j\in J\}}\rmd\z_j=0$.)

\vskip 0.2cm

\noindent
Coming back to the case of vector--valued functions $\z\mapsto a[\z]$,
the ``weak to strong passage" for integral relations such as \eqref{b:24} 
and \eqref{b:25} can be presented as follows.
Considering, e.g., the case of \eqref{b:24}, the fact that
for all $\varphi \in A'$, the numerical functions
$\langle\varphi,a[\z]\rangle$ satisfy the equation
$\int_\Gamma\langle\varphi,a[\z]\rangle\,\rmd\z=0$
(with $\rmd\z\doteq\rmd\z_1\wedge\cdots\wedge\rmd\z_m$) 
implies the corresponding vector--valued equation in $A$,
namely $\int_\Gamma a[\z]\,\rmd\z=0$. This implication is based 
on the following argument:

\vskip 0.1cm

(a) The strong differentiability of $a[\z]$ implies its strong
continuity in $\z$, which allows one to define (simple or
multiple) integrals of the form $\int_\Gamma a[\z] \,\rmd\z$ with values
in $A$ (namely, as strong limits of Riemann sums in $A$).

\vskip 0.1cm

(b) In view of the continuity and the linearity of
each $\varphi \in A'$, and by applying the weak criterium, one has:
$\langle\varphi, \int_\Gamma a[\z]\,\rmd\z\rangle
=\int_\Gamma \langle\varphi,a[\z]\rangle\,\rmd\z=0$.

\vskip 0.1cm

(c) If a vector $I\in A$ (such as $I=\int_\Gamma a[\z]\,\rmd z$) satisfies 
$\langle\varphi,I\rangle=0$ for all $\varphi \in A'$, 
$I$ is necessarily the zero--vector in $A$. This results from
the following

\skd

\begin{lemma}[see p. 108 of \textbf{\cite{Yosida}}] 
For any given $I$ in $A$ such as $I\neq 0$, there exists a continuous 
linear form $\varphi_0$ such that $\langle\varphi_0,I\rangle = \|I\|_A$.
\end{lemma}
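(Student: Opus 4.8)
The final statement is the Hahn–Banach–type corollary: for any nonzero $I$ in the complete normed space $A$, there exists a continuous linear functional $\varphi_0\in A'$ with $\langle\varphi_0,I\rangle=\|I\|_A$. The plan is to derive this directly from the Hahn–Banach extension theorem, which I would take as the available analytic input (it is what the reference to Yosida, p.~108, supplies).

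First I would reduce to a one–dimensional situation. Let $M=\{\mu I:\mu\in\C\}$ be the one–dimensional subspace of $A$ spanned by $I$; since $I\neq 0$, every element of $M$ is uniquely $\mu I$. On $M$ I define the linear functional $\ell(\mu I)=\mu\,\|I\|_A$. This is clearly linear, and it satisfies $|\ell(\mu I)|=|\mu|\,\|I\|_A=\|\mu I\|_A$, so $\ell$ is bounded on $M$ with norm exactly $1$ (it attains this norm at $I$ itself, where $\ell(I)=\|I\|_A$). Then I would invoke the Hahn–Banach theorem in the form valid for complex normed spaces: the bounded functional $\ell$ on the subspace $M$ extends to a bounded functional $\varphi_0$ on all of $A$ with $\|\varphi_0\|_{A'}=\|\ell\|_{M'}=1$. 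By construction $\varphi_0(I)=\ell(I)=\|I\|_A$, which is exactly the asserted equality, and $\varphi_0\in A'$ as required.

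The only point needing a word of care is the complex scalar field: the classical Hahn–Banach theorem is stated for real vector spaces with a sublinear majorant, so one passes to the complex version by the standard Bohnenblust–Sobczyk device — extend the real part $\Real\ell$ as a real-linear functional dominated by the norm, then recover the complex-linear extension via $\varphi_0(a)=\Real\ell_{\mathrm{ext}}(a)-\rmi\,\Real\ell_{\mathrm{ext}}(\rmi a)$, checking that this preserves the norm bound. Since the excerpt already cites the complex form directly from Yosida, I would simply cite it rather than reproduce this argument. There is really no substantive obstacle here: the completeness of $A$ is not even needed for this particular statement (it is used elsewhere in the Appendix, e.g.\ in Lemma~\ref{lemma:B0}), only the norm structure; the content is entirely the Hahn–Banach extension principle, and the proof is the two–line reduction to a one–dimensional subspace described above.
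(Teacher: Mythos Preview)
Your proof is correct and is the standard Hahn--Banach argument for this classical corollary. The paper does not actually supply its own proof of this lemma: it is stated as a citation (``see p.~108 of \cite{Yosida}'') and used as a black box to conclude that if $\langle\varphi,I\rangle=0$ for all $\varphi\in A'$ then $I=0$. Your reduction to a one--dimensional subspace followed by Hahn--Banach extension is precisely the argument one finds in Yosida (and in any functional analysis text), so there is nothing to compare --- you have simply filled in what the paper leaves to the reference.
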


\noindent
Now, being given a function $a[\z]$ holomorphic in $D$
with values in $A$, and satisfying a
uniform bound $\|a[\z]\| \leqslant M$, one can write 
Cauchy--type vector--equations similar to
Eqs. \eqref{b:17} and \eqref{b:19}, namely (by putting
$a_{h_j}[\z] = \frac{a[\z+h_j]-a[\z]}{{\rm h}}$):
\begin{align}
\dot{a}_j[\z] &= \frac{1}{2\pi\rmi}
\int_{\gamma_{r}} \frac{a[\z']}{(\z_j'-\z_j)^2} \,\rmd\z_j', \label{b:26} \\
g_{h_j}[\z] &\doteq a_{h_j}[\z]-\dot{a}_j[\z] =
\frac{{\rm h}}{2\pi\rmi}\int_{\gamma_{r}} 
\frac{a[\z']}{(\z_j'-\z_j)^2(\z_j'-\z_j-{\rm h})} \,\rmd\z_j'. \label{b:27}
\end{align}
By using norm inequalities under the integration signs in the
latter, one then obtains majorizations for these integrals 
which are similar to \eqref{b:18} and \eqref{b:20} and yield
the Cauchy--type inequalities:
\beq
\left\|\dot{a}_j[\z]\right\| \leqslant
\frac{M}{r_j(\z)},
\qquad
\left\|g_{h_j}[\z]\right\|\leqslant \frac{2\,|{\rm h}|\,M}{[r_j(\z)]^2}.
\label{b:rem}
\eeq
As an application, we notice that the end of the proof of Lemma \ref{lemma:B0}
can be obtained by applying \eqref{b:rem}
to the functions $\left(\dot{a}_n\right)_j[\z]$ and
$(g_n)_{h_j}[\z]\doteq (a_n)_{h_j}[\z]-\left(\dot{a}_n\right)_j[\z]$, 
the positive constant $M$ being then replaced by
$u_n$. In view of these inequalities, the uniform
majorization (by $\frac{\varepsilon}{3}$) 
of the remainders of the series with general terms
$\left(\dot{a}_n\right)_j[\z]$ and $(a_n)_{h_j}[\z]$
is then ensured (for all $h_j\in\cV_{\z,\varepsilon}$) 
by the convergence of the majorizing series with general term $u_n$.

\end{document}